\theoremstyle{plain}
\numberwithin{equation}{section}
\newtheorem{theorem}{Theorem}[subsection]
\newtheorem{corollary}[theorem]{Corollary}
\newtheorem{lemma}[theorem]{Lemma}
\newtheorem{proposition}[theorem]{Proposition}
\newtheorem{remark}[theorem]{Remark}
\newtheorem{definition}[theorem]{Definition}
\newtheorem{construction}[theorem]{Construction}
\newtheorem{main}{Main construction -- Part}
\def \R {{\mathbb R}}
\def \GTNN {{Gr^{\mbox{\tiny TNN}} (k,n)}}
\def \GTP {{Gr^{\mbox{\tiny TP}} (k,n)}}
\newcommand\mycom[2]{\genfrac{}{}{0pt}{}{#1}{#2}}
\def \DKP {{\mathcal D}_{\textup{\scriptsize KP},\Gamma}}
\def \DS {{\mathcal D}_{\textup{\scriptsize S},\Gamma_0}}
\def \DDN {{\mathcal D}_{\textup{\scriptsize dr},{\mathcal N}}}
\def \S {{\mathcal S}_{\mathcal M}^{\mbox{\tiny TNN}}}
\def \gdr {\gamma_{\textup{\scriptsize dr},V_l}}
\def \Pdr {P^{\textup{\scriptsize (dr)}}}
\title[KP divisors on $\mathtt M$--curves]{ Real regular KP divisors on $\mathtt M$--curves and totally non-negative Grassmannians}
\author{Simonetta Abenda}
\address{Dipartimento di Matematica and Alma Mater Research Center on Applied Mathematics, Universit\`a di Bologna, Italy\\ INFN, sez. di Bologna, Italy
}
\email{simonetta.abenda@unibo.it
}
\author{Petr G. Grinevich}
\address{Steklov Mathematical Institute of Russian Academy of Sciences, Moscow, Russia\\
L.D.Landau Institute for Theoretical Physics, Chernogolovka, Russia\\
Lomonosov Moscow State University, Faculty of Mechanics and Mathematics, Moscow, Russia}
\email{pgg@landau.ac.ru}
\thanks{
This research has been partially supported by GNFM-INDAM and RFO University of Bologna, by the Russian Foundation for Basic Research, grant 20-01-00157, by the program ``Fundamental problems of nonlinear dynamics'', Presidium of RAS.}
\begin{document}
\begin{abstract}
  {In this paper we construct an explicit map from planar bicolored (plabic) trivalent graphs representing a given irreducible positroid cell $\S$ in the totally non--negative Grassmannian $Gr^{\mbox{\tiny TNN}}(k,n)$ to the spectral data
	for the relevant class of real regular Kadomtsev-Petviashvili II (KP) solutions, thus completing search of
	real algebraic-geometric data for the KP equation started in \cite{AG1,AG3}.  The spectral curve is modeled on Krichever construction for degenerate finite-gap solutions, and is a rationally degenerate $\mathtt M$-curve, $\Gamma$, dual to the graph. The divisors are real regular KP divisors in the ovals of $\Gamma$, i.e. they fulfill the conditions for selecting real regular finite--gap solutions KPII solutions in \cite{DN}. Since the soliton data are described by points in $\S$, we establish a bridge between real regular finite-gap KP solutions \cite{DN} and real regular multi-line KP solitons which are known to be parameterized by points in $Gr^{\mbox{\tiny TNN}}(k,n)$ \cite{CK,KW2}.

We use the geometric characterization of spaces of relations on plabic networks introduced in  \cite{AG4} to prove the invariance of this construction with respect to the many gauge freedoms on the network. Such systems of relations were proposed in \cite{Lam2} for the computation of scattering amplitudes on on--shell diagrams $N=4$ SYM \cite{AGP1} and govern the totally non--negative amalgamation of the little positive Grassmannians, $Gr^{\mbox{\tiny TP}}(1,3)$ and $Gr^{\mbox{\tiny TP}}(2,3)$, into any given positroid cell $\S\subset \GTNN$. In our setting they rule the reality and regularity properties of the KP divisor.

Finally, we explain the transformation of the curve and the divisor both under Postnikov moves and reductions and under amalgamation of positroid cells, and apply our construction to some examples.}

\medskip \noindent {\sc{2010 MSC.}} 37K40; 37K20, 14H50, 14H70.

 \noindent {\sc{Keywords.}} Totally non-negative Grassmannians, amalgamation of positroid varieties, M-curves, KP hierarchy, real soliton and finite-gap solutions, positroid cells, planar bicolored networks in the disk, moves and reductions, Baker--Akhiezer function.
\end{abstract}
\maketitle

\tableofcontents

\section{Introduction}

Totally non--negative Grassmannians $\GTNN$ are a special case of the generalization to reductive Lie groups by Lusztig  \cite{Lus1,Lus2} of the classical notion of total positivity \cite{GK,GK2,Sch,Kar}. As for  classical total positivity, $\GTNN$ naturally arise in relevant problems in different areas of mathematics and physics \cite{AGP1,AGP2,ADM,BG,CW,FPS,Lam1,OPS,Pos2,Sc}. 
In particular, the deep connection of the combinatorial structure of $\GTNN$ with KP real soliton theory was unveiled in a series of papers by Chakravarthy, Kodama and Williams (see \cite{CK,KW1,KW2} and references therein). In \cite{KW1}
it was proven that multi-line soliton solutions of the Kadomtsev-Petviashvili 2 (KP) equation are real and regular in space--time if and only if their soliton data correspond to points in the irreducible part of totally non--negative Grassmannians, whereas the combinatorial structure of the latter was used in \cite{CK,KW2} to classify the asymptotic behavior in space-time of such solutions.

In \cite{AG1,AG3} we started to investigate a connection of different nature between this family of KP solutions and total positivity in the framework of the finite-gap approach, using the fact that any such solution may also be interpreted as a potential in a degenerate spectral problem for the KP hierarchy. In particular in \cite{AG3}, for any positroid cell $\S\subset Gr^{\mbox{\tiny TNN}}(k,n)$, we used the fact that its Le--graph \cite{Pos} is dual to an $\mathtt M$--curve $\Gamma$ of genus $g$ equal to the dimension of $\S$ to associate a degree $g$ real regular divisor on $\Gamma$ to any real regular multi--line KP soliton solution whose soliton data belong to $\S$. In \cite{AG3} the construction depends explicitly on a recursion associated to a specific acyclic orientation of a special graph representing the soliton data, and it cannot be generalized to the full class of graphs representing the soliton data. Moreover we were unable to prove the invariance of the KP divisor with respect to the many gauge freedoms on the network.

In this paper we complete the research starting in \cite{AG3}, constructing an explicit map from the networks representing the soliton solutions to their spectral data fulfilling the reality and regularity conditions in \cite{DN}. We prove the invariance of the KP divisor and explain the effect of moves, reductions and amalgamation of networks on the spectral data. We remark that our construction of real regular KP divisors on $\mathtt M$--curves is completely explicit. At this aim we use the full rank geometric system of relations on the network introduced in \cite{AG4} to fix the values of the KP wave function at the nodes of the spectral curve. Since such system of relations provides the value of Postnikov boundary measurement map for any choice of positive edge weights and is modeled on the amalgamation procedure for totally non--negative Grassmannians \cite{FG1,AGP1,AGP2,Lam2}, we conjuncture that a purely cluster algebraic approach should be possible for the characterization of the KP divisor. Finally, in \cite{A3} one of the authors found a natural connection between our construction and dimer models in the disk.

Before continuing, let us briefly recall that the finite-gap approach to soliton systems was first suggested by Novikov \cite{Nov} for the Korteveg-de Vries equation, and extended to the 2+1 KP equation by Krichever in \cite{Kr1,Kr2}, where it was shown that finite-gap KP solutions correspond to non special divisors on arbitrary algebraic curves. Dubrovin and Natanzon \cite{DN} then proved that real regular KP finite gap solutions correspond to divisors on smooth $\mathtt M$--curves satisfying natural reality and regularity conditions. In \cite{Kr4} Krichever developed, in particular, the direct scattering transform for the real regular parabolic operators associated with KP and proved that the corresponding spectral curves are always $M$-curves, and divisor points are located in the ovals as in \cite{DN}. In \cite{Kr3, KV} finite gap theory was extended to reducible curves in the case of degenerate solutions. Applications of singular curves to the finite-gap integration are reviewed in \cite{Taim}.

In our setting the degenerate solutions are the real regular multiline KP solitons studied in \cite{BPPP,CK,KW1,KW2}: the real regular KP soliton data correspond to a well defined reduction of the Sato Grassmannian \cite{S}, and they are parametrized by pairs $(\mathcal K, [A])$, {\sl i.e.} $n$ ordered phases $\mathcal K =\{ \kappa_1<\kappa_2 <\cdots <\kappa_n\}$ and a point in an irreducible positroid cell $[A]\in \S \subset Gr^{\mbox{\tiny TNN}}(k,n)$. We recall that the irreducible part of $Gr^{\mbox{\tiny TNN}}(k,n)$ is the natural setting for the minimal parametrization of such solitons \cite{CK,KW2}.

Following \cite{Mal}, to the soliton data $(\mathcal K, [A])$ there is associated a rational spectral curve $\Gamma_0$ (Sato component), with a marked point $P_0$ (essential singularity of the wave function), and $k$ simple real poles $\DS =\{ \gamma_{S,r},\  r\in [k] \}$, such that $\gamma_{S,r}\in [\kappa_1,\kappa_n]$ (Sato divisor). However, due to a mismatch between the dimension of $Gr^{\mbox{\tiny TNN}}(k,n)$ and that of the variety of Sato divisors, generically the Sato divisor is not sufficient to determine the corresponding KP solution. 

In \cite{AG1, AG3} we proposed a completion of the Sato algebraic--geometric data based on the degenerate finite gap theory of \cite{Kr3} and constructed divisors on reducible curves for the real regular multiline KP solitons. In our setting, the data $(\Gamma,P_0,\mathcal D)$, where $\Gamma$ is a reducible curve with a marked point $P_0$, and $\mathcal D\subset\Gamma$ is a divisor, correspond to the soliton data $(\mathcal K, [A])$ if
\begin{enumerate}
\item $\Gamma$ contains $\Gamma_0$ as a rational component and $\DS$ coincides with the restriction of $\mathcal D$ to $\Gamma_0$ and different rational components of $\Gamma$ are connected at double points;  
\item The data $(\Gamma,P_0,\mathcal D)$ uniquely define the wave function $\hat\psi$ as a meromorphic function on $\Gamma\backslash P_0$ with divisor $\mathcal D$, having an essential singularity at $P_0$. Moreover, at double points the values of the wave function coincide on both components for all times.
\end{enumerate}
In degenerate cases, the construction of the components of the curve and of the divisor is obviously not unique and,
as pointed out by S.P. Novikov, an untrivial question is whether \textbf{real regular} soliton solutions can be obtained as rational degenerations of \textbf{real regular} finite-gap solutions. In the case of the real regular KP multisolitons this imposes the following additional requirements:
\begin{enumerate}
\item $\Gamma$ is the rational degeneration of an $\mathtt M$--curve;
\item The divisor is contained in the union of the ovals, each ``finite'' oval contains exactly one divisor point.
\end{enumerate}

In \cite{AG1} we provided an optimal answer to the above problem for the real regular soliton data in the totally positive part of the Grassmannian, $\GTP$. We proved that $\Gamma_0$ is a component of a reducible curve $\Gamma(\xi)$, with $\xi\gg 1$ a parameter which parametrizes the position of the nodes, arising as a rational degeneration of some smooth $\mathtt M$--curve of genus equal to the dimension of the positive Grassmannian, $k(n-k)$, we used classical total positivity for the algebraic part of the construction and computed explicitly the divisor positions in the ovals at leading order in $\xi$. In \cite{AG3} we started the search of an explicit relation between the combinatorial structure of $\GTNN$ and the spectral problem using the Le--graphs introduced by Postnikov \cite{Pos}. We obtained a system of relations which we solved recursively to explicitly construct both the KP wave function and the divisor.
Again our approach was constructive and in \cite{AG2} we applied it to obtain real regular finite gap solutions parametrized by real regular non special divisors on a genus 4 $\mathtt M$--curve obtained from the desingularization of spectral problem for the soliton solutions in $Gr^{\mbox{\tiny TP}}(2,4)$.

\smallskip

\paragraph{\textbf{Main results}}

The paper is divided into 3 parts.

In \textbf{Part 1. Construction of the spectral curve and the divisor} we prove that any graph representing the KP soliton data is dual to a spectral curve which is the rational degeneration of a smooth $\mathtt M$--curve, we explicitly construct real regular spectral data (wave function and divisor) in agreement with the prediction in \cite{DN}, and we prove the invariance of the KP divisor with respect to the many gauge freedoms of the associated network.

In \textbf{Part 2. Transformation properties of the divisor} we explain the transformation of such algebraic geometric data both with respect to Postnikov moves and reductions, and with respect to amalgamations of totally non-negative Grassmannians, thus establishing the first step to a purely cluster algebraic approach to the problem. 

In \textbf{Part 3. Singularities of divisors} we formulate some open problems and, in particular, we discuss the possible singularities of the divisors.

 \textbf{Part 1:} Let the soliton data $(\mathcal K, [A])$ be fixed, with $[A]\in \S \subset Gr^{\mbox{\tiny TNN}}(k,n)$, where $\S$ is an irreducible positroid cell of dimension $|D|$, and let $\mathcal G$ be a connected planar bicolored directed trivalent perfect graph in the disk representing $\S$ (Definition \ref{def:graph}) in Postnikov equivalence class \cite{Pos}. The graph $\mathcal G$ has $g+1$ faces where $g=|D|$ if the graph is reduced, otherwise $g>|D|$. 

The construction of the curve $\Gamma$ (Section \ref{sec:gamma}) generalizes that in \cite{AG3} where we treated the case of Le--graphs. $\mathcal G$ models the real part of the spectral curve and is the dual graph of a reducible curve $\Gamma$ which is the connected union of rational components. The boundary of the disk and all internal vertices of $\mathcal G$ are copies of $\mathbb{CP}^1$, the edges represent the double points where two such components are glued to each other and the faces are the ovals of the real part of the resulting reducible curve. 
We identify the boundary of the disk with the Sato component $\Gamma_0$, and the $n$ boundary vertices $b_1,\dots, b_n$ correspond to the ordered marked points, $\mathcal K = \{ \kappa_1 < \kappa_2 < \cdots < \kappa_n \}$. It is easy to check that $\Gamma$ is a rational reduction of a smooth genus $g$ $\mathtt M$--curve.
 
To extend the wave function from $\Gamma_0$ to $\Gamma$, we use a two-step procedure.
\begin{enumerate}
\item We define the wave function for all times $\vec t$ consistently at the double points of the curve;
\item We extend the wave function to all other components of $\Gamma$ assuming that they are meromorphic of degree either 1 or 0. The degree is 1 if the component corresponds to a trivalent white vertex and is 0 in all other cases.
\end{enumerate}

In Section \ref{sec:anycurve} we perform the first step using the full--rank geometric system of linear relations on the network introduced in \cite{AG4}, after assigning the not-normalized Sato wave function at the marked points $\kappa_j$ as boundary conditions on $\mathcal G$. Both the geometric and the weight gauges act on the wave function by multiplication by non-zero constants, therefore they do not affect the normalized wave function. Moreover, it is possible to explicitly solve the system of relations in terms of flows \cite{Tal2,AG7} and to compute explicitly the KP divisor in the coordinates associated to the chosen orientation. 

In Section \ref{sec:inv} we perform the second step. To each component $\Gamma_V$  corresponding to a white trivalent vertex $V$ we assign a divisor point. The KP divisor on $\Gamma$ is the union of the Sato divisor $\DS$ and of all divisor points associated to such internal vertices: $\DKP =\DS \cup \{ P_V \, , \, V \mbox{ white trivalent vertex in } \mathcal G \, \}$. Since $\DS$ has degree $k$, and the number of trivalent white vertices of a plabic graph is $g-k$, $\DKP$ has degree $g$. Since the wave function is real for real times at the double points,  $\DKP$ is contained in the union of the ovals of $\Gamma$. The normalized wave function $\hat \psi(P,\vec t)$ is then the unique meromorphic function on $\Gamma\backslash \{ P_0\}$ such that $(\hat \psi(\cdot,\vec t)) + \DKP \ge 0$, for all $\vec t$, therefore $\hat \psi$ is the KP wave function on $\Gamma$ for the soliton data $(\mathcal K, [A])$ (Theorem \ref{lemma:KPeffvac}).

In Theorem \ref{theo:inv} we prove that $\DKP$ is invariant with respect to changes of orientation and of the choice of gauge ray, weight and vertex gauges. In Lemma \ref{lem:pos_div} we detect the oval to which each divisor point $P_V$ belongs to, whereas in Section \ref{sec:comb} we prove that each finite oval contains exactly one divisor point, {\sl i.e.} $\DKP$ satisfies the reality and regularity conditions established in \cite{DN}.
As a consequence, we obtain a direct
relation between the total non--negativity property encoded in the geometrical setting of \cite{AG4} and the reality and regularity condition of the divisor studied in this paper. 

\textbf{Part 2:} In Section \ref{sec:moves_reduc} we give the explicit transformation rules of the curve, the wave function and the divisor with respect to Postnikov moves and reductions. We also present some examples; in Section \ref{sec:example} we apply our construction to soliton data in ${\mathcal S}_{34}^{\mbox{\tiny TNN}}$, the 3--dimensional positroid cell in $Gr^{\mbox{\tiny TNN}} (2,4)$ corresponding to the matroid ${\mathcal M} = \{ \ 12 \ , \ 13 \ , \ 14 \ ,\ 23 \ , \ 24\ \}$.
We construct both the reducible rational curve and its desingularization to a genus $3$ $\mathtt M$--curve and the KP divisor for generic soliton data $\mathcal K =\{ \kappa_1<\kappa_2<\kappa_3<\kappa_4\}$ and $[A]\in {\mathcal S}_{34}^{\mbox{\tiny TNN}}$. We then apply a parallel edge unreduction and a flip move and compute the divisor on the transformed curve.
We also show the effect of the square move on the divisor for soliton data $(\mathcal K,[A])$ with $[A] \in Gr^{\mbox{\tiny TP}} (2,4)$ in Section \ref{sec:ex_Gr24top}. In Section~\ref{sec:amalg} we study the effect of amalgamation on the signatures and divisor data.

\textbf{Part 3:}  The parametrization of a given positroid cell $\S$ via KP-II divisors constructed in this paper is local in the following sense: for each point in $\S$ and a collection of phases $\mathcal K$, we choose a fixed time $\vec t_0$ such that near this point the parametrization is locally regular. Globally 3 possible cases may occur: 
\begin{enumerate}
\item There exists a time $\vec t_0$ such that a pair of divisor point are on the same node, but for generic $\vec t$ the divisor is generic. In this case it is necessary to apply an appropriate blow-up procedure to resolve the singularity. In the case of reduced graphs this is the only degeneration which may occur. We plan to study this problem in a future paper.  Here in Section~\ref{sec:global} we solve this problem in the simplest non-trivial case  $Gr^{TP}(1,3)$. 
\item There exists a collection of positive weights such that for any time $\vec t$ a pair of divisor point are on the same node, but for generic collection of weights and generic $\vec t$ the divisor is generic. This situation may occur for the reducible graphs studied in this paper. We briefly discuss this case in Section~\ref{sec:constr_null}. 
\item For a given graph, a pair of divisor point are on the same node for any collection of positive weights and any time $\vec t$ . This situation may occur only if we release the condition that for any edge there exists a path from boundary to boundary containing it. We present an example in Section~\ref{sec:counterexample}. 
\end{enumerate}

\paragraph{\textbf{Remarks and open questions}}

\smallskip

Below we list few more open questions.

Our construction may be considered as a tropicalization of the spectral problem (smooth $\mathtt M$--curves and divisors) associated to real regular finite--gap KP solutions (potentials) in the rational degeneration of such curves. An interesting open question is whether all smooth $\mathtt M$--curves may be obtained starting from our construction. At this aim it should be relevant to investigate the relation of our construction with the KP tropical limit studied in \cite{AFMS}. 

The tropical limit studied in \cite{KW2} (see also \cite{DMH} for a special case) has a different nature: reconstruct the soliton data from the asymptotic contour plots. In our setting, that would be equivalent to tropicalize the reducible rational spectral problem connecting the asymptotic behavior of the potential (KP solution) to the asymptotic behavior in $\vec t$ of the zero divisor of the KP wave function (see \cite{A2} for some preliminary results concerning soliton data in $Gr^{\mbox{\tiny TP}}(2,4)$).

Relations between integrability and cluster algebras were demonstrated in \cite{FG,KG}, and the cluster algebras were essentially motivated by total positivity \cite{FZ1,FZ2}. In \cite{KW2} cluster algebras have appeared in connection with KP solitons asymptotic behavior. We expect that they should also appear in our construction in connection with the tropicalization of the zero divisor. Moreover a deep relation of (degenerate) KP solutions with cluster algebras is also suggested by the fact that the geometric systems of relations which encode the position of the divisor have a natural interpretation as amalgamation of small positive Grassmannians respecting total non--negativity \cite{FG1,Lam2, AG4}.

For a fixed reducible curve the Jacobian may contain more than one connected component associated to real regular solutions. Therefore, in contrast with the smooth case, different connected components may correspond to different Grassmannians. Some of these components may correspond not to full positroid cells, but to special subvarieties. For generic curves the problem of describing these subvarieties is completely open. For a rational degeneration of genus $(n-1)$ hyperelliptic $\mathtt M$-curves this problem was studied in \cite{A1} and it was shown that the corresponding soliton data in $Gr^{\mbox{\tiny TP}}(k,n)$ formed $(n-1)$--dimensional varieties known in literature \cite{BK} to be related to the finite open Toda system. The same KP soliton family has been recently re-obtained in \cite{Nak} in the framework of the Sato Grassmannian, whereas the spectral data for the finite Toda was studied earlier in \cite{KV}. 

Moreover, all results valid in the KP hierarchy also go through for its reductions such as the KdV and the Boussinesq hierarchies.  Therefore a natural question is to classify the subvarieties in $Gr^{\mbox{\tiny TNN}}(k,n)$ associated to such relevant reductions. We remark that a similar problem was addressed in \cite{KX} for complex KP soliton solutions.

In \cite{AG2} we studied in detail the transition from multiline soliton solutions to finite-gap solutions associated to almost degenerate $\mathtt M$-curves in the first non-trivial case. We expect that the coordinates on the moduli space, compatible with $\mathtt M$-structure, introduced in \cite{Kr5}, may be useful in this study. 

It is an open problem whether all real and regular divisor positions in the ovals are realizable as the soliton data vary in $\S$ for a given normalizing time $\vec t_0$. The latter problem is naturally connected to the classification of realizable asymptotic soliton graphs studied in \cite{KW2}. 

Finally, similar gluing problems of little Grassmannians expressed as compatibility of linear systems at vertices respecting the total positivity property appear also in several different problems, such as the momentum--helicity conservation relations in the on--shell amplituhedron problem for the $N=$ SYM in \cite{AGP1,AGP2, Lam2} and the geometry of polyhedral subdivisions \cite{Pos2,PSW}. It is unclear to us whether and how our approach for KP may be related to these questions.

\smallskip

{\bf Notations:} We use the following notations throughout the paper:
\begin{enumerate}
\item $k$ and $n$ are positive integers such that $k<n$;
\item  For $s\in {\mathbb N}$ let $[s] =\{ 1,2,\dots, s\}$; if $s,j \in {\mathbb N}$, $s<j$, then
$[s,j] =\{ s, s+1, s+2,\dots, j-1,j\}$;
\item  ${\vec t} = (t_1,t_2,t_3,\dots)$ is the infinite vector of real KP times where $t_1=x$, $t_2=y$, $t_3=t$, and we assume that only a finite number of components are different from zero;
\item We denote $\theta(\zeta,\vec t)= \sum\limits_{s=1}^{\infty} \zeta^s t_s$, due to the previous remark  $\theta(\zeta,\vec t)$ is well-defined for any complex $\zeta$;
\item We denote the real KP phases 
$\kappa_1< \kappa_2 < \cdots < \kappa_n$ and
$\theta_j \equiv \theta (\kappa_j, \vec t)$.
\end{enumerate}

\part{Construction of the spectral curve and the divisor}

In this Part we associate to a plabic network the corresponding algebro-geometircal KP data, namely a reducible rational $\mathtt M$-curve and a divisor on it, satisfying the reality and regularity conditions. 

\section{Systems of relations on plabic networks and totally non-negative Grassmannians}\label{sec:vectors}

Following \cite{Pos} we parametrize totally non-negative Grassmannians in terms of planar bicolored directed trivalent perfect networks (plabic networks). In our construction these graphs are dual to the spectral curves on which the KP wave functions are defined. These spectral curves associated to the real regular multisoliton spectral data are reducible $\mathtt M$--curves. The KP wave function on such curve may be uniquely reconstructed from its values at the double points of the curve, which correspond to the edges of the graph. These values are defined through linear relations at the vertices of the network with boundary conditions fixed by the soliton data. Such relations were introduced in \cite{Lam2}, where the problem of characterizing relations respecting the total non-negativity was posed. In \cite{AG4} we provided a geometric solutions of this problem. Another approach to the same problem was developed in \cite{AGPR} using Kasteleyn theorem. Relations between these two approaches were discussed in \cite{A3}. In this Section we briefly recall the geometric approach, and in Section~\ref{sec:3} we apply it to construct the wave function and the divisor.

\subsection{Totally non--negative Grassmannians and plabic networks in the disk}\label{sec:plabic_graphs}

\begin{definition}\textbf{Totally non-negative Grassmannian \cite{Pos}.}
Let $Mat^{\mbox{\tiny TNN}}_{k,n}$ denote the set of real $k\times n$ matrices of maximal rank $k$ with non--negative maximal minors $\Delta_I (A)$. Let $GL_k^+$ be the group of $k\times k$ matrices with positive determinants. Then the totally non-negative Grassmannian $\GTNN$ is 
\[
\GTNN = GL_k^+ \backslash Mat^{\mbox{\tiny TNN}}_{k,n}.
\]
\end{definition}

In the theory of totally non-negative Grassmannians an important role is played by the positroid stratification. Each cell in this stratification is defined as the intersection of a Gelfand-Serganova stratum \cite{GS,GGMS} with the totally non-negative part of the Grassmannian. More precisely:
\begin{definition}\textbf{Positroid stratification \cite{Pos}.} Let $\mathcal M$ be a matroid i.e. a collection of $k$-element ordered subsets $I$ in $[n]$, satisfying the exchange axiom (see, for example \cite{GS,GGMS}). Then the positroid cell $\S$ is defined as
$$
\S=\{[A]\in \GTNN\ | \ \Delta_{I}(A) >0 \ \mbox{if}\ I\in{\mathcal M} \ \mbox{and} \  \Delta_{I}(A) = 0 \ \mbox{if} \ I\not\in{\mathcal M}  \}.
$$
A positroid cell is irreducible if, for any $j\in [n]$, there exist $I, J\in \mathcal M$ such that $j\in I$ and $j\not\in J$. 
\end{definition}
The combinatorial classification of all non-empty positroid cells and their rational parametrizations were obtained in \cite{Pos}, \cite{Tal2}. In our construction we use the classification of positroid cells via directed planar networks in the disk in \cite{Pos}. Irreducible positroid cells play the relevant role in the applications to real regular KP soliton solutions, since they provide the minimal realization of such solutions in totally non--negative Grassmannians \cite{CK, KW2} (see also Section \ref{sec:soliton_theory}).

In the following we restrict ourselves to \textbf{irreducible positroid cells} $\S$ and we consider planar bicolored directed trivalent perfect graphs in the disk (plabic graphs) representing $\S$. Irreducibility of the positroid cell implies that the graph has neither isolated boundary sinks nor isolated boundary sources. We assume that the graphs satisfy the following additional assumption (Item \ref{cond:sign} in the definition below): for any internal edge $e$ there exists a directed path containing $e$ and starting at a boundary source and ending at a boundary sink. Such assumption is essential to guarantee that amalgamation of the little positive Grassmannians $Gr^{\mbox{\tiny TP}}(1,3)$ and $Gr^{\mbox{\tiny TP}}(2,3)$ preserving the total non--negativity is fully controlled by geometric signatures and that all signatures compatible with the total non--negative property are of geometric type. Let us remark that the graph can be reducible.

More precisely, we consider the following class of graphs ${\mathcal G}$:
\begin{definition}\label{def:graph} \textbf{Planar bicolored directed trivalent perfect graphs in a convex topological disk (plabic graphs).} A graph ${\mathcal G}$ is called plabic if it has the following properties:
\begin{enumerate}
\item  ${\mathcal G}$ is planar, directed and lies inside a disk. Moreover ${\mathcal G}$ is connected in the sense that it does not possess components isolated from the boundary;
\item It has finitely many vertices and edges;
\item It has $n$ boundary vertices on the boundary of the disk labeled $b_1,\cdots,b_n$ clockwise. Each boundary vertex has degree 1. $b_i$ is a source (respectively sink) if its edge is outgoing (respectively incoming). The remaining vertices are called internal and are located strictly inside the disk. It is not restrictive to assume that they have valency either 2 or 3; 
\item ${\mathcal G}$ is a perfect graph, that is each internal vertex in  ${\mathcal G}$ is incident to exactly one incoming edge or to one outgoing edge;\label{perf:5}
\item Each vertex is colored black or white. If a trivalent vertex has only one incoming edge, it is colored white, otherwise, it is colored black. Bivalent vertices are assigned either white or black color; \label{perf:6}
\item For any internal edge $e$ there exists a directed path containing $e$ and starting at a boundary source and ending at a boundary sink. In particular the graph has no internal sources or sinks. \label{cond:sign} 
\end{enumerate}
Moreover, to simplify the overall construction we further assume that the boundary vertices $b_j$, $j\in [n]$, lie on a common straight interval in the boundary of the disk. 
\end{definition}
\begin{definition}
An orientation on given plabic graph ${\mathcal G}$ is called \textbf{perfect} if it respects properties (\ref{perf:5}), (\ref{perf:6}), i.e. it respects perfectness of  ${\mathcal G}$. 
\end{definition}
We consider in the following different orientations on the same graph, but we assume that all of them are perfect with respect to the fixed coloring.
\begin{remark}
It is easy to check that if Property~(\ref{cond:sign}) is satisfied for some perfect orientation, then it is satisfied for all  perfect orientations of ${\mathcal G}$.
\end{remark}

\begin{remark}
In our construction only the bivalent vertices joined by edges to two boundary vertices have to be kept. All other bivalent vertices may be eliminated using Postnikov moves and are irrelevant in the construction of the KP divisor.
\end{remark}

In Figure \ref{fig:Rules0} we present an example of a plabic graph satisfying Definition~\ref{def:graph} and representing a 10-dimensional positroid cell in $Gr^{\mbox{\tiny TNN}}(4,9)$. 

The graph is of type $(k,n)$ if it has $n$ boundary vertices and $k$ of them are boundary sources. Any choice of perfect orientation preserves the type of ${\mathcal G}$. To any perfect orientation $\mathcal O$ of ${\mathcal G}$ one assigns the base $I_{\mathcal O}\subset [n]$ of the $k$-element source set for $\mathcal O$. Following \cite{Pos} the matroid of ${\mathcal G}$ is the set of $k$-subsets  $I_{\mathcal O}$ for all perfect orientations:
$$
\mathcal M_{\mathcal G}:=\{I_{\mathcal O}|{\mathcal O}\ \mbox{is a perfect orientation of}\ \mathcal G \}.
$$

\begin{definition}\textbf{Plabic network.}
 A plabic network is a perfectly oriented plabic graph $({\mathcal G},{\mathcal O}(I))$ together with an assignment of non-zero weights $w_e$ to the edges.  
\end{definition}

\begin{definition}\textbf{Postnikov boundary measurement map \cite{Pos}.}

For any given oriented planar network in the disk the formal boundary measurement map is defined by:
\begin{equation}\label{eq:bmm}
M_{ij} := \sum\limits_{P:b_i\mapsto b_j} (-1)^{\mbox{\scriptsize Wind}(P)} w(P),
\end{equation}
where the sum is over all directed walks from the source $b_i$ to the sink $b_j$, $w(P)$ is the product of the edge weights of $P$ and $\mbox{Wind}(P)$ is its topological winding index defined in \cite{Pos}.
\end{definition}

These formal power series sum up to subtraction free rational expressions in the weights \cite{Pos} and explicit expressions in function of flows and conservative flows in the network is provided in \cite{Tal2}. Let $I$ be the base inducing the orientation of $\mathcal N$ used in the computation of the boundary measurement map. Then the point $Meas(\mathcal N)\in Gr(k,n)$ is represented by the boundary measurement matrix $A$ such that:
\begin{itemize}
\item The submatrix $A_I$ in the column set $I$ is the identity matrix;
\item The remaining entries $A^r_j = (-1)^{\sigma(i_r,j)} M_{ij}$, $r\in [k]$, $j\in \bar I$, where $\sigma(i_r,j)$ is the number of elements of $I$ strictly between $i_r$ and $j$.
\end{itemize}

In \cite{Pos} it is proven that $\mathcal M_{\mathcal G}$ is a positroid: for any choice of real positive weights on the edges of $\mathcal G$ the image of Postnikov boundary measurement map on the directed network $\mathcal N$ of graph $\mathcal G$ is a point in $\mathcal S^{\mbox{\tiny TNN}}_{\mathcal M_{\mathcal G}}$, i.e. the matrix $A$ is totally non-negative. Vice versa, for any point $[A]\in \mathcal S^{\mbox{\tiny TNN}}_{\mathcal M_{\mathcal G}}$ there is a choice of positive weights so that the resulting network of graph $\mathcal G$ represents $[A]$. Indeed, if one changes the perfect orientation of the network and simultaneously changes the weights to the reciprocal on the edges changing the direction, then the Grassmannian point remains invariant. In addition, the Grassmannian point is preserved by the following weight gauge transformation.

\begin{remark}\label{rem:gauge_weight}\textbf{The weight gauge freedom \cite{Pos}.} Given a point $[A]\in\S$
and a planar directed graph ${\mathcal G}$ in the disk representing $\S$, then $[A]$
 is represented by infinitely many gauge equivalent systems of weights $w_e$ on the edges $e$ of ${\mathcal G}$. Indeed, if a positive number $t_V$ is assigned to each internal vertex $V$, whereas $t_{b_i}=1$ for each boundary vertex $b_i$, then the transformation on each directed edge $e=(U,V)$
\begin{equation}
\label{eq:gauge}
w_e\rightarrow w_e t_U \left(t_V\right)^{-1},
\end{equation}
transforms the given directed network into an equivalent one representing $[A]$.
\end{remark}

In~\cite{Pos} it was shown that the boundary measurement map is one-to-one and onto map from the set of equivalence classes of positive weights to the corresponding positroid cell if the graph is reduced. 

Plabic graphs $\mathcal G$ representing the same positroid cell $\S$ are equivalent via a finite sequence of moves and reductions \cite{Pos}. More precisely, a plabic graph $\mathcal G$ can be transformed into a plabic graph $\mathcal G'$ via a finite sequence of Postnikov moves and reductions if and only if $\mathcal M_{\mathcal G}=\mathcal M_{\mathcal G'}$.

A graph  $\mathcal G$ is reduced if there is no other graph in its move reduction equivalence class which can be obtained from $\mathcal G$ applying a sequence of transformations containing at least one reduction. Each positroid cell $\S$ is represented by at least one reduced graph, the so called Le--graph, associated to the Le--diagram representing $\S$ and it is possible to assign weights to such graphs in order to obtain a global parametrization of $\S$ \cite{Pos}.

Each Le-graph is reduced, and, if $\S$ is irreducible, it is a reduced plabic graph. 
If $\mathcal G$ is a reduced  plabic graph, then the dimension of  $\mathcal S^{\mbox{\tiny TNN}}_{\mathcal M_{\mathcal G}}$ is equal to the number of faces of $\mathcal G$ minus 1. The plabic graph in Figure \ref{fig:Rules0} is reduced.

\subsection{Systems of relations on plabic networks: definition and explicit representation}\label{sec:def_edge_vectors}

In \cite{Lam2} Lam proposed to parametrize the boundary measurement map with complex weights using spaces of relations on bicolored graphs introducing formal half--edge variables $z_{U,e}$ which satisfy the following system on the graph:
\begin{definition}\textbf{Lam relations \cite{Lam2}}\label{def:lam}
Let $\mathcal G$  be a plabic graph, with either real or complex weights $w_{U,V}$ assigned to the oriented edges $e=(U,V)$, let $\varepsilon_{U,V}$ be a function on the  directed edges taking values in $\{0,1\}$, and let $\mathcal W$ be some vector space. Consider a system of vectors  $z_{U,e}\in{\mathcal W}$ assigned to half-edges, i.e. pairs $(U,e)$, where $U$ is a vertex and $e$ is an edge incident to $U$.  The system $z_{U,e}$ satisfy Lam relations if 
  
\begin{enumerate}
\item For any edge $e=(U,V)$, $z_{U,e} = (-1)^{\varepsilon_{U,V}}  w_{U,V} z_{V,e}$. If we reverse the orientation,  $z_{V,e} = (-1)^{\varepsilon_{U,V}}  w_{U,V}^{-1} z_{U,e}$;
\item If $e_i$, $i\in [m]$, are the edges at an $m$-valent white vertex $V$, then $\sum_{i=1}^m z_{V,e_i} =0$;
\item If $e_i$, $i\in [m]$, are the edges at an $m$-valent black vertex $V$, then $z_{V,e_i} =z_{V,e_j}$ for all $i,j\in[m]$.
\end{enumerate}
\end{definition}

In \cite{Lam2} it is conjectured that there exist simple rules to assign signatures $\varepsilon_{U,V}$ so that the above system has full rank for any choice of positive weights, and the image of this weighted space of relations is the positroid cell $\S\subset\GTNN$ corresponding to the graph.
If such a signature exists, it is not unique because the system has big gauge freedom.

\begin{definition}\textbf{Equivalence between edge signatures} \label{def:admiss_sign_gauge}
Let $\epsilon^{(1)}_{U,V}$ and $\epsilon^{(2)}_{U,V}$ be two signatures on all the edges $e=(U,V)$ of the plabic graph $\mathcal G$ with the same perfect orientation. Two signatures are equivalent if there exists an index $\eta(U) \in \{ 0,1\}$ at each internal vertex $U$ such that modulo 2
\begin{equation}\label{eq:equiv_sign}
\epsilon^{(2)}_{U,V} = \left\{\begin{array}{ll}
\epsilon^{(1)}_{U,V} +\eta(U)+\eta(V), & \mbox{ if } e=(U,V) \mbox{ is an internal edge},\\
        \epsilon^{(1)}_{U,b} +\eta(U), & \mbox{ if } e=(U,b) \mbox{ is the edge at the boundary sink } b, \\ 
        \epsilon^{(1)}_{b,V} +\eta(V), & \mbox{ if } e=(b,V) \mbox{ is the edge at the boundary source } b.                                                       
\end{array}\right.
\end{equation} 
\end{definition}

\begin{lemma}\label{lem:gauge}
If $\epsilon^{(1)}_{U,V}$ and $\epsilon^{(2)}_{U,V}$ are two equivalent signatures on a plabic graph, then for a given collection of weights Lam system of relations has full rank with respect to  $\epsilon^{(1)}_{U,V}$ if and only if it has full rank  with respect to  $\epsilon^{(2)}_{U,V}$. Moreover, for the same boundary conditions at the boundary sinks they define the same solution at the boundary sources.

If $w_e^{(1)}$ and $w_e^{(2)}$ are two gauge equivalent collections of weights  on a plabic graph $\mathcal G$, then Lam system of relations has full rank with respect to $w_e^{(1)}$  if and only if it has full rank  with respect to  $w_e^{(2)}$. Moreover, for the same boundary conditions at the boundary sinks they define the same solution at the boundary sources.
\end{lemma}

In our paper \cite{AG4} we introduced geometric signatures, we proved the invariance of face signatures with respect to the gauge freedoms of network, and we explained the relation between geometric signatures and non-negativity of the boundary measurement map. Below we recall all these results.

To define the geometric signature $\varepsilon_{U,V}$, we fix a gauge by choosing a gauge ray direction and introducing local winding and intersection numbers.
\begin{figure}
  \centering{\includegraphics[width=0.5\textwidth]{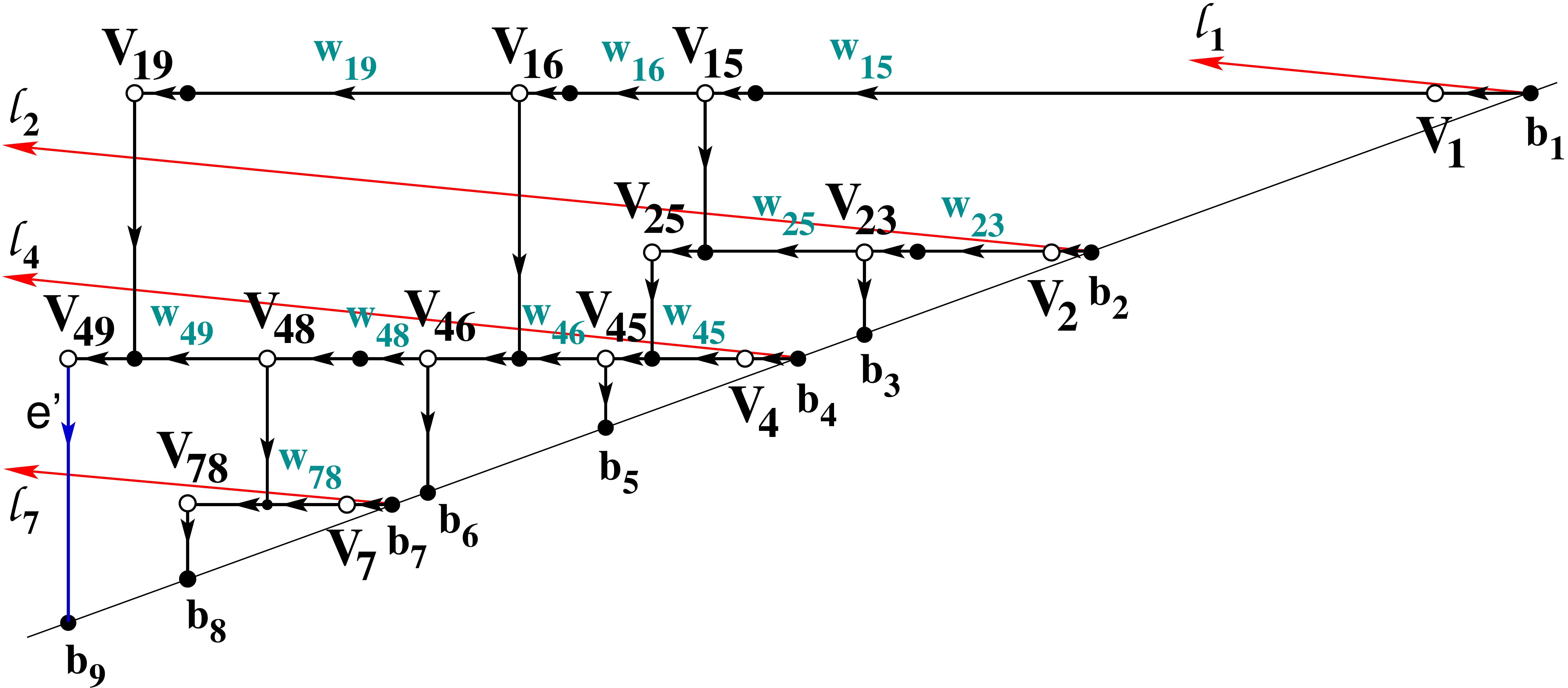}
      \caption{\small{\sl The rays starting at the boundary sources for a given orientation of the 
          network uniquely fix the system of relations.
					}}\label{fig:Rules0}}
\end{figure}
\begin{definition}\label{def:gauge_direction}\textbf{The gauge ray direction $\mathfrak{l}$.}
A gauge ray direction is an oriented direction ${\mathfrak l}$ with the following properties:
\begin{enumerate}
\item The ray with the direction ${\mathfrak l}$ starting at a boundary vertex points inside the disk; 
\item No internal edge is parallel to this direction;
\item All rays starting at boundary vertices do not contain internal vertices.
\end{enumerate}

\end{definition}
\begin{definition}\label{def:gauge_ray}\textbf{The gauge rays at the boundary sources.}
  Given a perfect orientation ${\mathcal O}(I)$ on the graph and a gauge ray direction as in Definition~\ref{def:gauge_direction}, we assign the gauge ray ${\mathfrak l}_s$ starting at source $b_s$ in the direction  ${\mathfrak l}$ to the boundary source $b_s$.  (see Fig.~\ref{fig:Rules0}).
\end{definition}

\begin{remark}
Gauge ray directions were used in \cite{GSV} to measure the local winding number. In \cite{AG4} we introduced the intersections of gauge rays with a given path as an analog of the index $\sigma(i_r,j)$ for internal edges.  
\end{remark}

The local winding number between a pair of consecutive edges $e_k,e_{k+1}$ is then defined as follows. 

\begin{definition}\label{def:winding_pair}\textbf{The local winding number at an ordered pair of oriented edges}
For an ordered pair $(e_k,e_{k+1})$ of oriented edges, define
\begin{equation}\label{eq:def_s}
s(e_k,e_{k+1}) = \left\{
\begin{array}{ll}
+1 & \mbox{ if the ordered pair is positively oriented }  \\
0  & \mbox{ if } e_k \mbox{ and } e_{k+1} \mbox{ are parallel }\\
-1 & \mbox{ if the ordered pair is negatively oriented }
\end{array}
\right.
\end{equation}
Then the winding number of the ordered pair $(e_k,e_{k+1})$ with respect to the gauge ray direction $\mathfrak{l}$ is
\begin{equation}\label{eq:def_wind}
\mbox{wind}(e_k,e_{k+1}) = \left\{
\begin{array}{ll}
+1 & \mbox{ if } s(e_k,e_{k+1}) = s(e_k,\mathfrak{l}) = s(\mathfrak{l},e_{k+1}) = 1\\
-1 & \mbox{ if } s(e_k,e_{k+1}) = s(e_k,\mathfrak{l}) = s(\mathfrak{l},e_{k+1}) = -1\\
0  & \mbox{otherwise}.
\end{array}
\right.
\end{equation}
\end{definition}

\begin{definition}\label{def:intersect_number}\textbf{The intersection number at an oriented edge}. 
  Given a perfect orientation ${\mathcal O}(I)$ on the graph and a gauge ray direction, the intersection number
  $\mbox{int}(e)$ for an edge $e$ is the number of intersections of the gauge rays  ${\mathfrak l}_s$ starting at the boundary sources $b_s$ with $e$. To the intersection of $l_s$ with $e$ we assign $+1$ if the pair $({\mathfrak l},e)$ is positively oriented, and $-1$ otherwise. 
\end{definition}
\begin{figure}
  \centering{\includegraphics[width=0.4\textwidth]{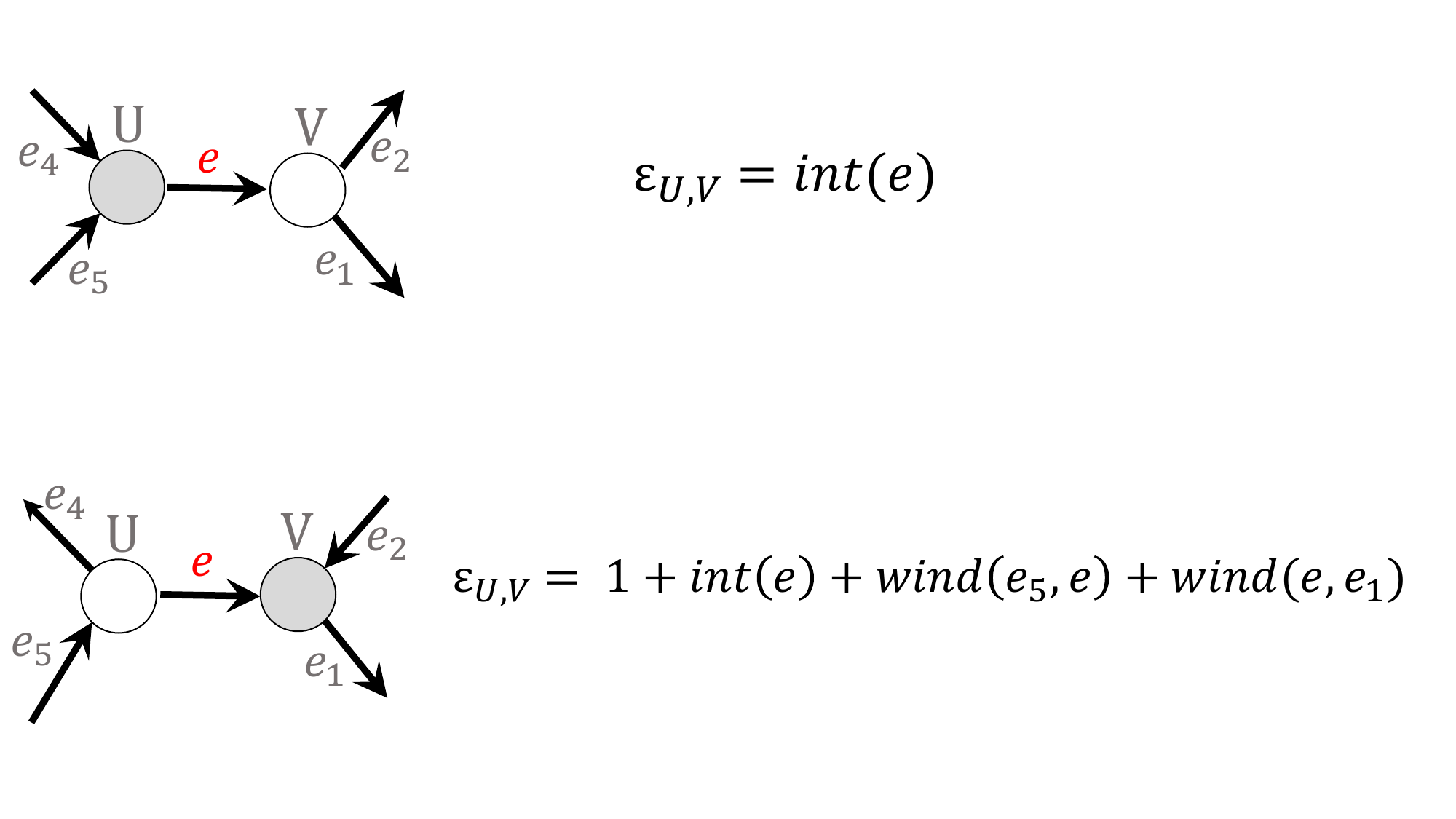}\hfill\includegraphics[width=0.4\textwidth]{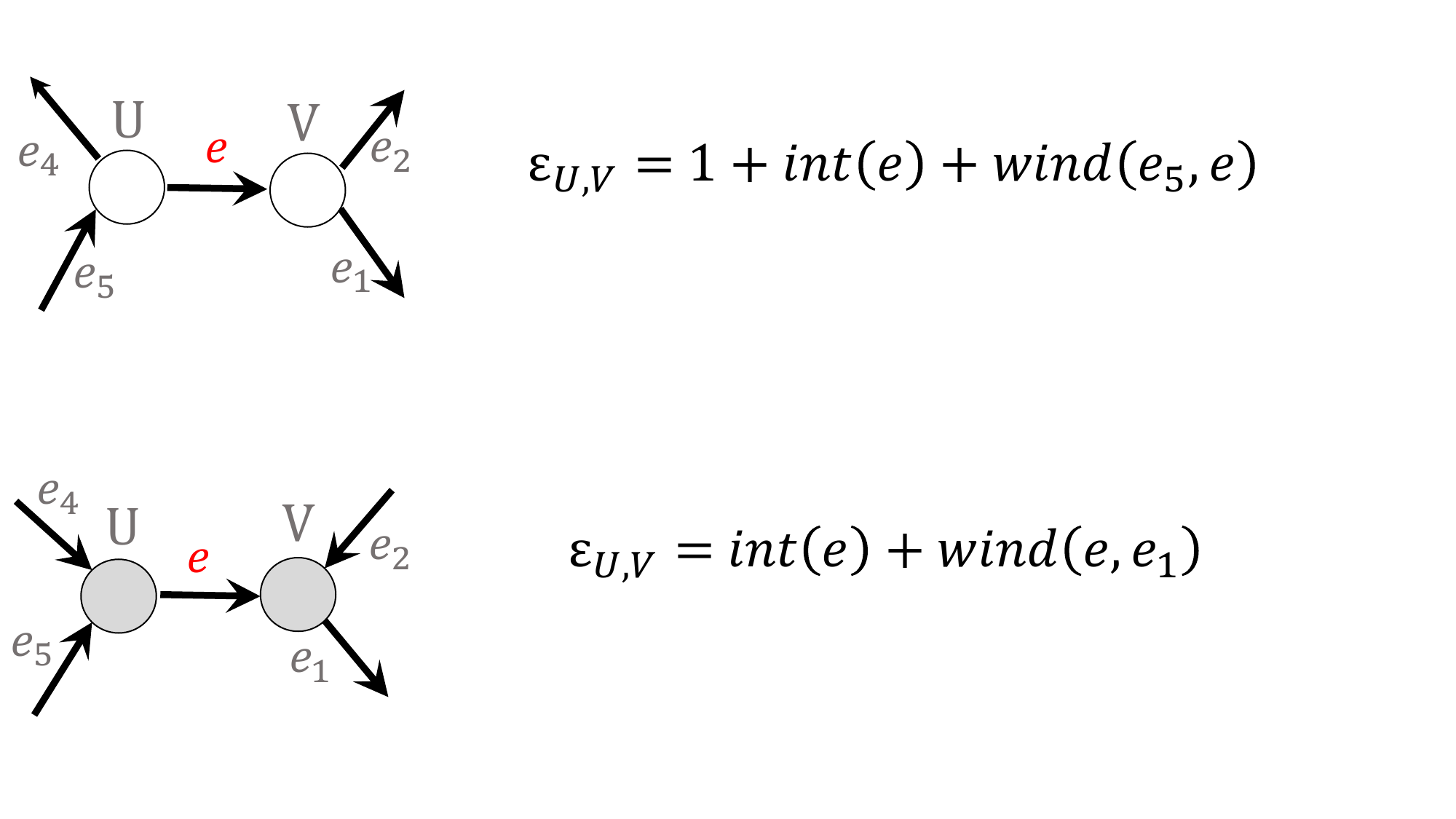}
    \includegraphics[width=0.4\textwidth]{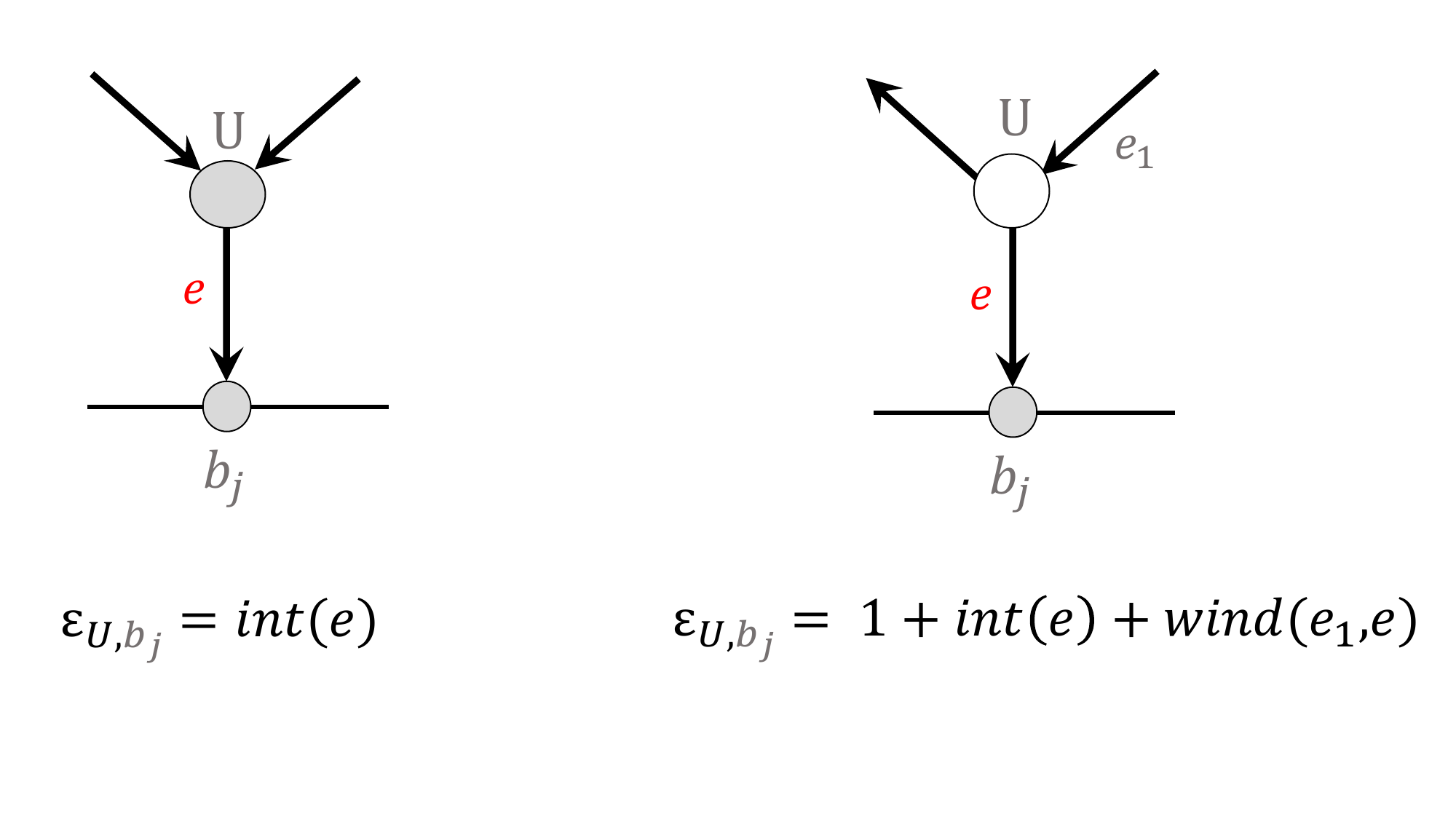}\hfill\includegraphics[width=0.4\textwidth]{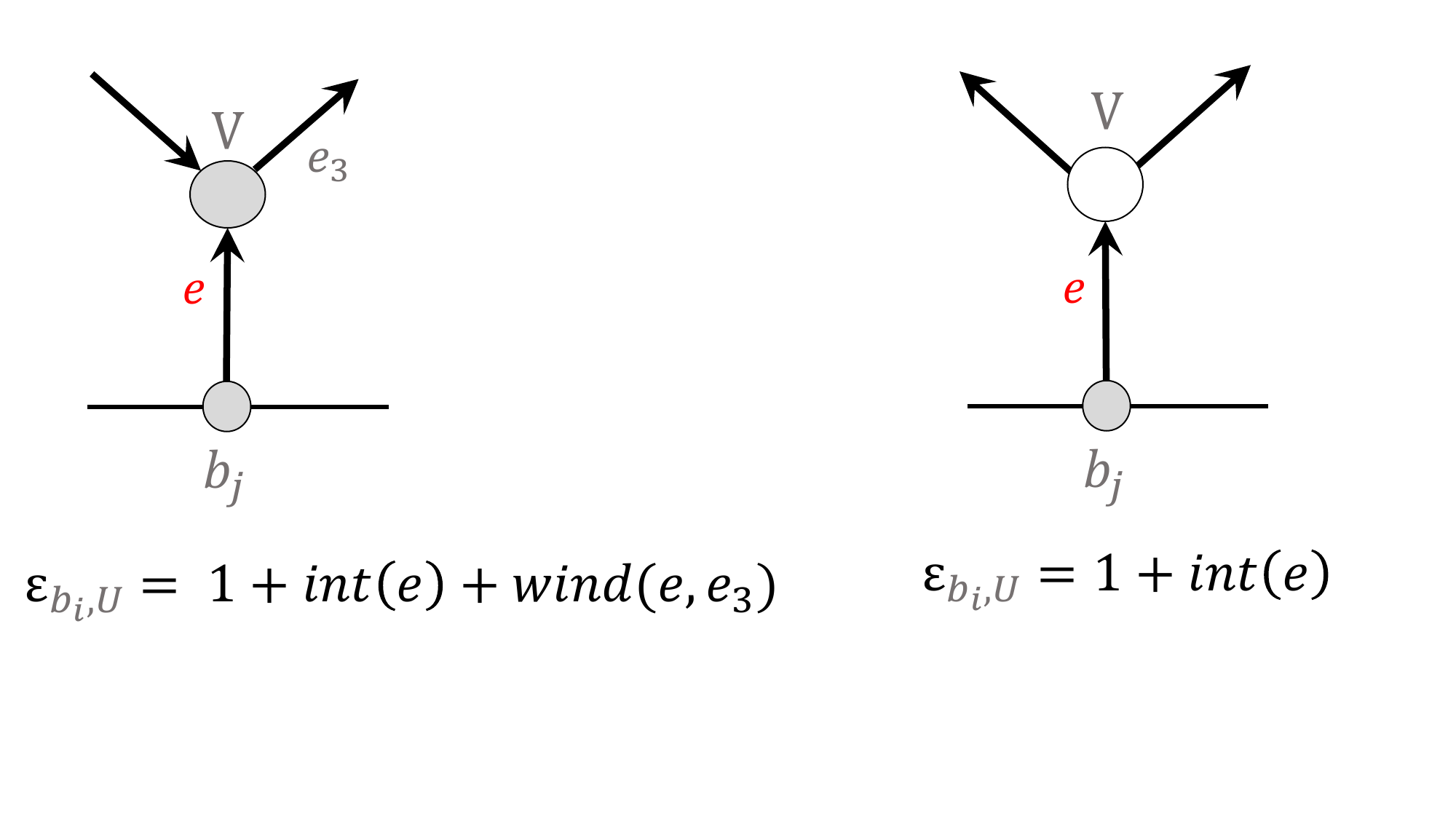}
		\vspace{-1 truecm}
      \caption{\small{\sl Computation of the geometric signature in Definition~\ref{def:geo_sign}.
					}}\label{fig:geo_sig}}
\end{figure}
\begin{definition}\textbf{The geometric signature on $(\mathcal G, \mathcal O, \mathfrak l)$.}\label{def:geo_sign} 
Let $(\mathcal G, \mathcal O, \mathfrak l)$ be a plabic graph representing a $|D|$--dimensional positroid cell $\S \subset \GTNN$ with perfect orientation $\mathcal O$ associated to the base $I$ and gauge ray direction $\mathfrak l$. 

We call geometric signature on  $(\mathcal G, \mathcal O, \mathfrak l)$ the following function $\epsilon_e $ on the directed edges of the graph taking values in $\{0,1\}$ (all formulas in this Definition are modulus 2):

\begin{enumerate}
\item  If $e=(U,b_j)$ is the edge at the boundary sink $b_j$, $j\in \bar{I}$, then
\begin{equation}
\label{eq:lin_lam_1.0.5}
\epsilon_{U,b_j}=\left\{\begin{array}{ll} \mbox{int}(e), & \mbox{ if } U \mbox{ is black,} \\ 
1+ \mbox{int}(e) + \mbox{wind}(e_1,e), & \mbox{ if } U  \mbox{ is white and } e_1 \mbox{ is incoming at } U;
                        \end{array} \right.          
                    \end{equation}
                    
\item If $e=(b_i,V)$ is the edge at the boundary source $b_i$, $i\in I$, then
\begin{equation}
\label{eq:lin_lam_1.1}
\epsilon_{b_i,V}=\left\{\begin{array}{ll} 1+ \mbox{int}(e)+\mbox{wind}(e,e_3), & \mbox{ if } V \mbox{ is black and }  e_3 \mbox{ is outgoing at } V,\\ 
1 + \mbox{int}(e), & \mbox{ if } V  \mbox{ is white};
      \end{array} \right.
\end{equation}
\item If $e=(U,V)$ is an internal edge, then
\begin{equation}\label{eq:lam_corr_edge}
\resizebox{\textwidth}{!}{$ 
\epsilon_{U,V}= \left\{ \begin{array}{ll}
\mbox{int}(e), & \mbox{ if } U \mbox{ black and } V { white};\\
1+\mbox{int}(e)+\mbox{wind}(e_5,e), & \mbox{ if } U, V \mbox{ white and } e_5 \mbox{ incoming at } U;\\
1+\mbox{int}(e)+\mbox{wind}(e_5,e)+\mbox{wind}(e,e_1), & \mbox{ if } e_5 \mbox{ incoming at } U \mbox{ white and } e_1 \mbox{ outgoing at } V\mbox{ black;}\\
\mbox{int}(e) +\mbox{wind}(e,e_1), & \mbox{ if } U, V \mbox{ black and } e_1 \mbox{ outgoing at } V;\\
\end{array} \right.
$}
\end{equation}
\end{enumerate}
We also call geometric all signatures gauge equivalent to the one defined above. 
\end{definition}

We illustrate Definition~\ref{def:geo_sign} in Figure~\ref{fig:geo_sig}.

\begin{remark}
The definition of geometric signatures was motivated by the formulas expressing the vectors at internal edges as the sums over all paths connecting the given edge to the boundary sinks with signs defined in terms of winding and intersection numbers \cite{AG7}. 
\end{remark}

In \cite{AG4} we prove that changes of perfect orientations, of gauge ray direction or internal vertex position change (which can affect local windings and intersections) act on geometric signatures as gauge transformations. Therefore the total geometric signature on each face is independent on the choice of perfect orientation and gauge ray direction. In \cite{AG4}, we also show that its value just depends on the number of white vertices bounding the face. This result will be used in Section \ref{sec:comb} to prove the reality and regularity properties of KP divisors.

\begin{theorem}\textbf{Effect of elementary transformations on signatures and the total signature at faces}\label{theo:sign_face} \cite{AG4}
Let $({\mathcal G},{\mathcal O}(I))$ be a plabic graph in the disk representing a positroid cell $\S\subset \GTNN$, and let $\epsilon_{U,V}$ be its geometric signature. or a face  $\Omega$  of the graph $\mathcal G$ let $\epsilon(\Omega)$ be the total contribution of the geometric signature at the edges $e=(U,V)$ bounding the face $\Omega$:
\begin{equation}\label{eq:eps_tot}
\epsilon(\Omega) = \sum_{e\in\partial\Omega} \epsilon_{e}.
\end{equation}
Then $\epsilon(\Omega)\mod(2)$ is invariant with respect to changes of orientation, gauge ray direction and internal vertices position change, and
\begin{equation}\label{eq:sign_face}
\epsilon(\Omega) \; = \;  
\left\{ \begin{array}{ll}
n_{\mbox{\scriptsize{white}}}(\Omega) \; + \; 1 \quad \mod 2, & \quad \mbox{if } \Omega \mbox{ is a finite face}; \\
\\
n_{\mbox{\scriptsize{white}}}(\Omega) \; + \; k \quad \mod 2, & \quad \mbox{if } \Omega \mbox{ is the infinite face}.
\end{array}
\right.
\end{equation}
\end{theorem}

Finally, we characterize the solutions to the system of relations for the geometric signature.

\begin{theorem}\textbf{Lam system of relations for the geometric signature and Postnikov boundary measurement matrix}\label{thm:lam_rel} \cite{AG4,AG7}
Let $\mathcal G$ be a plabic graph representing a $|D|$--dimensional positroid cell $\S \subset \GTNN$, and let $\epsilon_{U,V}$ be a geometric signature in the equivalence class of $\mathcal G$. Let ${|\mathcal E}|$ be the number of edges of the graph. Then Lam system of relations has the following properties:
\begin{enumerate}
\item It has $2|{\mathcal E}|$ variables and   $2|{\mathcal E}|+k-n$ equations;  
\item If all weights $w_{e}$ are positive, then Lam system of relations has full rank;   
\item Given the matrix representing Lam system, then for any $I\in{\mathcal M}$ the maximal minor associated to the variables at the internal half-edges and at the sources $b_j$, $j\in I$, is different from zero;
\item\label{thm:lam_rel:i5} The half-edge solutions of Lam system are untrivial rational functions of the edge weights with subtraction-free denominators. These components are explicitly given in terms of the generalized Talaska flows (see \cite{AG7} for the explicit formulas). In particular, if the network possesses an acyclic orientation, these numerators are subtraction-free rational functions, and this property is preserved for any other perfect orientation and choice of the ray direction, of vertex gauge and of weight gauge, because of the gauge invariance of the signature;
\item Let the half-edge variables be vectors in $\R^n$, and  $I\in{\mathcal M}$ be fixed. If one assigns the $j$--th vector of the canonical basis $E_j$ to the variable $z_{b_j}$,
  \begin{equation}
  \label{eq:bc1}  
z_{b_j} =  E_j, \ \  j\in \bar I
\end{equation}
then at the boundary sources the solution of the system is
\begin{equation}
  \label{eq:bc2}  
z_{b_{i_r}, e_{i_r}} =E_{i_r}- A[r],
\end{equation}
where $i_r\in I$, and $A[r]$ is the $r$-th row of the reduced row echelon matrix with respect to the base $I$ representing the network $\mathcal N$. Therefore Lam relations for the geometric signature reproduce Postnikov measurement matrix.
\end{enumerate}
\end{theorem}

\begin{remark}\textbf{Zero numerators on reducible networks}\label{rem:zero_vectors}
Since the numerators in Item~\ref{thm:lam_rel:i5} are not necessary subtraction-free, on reducible plabic networks it may happen that for some positive weights they may vanish even if there exist directed paths starting at $e$ and ending at $b_j$ (see Section~\ref{sec:constr_null}).
\end{remark}

\section{KP multi-line solitons in the Sato Grassmannian and in finite-gap theory}\label{sec:soliton_theory}

Kadomtsev-Petviashvili-II (KP) equation \cite{KP}
\begin{equation}\label{eq:KP}
(-4u_t+6uu_x+u_{xxx})_x+3u_{yy}=0,
\end{equation}
is one of the most famous integrable equations, and it is a member of an integrable hierarchy (see \cite{D,DKN,H,MJD,S} for more details).  

The multiline soliton solutions are a special class of solutions  
realized starting from the soliton data $({\mathcal K}, [A])$, where 
${\mathcal K}$ is a set of real ordered phases $\kappa_1<\cdots<\kappa_n$, $[A]$ denotes
a point in the finite dimensional real Grassmannian $Gr (k,n)$ represented by a $k\times n$ real matrix  $A =( A^i_j )$ ($i\in [k], j\in [n]$), of maximal rank $k$.
Following \cite{Mat}, see also \cite{FN}, to such data we associate $k$ linearly independent solutions
$f^{(i)}(\vec t) = \sum_{j=1}^n A^i_j e^{\theta_j}$, $i\in [k]$, to the heat hierarchy
$\partial_{t_l} f = \partial_x^l f$, $l=2,3,\dots$.
Then
\begin{equation}\label{eq:KPsol}
u( {\vec t} ) = 2\partial_{x}^2 \log(\tau ( {\vec t}))
\end{equation}
is a multiline soliton solution to (\ref{eq:KP}) with
\[
\tau (\vec t) = Wr_{t_1} (f^{(1)},\dots, f^{(k)})= \sum\limits_{I} \Delta_I (A)\prod_{\mycom{i_1<i_2}{ i_1,i_2 \in I}} (\kappa_{i_2}-\kappa_{i_1} ) \, e^{ \sum\limits_{i\in I} \theta_i },
\]
where the sum is over all $k$--element ordered subsets $I$ in $[n]$, {\it i.e.} $I=\{ 1\le i_1<i_2 < \cdots < i_k \le n\}$ and $\Delta_I (A)$ are the maximal minors of the matrix $A$ with respect to the columns $I$, {\it i.e.} the Pl\"ucker coordinates for the corresponding point in the finite dimensional Grassmannian $Gr (k,n)$. Since linear recombinations of the rows of $A$ preserve the KP multisoliton solution $u({\vec t})$ in (\ref{eq:KPsol}), the soliton data is the corresponding point $[A]\in Gr(k,n)$.

$u( {\vec t} ) = 2\partial_{x}^2 \log(\tau ( {\vec t}))$ 
is a real regular multi--line soliton solution to the KP equation (\ref{eq:KP}) bounded for all real $x,y,t$ if and only if $\Delta_I (A) \ge 0$, for all $I$, that is if and only if $[A]\in Gr^{\mbox{\tiny TNN}} (k,n)$ \cite{KW2}. We remark that the weaker statement that the solution of the KP hierarchy is bounded for all real times if and only if all Pl\"ucker coordinates are non-negative was earlier proven in \cite{Mal}.

Any given soliton solution is associated to an infinite set of soliton data $({\mathcal K}, [A])$. However there exists a unique \textbf{minimal} pair $(k,n)$ such that the soliton solution can be realized with $n$ phases $\kappa_1<\cdots<\kappa_n$, $[A]\in \GTNN$ but not with $n-1$ phases and $[A^\prime]\in Gr^{\mbox{\tiny TNN}} (k^{\prime},n^{\prime})$ and either $(k^{\prime}, n^{\prime}) =(k, n-1)$ or $(k^{\prime}, n^{\prime}) =(k-1, n-1)$.
In the following, to avoid excessive technicalities we consider only regular and irreducible soliton data.

\begin{definition}\label{def:regsol}{\bf Regular and irreducible soliton data} \cite{CK}
$({\mathcal K}, [A])$ are regular soliton data if ${\mathcal K} = \{ \kappa_1 < \cdots < \kappa_n \}$ and $[A]\in \GTNN$, that is if the KP soliton solution as in (\ref{eq:KPsol}) is real regular and bounded for all $(x,y,t)\in \mathbb{R}^3$.

Moreover  the regular soliton data $({\mathcal K}, [A])$ are irreducible if $[A]$ is a point in the irreducible part of the real Grassmannian, {\sl i.e.} if the reduced row echelon matrix $A$ has the following properties:
\begin{enumerate}
\item\label{it:col} Each column of $A$ contains at least a non--zero element;
\item\label{it:row} Each row of $A$ contains at least one nonzero element in addition to the pivot.
\end{enumerate}
\end{definition}
The class of solutions associated to irreducible regular soliton data has remarkable asymptotic properties both in the $(x,y)$ plane at fixed time $t$ and in the tropical limit ($t\to \pm \infty)$, which have been related to the combinatorial classification of the irreducible part $\GTNN$ for generic choice of the phases ${\mathcal K}$ in a series of papers (see \cite{BPPP,CK,DMH,KW1,KW2} and references therein).

According to Sato theory \cite{S}, the wave function associated to regular soliton data $({\mathcal K},[A])$, can be obtained from the dressing (inverse gauge) transformation of the vacuum (zero--potential) wave function $\displaystyle \phi^{(0)} (\zeta, \vec t) =\exp ( \theta(\zeta, {\vec t}))$, which solves
$\partial_x \phi^{(0)} (\zeta, \vec t)=\zeta \phi^{(0)} (\zeta, \vec t)$, 
$\partial_{t_l}\phi^{(0)} (\zeta, \vec t) = \zeta^l \phi^{(0)} (\zeta, \vec t)$, $l\ge 2$,
via the dressing ({\it i.e.} gauge) operator $W = 1 -{\mathfrak w}_1({\vec t})\partial_x^{-1} -\cdots - {\mathfrak w}_k({\vec t})\partial_x^{-k}$,
where ${\mathfrak w}_1({\vec t}),\dots,{\mathfrak w}_k({\vec t})$ are the
solutions to the following linear system of equations
$\partial_x^k f^{(i)} = {\mathfrak w}_1 \partial_x^{k-1} f^{(i)}+\cdots + {\mathfrak w}_k f^{(i)}$, $i\in [k]$. Therefore we may define the Darboux operator
\begin{equation}\label{eq:D}
{\mathfrak D}^{(k)} = W \partial_x^{k} =  \partial_x^k - {\mathfrak w}_1 (\vec t)\partial_x^{k-1} -\cdots - {\mathfrak w}_k(\vec t),
\end{equation}
such that 
\begin{equation}\label{eq:Df}
{\mathfrak D}^{(k)}  f^{(i)} \equiv 0, \ \ i\in[k].
 \end{equation} 
Then
\[
L= W \partial_x W^{-1} = \partial_x + \frac{u(\vec t)}{2}\partial_x^{-1} +\cdots,  \qquad
u(\vec t) = 2\partial_x {\mathfrak w}_1 (\vec t),
\]
\begin{equation}\label{eq:psi0}
  \psi^{(0)} (\zeta; \vec t) = W\phi^{(0)} (\zeta; \vec t) =\frac{1}{\zeta^k}\frac{ Wr_{t_1} (f^{(1)},\dots, f^{(k)}, \phi^{(0)})}{ Wr_{t_1} (f^{(1)},\dots, f^{(k)})},
\end{equation}
respectively are the KP-Lax operator, the KP--potential (KP solution) and the KP wave function, {\sl i.e.}
\begin{equation}\label{eq:dress_hier}
L \psi^{(0)} (\zeta; \vec t) =\zeta \psi^{(0)} (\zeta; \vec t), 	\quad\quad
\partial_{t_l}\psi^{(0)} (\zeta; \vec t)= B_l \psi^{(0)} (\zeta; \vec t), \ \ l\ge 2,
\end{equation}
where $B_l = (W \partial_x^l W^{-1} )_+ =(L^l)_+ $ (here and in the following the symbol $(\cdot )_+$ denotes the differential part of the operator).

In the following we also use a different normalization of the Sato wave function.
\begin{remark}\label{def:unnorm}{\bf Not-normalized Sato wave function}
The KP wave function associated to this class of solutions may be equivalently expressed as
\begin{equation}\label{eq:Satowf} 
{\mathfrak D}^{(k)}\phi^{(0)} (\zeta; \vec t)  =  \zeta^k \psi^{(0)} (\zeta; \vec t),
\end{equation}
We call ${\mathfrak D}^{(k)}\phi^{(0)} (\zeta; \vec t)$ \textbf{not-normalized Sato wave function.}
\end{remark}

\begin{definition}\label{def:Satodiv}{\bf Sato divisor coordinates}

Let the regular soliton data be $({\mathcal K}, [A])$, ${\mathcal K} = \{ \kappa_1 < \cdots < \kappa_n \}$, $[A]\in \GTNN$. We call \textbf{Sato divisor coordinates at time} $\vec t$, the set of roots $\zeta_j (\vec t)$, $j\in [k]$, of the characteristic equation associated to the Dressing transformation
\begin{equation}\label{eq:Dressing_roots}
\zeta_j^k(\vec t) - {\mathfrak w}_1 (\vec t)\zeta_j^{k-1}(\vec t)-\cdots  - {\mathfrak w}_{k-1} (\vec t)\zeta_j(\vec t)- {\mathfrak w}_k(\vec t) = 0. 
\end{equation}
\end{definition}

In \cite{Mal} it is proven the following proposition
\begin{proposition}\textbf{Reality and simplicity of the KP soliton divisor}\label{prop:malanyuk} \cite{Mal}.
Let the regular soliton data be $({\mathcal K}, [A])$, ${\mathcal K} = \{ \kappa_1 < \cdots < \kappa_n \}$, $[A]\in \GTNN$. Then for all $\vec t$, $\zeta_j^k(\vec t)$ are real and satisfy $\zeta_j(\vec t)\in [\kappa_1,\kappa_n]$, $j\in [k]$. 
Moreover for almost every $\vec t$ the roots of (\ref{eq:Dressing_roots}) are simple.
\end{proposition}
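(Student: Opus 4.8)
The plan is to identify the characteristic polynomial in (\ref{eq:Dressing_roots}) with an explicit Wronskian ratio whose coefficients inherit the total non--negativity of $[A]$, and then to read off reality, location and generic simplicity of its roots directly from this representation. First I would observe that the functions ${\mathfrak w}_j(\vec t)$ are, by their defining linear system $\partial_x^k f^{(i)}={\mathfrak w}_1\partial_x^{k-1}f^{(i)}+\cdots+{\mathfrak w}_k f^{(i)}$, precisely the coefficients of the unique monic order--$k$ operator ${\mathfrak D}^{(k)}$ in (\ref{eq:D}) whose kernel is $V=\mathrm{span}\{f^{(1)},\dots,f^{(k)}\}$; equivalently ${\mathfrak D}^{(k)}f^{(i)}=0$ for all $i$, a system which is solvable exactly because $\tau(\vec t)=Wr_x(f^{(1)},\dots,f^{(k)})>0$ for all $\vec t$ under the standing hypothesis $[A]\in\GTNN$.

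Writing $P(\zeta,\vec t)=\zeta^k-{\mathfrak w}_1\zeta^{k-1}-\cdots-{\mathfrak w}_k$ for the left--hand side of (\ref{eq:Dressing_roots}), the standard representation ${\mathfrak D}^{(k)}h=Wr_x(f^{(1)},\dots,f^{(k)},h)/\tau$ of the monic operator annihilating $V$, applied to $h=e^{\theta(\zeta,\vec t)}$ and combined with ${\mathfrak D}^{(k)}e^{\theta(\zeta,\vec t)}=P(\zeta,\vec t)\,e^{\theta(\zeta,\vec t)}$, yields
\begin{equation*}
P(\zeta,\vec t)\,e^{\theta(\zeta,\vec t)}\,\tau(\vec t)=Wr_x\big(f^{(1)},\dots,f^{(k)},e^{\theta(\zeta,\vec t)}\big).
\end{equation*}
Since $\partial_x^l e^{\theta_j}=\kappa_j^l e^{\theta_j}$ and $\partial_x^l e^{\theta(\zeta,\vec t)}=\zeta^l e^{\theta(\zeta,\vec t)}$, the right--hand side collapses by the Vandermonde structure of Wronskians of exponentials, and after cancelling $e^{\theta(\zeta,\vec t)}$ I obtain
\begin{equation*}
P(\zeta,\vec t)=\frac{\sum_I c_I(\vec t)\prod_{a\in I}(\zeta-\kappa_a)}{\sum_I c_I(\vec t)},\qquad c_I(\vec t)=\Delta_I(A)\prod_{\mycom{i_1<i_2}{i_1,i_2\in I}}(\kappa_{i_2}-\kappa_{i_1})\,e^{\sum_{i\in I}\theta_i}\ge 0,
\end{equation*}
the denominator being exactly $\tau(\vec t)$. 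Thus $P(\cdot,\vec t)$ is monic of degree $k$ with real coefficients, and is a non--negatively weighted average of the real--rooted polynomials $\prod_{a\in I}(\zeta-\kappa_a)$, each of whose roots already lie in $[\kappa_1,\kappa_n]$.

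The heart of the proof, and the step I expect to be the main obstacle, is to upgrade this representation to genuine reality and location of the roots of $P(\cdot,\vec t)$. This genuinely uses that $[A]$ is a point of $\GTNN$ and not merely that the $c_I$ are non--negative: a bare non--negative combination of the $\prod_{a\in I}(\zeta-\kappa_a)$ can acquire complex roots, for instance $(\zeta-\kappa_1)(\zeta-\kappa_2)+(\zeta-\kappa_3)(\zeta-\kappa_4)$, but such a configuration of supported minors is forbidden by the Plücker relation on $Gr(2,4)$. I would therefore deduce reality from total non--negativity by a variation--diminishing argument: evaluating the numerator $N(\zeta,\vec t)=\sum_I c_I(\vec t)\prod_{a\in I}(\zeta-\kappa_a)$ at the ordered phases $\kappa_1<\cdots<\kappa_n$ and using the quadratic Plücker relations to control the resulting sign pattern, one shows that $N(\cdot,\vec t)$ changes sign exactly $k$ times on $(\kappa_1,\kappa_n)$. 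By the intermediate value theorem this produces $k$ real roots in $[\kappa_1,\kappa_n]$, which by the degree count exhaust all roots of $P$; hence every $\zeta_j(\vec t)$ is real and lies in $[\kappa_1,\kappa_n]$.

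Finally, for simplicity at almost every $\vec t$ I would argue by analyticity. The discriminant $\mathrm{disc}_\zeta P(\zeta,\vec t)$ is a real--analytic function of $\vec t$, and the roots of (\ref{eq:Dressing_roots}) fail to be simple only on its zero set; it therefore suffices to exhibit a single regime in which the discriminant is nonzero. Choosing a suitable asymptotic direction in $\vec t$ isolates one dominant subset $I\in\mathcal M$ in both numerator and denominator, so that $P(\zeta,\vec t)\to\prod_{a\in I}(\zeta-\kappa_a)$, whose $k$ roots $\{\kappa_a\}_{a\in I}$ are distinct. Consequently $\mathrm{disc}_\zeta P$ is not identically zero, its zero set has measure zero, and the roots of (\ref{eq:Dressing_roots}) are simple for almost every $\vec t$.
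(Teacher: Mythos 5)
The paper does not actually prove this proposition: it is quoted verbatim from Malanyuk's paper \cite{Mal}, so there is no internal proof to compare against and your attempt has to stand on its own. Your set-up is correct and is the natural one: the identification of $\zeta^k-{\mathfrak w}_1\zeta^{k-1}-\cdots-{\mathfrak w}_k$ with $Wr_x(f^{(1)},\dots,f^{(k)},e^{\theta(\zeta,\vec t)})/(\tau(\vec t)e^{\theta(\zeta,\vec t)})$, the Cauchy--Binet expansion $P(\zeta,\vec t)=\sum_I c_I(\vec t)\prod_{a\in I}(\zeta-\kappa_a)/\sum_I c_I(\vec t)$ with $c_I\ge 0$, and your warning that non-negativity of the $c_I$ alone does not force real-rootedness (your $Gr(2,4)$ example is apt) are all right. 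The endpoint analysis also correctly localizes whatever real roots exist: every term of $N(\zeta)$ is $\ge 0$ for $\zeta\ge\kappa_n$ and has sign $(-1)^k$ for $\zeta\le\kappa_1$, so real roots can only lie in $[\kappa_1,\kappa_n]$. The discriminant/tropical-limit argument for generic simplicity is likewise sound.

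The gap is the central step: the assertion that ``one shows that $N(\cdot,\vec t)$ changes sign exactly $k$ times on $(\kappa_1,\kappa_n)$'' is precisely the content of the proposition, and nothing in your text establishes it. No termwise argument is available: writing $N(\kappa_j)=\sum_{I\not\ni j}(-1)^{\#\{a\in I\,:\,a>j\}}\Delta_I(A)\,V_{I\cup\{j\}}\,e^{\sum_{i\in I}\theta_i}$ (with $V_J>0$ the Vandermonde factor), the summands carry both signs for every $1<j<n$, so the sign of $N(\kappa_j)$ is not determined by inspection, and the quadratic Pl\"ucker relations enter only as inequalities among products of the $c_I$ whose exploitation is exactly where all the work lies. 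To close this you would need an actual oscillation/variation-diminishing theorem of Gantmacher--Krein type --- e.g.\ that for $[A]\in\GTNN$ the sign sequence $\bigl(N(\kappa_1),\dots,N(\kappa_n)\bigr)$ has at least $k$ sign changes, or equivalently that the pencil $\det(\zeta G_0-G_1)$ with $(G_0)_{rs}=\partial_x^{s-1}f^{(r)}$, $(G_1)_{rs}=\partial_x^{s}f^{(r)}$ has real spectrum --- and neither is supplied. As written, the proof assumes the theorem at its crux; the remaining parts (localization given reality, and generic simplicity) are correct but peripheral.
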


The following definition is then fully justified.

\begin{definition}\label{def:Sato_data}\textbf{Sato algebraic--geometric data} Let $({\mathcal K}, [A])$ be given regular soliton data with $[A]$ belonging to a $|D|$ dimensional positroid cell in $\GTNN$.  Let $\Gamma_0$ be a copy of $\mathbb{CP}^1$ with marked points $P_0$, local coordinate $\zeta$ such that $\zeta^{-1} (P_0)=0$ and $\zeta(\kappa_1)<\zeta(\kappa_2)<\cdots<\zeta(\kappa_n)$. Let $\vec t_0$ be real and such that the real roots $\zeta_j (\vec t_0)$ in (\ref{eq:Dressing_roots}) are simple.

Then to the data $({\mathcal K}, [A], \Gamma_0\backslash \{ P_0	\} ,\vec t_0)$ we associate the \textbf{Sato divisor} $\DS=\DS(\vec t_0)$
\begin{equation}\label{eq:Satodiv}
\DS=\{ \gamma_{S,j} \in \Gamma_0 \, : \, \zeta(\gamma_{S,j})= \zeta_j (\vec t_0), \quad j\in [k]\}.
\end{equation}
\end{definition}

\begin{definition}\label{def:Sato_function}\textbf{Normalized Sato wave function} With the same nonations as in Definition~\ref{def:Sato_data} we define the normalized Sato wave function for such data by:
\begin{equation}\label{eq:SatoDN}
{\hat \psi } (P, \vec t) = \frac{{\mathfrak D}\phi^{(0)} (P; \vec t)}{{\mathfrak D}\phi^{(0)} (P; \vec t_0)} = \frac{\psi^{(0)} (P; \vec t)}{\psi^{(0)} (P; \vec t_0)}, \quad\quad \forall P\in \Gamma_0\backslash \{ P_0\},
\end{equation}
with ${\mathfrak D}\phi^{(0)} (\zeta; \vec t)$ as in (\ref{eq:Satowf}). 
By definition $({\hat \psi}_0 (P,\vec t)) + \DS \ge 0$, for all $\vec t$.
\end{definition}

In the following, we use the same symbol for the points in $\Gamma_0$ and their local coordinates to simplify notations. In particular, we use the symbol $\gamma_{S,j}$ both for the Sato divisor points and Sato divisor coordinates.

\begin{remark}\label{rem:fundam}\textbf{Incompleteness of Sato algebraic--geometric data} 
Let $1\le k<n$ and let $\vec t_0$ be fixed. Given the phases $\kappa_1<\cdots <\kappa_n$ and the spectral data $( \Gamma_0\backslash \{ P_0	\} , \DS) $, where $\DS=\DS(\vec t_0) $ is a $k$ point divisor satisfying Proposition \ref{prop:malanyuk}, it is, in general, impossible to identify uniquely the point $[A]\in \GTNN$ corresponding to such spectral data. Indeed, assume that $[A]$ belongs to an irreducible positroid cell of dimension $|D|$. Then the degree of $\DS$ equals $k$, but $\max\{k, n-k\} \le |D| \le k(n-k)$.
\end{remark}

Our purpose is a completion of the Sato algebraic--geometric data based on singular finite--gap theory on reducible algebraic curves \cite{Kr3, AG1,AG3}, where
\begin{enumerate}
\item  $\Gamma_0$ is one of the connected components of a reducible rational spectral curve $\Gamma$;
\item The restriction of the full divisor on $\Gamma$ to  $\Gamma_0$ coincides with the Sato divisor;  
\item On every component except $\Gamma_0$ the wave function is rational.
\end{enumerate}  

Indeed, soliton KP solutions can be obtained from regular finite--gap solutions of (\ref{eq:KP}) by proper degenerations of the spectral curve \cite{Kr0,Kr2,DKN}.  The spectral data for KP finite--gap solutions are introduced and described  in \cite{Kr1,Kr2}: they are a finite genus $g$ compact Riemann surface $\Gamma$ with a marked point $P_0$, a local parameter $1/\zeta$ near $P_0$ and a non-special divisor $\mathcal D=\gamma_1+\ldots+\gamma_g$ of degree $g$ in $\Gamma$.

The Baker-Akhiezer function $\hat\psi (P, \vec t)$, $P\in\Gamma$, is defined by the following analytic properties:
\begin{enumerate}
\item For any fixed $\vec t$ the function $\hat\psi (P, \vec t)$ is meromorphic in $P$ on $\Gamma\backslash P_0$;
\item On  $\Gamma\backslash P_0$ the function $\hat\psi (P, \vec t)$ is regular outside the divisor points $\gamma_j$ and has at most first order poles 
at the divisor points. Equivalently, if we consider the line bundle $\mathcal L(\mathcal D)$  associated to $\mathcal D$, then
for each fixed $\vec t$ the function $\hat\psi (P, \vec t)$ is a holomorphic section of $\mathcal L(\mathcal D)$ outside $P_0$.
\item $\hat\psi (P, \vec t)$ has an essential singularity at the point $P_0$ with the following asymptotics:
\[
{\hat \psi} (\zeta, \vec t) = e^{ \zeta x +\zeta^2 y +\zeta^3 t +\cdots} \left( 1 - \chi_1({\vec t})\zeta^{-1} - \cdots
-\chi_k({\vec t})\zeta^{-k}  - \cdots\right). 
\]
\end{enumerate}
For generic data these properties define a unique function, which is a common eigenfunction to all KP hierarchy auxiliary linear operators 
$-\partial_{t_j} + B_j$, where $B_j =(L^j)_+$, and the Lax operator $L=\partial_x+\frac{u(\vec t)}{2}\partial_x^{-1}+ u_2(\vec t)\partial_x^{-2}+\ldots.$
Therefore all these operators commute and the potential $u(\vec t)$ satisfies the KP hierarchy. In particular, the KP equation arises in the Zakharov-Shabat-Druma commutation representation \cite{ZS}, \cite{Druma} as the compatibility for the second and the third operator:
$[ -\partial_y + B_2, -\partial_t +B_3] =0$, with $B_2 \equiv (L^2)_+ = \partial_x^2 + u$, $B_3 = (L^3)_+ = \partial_x^3 +\frac{3}{4} (u\partial_x +\partial_x u) + \tilde u$
and $\partial_x\tilde u =\frac{3}{4} \partial_y u$.
The Its-Matveev formula represents the KP hierarchy solution $u(\vec t)$ in terms of the Riemann theta-functions associated with $\Gamma$ (see, for example, \cite{Dub}). 

In \cite{DN} there were established the necessary and sufficient conditions on spectral data to generate real regular KP hierarchy solutions for all real $\vec t$, under the assumption that $\Gamma$ is smooth and has genus $g$: 
\begin{enumerate}
\item $\Gamma$ possesses an antiholomorphic involution ${\sigma}:\Gamma\rightarrow\Gamma$, ${\sigma}^2=\mbox{id}$, which has the maximal possible number of fixed components (real ovals), $g+1$, therefore $(\Gamma,\sigma)$ is an $\mathtt M$-curve \cite{Har,Nat,Vi}. This involution is automatically ``separating'', i.e. the set of real ovals divides $\Gamma$ into two connected components. Each of these components is homeomorphic to a sphere with $g+1$ holes;  
\item $P_0$ lies in one of the ovals, and each other oval contains exactly one divisor point. The oval containing $P_0$ is called ``infinite'' and all 
  other ovals are called ``finite''.
\end{enumerate}

For double periodic in $x$, $y$ potentials $u(x,y)$ the direct scattering transform was developed in \cite{Kr4}, where it was shown that for real regular potentials the spectral curve is always an $\mathtt M$-curve (generically of infinite genus) with correctly located divisor points. 

The sufficient condition of the Theorem in \cite{DN} still holds true if the spectral curve $\Gamma$ degenerates in such a way that the divisor remains in the finite ovals at a finite distance from the essential singularity \cite{DN}. Of course, this condition is not necessary for degenerate curves. Moreover, the algebraic-geometric data for a given soliton data $({\mathcal K},[A])$ are not unique since we can construct infinitely many reducible curves generating the same soliton solutions. 

As it was pointed by S.Novikov, it is natural to construct real regular  multiline solutions by degenerating real regular finite-gap solutions. As it was shown in \cite{AG1,AG3}, any real regular multiline soliton solution can be obtained by such degeneration. The construction of \cite{AG3} is based on the representation of totally non--negative Grassmannians via directed planar networks \cite{Pos}. In the following Sections, we complete the project started in \cite{AG3} constructing a map from the networks in Postnikov equivalence class to real and regular spectral data for KP--II soliton solutions. 

\section{Algebraic-geometric approach for irreducible KP soliton data in $\GTNN$}\label{sec:3}

In this Section we state the main results of our paper. Following \cite{AG3}, we define the desired properties of Baker-Akhiezer functions on reducible curves associated with a given soliton data. 

\begin{definition}
\label{def:rrss}
\textbf{Real regular algebraic-geometric data associated with a given soliton solution.}
Let the soliton data $({\mathcal K},[A])$ be fixed, where ${\mathcal K}$ is a 
collection of real phases $\kappa_1<\kappa_2<\ldots <\kappa_n$, $[A]\in \GTNN$. Let $|D|$ be the dimension of the irreducible positroid cell to which $[A]$ belongs. Let $(\Gamma_0, P_0, \DS)$ be the Sato algebraic--geometric data for $({\mathcal K},[A])$ as in Definition \ref{def:Sato_data} for a given $\vec t_0$.
Let the spectral curve $\Gamma$ be a reducible connected curve with a marked point $P_0$, a local parameter $1/\zeta$ near $P_0$ such that 
\begin{enumerate}
\item $\Gamma_0$ is the irreducible component of $\Gamma$ containing $P_0$;
\item $\Gamma$  may be obtained from a rational degeneration of a smooth ${\mathtt M}$-curve of genus $g$, with $g\ge |D|$ and the antiholomorphic involution preserves the maximum number of the ovals in the limit, so that $\Gamma$ possesses $g+1$ real ovals. 
\end{enumerate}

Assume that ${\mathcal D}$ is a degree $g$ non-special divisor on $\Gamma\backslash P_0$, and that $\hat\psi$ is the normalized Baker-Ahkiezer function associated to such data, i.e. for any $\vec t$ its pole divisor is contained in ${\mathcal D}$: $({\hat \psi} (P , \vec t))+{\mathcal D} \ge 0$ on $\Gamma\backslash P_0$, where $(f)$ denotes the divisor of $f$.

We say that  \textbf{the algebraic-geometric data $(\Gamma, P_0,{\mathcal D})$ are associated to the soliton data $({\mathcal K},[A])$,} if
\begin{enumerate}
\item The restriction of ${\mathcal D}$ to $\Gamma_0$ coincides with the Sato divisor $\DS$;
\item The restriction of $\hat\psi$ to $\Gamma_0$ coincides with the Sato normalized dressed wave function for the soliton data $({\mathcal K},[A])$. 
\end{enumerate}

We say that the \textbf{divisor ${\mathcal D}$ satisfies the reality and regularity conditions} if, moreover, $P_0$ belongs to one of the fixed ovals and each other oval contains exactly one divisor point. 
\end{definition}

From now on we fix the regular irreducible soliton data $({\mathcal K}, [A])$, and for a given plabic network representing $[A]$ we present a \textbf{direct} construction of the algebraic geometric data associated to points in irreducible positroid cells of $\GTNN$. $\Gamma_0$ is the rational curve associated to Sato dressing and is equipped with a finite number of marked points: the ordered real phases ${\mathcal K} = \{ \kappa_1<\cdots<\kappa_n\}$, the essential singularity 
$P_0$ of the wave function and the Sato divisor $\DS$ as in Definition \ref{def:Sato_data}. The normalized wave function $\hat \psi$ on $\Gamma_0 \backslash \{ P_0\}$ is the normalized Sato wave function (\ref{eq:SatoDN}).
In the present paper, we do the following:

\textbf{Main construction} {\sl Assume we are given a real regular bounded multiline KP soliton solution generated by the following soliton data:
\begin{enumerate}
\item A set of $n$ real ordered phases ${\mathcal K} =\{ \kappa_1<\kappa_2<\dots<\kappa_n\}$;
\item A point $[A]\in \S \subset \GTNN$, where $\S$ is an irreducible positroid cell of dimension $|D|$. 
\end{enumerate}
Let ${\mathcal N}$ be a plabic network in the disk in Postnikov equivalence class representing $[A]$ and let ${\mathcal G}$ be the graph of ${\mathcal N}$. Using Lam relations we extend the non-normalized Sato wave function to all internal half-edges. Two cases may occur:
\begin{enumerate}
\item There exists a time $\vec t_0$ such that at all half-edges the non-normalized wave function is different from $0$. This situation is generic. The last statement follows from the explicit formulas for half-edge vectors proven in \cite{AG4}, providing a rational expression for these vectors with non-trivial numerators and subtraction-free denominators.
\item For special choices of positive edge weights there exists an edge such that the extended wave function is identically zero for all $\vec t$ at the corresponding double points.
\end{enumerate}

If the network is reduced, the second case cannot occur, see \cite{AG3}, \cite{A3}. In Section~\ref{sec:constr_null} we briefly sketch the construction of divisor in the second case, and we plan to study it thoroughly in future.

If the first case occurs, we associate the following algebraic-geometric objects to each set $({\mathcal K}, [A]; {\mathcal N},\vec t_0)$:
\begin{enumerate}
\item A reducible curve $\Gamma=\Gamma(\mathcal G)$ which is a rational degeneration of a smooth $\mathtt M$--curve of genus $g\ge |D|$, where $g+1$ is the number of faces of $\mathcal G$. In our approach, the curve $\Gamma_0$ is the irreducible component of $\Gamma$ corresponding to the boundary of the disk. The marked point $P_0$ belongs to the intersection of $\Gamma_0$ with the infinite oval, associated to the infinite face of the graph; 
\item A unique real and regular degree $g$ non--special KP divisor $\DKP({\mathcal K}, [A])$ such that any finite oval (associated to a finite face of the graph) contains exactly one divisor point and $\DKP ({\mathcal K}, [A])\cap \Gamma_0$ coincides with the Sato divisor at time $\vec t_0$;
\item A unique KP wave--function $\hat \psi$ as in Definition \ref{def:rrss} such that
\begin{enumerate}
\item Its restriction to $\Gamma_0\backslash \{P_0\}$ coincides with the normalized Sato wave function (\ref{eq:SatoDN});
\item At the double points of the curve it coincides with the normalization of the solution to Lam system of relations; 
\item Its pole divisor has degree $\mathfrak d \le g$ and is contained in $\DKP ({\mathcal K}, [A])$.
\end{enumerate}
\end{enumerate}
}

In particular, if $\mathcal G= \mathcal G_T$ is the trivalent bicolored Le--graph \cite{Pos}, then $\Gamma(\mathcal G_T)$ is a rational degeneration of on $\mathtt M$--curve of minimal genus $|D|$, it has exactly $|D|+1$ ovals, and ${\mathfrak d}=g=|D|$ \cite{AG3}.

\subsection{Main construction -- Part I. The reducible rational curve $\Gamma=\Gamma(\mathcal G)$}
\label{sec:gamma}

The construction of $\Gamma(\mathcal G)$ is a straightforward modification of a special case in the classical construction of nodal curves by dual graphs \cite{ACG}.

\begin{figure}
  \centering
  {\includegraphics[height=0.23\textwidth]{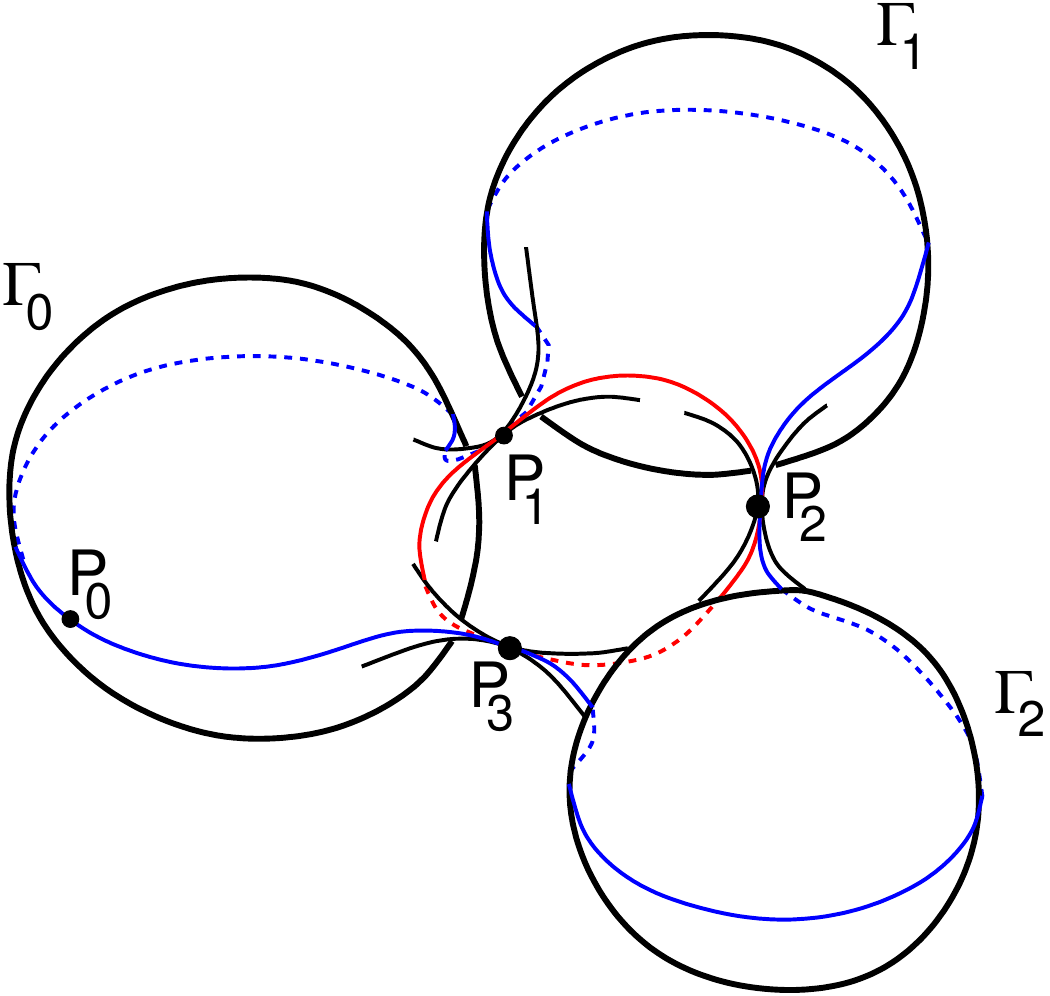}}\hspace{1cm}
  {\includegraphics[height=0.23\textwidth]{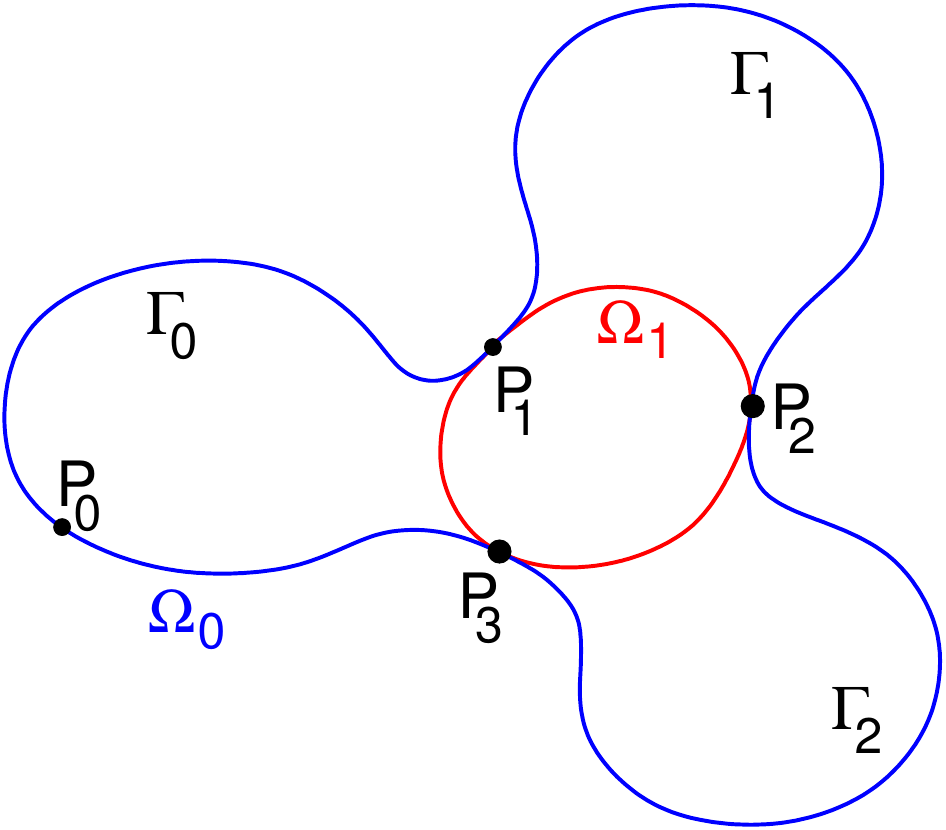}}\hspace{1cm}
  {\includegraphics[height=0.23\textwidth]{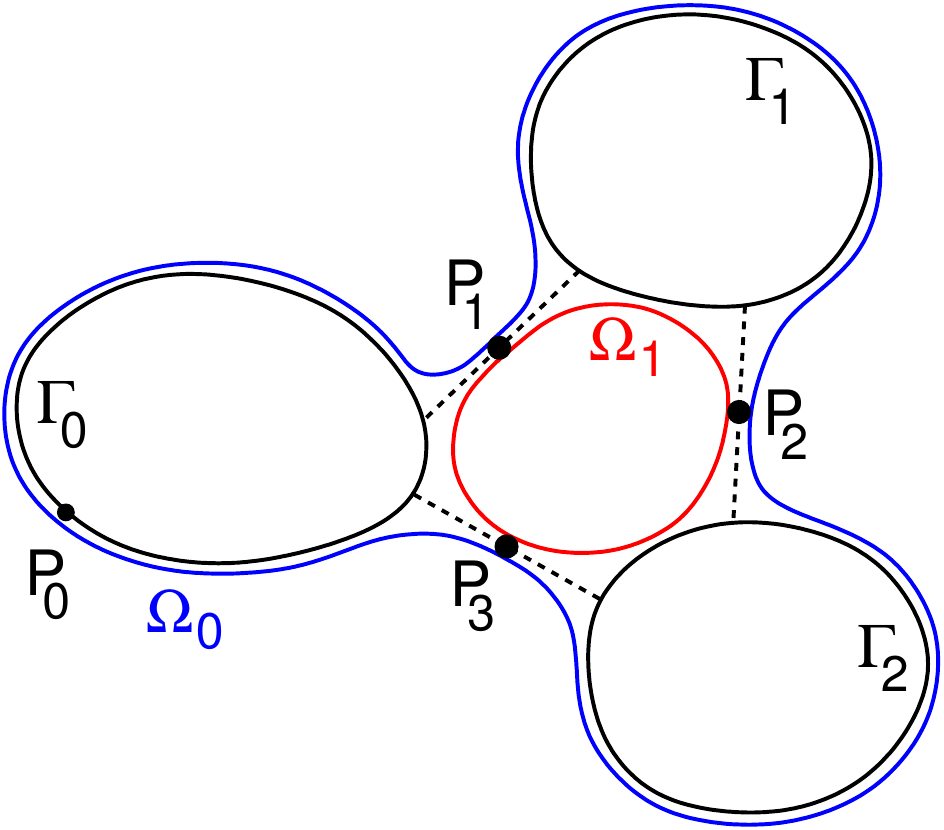}}
  \caption{\label{fig:curve_model}\small{\sl The model of reducible rational curve. On the left we have three Riemann spheres $\Gamma_0$, $\Gamma_1$ and $\Gamma_2$ glued at the points $P_1$, $P_2$, $P_3$. The real part of the curve is drawn in blue and red. In the middle we draw just the real part of the curve and evidence the two real ovals $\Omega_0$ (blue) and $\Omega_1$ (red). On the right we provide the representation of the real topological model of the curve used throughout the paper: The real parts of each rational component is a circle and the double points are represented by dashed segments.}}
\end{figure}

\begin{remark}\label{rem:labedges}\textbf{Labeling of edges at vertices}
Let ${\mathcal G}$ be a plabic graph as in Definition \ref{def:graph}. We number the edges at an internal vertex $V$ anticlockwise in increasing order with the following rule: the unique edge starting at a black vertex is labeled 1 and the unique edge ending at a white vertex is labeled 3 (see also Figure \ref{fig:markedpoints}). 
\end{remark}

We construct the curve $\Gamma=\Gamma({\mathcal G})$ gluing a finite number of copies of $\mathbb{CP}^1$, each corresponding to an internal vertex in ${\mathcal G}$, and one copy of $\mathbb{CP}^1=\Gamma_0$, corresponding to the boundary of the disk, at pairs of points corresponding to the edges of ${\mathcal G}$. 
On each component, we fix a local affine coordinate $\zeta$ (see Definition~\ref{def:loccoor}) so that  the coordinates at each pair of glued points are real. The points with real $\zeta$ form the real part of the given component. We represent the real part of $\Gamma$ as the union of the ovals (circles) corresponding to the faces of ${\mathcal G}$.
For the case in which $\mathcal G$ is the Le--network see \cite{AG3}.

We use the following representation for real rational curves (see Figure~\ref{fig:curve_model} and \cite{AG1, AG3}): we only draw the real part of the curve, i.e. we replace each  $\Gamma_j=\mathbb{CP}^1$ by a circle. Then we schematically represent the real part of the curve by drawing these circles separately and connecting the glued points by dashed segments. The planarity of the graph implies that $\Gamma$ is a reducible rational $\mathtt M$--curve. 

\begin{figure}
  \centering
  \includegraphics[width=0.7\textwidth]{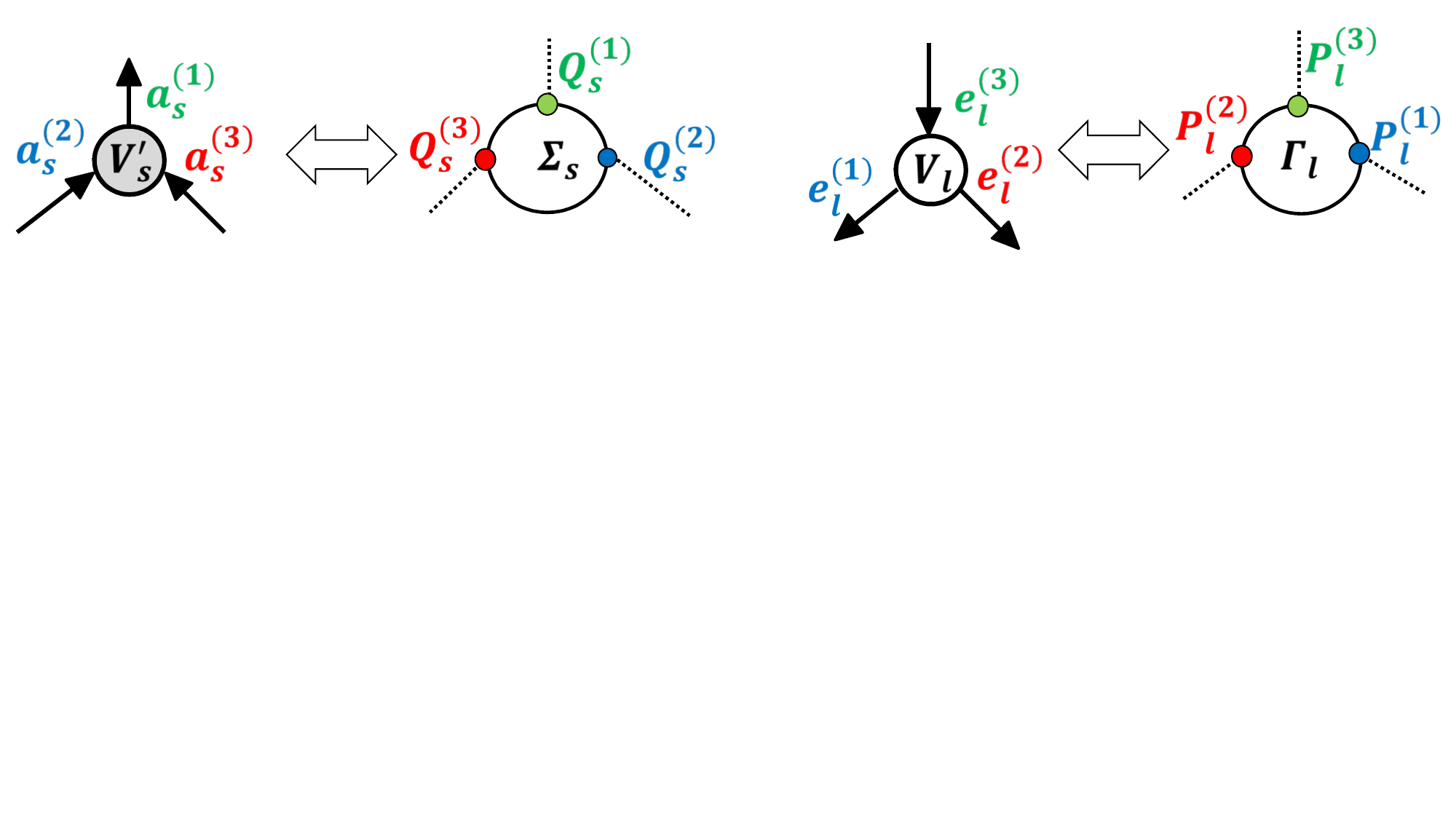}
  \vspace{-4 truecm}
  \caption{\small{\sl The rule for the marked points on the copies $\Sigma_{j}$ and $\Gamma_{l}$ corresponding to the edges of trivalent black and white vertices.}}
	\label{fig:markedpoints}
\end{figure}

\begin{main}\label{def:gamma}\textbf{The curve $\Gamma(\mathcal G)$.}
Let ${\mathcal K} =\{\kappa_1 < \cdots < \kappa_n\}$ and let ${\mathcal S}^{\mbox {\tiny TNN}}_{{\mathcal M}}\subset \GTNN$ be a fixed irreducible positroid cell of dimension $|D|$. Let ${\mathcal G}$ be a plabic graph representing $\S$ with $g+1$ faces, $g\ge|D|$. Then the curve $\Gamma = \Gamma ({\mathcal G})$ is associated to ${\mathcal G}$ using the correspondence in Table \ref{table:LeG}, after reflecting the graph w.r.t. a line orthogonal to the one containing the boundary vertices (we reflect the graph to have the natural increasing order of the marked points $\kappa_j$ on $\Gamma_0\subset\Gamma(\mathcal G)$).
\begin{table}
\caption{The graph ${\mathcal G}$  vs the reducible rational curve $\Gamma$} 
\centering
\begin{tabular}{|c|c|}
\hline\hline
$\mathcal G$ & $\Gamma$ \\[0.5ex]
\hline
Boundary of disk & Copy of $\mathbb{CP}^1$ denoted $\Gamma_0$ \\
Boundary vertex $b_l$             & Marked point $\kappa_l$ on  $\Gamma_0$\\
Black vertex   $V^{\prime}_{s}$   & Copy of $\mathbb{CP}^1$ denoted $\Sigma_{s}$\\
White vertex   $V_{l}$            & Copy of $\mathbb{CP}^1$ denoted $\Gamma_{l}$\\
Internal Edge                     & Double point\\
Face                              & Oval\\
Infinite face                     & Infinite oval $\Omega_0$ \\ [1ex]
\hline
\end{tabular}
\label{table:LeG}
\end{table}
More precisely:
\begin{enumerate}
\item We denote $\Gamma_0$ the copy of $\mathbb{CP}^1$ corresponding to the boundary of the disk and mark on it the points $\kappa_1<\cdots <\kappa_n$ corresponding to the boundary vertices $b_1,\dots, b_n$ on $\mathcal G$. We assume that $P_0=\infty$; 
\item A copy of $\mathbb{CP}^1$ corresponds to any internal vertex of $\mathcal G$. We use the symbol $\Gamma_l$ (respectively $\Sigma_s$) for the copy of $\mathbb{CP}^1$ corresponding to the white vertex $V_l$ (respectively the black vertex $V^{\prime}_s$);
\item On each copy of $\mathbb{CP}^1$ corresponding to an internal vertex $V$, we mark as many points as edges at $V$. In Remark~\ref{rem:labedges} we number the edges at $V$ anticlockwise in increasing order, so that, on the corresponding copy of $\mathbb{CP}^1$, the marked points are numbered clockwise because of the mirror rule (see Figure \ref{fig:markedpoints});
\item Gluing rules between copies of $\mathbb{CP}^1$ are ruled by edges: we glue copies of $\mathbb{CP}^1$ in pairs at the marked points corresponding to the end points of the edge;
\item The faces of $\mathcal G$ correspond to the ovals of $\Gamma$.  
\end{enumerate}
\end{main}

In Figure \ref{fig:net_curve} we present an example of curve corresponding to a network representing an irreducible positroid cell
in $Gr^{\mbox{\tiny TNN}} (4,9)$. Other examples are studied  in Sections \ref{sec:example} and \ref{sec:ex_Gr24top}.

\begin{figure}
  \centering
  {\includegraphics[width=0.47\textwidth]{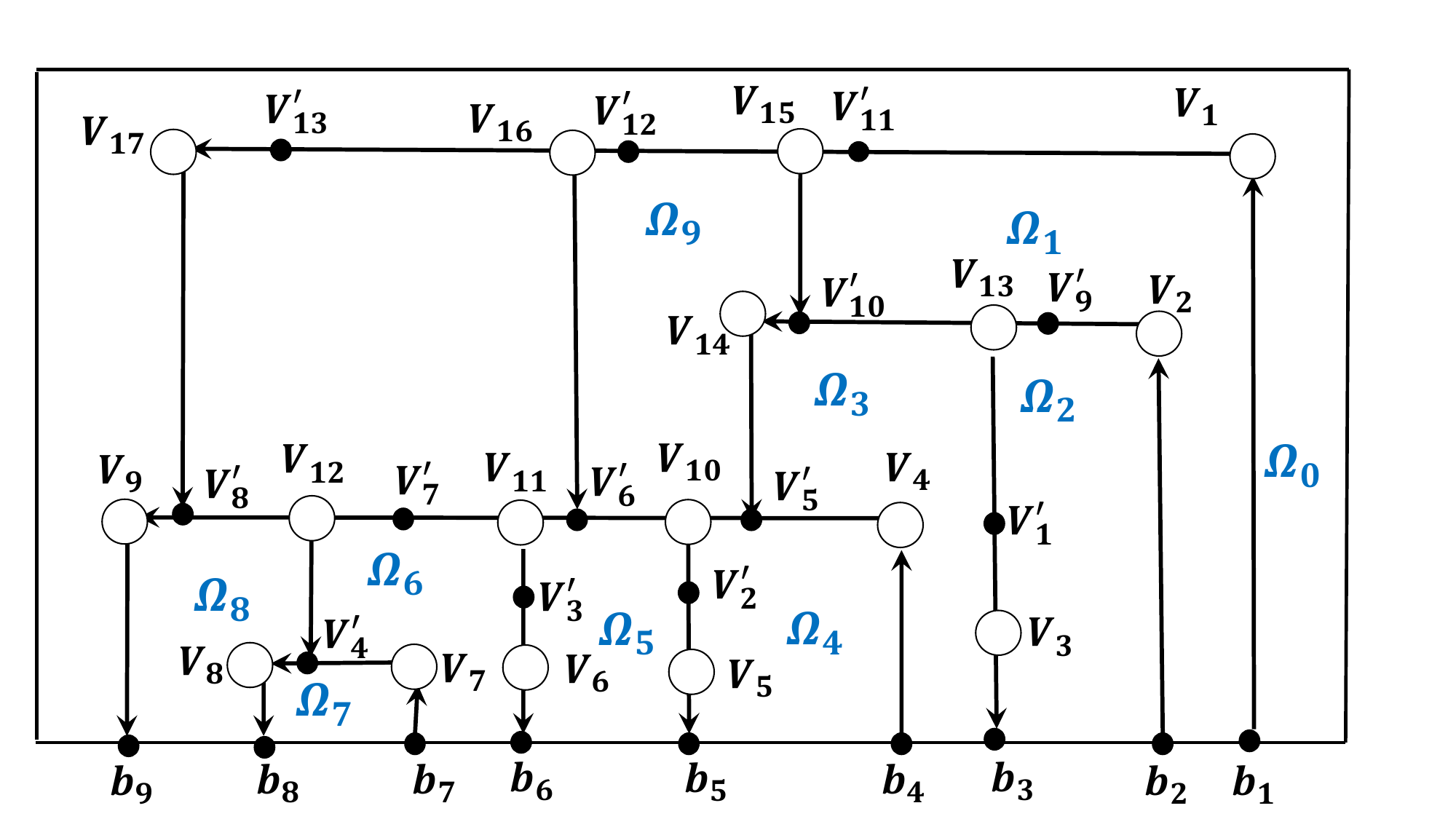}}
  \hfill
	{\includegraphics[width=0.47\textwidth]{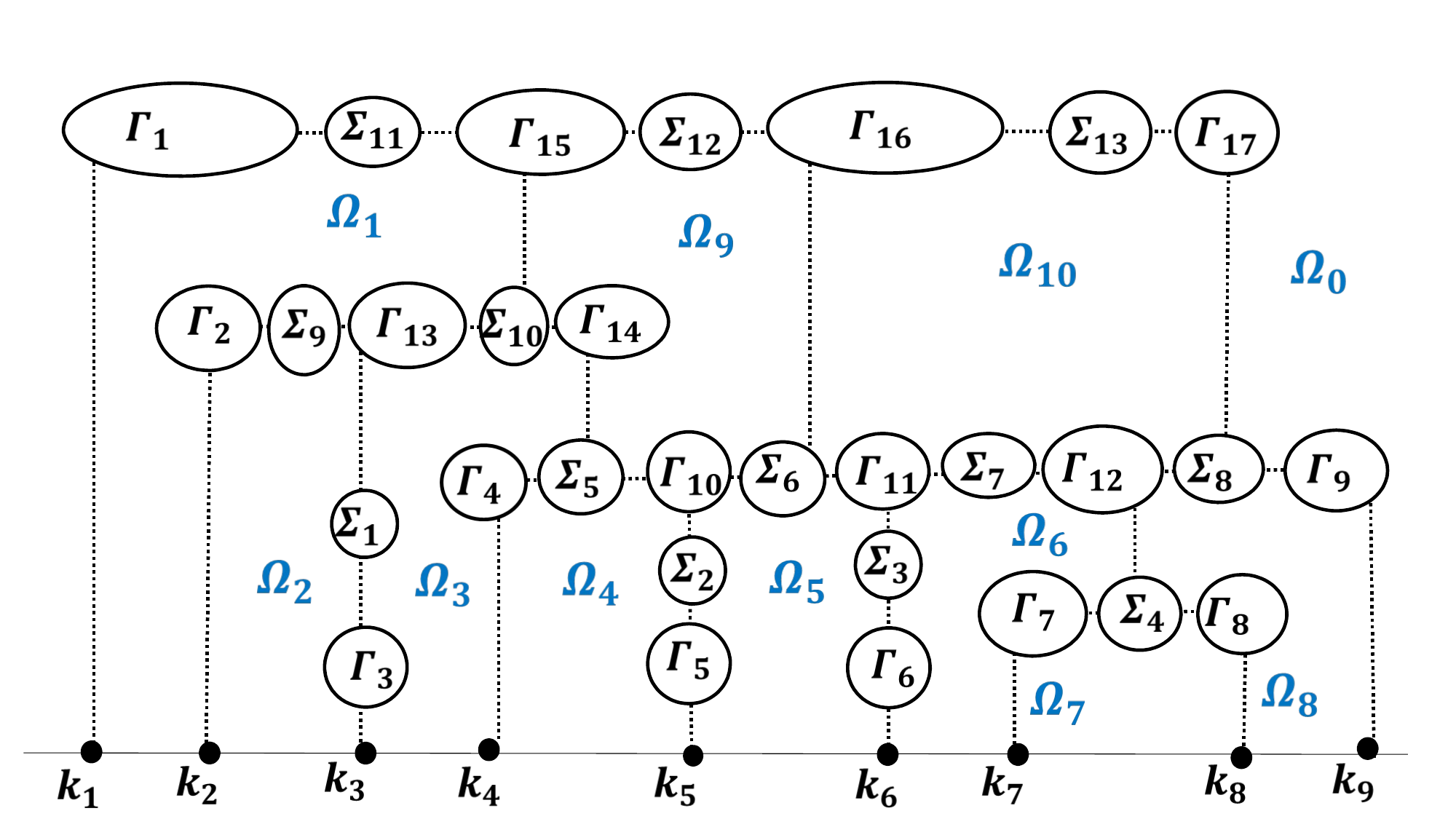}}
  \caption{\small{\sl The correspondence between the graph $\mathcal G$ and the curve $\Gamma (\mathcal G)$ for an irreducible positroid cell in $Gr^{\mbox{\tiny TNN}} (4,9)$. Components $\Gamma_l$ (respectively $\Sigma_j$) correspond to white vertices $V_l$ (respectively black vertices $V^{\prime}_j$).}}
	\label{fig:net_curve}
\end{figure}

\begin{remark} \textbf{Universality of the reducible rational curve $\Gamma(\mathcal G)$.}
If $\mathcal G$ is a trivalent graph representing $\mathcal S$, the construction of $\Gamma=\Gamma(\mathcal G)$ does \textbf{not} require the introduction of parameters. Therefore it provides a \textbf{universal} curve 
$\Gamma=\Gamma({\mathcal S};\mathcal G)$ for the whole positroid cell ${\mathcal S}$. On the contrary, if the graph has vertices of valency greater than 3, the moduli space of corresponding curves has dimension greater than zero. In Section \ref{sec:anycurve}, we show that the positroid cell ${\mathcal S}$ is (locally) parameterized by the divisor positions at the finite ovals. To construct global parametrization, we require a proper of resolution of singularities in the variety of divisors. In the simplest nontrivial case of $Gr^{TP}(1,3)$ this resolution of singularities in discussed in Section~\ref{sec:global}. We plan to study this question in the general case in a forthcoming paper.

The number of copies of $\mathbb{CP}^1$ used to construct $\Gamma(\mathcal G)$ is \textbf{excessive} in the sense that the number of ovals and the KP divisor is invariant if we eliminate 
all copies of $\mathbb{CP}^1$ corresponding to bivalent vertices (see Section \ref{sec:moves_reduc} and \cite{AG3}). 
\end{remark}

The curve $\Gamma(\mathcal G)$ is a partial normalization \cite{ACG} of a connected reducible nodal plane curve with $g+1$ ovals 
and is a rational degeneration of a genus $g$ smooth $\mathtt M$--curve. 

\begin{proposition}\label{prop:rational_curve}\textbf{$\Gamma(\mathcal G)$  is the rational degeneration of a smooth $\mathtt M$-curve of genus $g$.}
Let ${\mathcal K} = \{ \kappa_1 < \cdots < \kappa_n\}$ and ${\mathcal S}^{\mbox {\tiny TNN}}_{{\mathcal M}}$ be an irreducible positroid cell in $\GTNN$ corresponding to the matroid ${\mathcal M}$. Let $\Gamma=\Gamma(\mathcal G)$ be as in Construction \ref{def:gamma}. Then
\begin{enumerate}
\item $\Gamma$ possesses $g+1$ ovals which we label $\Omega_s$, $s\in [0,g]$; 
\item $\Gamma$ is the rational degeneration of a regular $\mathtt M$--curve of genus $g$.
\end{enumerate}
\end{proposition}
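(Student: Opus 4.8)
The plan is to prove the two assertions by extracting the genus and the oval count from the combinatorics of $\mathcal{G}$ and then realising $\Gamma$ as a smoothable nodal curve whose real smoothing stays maximal. First I would compute the arithmetic genus of $\Gamma$ from its dual graph. By Construction \ref{def:gamma}, $\Gamma$ is a connected nodal curve whose irreducible components are the copies of $\mathbb{CP}^1$ ($\Gamma_0$, the $\Gamma_l$ and the $\Sigma_s$) and whose double points are in bijection with the edges of $\mathcal{G}$. Writing $V_{int}$ for the number of internal vertices and $E$ for the number of edges of $\mathcal{G}$, the curve has $N=V_{int}+1$ components and $\delta=E$ nodes. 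Since every component is rational, the arithmetic genus is the first Betti number of the (connected) dual graph, $p_a(\Gamma)=\delta-N+1=E-V_{int}$. Applying Euler's formula to the connected planar graph $\mathcal{G}$ in the disk — treating the $n$ boundary arcs as edges and capping the exterior of the disk as one face on $S^2$, so that $(V_{int}+n)-(E+n)+(F_{\mathcal{G}}+1)=2$ — gives that the number of faces of $\mathcal{G}$ equals $E-V_{int}+1=p_a(\Gamma)+1$. As $\mathcal{G}$ has $g+1$ faces by hypothesis, this yields $p_a(\Gamma)=g$.

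Next, for the oval count in (1), I would use the planarity of the embedding of $\mathcal{G}$ in the disk together with the reality of all local coordinates at the double points. The antiholomorphic involution $\sigma$ acting as complex conjugation on each component has as fixed locus a circle $\mathbb{RP}^1$, and the real double points glue these circles into the connected real part $\Gamma_{\mathbb{R}}$. The planar embedding of $\mathcal{G}$ induces a planar arrangement of these circles in which the dashed segments (double points) reproduce the incidences of $\mathcal{G}$, so the regions cut out by $\Gamma_{\mathbb{R}}$ are in bijection with the faces of $\mathcal{G}$, each bounded by one oval. This produces the $g+1$ ovals $\Omega_s$, with $P_0=\infty\in\Gamma_0$ lying in the oval $\Omega_0$ associated to the infinite face.

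For (2) I would invoke the deformation theory of nodal curves \cite{ACG}: the versal deformation of a connected nodal curve is smooth, and simultaneously smoothing all $\delta$ nodes produces a smooth curve of genus $p_a=g$. The content specific to our setting is that this smoothing can be carried out over the reals, $\sigma$-equivariantly, and in the direction dictated by the planar faces of $\mathcal{G}$. Concretely, $\Gamma_{\mathbb{R}}$ is a connected four-valent configuration of circles meeting at the real nodes; resolving each node in the way compatible with the two sides of the planar arrangement replaces each crossing by two non-crossing arcs and breaks $\Gamma_{\mathbb{R}}$ into disjoint ovals bounding precisely the faces of $\mathcal{G}$. This yields exactly $g+1$ disjoint real ovals, meeting Harnack's bound for a smooth genus $g$ curve; hence the smoothing is a smooth $\mathtt{M}$-curve and $\Gamma$ is its rational degeneration, exactly as for Le-graphs in \cite{AG1,AG3}.

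The main obstacle is the last step: the existence and global consistency of this face-respecting real smoothing, equivalently the claim that $\Gamma_{\mathbb{R}}$ is a dividing configuration. One must verify that the two halves $\mathbb{CP}^1\setminus\mathbb{RP}^1$ of the components can be two-coloured so that the colours match across every double point, and that this two-colouring is compatible with the prescribed smoothing at all nodes simultaneously. This is precisely where the planarity of $\mathcal{G}$ and the reality of the gluing coordinates in Construction \ref{def:gamma} are indispensable, and where the degeneration argument of \cite{AG3} must be extended from Le-graphs to arbitrary trivalent PBDTP graphs. Once this real structure is in place, the genus computation and the oval count above complete the proof.
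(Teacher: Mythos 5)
Your genus computation and oval count are correct and essentially coincide with the paper's: your formula $p_a(\Gamma)=\delta-N+1=E-V_{int}$ is literally the paper's $N_d-d+1=g$, and your Euler-formula step reproduces the paper's $g=n_I+n-(t_W+t_B+d_W+d_B)$ (the paper additionally extracts the counts $t_W=g-k$, $t_B=g-n+k$ from trivalency and perfectness, but these are used elsewhere rather than being needed here). Where you diverge is the smoothing step: the paper does not invoke abstract deformation theory but works with an explicit plane-curve model ($\Gamma$ as a union of $d=n_I+n-g+1$ lines and conics with $d(d-1)/2$ intersections, of which only the $N_d=n_I+n$ nodes dual to edges of $\mathcal G$ survive the partial normalization), and obtains the $\mathtt M$-smoothing by perturbing exactly those nodes with explicit formulas deferred to \cite{AG3}; your route via the versal deformation is more general but buys less, since the concrete model is what the paper later uses in examples. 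Finally, the ``main obstacle'' you flag is less serious than you suggest: the two real smoothing directions at an ordinary real node can be chosen independently node by node, so there is no global consistency condition to verify --- one simply resolves each real crossing according to the two sides of the planar face arrangement, obtains $g+1$ disjoint real ovals on a genus-$g$ smooth real curve, and Harnack's bound then forces the curve to be an $\mathtt M$-curve (and hence dividing) as a consequence rather than a hypothesis. So your proof is correct once that last worry is discharged, and the discharge is immediate.
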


\begin{proof}
The proof follows along similar lines as in \cite{AG3}, where we prove the analogous statement in the case of the Le--graph.
Let $t_W, t_B, d_W$ and $d_B$ respectively be the number of trivalent white, trivalent black, bivalent white and bivalent black internal vertices of ${\mathcal G}$. Let $n_I$ be the number of internal edges ({\sl i.e.} edges not connected to a boundary vertex) of ${\mathcal G}$. By Euler formula we have $g = n_I +n -(t_W + t_B+d_W+d_B)$.
Moreover, there hold the following relations between the number of edges and that of vertices
$3(t_W+t_B)+2(d_W+d_B) = 2n_I +n$, $2t_B+ t_W+d_W+d_B =n_I+k$.
Therefore 
\begin{equation}\label{eq:vertex_type}
t_W = g-k, \qquad t_B = g-n+k, \qquad d_W+d_B= n_I+ 2n - 3g.
\end{equation}
By definition, $\Gamma$ is represented by $d= 1+t_W + t_B+d_W+d_B =n_I+n-g+1$ complex projective lines which intersect generically in $d(d-1)/2$ double points. The regular curve is obtained keeping  the number of ovals fixed while perturbing the double points corresponding to the edges in $\mathcal G$ creating regular gaps (see \cite{AG3} for explicit formulas for the perturbations). The total number of desingularized double points, $N_d$ equals the total number of edges in $\mathcal G$: $N_d =n_I+n$. 
Then the genus of the smooth curve is given by the following formula $N_d - d+1 = g $.
\end{proof}

In the construction in Proposition~\ref{prop:rational_curve} generically each line intersects more than 3 other lines, but only 2 or 3 of these intersection correspond to double point the rational ${\mathtt M}$-curve. Therefore we use partial normalization  of the nodal plane curve (see \cite{ACG})  to remove unnecessary intersections.

When we pass from the rational ${\mathtt M}$-curve to the smooth one, we have two types on constraints on the perturbation of the equation:
\begin{enumerate}
\item The ``extra'' intersections remain double points so that they can be resolved using normalization; 
\item The ovals of the rational curve are preserved under the perturbation. This is controlled by the signs of the perturbation at the remaining double points.
\end{enumerate}  
The construction of smooth ${\mathtt M}$-curves fulfilling both requirements is explicit and follows along similar lines as in \cite{AG3}, see also \cite{Kr4}. In Figure \ref{fig:mcurveex} we show such continuous deformation for the example in Figure \ref{fig:net_curve} after the elimination of all copies of $\mathbb{CP}^1$ corresponding to bivalent vertices in the network. In this case $d=14$ ($\Gamma$ is representable by the union of two quadrics and 10 lines). Under genericity assumptions, the reducible rational curve $\Gamma$ has $\Delta=89$ singular points before partial normalization. The perturbed regular curve is obtained by perturbing $N_d=21$ ordinary intersection points (for each of them $\delta=1$), corresponding to the double points in the topological representation in Figure~\ref{fig:mcurveex}[left]. $\Delta-N_d = 68$ points remain intersections after this deformation and are resolved during normalization. Finally, the normalized perturbed regular curve has then genus $g=\frac{(d-1)(d-2)}{2}-\Delta+N_d=10$.

\begin{figure}
  \centering
  {\includegraphics[width=0.53\textwidth]{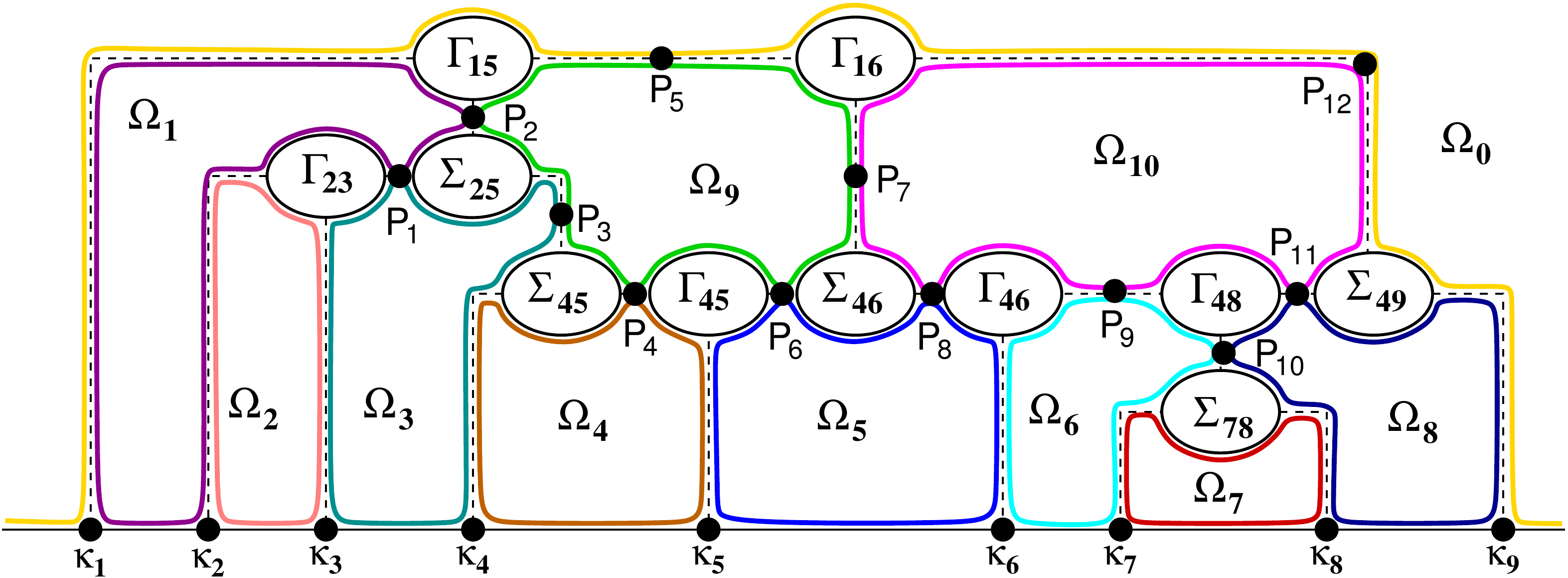}}
  \hfill
  {\includegraphics[width=0.46\textwidth]{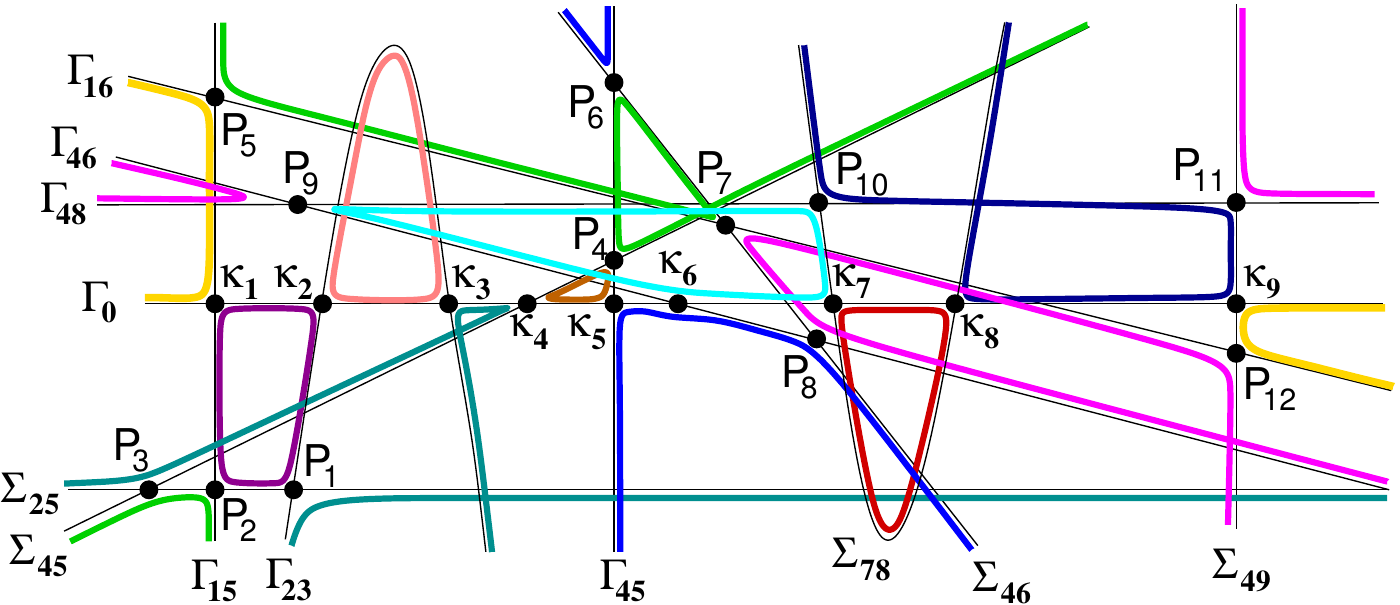}}
  \caption{\small{\sl The topological model of the oval structure of $\Gamma$ [left] is the partial normalization of the reducible plane 
nodal curve [right].  The nodal points, $P_s$, $s\in [12]$, and $\kappa_j$, $j\in [9]$, surviving the partial normalization are marked on the nodal curve [right] and are represented by dashed segments as usual on topological model [left].}\label{fig:mcurveex} }
\end{figure}

\subsection{The KP divisor on $\Gamma(\mathcal G)$ for the soliton data $({\mathcal K}, [A])$}
\label{sec:KPdiv}

Throughout this Section we fix both the soliton stratum $({\mathcal K}, \S)$, with ${\mathcal K} =\{ \kappa_1 < \cdots <\kappa_n\}$ and $\S \subset \GTNN$ an irreducible positroid cell of dimension $|D|$, and the plabic graph ${\mathcal G}$ in the disk representing $\S$ in Postnikov equivalence class. Let $g+1$ be the number of faces of $\mathcal G$ with $g\ge |D|$, and let $\Gamma=\Gamma(\mathcal G)$ be the curve corresponding to $\mathcal G$ as in Main~construction~\ref{def:gamma}. 
In this Section we state the main results of our paper: 
\begin{enumerate}
\item \textbf{Construction of the KP divisor $\DKP$ on a given curve $\Gamma$ for soliton data $(\mathcal K, [A])$:}
For given data $({\mathcal K},[A]\in\S;{\mathcal G})$ we assume that the network $\mathcal N$ of graph ${\mathcal G}$ representing $[A] \in \S$ is generic, i.e. there does not exist an edge such that the edge KP wave function vanishes identically on it. Then we prove that the curve $\Gamma=\Gamma(\mathcal G)$ can be used as the spectral curve for the soliton data $(\mathcal K, [A])$. Indeed, there exists a time $t_0$ such that the wave function at the double points of the curve (i.e. edges of the graph) is different from zero, and one can construct a unique degree $g$ non-special effective real and regular KP divisor $\DKP$ and the unique real and regular normalized KP wave function $\hat \psi(P,\vec t)$ such that on $\Gamma\backslash P_0$ we have $(\hat \psi(P,\vec t))\ge \DKP$, where $(f)$ denotes the divisor 
of $f$.

In Section \ref{sec:constr_null} we briefly illustrate the modification to our construction for an example of non--generic network, {\sl i.e.} a network of reducible graph ${\mathcal G}$ representing $[A] \in \S$ such that there exists an edge at which the KP wave function vanishes identically in KP times. We plan to discuss thoroughly the problem of networks admitting zero vectors in a future publication.
\item \textbf{Invariance of $\DKP$} The construction of $\DKP$ is carried out using a 
directed network ${\mathcal N}$ representing $[A]$ of graph ${\mathcal G}$ and fixing a gauge ray direction. We prove that $\DKP$ is independent on the weight gauge, the vertex gauge, the gauge ray direction and the orientation of the network. In particular, if $\mathcal G$ is a reduced graph move--equivalent to the Le--graph, we get a local parametrization of $\S$ via KP divisors on $\Gamma(\mathcal G)$.
\item \textbf{Transformation laws between curves and divisors induced by Postnikov moves and parallel edge reductions on networks:} Postnikov \cite{Pos} introduces moves and reductions which transform networks preserving the boundary measurement, thus classifying networks representing a given point $[A] \in \S$. In our construction, all Postnikov moves and and parallel edge reductions preserve the class of graphs, and, in Section \ref{sec:moves_reduc}, we provide the explicit transformation of the divisor under the action of such moves and reductions.
\end{enumerate} 

Throughout the paper, we assign affine coordinates to each component of $\Gamma({\mathcal G})$ using the orientation of the graph $\mathcal G$, and we use the same symbol $\zeta$  for any such affine coordinate to simplify notations.

\begin{figure}
  \centering
  {\includegraphics[width=0.60\textwidth]{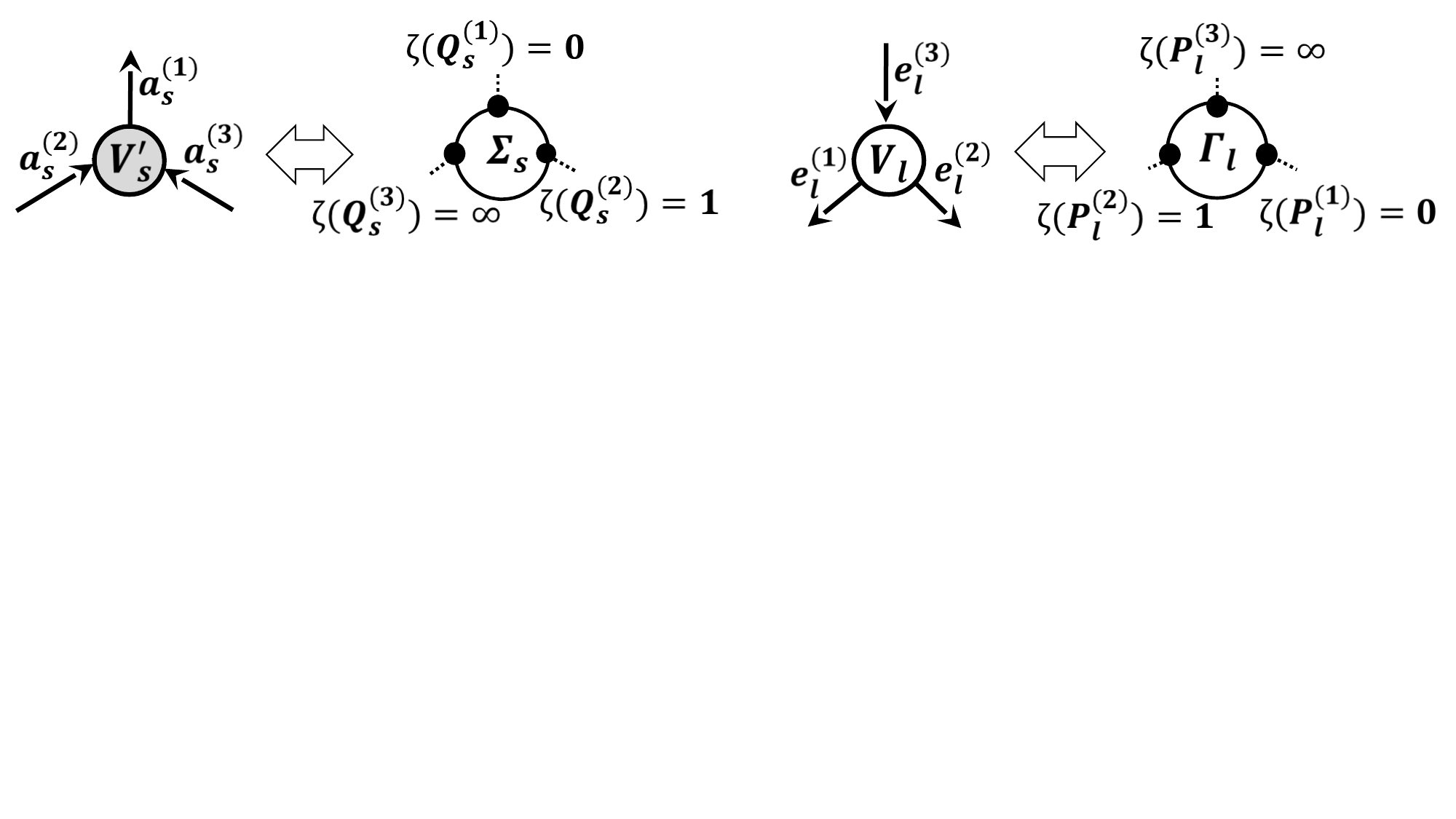}}
	\vspace{-3.8 truecm}
  \caption{\small{\sl Local coordinates on the components $\Gamma_{l}$ and $\Sigma_{s}$ (reflection w.r.t. the vertical line): the marked point $P^{(m)}_{l}\in \Gamma_{l}$ corresponds to the edge $e^{(m)}_{l}$ at the white vertex $V_{l}$ and the marked point $Q^{(m)}_{s}\in \Sigma_{r}$ corresponds to the edge $a^{(m)}_{s}$ at the black vertex $V_{s}^{\prime}$.}}
	\label{fig:lcoord}
\end{figure}

\begin{definition}\label{def:loccoor}{\bf Affine coordinates on $\Gamma(\mathcal G)$}
On each copy of $\mathbb{ CP}^{1}$ the local coordinate $\zeta$ is uniquely identified by the following properties: 
\begin{enumerate}
\item On $\Gamma_0$, $\zeta^{-1} (P_0)=0$ and $\zeta(\kappa_1)< \cdots < \zeta(\kappa_n)$. To abridge notations, we identify the $\zeta$--coordinate with the marked points $\kappa_j=\zeta(\kappa_j)$, $j\in [n]$;
\item On the component $\Gamma_{l}$ corresponding to the internal white vertex $V_{l}$:
\[
\zeta (P^{(1)}_{l}) =0, \;\;\zeta (P^{(2)}_{l}) =1, \;\;\zeta (P^{(3)}_{l}) =\infty,
\]
whereas on the component $\Sigma_{s}$ corresponding to the internal black vertex $V^{\prime}_{s}$:
\[
\zeta (Q^{(1)}_{s}) =0, \;\;\zeta (Q^{(2)}_s) =1, \;\;\zeta (Q^{(3)}_{s}) =\infty.
\] 
\end{enumerate}
\end{definition}

We illustrate Definition \ref{def:loccoor} in Figure~\ref{fig:lcoord} (see also Figure~\ref{fig:markedpoints}).

In view of Definition \ref{def:rrss}, the desired properties of the KP divisor and of the KP wave function on $\Gamma(\mathcal G)$ for given soliton data $(\mathcal K, [A])$ are the following.

\begin{definition}\label{def:real_KP_div}\textbf{Real regular KP divisor compatible with $[A] \in {\mathcal S}_{\mathcal M}^{\mbox{\tiny TNN}}$.}  
Let $\Omega_0$ be the infinite oval containing the marked point $P_0\in\Gamma_0$ and let $\Omega_s$, $s\in[g]$ be the finite ovals of $\Gamma$. Let $\DS=\DS (\vec t_0)$ be the Sato divisor for the soliton data $(\mathcal K, [A])$.
We call a degree $g$ divisor $\DKP\in\Gamma\backslash \{ P_0\}$ a real and 
regular KP divisor compatible with $(\mathcal K, [A])$ if:
\begin{enumerate}
\item $\DKP\cap \Gamma_0 = \DS$; 
\item There is exactly one divisor point on each component of $\Gamma$ corresponding to a trivalent white vertex; 
\item\label{item:defodd_KP} In any finite oval $\Omega_s$, $s\in [g]$, there is exactly one divisor point;
\item\label{item:defeven_KP} In the infinite oval $\Omega_0$, there is no divisor point.
\end{enumerate}
\end{definition}

\begin{definition}\label{def:KPwave}{\bf A real regular KP wave function on $\Gamma$ corresponding to $\DKP$:}  
Let $\DKP$ be a degree $g$ real regular divisor on $\Gamma$ satisfying Definition~\ref{def:real_KP_div}.
A function ${\hat \psi} (P, \vec t)$, where $P\in\Gamma\backslash \{ P_0\}$ and $\vec t$ are the KP times, is called a real and regular KP wave function on $\Gamma$ corresponding to $\DKP$ if:
\begin{enumerate}
\item $\hat \psi (P, \vec t_0)=1$ at all points $P\in \Gamma\backslash \{P_0\} $;
\item The restriction of $\hat \psi$ to $\Gamma_0\backslash \{P_0\}$ coincides with the normalized Sato wave function defined in (\ref{eq:SatoDN}): ${\hat \psi } (P, \vec t)  = \frac{\psi^{(0)} (P; \vec t)}{\psi^{(0)} (P; \vec t_0)}$; 
\item For all $P\in\Gamma\backslash\Gamma_0$ the function ${\hat \psi} (P, \vec t)$ satisfies all equations (\ref{eq:dress_hier}) of the dressed hierarchy;
\item If both $\vec t$ and $\zeta(P)$ are real, then ${\hat \psi} (\zeta(P), \vec t)$ is real. Here $\zeta(P)$ is the local affine coordinate on the corresponding component of $\Gamma$ as in Definition~\ref{def:loccoor};
\item\label{it:comp} $\hat \psi$ takes equal values at pairs of glued points $P,Q\in \Gamma$, for all $\vec t$:  $\hat \psi(P, \vec t) = \hat \psi(Q, \vec t)$;
\item For each fixed $\vec t$ the function $\hat \psi(P, \vec t)$ is meromorphic of degree $\le g$ in $P$ on $\Gamma\backslash \{ P_0\}$: for any fixed $\vec t$ we have $({\hat \psi} (P, \vec t))+\DKP\ge 0$ on $\Gamma\backslash P_0$, where $(f)$ denotes the divisor 
of $f$. Equivalently, for any fixed $\vec t$ on $\Gamma\backslash \{ P_0\}$ the function $\hat \psi(\zeta, \vec t)$ is regular outside the points of 
$\DKP$ and at each of these points either it has a first order pole or is regular;
\item For each $P\in\Gamma\backslash \{ P_0\}$ outside  $\DKP$  the function $\hat \psi(P, \vec t)$ is regular in $\vec t$ for all times.
\end{enumerate}
\end{definition}

\begin{theorem}\label{theo:exist}\textbf{Existence and uniqueness of a real and regular KP divisor and KP wave function 
on $\Gamma$.}
Let the phases ${\mathcal K}$, the irreducible positroid cell $\S\subset \GTNN$, the plabic graph $\mathcal G$ representing $\S$ be fixed. Let $\Gamma=\Gamma(\mathcal G)$ with marked point $P_0\in \Gamma_0$ be as in Construction \ref{def:gamma}. 

If $\mathcal G$ is reduced and equivalent to the Le--network via a finite sequence of moves of type (M1), (M3) and flip moves (M2), there are no extra conditions and, for any $[A]\in \S$, let $\mathcal N$ of graph $\mathcal G$ be a network representing $[A]$. 
Otherwise if $\mathcal G$ is reducible, we consider only generic weights such that the edge vectors on the corresponding  network $\mathcal N$ of graph $\mathcal G$ are all non-zero.

Then, there exists a reference time $\vec t_0$ such that, to the following data $({\mathcal K}, [A]; \Gamma, P_0; \mathcal N; \vec t_0 )$, we associate
\begin{enumerate}
\item A \textbf{unique} real regular degree $g$ KP divisor $\DKP$ as in Definition \ref{def:real_KP_div};
\item A \textbf{unique} real regular KP wave function $\hat \psi(P, \vec t)$ corresponding to this divisor satisfying Definition~\ref{def:KPwave}.
\end{enumerate}
Moreover, $\DKP$ and $\hat \psi(P, \vec t)$ are both invariant with respect to changes of the geometric positions of the  vertices, of the weight gauge, of the ray direction gauge and of the orientation of the graph $\mathcal G$.
\end{theorem}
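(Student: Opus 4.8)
The plan is to construct $\hat\psi$ explicitly from the edge vectors of Theorem~\ref{theo:consist}, read off $\DKP$ from its poles, and then deduce uniqueness, invariance and the reality/regularity conditions from, respectively, a non--speciality argument, the transformation rules of Section~\ref{sec:vec_prop}, and the face--signature count of Theorem~\ref{theo:sign_face}. First I would fix a perfect orientation $\mathcal O$ with base $I$, a gauge ray direction $\mathfrak l$ and positive weights representing $[A]$, so that the edge vectors $E_e$ are the unique solution of the system (\ref{eq:lineq_biv})--(\ref{eq:lineq_white}). As building blocks at the marked points I would use $\Phi_j(\vec t)= {\mathfrak D}^{(k)}\phi^{(0)}(\kappa_j,\vec t)=\kappa_j^k\,\psi^{(0)}(\kappa_j,\vec t)$, which satisfy the kernel relations $\sum_{j} A^r_j\,\Phi_j(\vec t)={\mathfrak D}^{(k)}f^{(r)}(\vec t)=0$, $r\in[k]$. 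To each edge $e$ I assign the normalized value $\hat\Psi_e(\vec t)=L_e(\vec t)/L_e(\vec t_0)$ with $L_e(\vec t)=\sum_j (E_e)_j\Phi_j(\vec t)$, and I assign $\hat\Psi_e$ to \emph{both} points of the double point of $\Gamma$ dual to $e$; the gluing condition (\ref{it:comp}) of Definition~\ref{def:KPwave} then holds by construction. The kernel relations guarantee that at the boundary sink $b_j$ one recovers $\hat\Psi_{b_j}=\hat\psi(\kappa_j,\vec t)$ and at the boundary source $b_{i_r}$ one recovers $\hat\Psi_e=\hat\psi(\kappa_{i_r},\vec t)$, so the restriction to $\Gamma_0$ is the Sato wave function and $\DKP\cap\Gamma_0=\DS$.

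Next I would extend $\hat\psi$ off the double points. At a black or bivalent vertex the relations (\ref{eq:lineq_biv})--(\ref{eq:lineq_black}) make the three edge vectors proportional, hence the three normalized values $\hat\Psi_{e_i}$ coincide and $\hat\psi$ extends as a constant (degree $0$); at a white trivalent vertex $V_l$ the relation (\ref{eq:lineq_white}) gives $E_3=c_1E_1+c_2E_2$ so $L_3=c_1L_1+c_2L_2$ as functions of $\vec t$, and one interpolates the three generically distinct values by the normalized degree--one function $\hat\psi|_{\Gamma_l}(\zeta,\vec t)=\mathcal L(\zeta,\vec t)/\mathcal L(\zeta,\vec t_0)$. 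The essential point is that the $\vec t$--dependent factor $N_1 L_2-N_2 L_1$ (with $N_i=L_i(\vec t_0)$) cancels from the pole position, leaving the $\vec t$--independent real coordinate $\gamma_{V_l}=c_1N_1/N_3$; this is what makes $\hat\psi$ a genuine Baker--Akhiezer function with a \emph{fixed} divisor. Setting $\DKP=\DS\cup\{P_{V}\}$ over all white trivalent $V$, the degree is $k+(g-k)=g$ by (\ref{eq:vertex_type}). Uniqueness of $\hat\psi$ (hence of $\DKP$) follows because the edge values are forced by the linear system and the interpolation on each $\mathbb{CP}^1$ is unique; equivalently, once the reality/regularity conditions are proven, $\DKP$ is non--special and normalization fixes $\hat\psi$.

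For the invariance I would treat the four gauge freedoms separately. Under the weight, vertex and ray--direction gauges the edge vectors are only rescaled, by $t_V$ (Lemma~\ref{lem:weight_gauge}), by a sign (Lemma~\ref{lem:vertex_gauge}) or by a sign (Proposition~\ref{prop:rays}); since $\hat\Psi_e$ is a ratio $L_e(\vec t)/L_e(\vec t_0)$ every such overall factor cancels, so all double--point values and the pole coordinates $\gamma_{V_l}=c_1N_1/N_3$ are unchanged. The delicate case is a change of orientation, where Theorem~\ref{theo:orient} gives $\hat E_e=\alpha_e E_e+\sum_r c^r_e A[r]$: here I would use the kernel relations to kill the extra terms, $\sum_j(\hat E_e)_j\Phi_j=\alpha_e\sum_j(E_e)_j\Phi_j+\sum_r c^r_e\sum_j A^r_j\Phi_j=\alpha_e L_e(\vec t)$, so $\hat\Psi_e$ is again invariant after normalization; the half--edge formulation and the modified boundary conditions of Lemma~\ref{lem:z_orient_gauge} handle the bookkeeping of reversal along a source--to--sink path.

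I expect this orientation invariance, together with the reality/regularity count, to be the main obstacle: the former because it genuinely uses the dressing kernel and not merely a rescaling, the latter because it is combinatorial. For reality and regularity I would locate each $P_V$ in an oval via the sign of the half--edge wave function (Lemma~\ref{lem:pos_div}) and then apply the parity identities (\ref{eq:sign_int_face})--(\ref{eq:sign_bound_face}) of Theorem~\ref{theo:sign_face} to show that each finite oval carries exactly one divisor point and the infinite oval none, which simultaneously yields the Dubrovin--Natanzon conditions and the non--speciality of $\DKP$.
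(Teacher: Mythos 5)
Your proposal is correct and follows essentially the same route as the paper: the dressed edge wave function $\Psi_e(\vec t)=\prec E_e,{\mathfrak D}^{(k)}{\mathfrak E}_\theta(\vec t)\succ$ normalized at $\vec t_0$, the divisor read off from the pole of the degree-one interpolation at each trivalent white vertex (your $c_1N_1/N_3$ is exactly the paper's $\gamma_{\textup{\scriptsize dr},V}$ in (\ref{eq:dress_pole_def})), invariance via the transformation rules of Section \ref{sec:vec_prop} with the kernel relations killing the $A[r]$ terms under orientation change, and the oval count via Lemma \ref{lem:pos_div} and Theorem \ref{theo:sign_face}. The only refinement worth noting is that under an orientation change reversing edges at $V_l$ the divisor \emph{coordinate} is not literally unchanged but transforms by (\ref{eq:inep_gauge_2})--(\ref{eq:inep_gauge_3}) in step with the relabeling of the marked points on $\Gamma_l$, so that it is the divisor \emph{point} that is invariant, as in Theorem \ref{theo:inv}.
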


{\sl Sketch of the proof: } The construction of the KP divisor $\DKP$ and of the Baker--Akhiezer function $\hat \psi(P, \vec t)$ on $\Gamma(\mathcal G)$ is carried in several steps:
\begin{enumerate}
\item\textbf{The edge wave function} In Section \ref{sec:anycurve}, we define a non normalized dressed edge wave function $\Psi_e (\vec t)$  at the edges $e\in \mathcal N$ which solves the system of Lam relations introduced in Section \ref{sec:vectors} using the non-normalized Sato wave function as the boundary condition. We then assign a degree $g-k$ dressed network divisor $\DDN$ to $\mathcal N$ using the linear relations at the trivalent white vertices. 
The dependence of both the edge wave function and of the network divisor on the gauge ray direction, the weight gauge, the vertex gauge and the orientation of the network is ruled by the corresponding transformation properties of the geometric signature discussed in Section \ref{sec:vectors}, see Lemma~\ref{lem:gauge}, Theorem~\ref{theo:sign_face}; 
\item\textbf{The KP divisor on $\Gamma$} In Section \ref{sec:KP_div}, we define the KP divisor on $\Gamma$. We rule the value of the KP wave function at the double points using the system of edge wave functions defined in the previous step.
Therefore each dressed network divisor number is the local coordinate of a KP divisor point $\Pdr_l \in \Gamma_l$ and the position of $\Pdr_l$ is independent on the orientation of the network, on the gauge ray direction, on the weight gauge and on the vertex gauge. This set of divisor points has degree $g-k$ equal to the number of trivalent white vertices in the plabic graph $\mathcal G$. The KP divisor $\DKP$ is defined as the sum of this divisor with the degree $k$ Sato divisor $\DS$;
\item\textbf{Counting the divisor points in the ovals} In Section \ref{sec:comb} we prove that there is one divisor point in each finite oval and no divisor point in the infinite oval;
\item\textbf{The KP wave function on $\Gamma$} The normalized KP wave function $\hat \psi (P, \vec t)$ on $\Gamma$ is constructed imposing that
\begin{itemize}
\item It coincides with the Sato wave function on $\Gamma_0$;
\item It coincides for all times at each double point $P$ with the value of the normalized dressed edge wave function defined in Step~2;
\item The normalized wave function is then meromorphically extended to all other components of $\Gamma$ so that, for any fixed $\vec t$, we have $({\hat \psi} (P, \vec t))+\DKP\ge 0$ on $\Gamma\backslash P_0$.
\end{itemize}
\end{enumerate}

\smallskip

As a consequence of Theorem \ref{theo:exist} and of the absence of identically zero edge wave functions on reduced networks, we get the first statement in the following Corollary. The second part follows from the explicit characterization of the effect of moves and reductions on the wave function and the divisor carried in Section \ref{sec:moves_reduc}.

\begin{corollary}\label{cor:param}
Under the hypotheses of Theorem \ref{theo:exist}:
\begin{enumerate}
\item \textbf{Local parametrization of $\S$ via KP divisors:}
If the plabic graph $\mathcal G$ representing $\S$ is reduced and equivalent to the Le--graph via a finite sequence of moves (M1), (M3) and flip moves (M2), then for any fixed
$\vec t_0$, there is a local one-to-one correspondence between KP divisors on $\Gamma (\mathcal G)$ and points $[A]\in \S$. 
\item \textbf{Discrete transformation between curves and divisors induced by moves and reductions:}
Let ${\mathcal G}$ and ${\mathcal G}^{\prime}$ be two plabic graphs equivalent by a finite sequence of moves and reductions for which Theorem \ref{theo:exist} holds true for the same $\vec t_0$, then there is an explicit transformation of the KP divisor on
$\Gamma(\mathcal G)$ to the KP divisor on $\Gamma({\mathcal G}^{\prime})$. 
\end{enumerate}
\end{corollary}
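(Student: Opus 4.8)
The plan is to deduce both parts of the corollary from the existence, uniqueness and invariance statements of Theorem \ref{theo:exist}, reading Part (1) as a statement about a single fixed curve and Part (2) as a compatibility statement across move--equivalent graphs. For Part (1), I would first record that Theorem \ref{theo:exist} already yields a well--defined assignment $\Phi_{\vec t_0}\colon [A]\mapsto \DKP$: when $\mathcal G$ is reduced and move--equivalent to the Le--graph via (M1), (M3) and flip moves (M2), Remark \ref{rem:zero_vectors} guarantees that no network of graph $\mathcal G$ representing $[A]\in\S$ carries zero edge vectors, so the construction applies at every point of the cell, while the invariance part of Theorem \ref{theo:exist} shows that $\DKP$ depends only on $[A]$ and on the fixed $\vec t_0$, and not on the orientation, gauge ray, weight or vertex gauge. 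Since $\mathcal G$ is reduced we have $g=|D|=\dim_{\mathbb R}\S$, and by Definition \ref{def:real_KP_div} an admissible divisor carries exactly one point in each of the $g$ finite ovals, each oval being a real circle contributing a single real coordinate; hence $\Phi_{\vec t_0}$ is a map between real spaces of equal dimension $g$.

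To upgrade $\Phi_{\vec t_0}$ to a local bijection I would exploit that it is rational. For a reduced graph the edge weights modulo the gauge freedom of Remark \ref{rem:gauge_weight} parametrize $\S$, and, by Theorem \ref{theo:null}, the edge--vector components are rational in these weights with subtraction--free denominators; since the coordinates $\gdr$ of the divisor points and the Sato coordinates are extracted from these data at the fixed $\vec t_0$, the map $\Phi_{\vec t_0}$ is rational in the weights. Local injectivity I would obtain by exhibiting the inverse: a generic admissible divisor determines the normalized wave function $\hat\psi$ on all of $\Gamma$ uniquely, hence its values $\hat\Psi_e$ at every edge through the gluing conditions of Definition \ref{def:KPwave}(\ref{it:comp}); feeding these values into the full--rank system of relations of Theorem \ref{theo:consist} and inverting the Talaska--type expressions recovers the face weights, i.e. the point $[A]$. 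I expect the Jacobian of $\Phi_{\vec t_0}$ to have full rank $g$ generically, so that $\Phi_{\vec t_0}$ is a local diffeomorphism away from the subvariety on which the divisor becomes special; this is exactly the locus, flagged in the introduction, where maximal rank and injectivity are lost.

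For Part (2), I would proceed move by move using the explicit transformation rules of Section \ref{sec:moves_reduc}. Each Postnikov transformation acts on $\mathcal G$ either by an operation preserving the number of faces (the flip move (M2) and the gauge--type identifications) or by creating or deleting a face together with its incident vertices and edges (parallel edge (un)reduction and the square move). Via Construction \ref{def:gamma} this corresponds on the curve to a relabelling of components, respectively to the insertion or deletion of one copy of $\mathbb{CP}^1$ and of the associated oval. Because $\DKP$ is assembled from the Sato divisor and the edge vectors, and the edge vectors transform by the explicit sign and rescaling rules of Lemmas \ref{lem:weight_gauge} and \ref{lem:vertex_gauge}, Proposition \ref{prop:rays} and Theorem \ref{theo:orient}, supplemented by the move--specific formulas of Section \ref{sec:moves_reduc}, the induced transformation of each coordinate $\gdr$ is explicit and rational. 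I expect the main obstacle here to be bookkeeping rather than conceptual: for a reduction one must match the genus drop $g\to g'$ with the disappearance of a single finite oval and verify that the surviving divisor points land in admissible positions on $\Gamma(\mathcal G')$, so that the transformed data again satisfy Definition \ref{def:real_KP_div}, whereas invariance under the face--preserving moves is immediate from the invariance clause of Theorem \ref{theo:exist}. Composing these elementary transformations along the given finite sequence then produces the desired explicit correspondence between the KP divisors on $\Gamma(\mathcal G)$ and on $\Gamma(\mathcal G')$.
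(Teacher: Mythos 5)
Your proposal is correct and follows essentially the same route as the paper: the paper's own ``proof'' is just the sentence preceding the Corollary, deducing Part (1) from Theorem \ref{theo:exist} together with the absence of zero edge vectors on reduced networks (Remark \ref{rem:zero_vectors}), and Part (2) from the explicit move-by-move transformation rules of Section \ref{sec:moves_reduc}; your elaboration of the inverse map (divisor $\to$ wave function values at double points $\to$ weights, with generic full-rank Jacobian) is consistent with the birationality claim stated in the introduction and with the explicit inversion carried out for $Gr^{\mbox{\tiny TP}}(1,3)$ in Section \ref{sec:global}. One small correction: the square move (M1) does \emph{not} create or delete a face --- by Lemma \ref{lemma:poles_move1} it preserves $g$ and the number of ovals; only the reductions (R1), (R3) drop the genus --- but since you defer to the move-specific formulas of Section \ref{sec:moves_reduc} anyway, this misclassification does not affect the argument.
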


\begin{remark}\textbf{Global parametrization of positroid cells via divisors} We claim that given a curve $\Gamma$ associated to a reduced plabic graph $\mathcal G$ representing a positroid cell $\S$, real and regular KP divisors $\DKP$ such that $\# ( \DKP \cap \Gamma_0) = k$ provide a global minimal parametrization of $\S$ after applying some blow-ups in all cases where some of the divisor points occur at double points. We plan to discuss thoroughly this issue in a future publication and we just discuss the question for soliton data associated to $Gr^{\mbox{\tiny TP}} (1,3)$ in Section \ref{sec:global}.
\end{remark}

\section{Construction of the KP wave function at the double points of the spectral curve}\label{sec:anycurve}

In this Section we construct dressed edge wave functions on networks and introduce effective dressed network divisors. 

Throughout this Section, we fix the phases ${\mathcal K}= \{ \kappa_1 <\cdots< \kappa_n\}$ and the plabic graph in the disk $\mathcal G$ representing the irreducible positroid cell $\S \subset \GTNN$ as in Definition \ref{def:graph}. $\Gamma =\Gamma(\mathcal G)$ is the curve as in Construction
\ref{def:gamma}. We denote by $g+1$ the number of faces (ovals) of $\mathcal G$ ($\Gamma$).
$\mathcal O$ is the orientation of $\mathcal G$ and $\mathfrak l$ a fixed gauge direction. Finally $I=\{ 1\le i_1<\cdots < i_k\le n\}$ is the base in $\mathcal M$ associated to $\mathcal O$, whereas
$\bar I= [n] \backslash I$.

For any $[A]\in \S$, $({\mathcal N}, \mathcal O)$ is a directed network of graph $\mathcal G$ representing such point. Using the weight gauge freedom, in the following we assume that edges at boundary vertices carry unit weights. $z_{b_e}$ is the system of half-edge vectors constructed in Theorem~\ref{thm:lam_rel}  on $({\mathcal N}, \mathcal O, \mathfrak l)$ with boundary conditions $E_j$ at the boundary vertices and we assume from now on that all half-edge vectors are not zero.  We recall that, if an half-edge vector is zero in one orientation then it is zero in any other orientation of $\mathcal N$  and that zero vectors can't appear in plabic networks possessing an acyclic orientation (see \cite{AG4} and Remark \ref{rem:zero_vectors}).

\subsection{Main construction -- part II. The dressed half-edge wave function}\label{sec:vertex_wavefn_general_case}

For the rest of the paper we denote ${\mathfrak E}_\theta (\vec t) = (e^{\theta_1(\vec t)}, \dots, e^{\theta_n(\vec t)})$, $\theta_j(\vec t) = \sum_{l\ge 1} \kappa_j^{l} t_l$, where $\vec t =(t_1=x,t_2=y,t_3=t, t_4,\dots)$ are the KP times, and 
 $\prec \cdot, \cdot\succ $ denotes the usual scalar product. Moreover we assume that only a finite number of entries of $\vec t$ are non zero. To simplify the construction we suppose that the edges at the boundary vertices are parallel to each other and each one ends at an internal bivalent vertex. 

\begin{definition}\label{def:vvw_gen}\textbf{The dressed half-edge wave function (dressed h.e.w.) on $({\mathcal N}, \mathcal O, \mathfrak l)$.}
  Let $({\mathcal N}, \mathcal O, \mathfrak l)$ be the oriented network associated to the soliton data $(\mathcal K, [A])$, where the perfect orientation $ \mathcal O$ is associated to the base $I$. Let us denote $z_{U,e}$ its system of half--edge vectors with respect to the boundary conditions
\begin{equation}
  \label{eq:bc11}  
z_{b_j} =  E_j, \ \  j\in \bar I.
\end{equation}
  Finally let $A$ be the RREF matrix of $[A]$ w.r.t. the base $I$ so that
$f^{(r)} (\vec t) = \sum_{j=1}^n A^r_j e^{\theta_j(\vec t)}$, $r\in [k]$,
are the heat hierarchy solutions generating the Darboux transformation $\mathfrak D^{(k)}$ for the soliton data $(\mathcal K, [A])$.

We define the \textbf{dressed half--edge wave function (dressed h.e.w.)} for the half--edge $(e,U)$ as 
\begin{equation}\label{eq:KP_half_ef}
\Psi_{U,e; \mathcal O,\mathfrak l} (\vec t) \equiv \prec z_{U,e}, {\mathfrak D}^{(k)}{\mathfrak E}_\theta (\vec t)\succ .
\end{equation}
\end{definition}
In particular, the dressed half-edge wave function at the boundary vertices coincides with the Sato wave fucntion
\begin{equation}\label{eq:KP_half_ef2}
\Psi_{U,e_j; \mathcal O,\mathfrak l} (\vec t) \equiv \prec z_{b_j}, {\mathfrak D}^{(k)}{\mathfrak E}_\theta (\vec t) \succ = {\mathfrak D}^{(k)}\phi^{(0)} (\kappa_j; \vec t), \ \ j\in [n],
\end{equation}
because of (\ref{eq:bc11}),
\begin{equation}
  \label{eq:bc21}  
z_{b_{i_r}, e_{i_r}} =E_{i_r}- A[r],
\end{equation}
and the fact that the heat hierarchy solutions belong to the kernel of the dressing operator.

\begin{remark}
  From now on we assume that our network is generic, i.e. all half-edge vectors are different from zero.
\end{remark}

\begin{lemma}\label{lem:ref_time}
Let $\Psi_{U,e_j; \mathcal O,\mathfrak l} (\vec t)$ be the dressed h.e.w. function. Then: 
\begin{enumerate}
\item  $\Psi_{e, \mathcal O,\mathfrak l} (\vec t) \not \equiv 0$, for all $e\in \mathcal N$;
\item  There exists a reference time $\vec t_0= (x_0,0,\dots)$ such that
for any $e\in \mathcal E$, the dressed h.e.w. function $\Psi_{e, \mathcal O,\mathfrak l} (\vec t_0) \not = 0$.
  \end{enumerate}
\end{lemma}

The full rank linear system for the half-edge vectors (see Section \ref{sec:def_edge_vectors}) induces a full rank linear system satisfied by half--edge wave function:
\begin{enumerate}
\item 
At any trivalent white vertex $U$ incident with incoming edge $e_3$ and outgoing edges $e_1$, $e_2$,
\begin{equation}\label{eq:lin_Phi1}
\displaystyle \sum_{k=1}^3 \Psi_{U,e_k; \mathcal O,\mathfrak l} (\vec t) =0;
\end{equation}
\item At any trivalent black vertex $U$ incident with incoming edges  $e_2$, $e_3$ and outgoing edge $e_1$,
\begin{equation}\label{eq:lin_Phi3}
  \Psi_{U,e_1; \mathcal O,\mathfrak l} (\vec t) = \Psi_{U,e_2; \mathcal O,\mathfrak l} (\vec t) = \Psi_{U,e_3; \mathcal O,\mathfrak l} (\vec t);
 \end{equation} 
\item At each edge $e=(U,V)$,
  \begin{equation}\label{eq:lin_Phi4}
 \Psi_{U,e; \mathcal O,\mathfrak l} (\vec t) = (-1)^{\epsilon_U,V} w_{U,V}  \Psi_{V,e; \mathcal O,\mathfrak l} (\vec t),    
\end{equation}
where $\epsilon_{U,V}$ is the geometric signature defined in Section~\ref{sec:def_edge_vectors}.
\end{enumerate}

Finally, in the following Proposition the transformation rules for the half--edge wave functions are a consequence of the transformation rules for the half-edge vectors (Lemma \ref{rem:gauge_weight} and Theorem \ref{theo:sign_face}). The last item is a key observation which will be used both to prove the invariance of the position of the divisor point in the oval with respect to the geometric transformations of the directed graph dual to the reducible $\mathtt M$--curve and to count the number of divisors points in each oval.

\begin{proposition}\label{prop:change_orient}\textbf{The dependence of the dressed e.w. on the geometric transformations}
Let $\epsilon^{(1)}_{U,V}$ and $\epsilon^{(2)}_{U,V}$ respectively be the edge signatures on 
$\mathcal N^{(1)} \equiv (\mathcal N, \mathcal O, \mathfrak l)$ and on ${\mathcal N}^{(2)} \equiv (\hat{ {\mathcal N}}, \hat{ \mathcal O},\hat{\mathfrak l})$, plabic networks representing the same point $[A]\in \S \subset \GTNN$. Denote $\Psi^{(i)}_{U,e} (\vec t)$, $i=1,2$, the respective half--edge wave functions at the the half--edge $(U,e)$. Then:
\begin{enumerate}
\item If ${\mathcal N}^{(2)}$ is obtained from ${\mathcal N}^{(1)}$ by either changing the gauge ray direction on the same directed graph or acting with a vertex gauge transformation on the graph of ${\mathcal N}^{(1)}$ keeping fixed the orientation, 
\begin{equation}\label{eq:phi_gauge} 
\Psi^{(2)}_{U,e} (\vec t) = (-1)^{\eta(U)} \Psi^{(1)}_{U,e} (\vec t),
\end{equation}
where $\eta(U)$ is the gauge equivalence transformation of the two signatures;
\item If ${\mathcal N}^{(2)}$ is obtained from ${\mathcal N}^{(1)}$ by changing the orientation of the given graph, 
then, for any edge $e$ there exists a real constant $\alpha_e\not = 0$ such that 
\begin{equation}\label{eq:phi_orient}
\Psi^{(2)}_{U,e} (\vec t) = \alpha_e \Psi^{(1)}_{U,e} (\vec t);
\end{equation}
\item Let $V$ be an internal vertex and $e,f$ be edges at $V$. If ${\mathcal N}^{(2)}$ is obtained from ${\mathcal N}^{(1)}$ by either changing the orientation of the given graph, or the gauge ray direction on the same directed graph or acting with a vertex gauge transformation or a weight gauge transformation, then, for any $\vec t$
\begin{equation}\label{eq:sign_prod}
\mbox{sign }\left( \Psi^{(2)}_{U,e} (\vec t)\Psi^{(2)}_{U,f} (\vec t) \right) = \mbox{sign } \left( \Psi^{(1)}_{U,e} (\vec t)\Psi^{(1)}_{U,f} (\vec t) \right).
\end{equation}
\end{enumerate}
\end{proposition}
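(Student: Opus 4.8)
The plan is to reduce all three claims to the transformation rules for the (half--)edge vectors already established in Theorem~\ref{theo:orient} and Lemma~\ref{lem:z_orient_gauge}, the engine being a single algebraic identity. The starting observation is that each dressed half--edge wave function is, by definition (\ref{eq:KP_half_ef}), the pairing of the half--edge vector $z_{U,e}$ with the one vector ${\mathfrak D}^{(k)}{\mathfrak E}_\theta(\vec t) = ({\mathfrak D}^{(k)} e^{\theta_1(\vec t)},\dots,{\mathfrak D}^{(k)} e^{\theta_n(\vec t)})$, and that this latter vector depends only on the point $[A]$ and on $\vec t$, not on the network, its orientation, or any gauge. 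Indeed the dressing coefficients ${\mathfrak w}_i(\vec t)$ are determined by the $k$--dimensional row span of $A$, which is intrinsic to $[A]$, so the Darboux operator ${\mathfrak D}^{(k)}$ built from the RREF matrix of any base $I\in{\mathcal M}$ is one and the same operator. I would first record the resulting \emph{master identity}: for every row $A[r]$ of the RREF matrix with respect to any base and every $\vec t$,
\[
\prec A[r], {\mathfrak D}^{(k)}{\mathfrak E}_\theta(\vec t)\succ = {\mathfrak D}^{(k)} f^{(r)}(\vec t) = 0,
\]
which holds because the ${\mathfrak w}_i$ are, by definition, the coefficients annihilating the heat solutions $f^{(r)}=\sum_j A^r_j e^{\theta_j}$. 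This is exactly what reconciles the two cases in (\ref{eq:dr_fun_boundary}) with $E_e=A[r]-E_{i_r}$ from (\ref{eq:vector_source}).

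For (\ref{eq:phi_gauge}), the transformations involved (change of gauge ray, vertex gauge) leave the base $I$ fixed, hence leave ${\mathfrak D}^{(k)}{\mathfrak E}_\theta$ and the boundary conditions unchanged; I would simply pair the half--edge identity $z^{(2)}_{U,e}=(-1)^{\eta(U)} z^{(1)}_{U,e}$ of Lemma~\ref{lem:z_orient_gauge}(1) with ${\mathfrak D}^{(k)}{\mathfrak E}_\theta$ to get the claimed sign. For (\ref{eq:phi_orient}), I would decompose a general reorientation into a finite sequence of reversals along closed cycles (base unchanged) and along loop--erased source--to--sink paths (base changed by one swap), each covered by Lemma~\ref{lem:z_orient_gauge}. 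Cycle reversals are treated as above. For a path reversal, Lemma~\ref{lem:z_orient_gauge}(2) gives $z^{(2)}_{U,e}=(-1)^{\eta(U)}{\tilde z}^{(1)}_{U,e}$, where ${\tilde z}^{(1)}$ solves the same system on ${\mathcal N}^{(1)}$ but with boundary data differing from the standard one only at the sink $b_{j_0}$, by the multiple $-\frac{1}{A^{r_0}_{j_0}}A[r_0]$ of a row of $A$. Since the half--edge vectors depend linearly on the boundary data and the pairing is linear, the master identity makes this extra term disappear:
\[
\prec {\tilde z}^{(1)}_{U,e}, {\mathfrak D}^{(k)}{\mathfrak E}_\theta\succ = \prec z^{(1)}_{U,e}, {\mathfrak D}^{(k)}{\mathfrak E}_\theta\succ = \Psi^{(1)}_{U,e}(\vec t),
\]
whence $\Psi^{(2)}_{U,e}=(-1)^{\eta(U)}\Psi^{(1)}_{U,e}$. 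Composing over the elementary moves yields (\ref{eq:phi_orient}) with $\alpha_e\ne 0$ real, in fact $\alpha_e=\pm1$ and depending only on the endpoint $U$ of the flag.

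Finally, (\ref{eq:sign_prod}) follows by reading the two factors $\Psi_{U,e}$ and $\Psi_{U,f}$ at the common vertex $U$ through the previous items together with the weight--gauge rule of Lemma~\ref{lem:weight_gauge}. The decisive point is that in each case the factor relating $\Psi^{(2)}$ to $\Psi^{(1)}$ is \emph{vertex--indexed}: it is $(-1)^{\eta(U)}$ for change of gauge ray, vertex gauge and reorientation (so the two factors at $U$ coincide), and a \emph{positive} scalar for the weight gauge (so it cannot change any sign). Hence $\Psi^{(2)}_{U,e}\Psi^{(2)}_{U,f}$ equals $\Psi^{(1)}_{U,e}\Psi^{(1)}_{U,f}$ multiplied by $(-1)^{2\eta(U)}=1$ or by a positive number, and its sign is preserved for every $\vec t$. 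The one genuinely delicate step is (\ref{eq:phi_orient}): one must verify carefully that ${\mathfrak D}^{(k)}$ is truly base--independent, that a path reversal changes the boundary data precisely by a linear combination of rows of $A$ so that the master identity applies, and that an arbitrary reorientation really does factor through the elementary moves of Lemma~\ref{lem:z_orient_gauge}; these are the points where the bookkeeping has to be done with care.
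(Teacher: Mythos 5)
Your proof is correct and follows essentially the same route as the paper, which simply reduces all three items to the transformation rules for (half--)edge vectors in Theorem~\ref{theo:orient} and Lemma~\ref{lem:z_orient_gauge} and checks item (3) case by case. Your explicit ``master identity'' $\prec A[r],{\mathfrak D}^{(k)}{\mathfrak E}_\theta\succ={\mathfrak D}^{(k)}f^{(r)}=0$ is exactly the mechanism the paper uses implicitly (it is already visible in (\ref{eq:dr_fun_boundary})) to kill the row--of--$A$ corrections under reorientation, and your observation that the surviving factors are vertex--indexed signs or positive weight--gauge scalars is precisely what makes (\ref{eq:sign_prod}) go through.
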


The last statement may be easily checked case by case using the definition of the half-edge wave function and items (1) and (2) in the same Proposition.

\subsection{Main construction -- part III. The dressed network divisor}

Next we associate the dressed network divisor to $({\mathcal N},\mathcal O, \mathfrak l)$ as follows.

\begin{definition}\label{def:vac_div_gen}\textbf{The dressed network divisor $\DDN$.}
At each trivalent white vertex $V$ of $({\mathcal N},\mathcal O,\mathfrak l)$, 
let $\Psi_{V,e_m; \mathcal O,\mathfrak l} (\vec t)$ be the dressed h.e.w. on the edge $e_m$, $m\in [3]$, with the convention that the edges are numbered anticlockwise and $e_3$ is the incoming edge at $V$. Then, to $V$ we assign the dressed network divisor number 
\begin{equation}\label{eq:dress_pole_def}
\gamma_{\textup{\scriptsize dr},V} = \frac{\Psi_{V,e_1;\mathcal O,\mathfrak l} (\vec t_0) }{\Psi_{V,e_1;\mathcal O,\mathfrak l} (\vec t_0)+ \Psi_{V,e_2;\mathcal O,\mathfrak l} (\vec t_0)}.
\end{equation}
We call $\DDN = \{ (\gdr, V_l), \; l\in [g-k]  \}$ the dressed network divisor on ${\mathcal N}$, where $V_l$, $l\in [g-k]$, are the trivalent white vertices of the network.
\end{definition}

If $\mathcal N$ is the acyclically oriented Le--network representing $[A]$, then the dressed network divisor constructed in \cite{AG3} coincides with the above definition for the choice of the gauge ray $\mathfrak l$ as in Figure \ref{fig:Rules0}.

By definition, the dressed network divisor number at each trivalent white vertex is real and represents the local coordinate of a point on the corresponding copy of $\mathbb{CP}^1$ in $\Gamma$. In general, on $({\mathcal N}, \mathcal O,\mathfrak l)$, the value of the dressed network divisor number at a given trivalent white vertex $V$ depends on the choice of $\vec t_0$. If this is not the case, we call \textbf{trivial} the corresponding \textbf{network divisor number}. The latter case occurs at white vertices where the linear system involves proportional vectors. If ${\mathcal N}$ is the Le--network, there are no trivial network divisor numbers. 
 
\begin{lemma}\label{lemma:trivial_div}\textbf{Trivial network divisor numbers}
Let $z_{V,e_m}$, $m\in [3]$, be the half-edge vectors at a trivalent white vertex $V$ of $({\mathcal N},\mathcal O,\mathfrak l)$ and $\gamma_{\textup{\scriptsize dr},V}$ be the dressed network divisor number at $V$. If there exists a non zero constant $c_V$ such that either $z_{e_2} = c_V z_{V,e_1}$ or $z_{V,e_3} = c_V z_{V,e_1}$ or $z_{V,e_3} = c_V z_{V,e_2}$, then the dressed network divisor number $\gamma_{\textup{\scriptsize dr},V}$ is trivial.
\end{lemma}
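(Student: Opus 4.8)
The plan is to exploit the fact that the dressed edge wave function (\ref{eq:KPvfN}) is \emph{linear} in the edge vector: since $\Psi_{e,\mathcal O,\mathfrak l}(\vec t) = \prec E_e, {\mathfrak D}^{(k)}{\mathfrak E}_\theta(\vec t)\succ$, any linear relation among the edge vectors $E_{e_1},E_{e_2},E_{e_3}$ at the white vertex $V$ transfers verbatim to the same relation among the functions $\Psi_{e_1,\mathcal O,\mathfrak l},\Psi_{e_2,\mathcal O,\mathfrak l},\Psi_{e_3,\mathcal O,\mathfrak l}$ for \emph{every} $\vec t$. In particular, if two of the edge vectors are proportional, then so are the corresponding dressed e.w.'s at all times, and the ratio defining $\gamma_{\textup{\scriptsize dr},V}$ in (\ref{eq:dress_pole_def}) collapses to a constant. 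First I would record that the denominator in (\ref{eq:dress_pole_def}) equals, up to the nonzero factor $(-1)^{\mbox{int}(e_3)}w_3$, the value $\Psi_{e_3,\mathcal O,\mathfrak l}(\vec t_0)$ by the white-vertex relation (\ref{eq:lin_Phi1}); hence by the choice of reference time in Remark \ref{rem:t_0} it never vanishes, so the expression is well defined in all three cases.

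The first case, $E_{e_2}=c_V E_{e_1}$, is immediate. Substituting $\Psi_{e_2,\mathcal O,\mathfrak l}(\vec t_0)=c_V\Psi_{e_1,\mathcal O,\mathfrak l}(\vec t_0)$ into (\ref{eq:dress_pole_def}) and cancelling the common factor $\Psi_{e_1,\mathcal O,\mathfrak l}(\vec t_0)$, which is nonzero by Remark \ref{rem:t_0}, yields
\[
\gamma_{\textup{\scriptsize dr},V} = \frac{(-1)^{\mbox{wind}(e_3,e_1)}}{(-1)^{\mbox{wind}(e_3,e_1)}+(-1)^{\mbox{wind}(e_3,e_2)}c_V},
\]
which no longer depends on $\vec t_0$; hence $\gamma_{\textup{\scriptsize dr},V}$ is trivial.

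For the remaining two cases I would reduce to the first one using the white-vertex linear relation (\ref{eq:lineq_white}),
\[
E_{e_3} = (-1)^{\mbox{int}(e_3)+\mbox{wind}(e_3,e_1)}w_3 E_{e_1}+(-1)^{\mbox{int}(e_3)+\mbox{wind}(e_3,e_2)}w_3 E_{e_2}.
\]
If $E_{e_3}=c_V E_{e_1}$, substituting and solving for $E_{e_2}$ shows that $E_{e_2}$ is a scalar multiple of $E_{e_1}$; the scalar is nonzero because $E_{e_2}\neq 0$ by the standing assumption of this Section that no edge vector vanishes. Symmetrically, if $E_{e_3}=c_V E_{e_2}$, the same relation forces $E_{e_1}$ to be a nonzero multiple of $E_{e_2}$. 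In either case $E_{e_1}$ and $E_{e_2}$ are proportional with a nonzero constant, so the dressed e.w.'s $\Psi_{e_1,\mathcal O,\mathfrak l}$ and $\Psi_{e_2,\mathcal O,\mathfrak l}$ are proportional at all times and the computation of the first case applies verbatim to conclude triviality.

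The argument is essentially algebraic and presents no real obstacle; the only two points requiring care are the verification that the denominator of (\ref{eq:dress_pole_def}) does not vanish, which I handle once and for all via (\ref{eq:lin_Phi1}) and Remark \ref{rem:t_0}, and the observation that the proportionality constants produced in the last two cases are nonzero, which follows from the absence of zero edge vectors assumed throughout this Section.
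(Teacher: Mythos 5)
Your proof is correct and follows essentially the same route as the paper: substitute the proportionality $\Psi_{e_2,\mathcal O,\mathfrak l}(\vec t_0)=c_V\Psi_{e_1,\mathcal O,\mathfrak l}(\vec t_0)$ into (\ref{eq:dress_pole_def}) and observe that the $\vec t_0$--dependence cancels, the remaining cases being reduced to the first one. Your explicit reduction of the cases $E_{e_3}=c_VE_{e_1}$ and $E_{e_3}=c_VE_{e_2}$ via the white-vertex relation (\ref{eq:lineq_white}), together with the check that the denominator is a nonzero multiple of $\Psi_{e_3,\mathcal O,\mathfrak l}(\vec t_0)$, just fills in what the paper dismisses with ``the other cases may be treated similarly.''
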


\smallskip

All gauge transformations (change of ray direction, of the vertex gauge and of weight gauge) leave invariant the divisor number on $V$. Those associated to changes of orientation correspond to a well defined change of the local coordinate on the corresponding copy of $\mathbb{CP}^1$, and the dressed network divisor numbers 
change in agreement with such coordinate transformation (Proposition \ref{prop:indep_orient}).
We shall use such transformation properties of the divisor numbers to prove the invariance of the KP divisor on $\Gamma$ in Theorem \ref{theo:inv}.

\begin{proposition}\label{pro:indep_gauge}\textbf{Independence of the network divisor on the gauge ray direction, the weight gauge and the vertex gauge.}
Let ${\mathcal N}^{(1)}$, ${\mathcal N}^{(2)}$ be two oriented networks representing the same point in the Grassmannian and obtained from each other by changing either the gauge ray direction or the weight gauge or the vertex gauge.
Let $\gamma_{\textup{\scriptsize dr},V}^{(i)}$ respectively be the divisor numbers 
at the vertex $V$ in ${\mathcal N}^{(i)}$, $i=1,2$. Then
\[
\gamma_{\textup{\scriptsize dr},V}^{(2)} = \gamma_{\textup{\scriptsize dr},V}^{(1)}.
\]
\end{proposition}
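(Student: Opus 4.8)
The plan is to treat the three types of transformation separately, exploiting the fact that the divisor number (\ref{eq:dress_pole_def}) is a \emph{ratio} built from the two dressed edge wave functions at the outgoing edges $e_1,e_2$ of the white vertex $V$ (equivalently, from the two half--edge wave functions $\Psi_{V,e_1;\mathcal O,\mathfrak l}$ and $\Psi_{V,e_2;\mathcal O,\mathfrak l}$ based at $V$). In each case I would show that the relevant pair of (half--)edge wave functions is multiplied by one and the same factor, so that this factor cancels between numerator and denominator and $\gamma_{\textup{\scriptsize dr},V}$ is left unchanged.

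For the gauge ray direction and the vertex gauge I would use the second expression in (\ref{eq:dress_pole_def}), namely $\gamma_{\textup{\scriptsize dr},V} = \Psi_{V,e_1;\mathcal O,\mathfrak l}(\vec t_0)/(\Psi_{V,e_1;\mathcal O,\mathfrak l}(\vec t_0)+\Psi_{V,e_2;\mathcal O,\mathfrak l}(\vec t_0))$, together with item (1) of Proposition \ref{prop:change_orient}. Since both half--edge wave functions $\Psi_{V,e_1}$ and $\Psi_{V,e_2}$ are based at the \emph{same} vertex $V$, under a change of gauge ray direction or a vertex gauge transformation each of them is multiplied by the common sign $(-1)^{\eta(V)}$, where $\eta(V)$ is the gauge equivalence of the two signatures; this common factor cancels, yielding $\gamma^{(2)}_{\textup{\scriptsize dr},V} = \gamma^{(1)}_{\textup{\scriptsize dr},V}$. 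Working with the half--edge form is essential here, because the explicit winding exponents $\mbox{wind}(e_3,e_m)$ appearing in the first expression of (\ref{eq:dress_pole_def}) do themselves change under a rotation of $\mathfrak l$.

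For the weight gauge I would instead use the first expression in (\ref{eq:dress_pole_def}). The numbers $\mbox{wind}(e_3,e_m)$ and the intersection numbers are purely geometric and hence insensitive to a change of weights, so it suffices to control $\Psi_{e_1,\mathcal O,\mathfrak l}(\vec t)=\prec E_{e_1},{\mathfrak D}^{(k)}{\mathfrak E}_\theta(\vec t)\succ$ and $\Psi_{e_2,\mathcal O,\mathfrak l}(\vec t)$. Any weight gauge is a composition of elementary gauges at single internal vertices, so it is enough to treat one vertex $W$ at a time and invoke Lemma \ref{lem:weight_gauge}. The key observation is that $e_1$ and $e_2$ both emanate from $V$ and thus share the same initial vertex: by Lemma \ref{lem:weight_gauge} either $W=V$, in which case both $E_{e_1}$ and $E_{e_2}$ are multiplied by the same $t_V$, or $W\neq V$, in which case both are unchanged. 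In either case $\Psi_{e_1}$ and $\Psi_{e_2}$ acquire a common factor which cancels in the ratio, so $\gamma_{\textup{\scriptsize dr},V}$ is invariant.

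The main point to get right — and the only place where a naive argument breaks down — is the weight gauge: if one mistakenly rescaled the edges \emph{entering} the neighbouring vertices $U_1,U_2$ (the heads of $e_1,e_2$), the two wave functions would scale by different factors and the ratio would change. The correct bookkeeping, which I would justify by checking compatibility with the white--vertex relation (\ref{eq:lineq_white}) (the relation $E_3 = \pm w_3 E_1 \pm w_3 E_2$ is preserved only if a weight gauge at $W$ rescales precisely the edge vectors whose initial vertex is $W$), guarantees that $e_1$ and $e_2$ are always rescaled together; this is exactly what forces the invariance of $\gamma_{\textup{\scriptsize dr},V}$.
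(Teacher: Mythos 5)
Your proof is correct and follows essentially the same route as the paper's (one‑line) proof: one computes $\gamma_{\textup{\scriptsize dr},V}$ as a ratio of the two (half‑)edge wave functions based at $V$ and observes that each of the three gauge freedoms multiplies both of them by one and the same nonzero factor --- the common sign $(-1)^{\eta(V)}$ from (\ref{eq:gauge_equiv_half_vec}) for the ray--direction and vertex gauges, and $t_V$ for the weight gauge --- which cancels in the ratio. Your extra care with the weight gauge is well placed: the edge vectors that get rescaled are exactly those whose \emph{initial} vertex is the gauged one (as forced by (\ref{eq:gauge}) and by preservation of the relation (\ref{eq:lineq_white}), notwithstanding the phrase ``ending at $V$'' in the statement of Lemma \ref{lem:weight_gauge}), and since $e_1$ and $e_2$ share the initial vertex $V$ they always rescale together, which is precisely what the invariance requires.
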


The proof follows by computing the divisor numbers at $V$ in (\ref{eq:dress_pole_def}) using Proposition~\ref{prop:change_orient}.

\smallskip

\begin{proposition}\label{prop:indep_orient}\textbf{The dependence of the dressed network divisor on the orientation.}
Let ${\mathcal O}_1$ and ${\mathcal O}_2$ be two perfect orientations for network ${\mathcal N}$.
Then for any trivalent white vertex $V$, such that all edges at $V$ have the same versus in both orientations, $(e_1^{(2)},e_2^{(2)},e_3^{(2)})=(e_1^{(1)},e_2^{(1)},e_3^{(1)})$, the dressed network divisor number is the same in both
orientations 
\begin{equation}\label{eq:inep_gauge_1}
\gamma_{\textup{\scriptsize dr},V,\mathcal O_2} = \gamma_{\textup{\scriptsize dr},V,\mathcal O_1}.
\end{equation}
If at the vertex $V$ we change orientation of edges from $(e_1^{(1)},e_2^{(1)},e_3^{(1)})$ to $(e_2^{(2)},e_3^{(2)},e_1^{(2)})=(e_1^{(1)},-e_2^{(1)},-e_3^{(1)})$ (Figure \ref{fig:phi_orient}[left]), then the relation between the dressed network divisor numbers at $V$ in the two orientations is
\begin{equation}\label{eq:inep_gauge_2}
\gamma_{\textup{\scriptsize dr},V,\mathcal O_2} = \frac{1}{1-\gamma_{\textup{\scriptsize dr},V,\mathcal O_1}}.
\end{equation}
If at the vertex $V$ we change orientation of edges from $(e_1^{(1)},e_2^{(1)},e_3^{(1)})$ to $(e_3^{(2)},e_1^{(2)},e_2^{(2)})=(-e_1^{(1)},e_2^{(1)},-e_3^{(1)})$ (Figure \ref{fig:phi_orient}[right]), then the relation between the dressed network divisor numbers at $V$ in the two orientations is
\begin{equation}\label{eq:inep_gauge_3}
\gamma_{\textup{\scriptsize dr},V,\mathcal O_2} = \frac{\gamma_{\textup{\scriptsize dr},V,\mathcal O_1}}{\gamma_{\textup{\scriptsize dr},V,\mathcal O_1}-1}.
\end{equation}
\end{proposition}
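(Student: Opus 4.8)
The plan is to reduce the entire statement to a single fact about the three nonzero constants that the orientation change attaches to the edges at $V$, namely that they coincide. Recall that the dressed network divisor number involves only the half--edge wave functions at $V$ evaluated at the reference time, $\gamma_{\textup{\scriptsize dr},V,\mathcal O}=\frac{\Psi_{V,e_1}(\vec t_0)}{\Psi_{V,e_1}(\vec t_0)+\Psi_{V,e_2}(\vec t_0)}$, and that by the white--vertex relation in (\ref{eq:lin_Phi1}) one has $\Psi_{V,e_1}+\Psi_{V,e_2}+\Psi_{V,e_3}\equiv 0$, so that $\gamma_{\textup{\scriptsize dr},V,\mathcal O}=-\Psi_{V,e_1}(\vec t_0)/\Psi_{V,e_3}(\vec t_0)$. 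Writing $a_i=\Psi^{(1)}_{V,e_i^{(1)}}(\vec t_0)$ for the three half--edge values in the orientation $\mathcal O_1$, the number $\gamma_{\textup{\scriptsize dr},V,\mathcal O_1}$ is a fixed ratio of the $a_i$ subject to $a_1+a_2+a_3=0$, and everything to be proved is a statement about how this ratio changes when we pass to $\mathcal O_2$.

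The crux is the claim that, if $\mathcal N^{(2)}$ is obtained from $\mathcal N^{(1)}$ by a change of orientation, then the constants $\alpha_e$ of (\ref{eq:phi_orient}) satisfy $\alpha_{e_1}=\alpha_{e_2}=\alpha_{e_3}$ at every trivalent white vertex $V$. I would prove this by applying the white--vertex relation in both orientations: in $\mathcal O_1$ it reads $a_1+a_2+a_3\equiv 0$, while in $\mathcal O_2$, using $\Psi^{(2)}_{V,e}=\alpha_e\Psi^{(1)}_{V,e}$ from Proposition \ref{prop:change_orient}, it reads $\alpha_{e_1}a_1+\alpha_{e_2}a_2+\alpha_{e_3}a_3\equiv 0$ as an identity in $\vec t$. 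Since for a non--trivial divisor number the functions $a_i(\vec t)$ span a two--dimensional space whose only linear relation is $a_1+a_2+a_3=0$ (this is precisely the failure of the hypotheses of Lemma \ref{lemma:trivial_div}), the two identities must be proportional, forcing the three constants to be equal. Consequently the common factor cancels in the ratio defining the divisor number, and the only effect of the orientation change is the relabeling of which physical edge carries each of the labels $1,2,3$, equivalently (Definition \ref{def:loccoor}) which marked point on $\Gamma_V$ sits at the local coordinate $0,1,\infty$.

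With the crux in hand the three items are immediate. When all edges keep their versus, (\ref{eq:inep_gauge_1}), the labels and the local coordinate are unchanged, the common factor cancels, and $\gamma_{\textup{\scriptsize dr},V}$ is literally preserved; the same cancellation mechanism, now with a common sign $(-1)^{\eta(V)}$, underlies Proposition \ref{pro:indep_gauge} for the ray, weight and vertex gauges. In the two cases where two of the three half--edges are reversed, I would read off from the stated relabelings which physical edge becomes $e_1^{(2)}$ and $e_2^{(2)}$, substitute $\gamma_{\textup{\scriptsize dr},V,\mathcal O_2}=\frac{\Psi^{(2)}_{V,e_1^{(2)}}}{\Psi^{(2)}_{V,e_1^{(2)}}+\Psi^{(2)}_{V,e_2^{(2)}}}$, cancel the common constant, and simplify using $a_1+a_2+a_3=0$; this yields the fractional linear maps (\ref{eq:inep_gauge_2}) and (\ref{eq:inep_gauge_3}), which are exactly those realizing the induced permutation of the marked points $\{0,1,\infty\}$ on $\Gamma_V$ and therefore admit the transparent interpretation of a change of affine coordinate. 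The main obstacle is the degenerate situation of a trivial divisor number (Lemma \ref{lemma:trivial_div}), where the $a_i(\vec t)$ span only a one--dimensional space, the linear--independence argument for the equality of the $\alpha_e$ breaks down, and one must instead verify the transformation laws directly from the proportionality of the edge vectors, checking that the proportionality constants themselves transform compatibly with the orientation change.
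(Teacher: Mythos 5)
Your proposal is correct in its essentials but organizes the argument quite differently from the paper. The paper works with the \emph{edge} wave functions and the first expression in (\ref{eq:dress_pole_def}): it writes $\Psi^{(2)}_{e^{(2)}_m}=\alpha_{m-1}\Psi^{(1)}_{e^{(1)}_{m-1}}$, extracts the ratio $\alpha_1/\alpha_3$ from the compatibility of the white--vertex relation in the two orientations (this ratio is \emph{not} $1$: it carries the weight $w_3$ and winding/intersection signs), and substitutes into the sign--laden formula for $\gamma_{\textup{\scriptsize dr},V}$. You instead work with the \emph{half--edge} wave functions and the second, sign--free expression in (\ref{eq:dress_pole_def}), and your crux is that the three proportionality constants at the half--edges of $V$ coincide, so they cancel in the ratio and the whole statement collapses to the M\"obius transformation of $\{0,1,\infty\}$ induced by the relabeling. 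Your route is cleaner and makes the coordinate--change interpretation (the one actually used later in the proof of Theorem \ref{theo:inv}) manifest; the paper's route never needs a genericity hypothesis but has to track the weight and the winding numbers explicitly.

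Two caveats. First, your linear--independence proof of the crux genuinely requires that $\Psi^{(1)}_{V,e_1}(\cdot)$ and $\Psi^{(1)}_{V,e_2}(\cdot)$ be linearly independent as functions of $\vec t$; this is not quite ``the failure of the hypotheses of Lemma \ref{lemma:trivial_div}'', because the dressed wave functions can be proportional even when the edge vectors are not (the pairing with ${\mathfrak D}^{(k)}{\mathfrak E}_\theta$ annihilates the row span of $A$). You do flag the degenerate case, but the cleaner fix is to bypass the independence argument altogether: Lemma \ref{lem:z_orient_gauge} already yields the constant $(-1)^{\eta(V)}$, the \emph{same} for all three half--edges at $V$ (the boundary--condition correction in its item (2) is a multiple of $A[r_0]$ and is killed by the Darboux transformation), with no genericity needed. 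Second, carrying out your substitution literally on the relabeling $(e_3^{(2)},e_1^{(2)},e_2^{(2)})=(-e_1^{(1)},e_2^{(1)},-e_3^{(1)})$ gives $\gamma_{\textup{\scriptsize dr},V,\mathcal O_2}=(\gamma_{\textup{\scriptsize dr},V,\mathcal O_1}-1)/\gamma_{\textup{\scriptsize dr},V,\mathcal O_1}$, i.e.\ the map cycling $0\mapsto\infty\mapsto 1\mapsto 0$, and not the printed $\gamma_{\textup{\scriptsize dr},V,\mathcal O_1}/(\gamma_{\textup{\scriptsize dr},V,\mathcal O_1}-1)$, which is the transposition fixing $0$ and is incompatible with the cyclic relabeling $P^{(2)}_m=P^{(1)}_{m+1}$ quoted in the proof of Theorem \ref{theo:inv}. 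So do not assert that your computation ``yields (\ref{eq:inep_gauge_3})'' without first pinning down which permutation of the marked points the relabeling induces and reconciling it with the stated formula.
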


\begin{figure}
  \centering{\includegraphics[width=0.55\textwidth]{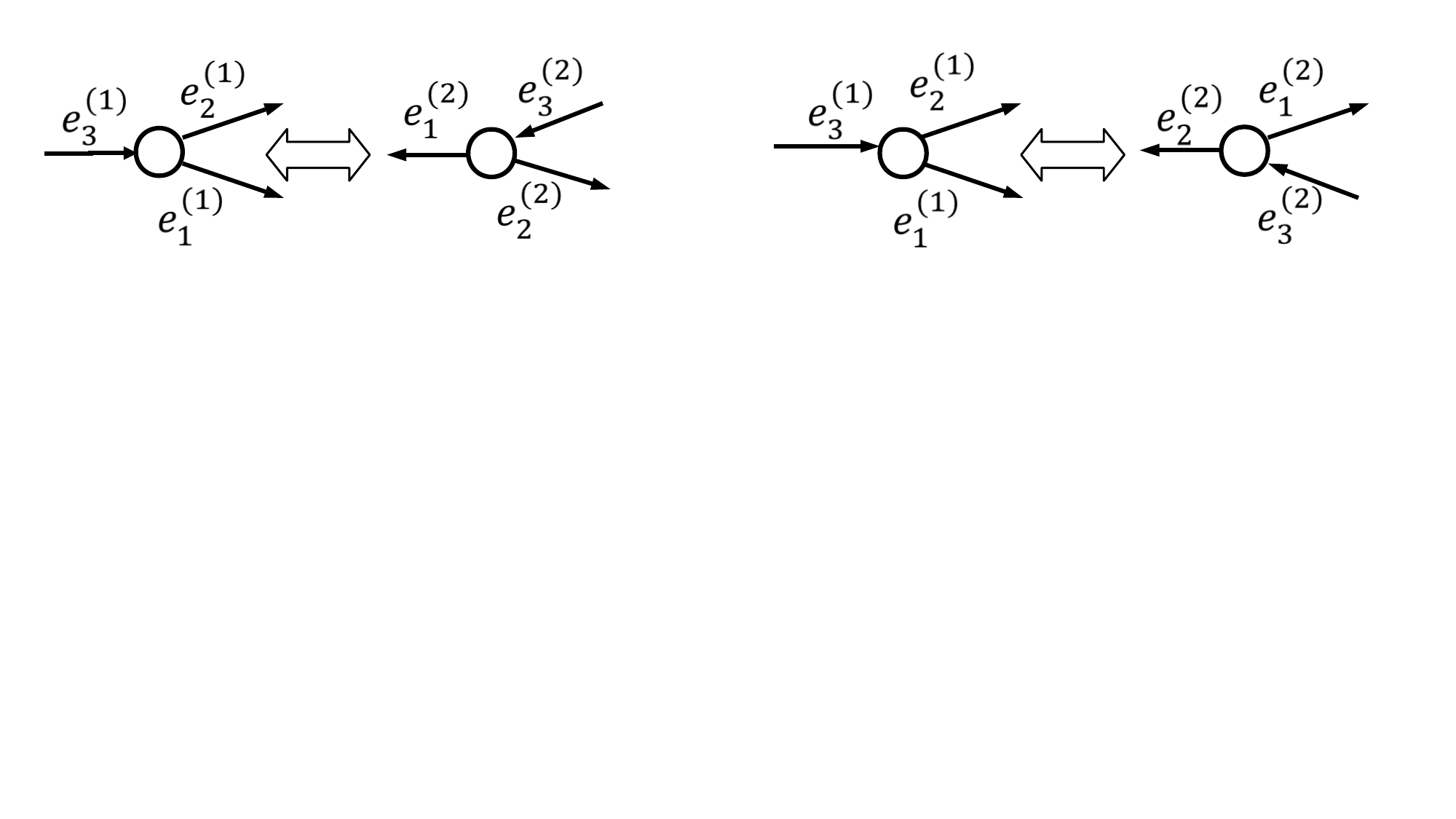}}
	\vspace{-3.7 truecm}
  \caption{\small{\sl The change of orientation at a vertex $V$.}\label{fig:phi_orient}}
\end{figure}

{\sl Proof}
The proof follows by direct computation of the dressed network divisor numbers using (\ref{eq:phi_orient}) and the linear system (\ref{eq:lin_Phi1}) at $V$ for both orientations.

\begin{remark}
In next Section we prove the invariance of the KP divisor on $\Gamma$ with respect to changes of orientation of the graph using  Proposition~\ref{prop:indep_orient}. Indeed, on $\Gamma_l$, the copy of $\mathbb{CP}^1$ corresponding to $V$, the transformation rule of the network divisor number coincides with the change of coordinates of the divisor point induced by the change of orientation of the network. 
\end{remark}

\section{Construction of the KP wave function $\hat \psi$ and characterization of the KP divisor $\DKP$ on $\Gamma$.}\label{sec:inv}

In this Section we define the KP wave function $\hat \psi (P, \vec t)$ on the curve $\Gamma=\Gamma (\mathcal G)$ and the KP divisor $\DKP$ as the union of the Sato divisor on $\Gamma_0$ and the divisor points on the components of $\Gamma$ corresponding to the white vertices whose local coordinates are the network divisor points defined in the previous Section. Then the following properties of $\DKP$ easily follow from its definition:
\begin{enumerate}
\item $\DKP$ is contained in the union of the ovals of $\Gamma$;
\item	The position of the divisor point associated to a given trivalent white vertex $V$ only depends on the relative signs of the KP half--edge wave function at $V$;
\item The KP divisor does not depend on the geometrical indices chosen to construct the network divisor numbers;
\item There is exactly one divisor point in each finite oval and no divisor point in the infinite oval containing $P_0$.
\end{enumerate}

Let us define the normalized dressed edge wave function.

\begin{definition}\label{def:norm_vac_gen}\textbf{The KP edge wave function $\hat \Psi$.}
Let $({\mathcal N}, \mathcal O, \mathfrak l)$ be the gauge--oriented network representing the soliton data $(\mathcal K, [A])$. Let $\vec t_0$ be as in the Lemma~\ref{lem:ref_time}. 
Then the \textbf{KP edge wave function (KP e.w)} on the edge $e=(U,V)$ in ${\mathcal N}$ is 
\begin{equation}\label{eq:KPvfnorN}
\hat \Psi_{e} (\vec t) = \left\{ \begin{array}{ll} \displaystyle\frac{\Psi_{U,e; \mathcal O,\mathfrak l} (\vec t) }{\Psi_{U,e; \mathcal O,\mathfrak l} (\vec t_0) }=  \displaystyle\frac{\Psi_{V,e; \mathcal O,\mathfrak l} (\vec t) }{\Psi_{V,e; \mathcal O,\mathfrak l} (\vec t_0) }, &\quad\quad \mbox{for all } e\in \mathcal E, \\
\displaystyle\frac{{\mathfrak D}^{(k)} e^{\theta_{j} (\vec t)}}{{\mathfrak D}^{(k)} e^{\theta_{j} (\vec t_0)}},&\quad\quad \mbox{if } e=e^{(D)}_j, \;\; j\in [n],
\end{array}\right. \quad\quad \forall \vec t.
\end{equation}
\end{definition}

\begin{remark}\label{rem:indep}\textbf{The KP edge wave function on $\mathcal N$.} 
The name KP wave function for $\hat \Psi_e (\vec t)$ on $\mathcal N$ is fully justified. Indeed $\hat \Psi_e (\vec t)$ just depends on the soliton data $(\mathcal K, [A])$ and on the chosen network representing $[A]$ since $\hat \Psi_e (\vec t)$ takes the same value on a given edge $e$ for any choice of orientation, gauge ray direction, weight gauge and vertex gauge. Moreover, by construction it satisfies the Sato boundary conditions at the edges at the boundary, and takes real values for real $\vec t$. This function is a common eigenfunction to all KP hierarchy auxiliary linear operators $-\partial_{t_j} + B_j$, where $B_j =(L^j)_+$, and the Lax operator $L=\partial_x+\frac{u(\vec t)}{2}\partial_x^{-1}+ u_2(\vec t)\partial_x^{-2}+\ldots$, the coefficients of these operators are the same for all edges. 
$\hat \Psi_e (\vec t)$ takes equal values at all edges $e$ at the same bivalent or black trivalent vertex, whereas, at any trivalent white vertex, it takes either the same value at all three edges for all times or distinct values at some $\vec t\not= \vec t_0$. 
\end{remark}

\subsection{Main construction -- Part IV. The KP wave function and its pole divisor}\label{sec:KP_div}
We start the Section defining both the KP wave function $\hat \psi$ and the KP divisor $\DKP$. We use the half-edge wave function to assign the value of the KP wave function at the double points of $\Gamma$ and then extend it meromorphically on each component of $\Gamma$.

\begin{construction}\label{con:dress_gen}\textbf{The KP wave function $\hat \psi$ on $\Gamma$.}
Let the soliton data $({\mathcal K}, [A])$ be given, with $\mathcal K = \{ \kappa_1 < \cdots < \kappa_n\}$ and $[A]\in \S \subset \GTNN$. Let $I\in \mathcal M$ be fixed.
Let $\mathcal G$ be a plabic graph representing $\S$ as in Definition \ref{def:graph} with $(g+1)$ faces and let the curve $\Gamma=\Gamma(\mathcal G)$ be as in Construction \ref{def:gamma}.  
Let $({\mathcal N}, \mathcal O(I), \mathfrak l)$ be a network of graph $\mathcal G$ representing $[A]$ such that at any edge  the half-edge wave function does not vanish identically.
Let $\vec t_0$ be such that the half-edge wave function defined in the previous Section in non-zero at all edges.
Let $\DDN$ and $\hat \Psi_{e} (\vec t)$ respectively be the dressed network divisor of Definition \ref{def:vac_div_gen} and the KP edge wave function of Definition \ref{def:norm_vac_gen}. Finally on each component of $\Gamma$ let the coordinate $\zeta$ be as in Definition \ref{def:loccoor} (see also Figure \ref{fig:lcoord}).

We extend the KP wave function $\hat \psi (P, \vec t)$ to $\Gamma\backslash \{P_0\}$ as follows:
\begin{enumerate}
\item The restriction of ${\hat \psi}$ to $\Gamma_{0}$ is the normalized dressed Sato wave function defined in (\ref{eq:SatoDN})
\[
{\hat \psi} (\zeta, \vec t) = \frac{{\mathfrak D}^{(k)} \phi_0 (\zeta, \vec t)}{{\mathfrak D}^{(k)} \phi_0 (\zeta, \vec t_0)};
\]
\item Let $\Sigma_l$ be the component of $\Gamma$ corresponding to the black vertex $V^{\prime}_l$. Since the KP edge wave function takes the same value $\hat \Psi_{e} (\vec t)$ at all edges $e$ at $V^{\prime}_l$, we define ${\hat \psi} (\zeta(P), \vec t)$ as a constant function with respect to the spectral parameter on $\Sigma_l$: ${\hat \psi} (\zeta(P), \vec t) \equiv \hat \Psi_{e} (\vec t)$ where $e$ is one of the edges at $V^{\prime}_l$;
\item\label{it:non_int_dress} Similarly if $\Gamma_{l}$ is the component of $\Gamma$ corresponding to a white vertex $V_l$ with KP e.w.  $\hat \Psi_{e} (\vec t)$ coinciding on all edges $e$ at $V_l$ for all $\vec t$, then, ${\hat \psi} (P, \vec t)$ is assigned the value $\hat \Psi_{e} (\vec t)$ for any $P\in \Gamma_{l}$, where $e$ is one of the  edges at $V_{l}$: ${\hat \psi} (\zeta(P), \vec t) = \hat \Psi_{e} (\vec t)$;
\item\label{it:int_dress} If $\Gamma_{l}$ is the component of $\Gamma$ corresponding to a trivalent white vertex $V_l$ with KP e.w. $\hat \Psi_{e} (\vec t)$  taking distinct values on the edges $e$ at $V_l$ for some $\vec t\not=\vec t_0$, then we define $\hat \psi$ at the marked points as
${\hat \psi} (\zeta(P^{(m)}_{l}), \vec t) = \hat \Psi_{e_m} (\vec t)$, $m\in [3]$, for all $\vec t$,
where $e_m$ are the edges at $V_{l}$. We uniquely extend $\hat \psi$ to a degree one meromorphic function on $\Gamma_{l}$ imposing that it has a simple pole at $P_{\textup{\scriptsize dr}}^{(l)}$ with real coordinate $\zeta (P_{\textup{\scriptsize dr}}^{(l)})=\gamma_{\textup{\scriptsize dr}, V_l}$, with $\gdr$ as in (\ref{eq:dress_pole_def}):
\begin{equation}\label{eq:99}
{\hat \psi} (\zeta(P), \vec t) = \frac{\hat \Psi_{e_3} (\vec t) \zeta - \gdr\hat \Psi_{e_1} (\vec t) }{ \zeta- \gdr}= \frac{ \Psi_{V_l,e_1} (\vec t) (\zeta -1) + \Psi_{V_l,e_2} (\vec t)\zeta}{\left[\Psi_{V_l,e_1} (\vec t_0)+ \Psi_{V_l,e_2} (\vec t_0)\right](\zeta - \gdr)},
\end{equation}
where $\Psi_{V_l,e_s}$ are the half--edge d.w.f. at the outgoing half--edges $(V_l,e_s)$, $s=1,2$, labeled as in Figure \ref{fig:markedpoints}.
\end{enumerate}
\end{construction}
By construction, the KP wave function has $k$ real simple poles on $\Gamma_0$. Therefore the following definition of KP divisor is fully justified.

\begin{definition}\label{def:DKP}{\textbf{The KP divisor on  $\Gamma$.}} The KP divisor $\DKP$ is the sum of the following $g$ simple poles,
\begin{enumerate}
\item the $k$ poles on $\Gamma_0$ coinciding with the Sato divisor at $\vec t= \vec t_0$;
\item the $g-k$ poles $P_{\textup{\scriptsize dr}}^{(l)}\in \Gamma_l$ uniquely identified by the condition that, in the local coordinate induced by the orientation $\mathcal O$, $\zeta(P_{\textup{\scriptsize dr}}^{(l)}) =\gamma_{\textup{\scriptsize dr}, V_l}$,
where $V_l$, $l\in [g-k]$, are the trivalent white vertices. 
\end{enumerate}
\end{definition}

Next we explain how to detect the position of the divisor point associated to a white vertex.
Let $e_m$, $P_m$, $m\in [3]$ respectively denote the edge at $V$ and the corresponding marked point on $\Gamma_V$. Let $\Omega_{ij}$, $i,j\in [3]$ be both the face of ${\mathcal N}$ bounded by the edges $e_i,e_j$ and the corresponding oval in $\Gamma(\mathcal G)$ bounded by the double points $P_i, P_j$ (see also Figure \ref{fig:corr_V_G}).
\begin{figure}
  \centering
  \includegraphics[width=0.56\textwidth]{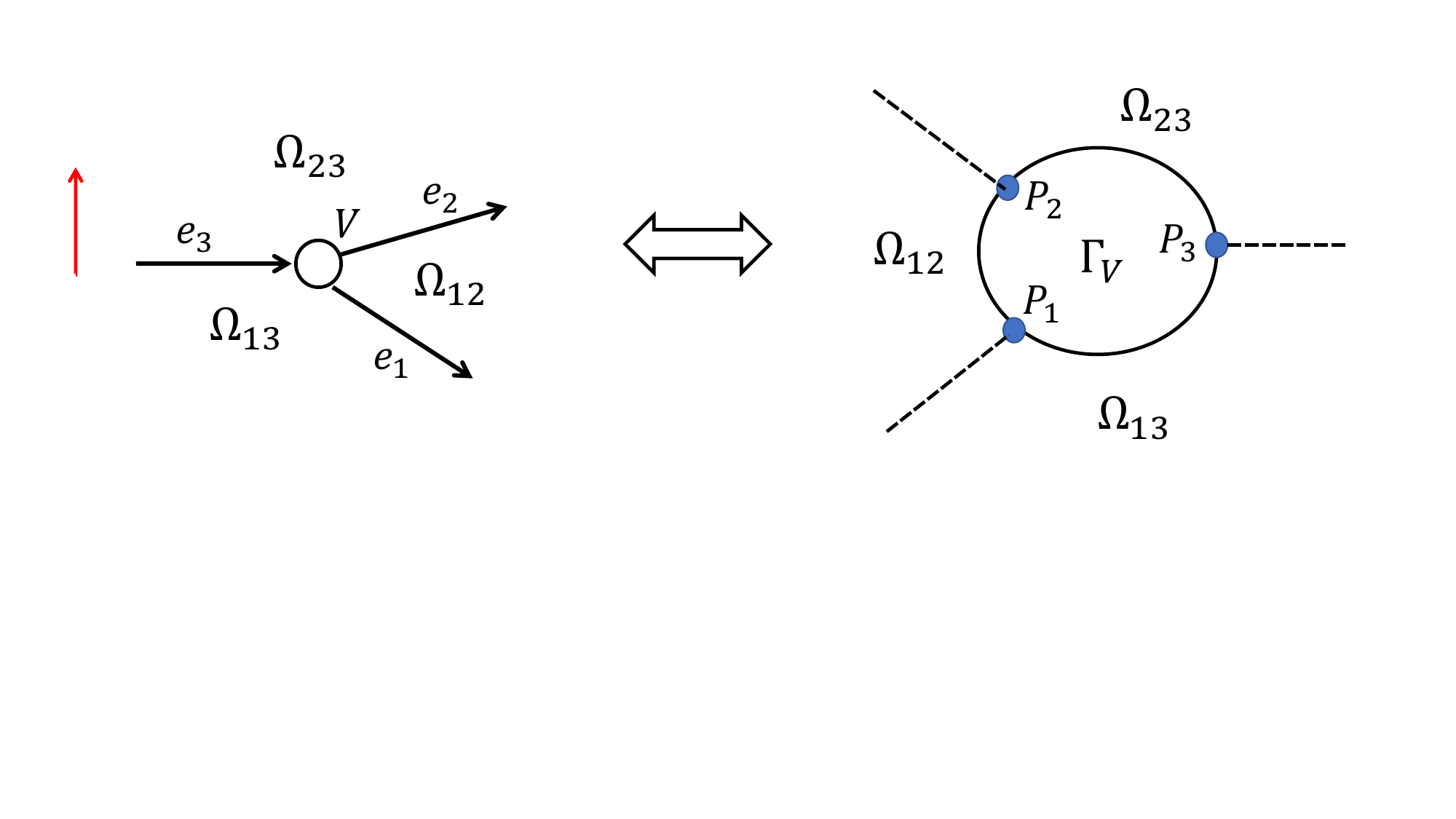}
	\vspace{-2.5 truecm}
  \caption{\small{\sl The correspondence between faces at $V$ (left) and ovals bounded by $\Gamma_V$ (right) under the assumption that the curve is constructed reflecting the graph w.r.t. a vertical ray (all boundary vertices in the original network lay on a horizontal line).}}
	\label{fig:corr_V_G}
\end{figure}
Let $\Psi_{V,m} \equiv \Psi_{V,e_m ; {\mathcal N}, \mathcal O, \mathfrak l} (\vec t_0)$ be the dressed h.e.w. at the 
half-edge $(V,e_m)$, $m\in [3]$, at a given trivalent white vertex $V$ as in Definition \ref{def:vvw_gen}. Then the network divisor number $\gamma_V$ is the local coordinate of the divisor point $P_V$ on the component $\Gamma_V$ corresponding to $V$:
\begin{equation}\label{eq:formula_div}
\zeta(P_V) \; \equiv \; \gamma_V \; = \; \frac{ \Psi_{V,a}}{\Psi_{V,a} +  \Psi_{V,b}} \; =\;  -\frac{\Psi_{V,b}}{\Psi_{V,c}}
\; = \; 1+ \frac{\Psi_{V,b}}{\Psi_{V,c}}.
\end{equation}
Then the following Lemma explains to which oval the divisor point belongs to.

\begin{lemma}\textbf{Position of the divisor point on $\Gamma_V$}\label{lem:pos_div}
In the above notations, the divisor point $P_V$ belongs to the unique oval $\Omega_{ij}$, $i,j \in [m]$, such that the half--edge wave function at $V$ satisfies 
\begin{equation}\label{eq:pos_div}
\Psi_{V,i}\Psi_{V,j} >0.
\end{equation}
\end{lemma}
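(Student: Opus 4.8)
The plan is to reduce the statement to an elementary sign analysis of three real numbers that sum to zero. First I would record the two facts that make the problem finite. By Definition \ref{def:loccoor} the three marked points $P^{(1)}_V,P^{(2)}_V,P^{(3)}_V$ on the component $\Gamma_V\cong\mathbb{CP}^1$ sit at $\zeta=0,1,\infty$, so the real locus $\mathbb{RP}^1$ is cut into exactly three arcs, which are precisely the three ovals adjacent to $V$: $\Omega_{12}=(0,1)$, $\Omega_{23}=(1,\infty)$ and $\Omega_{13}=(-\infty,0)$, the latter being the arc through $\infty$. Second, writing $\Psi_{V,m}\equiv\Psi_{V,e_m;\mathcal O,\mathfrak l}(\vec t_0)$, the white--vertex relation (second line of (\ref{eq:lin_Phi1})) gives the single linear constraint $\Psi_{V,1}+\Psi_{V,2}+\Psi_{V,3}=0$, while the three values are real and, by (\ref{eq:dress_pole_def}) and the first line of (\ref{eq:lin_Phi1}), are nonzero multiples of the edge wave functions $\Psi_{e_1},\Psi_{e_2},\Psi_{e_3}$ at $\vec t_0$, hence all nonzero by Remark \ref{rem:t_0}.

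Next I would observe that three nonzero reals summing to zero can never share a common sign; hence exactly one of them is opposite in sign to the other two. Consequently, among the three products $\Psi_{V,i}\Psi_{V,j}$ ($1\le i<j\le 3$) precisely one is positive -- the product of the two like--signed values -- while the other two are negative. This already singles out the unique oval $\Omega_{ij}$ named in the statement, so it remains only to check that $P_V$ lies in that oval rather than one of the others.

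The core computation uses the equivalent forms of the divisor coordinate in (\ref{eq:dress_pole_def}) and (\ref{eq:formula_div}). Substituting $\Psi_{V,1}+\Psi_{V,2}=-\Psi_{V,3}$ gives $\gamma_V=-\Psi_{V,1}/\Psi_{V,3}$ and $1-\gamma_V=-\Psi_{V,2}/\Psi_{V,3}$, whence $\gamma_V>0\iff\Psi_{V,1}\Psi_{V,3}<0$ and $1-\gamma_V>0\iff\Psi_{V,2}\Psi_{V,3}<0$. Thus $\gamma_V\in(0,1)$ exactly when $\Psi_{V,1}$ and $\Psi_{V,2}$ are both opposite in sign to $\Psi_{V,3}$, i.e. exactly when $\Psi_{V,1}\Psi_{V,2}>0$, which is the case $P_V\in\Omega_{12}$. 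The remaining two arcs follow by the identical two--line sign check: $\gamma_V>1\iff\Psi_{V,2}\Psi_{V,3}>0$ places $P_V$ in $\Omega_{23}$, and $\gamma_V<0\iff\Psi_{V,1}\Psi_{V,3}>0$ places it in $\Omega_{13}$. In each case the positive product is exactly the one produced by the sign analysis of the previous paragraph, so (\ref{eq:pos_div}) holds and identifies the correct oval.

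I do not expect a genuine obstacle here; the only points needing care are bookkeeping ones -- matching each boundary pair $(e_i,e_j)$ to the correct arc, in particular treating $\Omega_{13}$ as the arc through $\infty$, and confirming the nonvanishing of all three half--edge values at $\vec t_0$. As a consistency remark, the criterion (\ref{eq:pos_div}) depends only on the signs of the products $\Psi_{V,i}\Psi_{V,j}$, which by item (3) of Proposition \ref{prop:change_orient} are invariant under changes of orientation, gauge ray direction, weight gauge and vertex gauge; this is exactly what is required for the statement to be compatible with the gauge--independence of the divisor point $P_V$ established in Propositions \ref{pro:indep_gauge} and \ref{prop:indep_orient}.
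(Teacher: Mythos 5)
Your proof is correct and follows essentially the same route as the paper's: both rest on the relation $\Psi_{V,1}+\Psi_{V,2}+\Psi_{V,3}=0$ with all three values nonzero, the expressions $\gamma_V=-\Psi_{V,1}/\Psi_{V,3}$ and $1-\gamma_V=-\Psi_{V,2}/\Psi_{V,3}$ from (\ref{eq:formula_div}), and the identification of the three real arcs of $\Gamma_V$ cut out by $\zeta=0,1,\infty$ with the three adjacent ovals. You merely spell out the sign bookkeeping that the paper leaves as "easily follows," so no further comment is needed.
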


The proof easily follows using $\Psi_{V,1} + \Psi_{V,2} +\Psi_{V,3}=0$ and the assumption that $\Psi_{V,s} \not =0$, for $s\in [3]$. Then (\ref{eq:pos_div}) holds comparing (\ref{eq:formula_div}) and Figure \ref{fig:corr_V_G} since, in the local coordinate $\zeta$ associated to the graph orientation at $V$, $\zeta(P_1)=0$, $\zeta(P_2)=1$ and $\zeta(P_3)=\infty$. For instance $\zeta (P_V) \equiv \gamma_V\in ]0,1[$ if and only if $\Psi_{V,1} \Psi_{V,2}>0$ and similarly in the other two cases.

\begin{theorem}\textbf{Properties of the KP divisor on $\Gamma$}\label{theo:inv}
\begin{enumerate}
\item $\DKP$ is independent on the gauge ray direction, on the weight gauge, on the vertex gauge and on the orientation of the network used to construct it;
\item $\DKP$ is contained in the union of the ovals of $\Gamma$.
\end{enumerate}
\end{theorem}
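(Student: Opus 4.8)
The plan is to decompose $\DKP = \DS \cup \{ P^{(l)}_{\textup{\scriptsize dr}} \}$ as in Definition \ref{def:DKP} and to treat the Sato part and the white--vertex part separately. The Sato divisor $\DS$ costs almost no work: by Definition \ref{def:Sato_data} it depends only on the data $(\mathcal K,[A])$ and the reference time $\vec t_0$, hence is untouched by any choice of orientation or gauge on $\mathcal N$, and by Proposition \ref{prop:malanyuk} its points are real with coordinates in $[\kappa_1,\kappa_n]$, so they lie on the real part of $\Gamma_0$, which belongs to the ovals. Both assertions of the theorem therefore reduce to the points $P^{(l)}_{\textup{\scriptsize dr}}\in\Gamma_l$ sitting at the trivalent white vertices $V_l$, whose coordinate is the network divisor number $\gamma_{\textup{\scriptsize dr},V_l}$ of (\ref{eq:dress_pole_def}).

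For the three gauges I would argue that they leave the curve $\Gamma(\mathcal G)$, its marked points and the affine charts of Definition \ref{def:loccoor} untouched, since these are fixed by the combinatorics of $\mathcal G$ alone; it then suffices to show that $\gamma_{\textup{\scriptsize dr},V_l}$ itself is unchanged, which is exactly Proposition \ref{pro:indep_gauge}, obtained by inserting the sign rule (\ref{eq:gauge_equiv_half_vec}) into the half--edge wave functions (\ref{eq:KP_half_ef}) that enter (\ref{eq:dress_pole_def}). Since $\gamma_{\textup{\scriptsize dr},V_l}$ is real, the point $P^{(l)}_{\textup{\scriptsize dr}}$ lies on the real part of $\Gamma_l$, which settles part (2) for these points as well.

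The genuinely delicate case, which I expect to be the main obstacle, is invariance under a change of orientation, because here the affine chart on $\Gamma_l$ really does change: the normalization $\zeta(P^{(1)})=0,\ \zeta(P^{(2)})=1,\ \zeta(P^{(3)})=\infty$ is tied to the edge labeling, and which edge carries the label $3$ (the incoming one) is dictated by the orientation. The strategy is to show that the Möbius change of coordinate on $\Gamma_l$ produced by relabeling the three marked points exactly cancels the transformation of $\gamma_{\textup{\scriptsize dr},V_l}$ recorded in Proposition \ref{prop:indep_orient}. Each perfect orientation selects one of the three edges at $V_l$ as incoming, so any reorientation realizes one of the three cases (\ref{eq:inep_gauge_1})--(\ref{eq:inep_gauge_3}): in the first the chart is unchanged, while in the two cyclic cases the triple $\{0,1,\infty\}$ is cyclically permuted and the unique Möbius maps restoring the normalization are $\zeta\mapsto 1/(1-\zeta)$ and its inverse $\zeta\mapsto(\zeta-1)/\zeta$. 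I would then check that these maps carry the old coordinate $\gamma_{\textup{\scriptsize dr},V_l,\mathcal O_1}$ of $P^{(l)}_{\textup{\scriptsize dr}}$ to the new value $\gamma_{\textup{\scriptsize dr},V_l,\mathcal O_2}$, i.e. that the coordinate change coincides with the divisor--number transformation, as asserted in the remark following Proposition \ref{prop:indep_orient}; this identifies $P^{(l)}_{\textup{\scriptsize dr}}$ as a single point of $\Gamma_l$ read in two charts.

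Finally, I would record the conceptual reason making this cancellation automatic, which also serves as an independent check. By Remark \ref{rem:indep} the KP edge wave function $\hat\Psi_e(\vec t)$ is invariant under all four operations, and by Construction \ref{con:dress_gen}, formula (\ref{eq:99}), the point $P^{(l)}_{\textup{\scriptsize dr}}$ is nothing but the pole of the degree--one function $\hat\psi|_{\Gamma_l}$, whose values at the three marked points are the orientation--independent numbers $\hat\Psi_{e_m}(\vec t)$. Being the pole of an intrinsically defined meromorphic function, $P^{(l)}_{\textup{\scriptsize dr}}$ is a well defined point of $\Gamma_l$, and the formulas of Proposition \ref{prop:indep_orient} are merely its expression in the various affine charts. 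Reassembling the Sato part with the white--vertex part then yields both the gauge-- and orientation--invariance of $\DKP$ (part (1)) and its containment in the union of the ovals (part (2)).
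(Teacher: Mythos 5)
Your argument is correct and follows essentially the same route as the paper: the paper likewise treats the gauge freedoms as the easy part (via Proposition \ref{pro:indep_gauge}) and reduces the whole theorem to the orientation case, which it settles exactly as you propose, by checking at each trivalent white vertex that the M\"obius change of chart induced by relabeling the three marked points of $\Gamma_l$ coincides with the transformation of the divisor number recorded in Proposition \ref{prop:indep_orient}. Your closing observation that $P_{\textup{\scriptsize dr}}^{(l)}$ is the pole of the intrinsically defined degree--one function $\hat\psi|_{\Gamma_l}$, determined by its orientation--independent values at the three double points, is a clean repackaging of why that cancellation is automatic.
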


\begin{proof} The only untrivial statement is the independence of the divisor on the orientation of the network. To check it is sufficient to compare the transformation laws for the divisor numbers in Proposition~\ref{prop:indep_orient} and the changes of coordinates due to change of orientation.
\end{proof}

By construction, each double point in $\Gamma$ corresponds to an edge; therefore the KP wave function takes the same value at all double points for all times and, at the double points $\kappa_j$ corresponding to edges at boundary vertices $b_j$, $j\in [n]$, it coincides for all times with the normalized Sato wave function.   
$\hat \psi$ is meromorphic of degree $\mathfrak d_{\mbox{\tiny KP}}\le g$, and its poles are all simple and belong to $\DKP$, which is contained in the union of all the real ovals of $\Gamma$. Therefore $\hat \psi$ satisfies the properties in Definition \ref{def:KPwave}, that is it is the KP wave function for the soliton data $(\mathcal K, [A])$ and the divisor $\DKP$ on $\Gamma$.

\begin{theorem}\label{lemma:KPeffvac}\textbf{$\hat \psi$ is the unique KP wave function on $\Gamma$ for $(\mathcal K, [A])$ and the divisor $\DKP$.}
Let $\hat \psi$, ${\mathcal D}_{\textup{\scriptsize dr}, {\mathcal N}^{\prime}}$, $\DKP$ on $\Gamma$ be as in Construction \ref{con:dress_gen} and Definitions \ref{def:vac_div_gen} and \ref{def:DKP}. Then $\hat \psi$ satisfies the following properties of Definition \ref{def:KPwave} on $\Gamma\backslash\{P_0\}$:
\begin{enumerate}
\item At $\vec t=\vec t_0$ $\hat \psi (P, \vec t_0)=1$ at all points $P\in \Gamma\backslash \{P_0\} $;
\item ${\hat \psi} (\zeta(P), \vec t)$ is real for real values of the local coordinate $\zeta$ and for all real $\vec t$ on each component of $\Gamma$;
\item $\hat \psi$ takes the same value at pairs of glued points $P,Q\in \Gamma$, for all $\vec t$:  $\hat \psi(P, \vec t) = \hat \psi(Q, \vec t)$;
\item $\hat \psi(\zeta, \vec t)$ is either constant or meromorphic of degree one w.r.t. to the spectral parameter on each copy of $\mathbb{CP}^1$ corresponding on any trivalent white vertex common to $\mathcal N$ and $\mathcal N^{\prime}$. $\hat \psi(\zeta, \vec t)$ is constant w.r.t. to the spectral parameter on each other copy of $\mathbb{CP}^1$; 
\item $\DKP+(\hat\psi(P,\vec t))\ge 0$ for all $\vec t$.
\end{enumerate}
\end{theorem}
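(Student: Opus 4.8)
The plan is to verify the five listed properties directly from Construction~\ref{con:dress_gen}, exploiting that $\hat\Psi_e(\vec t)$ is a single well--defined object attached to each edge $e$ (Definition~\ref{def:norm_vac_gen}), gauge-- and orientation--invariant by Remark~\ref{rem:indep}, and then to deduce uniqueness from the uniqueness of the edge wave function in Theorem~\ref{theo:consist}. Properties~(1), (2) and~(4) are essentially immediate from the construction; the substantive points are the gluing compatibility~(3), the identification of the pole divisor~(5), and the uniqueness claim in the title.

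First I would dispose of~(1), (2) and~(4). For~(1), on $\Gamma_0$ the restriction is the normalized Sato wave function, equal to $1$ at $\vec t_0$ by~(\ref{eq:SatoDN}); on every component on which $\hat\psi$ is constant it equals some $\hat\Psi_e(\vec t)$, which is $1$ at $\vec t_0$ by Definition~\ref{def:norm_vac_gen}; and on a degree--one white component the expression~(\ref{eq:99}) collapses to $(\zeta-\gdr)/(\zeta-\gdr)=1$ once all three $\hat\Psi_{e_m}(\vec t_0)=1$. Reality~(2) holds because the edge vectors $E_e$, the Darboux operator $\mathfrak D^{(k)}$ (real phases, real $[A]$), the exponentials $\mathfrak E_\theta(\vec t)$ for real $\vec t$, and the divisor numbers $\gdr$ are all real, so both $\Psi_e(\vec t)$ and the right--hand side of~(\ref{eq:99}) are real for real $\zeta$ and real $\vec t$. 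Property~(4) is read off the construction: $\hat\psi$ is constant in $\zeta$ on the components of black and bivalent vertices and of those trivalent white vertices with trivial network divisor number (Lemma~\ref{lemma:trivial_div}), and is a ratio of two linear functions of $\zeta$, hence of degree one, on the remaining white components through~(\ref{eq:99}).

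The heart of the argument is~(3) and~(5). Each double point of $\Gamma$ is the image of one oriented edge $e$, and Construction~\ref{con:dress_gen} assigns to it the same value $\hat\Psi_e(\vec t)$ on both adjacent components, so~(3) reduces to checking that these per--component assignments are mutually consistent. The only delicate case is a constant component: at a black (or bivalent) vertex the un--normalized relations~(\ref{eq:lin_Phi3}) are pure proportionalities $\Psi_{e_m}=c_m\,\Psi_{e_1}$, so the constant $c_m$ cancels in $\hat\Psi_{e_m}=\Psi_{e_m}/\Psi_{e_m}(\vec t_0)$ and the three normalized edge values genuinely coincide, making the constant extension well defined and equal to $\hat\Psi_e$ at each glued point. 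For~(5) I would localize the pole divisor: on $\Gamma_0$ the poles of the Sato wave function are exactly $\DS$; on the constant components there are no $\zeta$--poles and $\DKP$ carries no point there; and on each degree--one white component $\Gamma_l$ the function~(\ref{eq:99}) has a single simple pole at $\zeta=\gdr=\zeta(P_{\textup{\scriptsize dr}}^{(l)})$, which is precisely the divisor point placed on $\Gamma_l$ in Definition~\ref{def:DKP}. Summing over components gives $\DKP+(\hat\psi)\ge 0$, with a pole possibly cancelling at special $\vec t$ so that the degree drops to $\mathfrak d\le g$.

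Finally, for uniqueness I would show that any $\hat\psi'$ obeying Definition~\ref{def:KPwave} equals the constructed $\hat\psi$. By item~(2) of that definition its restriction to $\Gamma_0$ is already the normalized Sato wave function, fixing the values at the double points $\kappa_j$ attached to boundary edges. The gluing condition (item~(\ref{it:comp})) together with the pole constraint (item~(6), at most one simple pole at the unique $\DKP$--point of each component) forces the double--point value $v_e$ of $\hat\psi'$ on each edge to satisfy equality across the edges at every black, bivalent and trivial white vertex and, at each genuine degree--one white vertex, the interpolation relation imposed by requiring the simple pole to sit at $\gdr$. A short computation identifies these with the normalized linear relations~(\ref{eq:lin_Phi1})--(\ref{eq:lin_Phi3}) solved by $\hat\Psi_e$, the signs being those encoded in~(\ref{eq:dress_pole_def}); by the full--rank uniqueness of Theorem~\ref{theo:consist} with the Sato boundary data, the solution is unique, so $v_e=\hat\Psi_e$ for all $e$, and the constant or degree--one interpolant on each component is then determined, giving $\hat\psi'=\hat\psi$. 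The step I expect to be the main obstacle is precisely this last reduction: verifying, with the correct signs, that the pole--at--$\gdr$ condition on the white components reproduces the white--vertex relation~(\ref{eq:lin_Phi1}), so that the analytic uniqueness demanded by Definition~\ref{def:KPwave} is genuinely governed by the purely algebraic uniqueness of the edge wave function in Theorem~\ref{theo:consist}.
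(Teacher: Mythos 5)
Your proof is correct and follows the direct verification the paper has in mind: the authors simply remark that ``the proof of the assertions is straightforward and is omitted'', and your component-by-component check of the five properties -- reducing the gluing condition (3) to the proportionality of the un-normalized relations at black and bivalent vertices, reading off the pole of (\ref{eq:99}) at $\zeta=\gamma_{\textup{\scriptsize dr},V_l}$ for (5), and tying the uniqueness to the full-rank linear system of Theorem \ref{theo:consist} -- is exactly that verification made explicit. One small imprecision: on a trivalent white component where the normalized edge wave function is constant (trivial divisor number), $\DKP$ \emph{does} still carry a (trivial) divisor point by Definition \ref{def:DKP}, which places one point on each of the $g-k$ trivalent white components regardless of triviality; so your claim that ``$\DKP$ carries no point there'' is inaccurate, though harmless for assertion (5), since $\DKP+(\hat\psi)\ge 0$ only requires $\hat\psi$ to be regular or have a simple pole at each divisor point. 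The same remark slightly affects your uniqueness sketch -- at a trivial white vertex a competitor $\hat\psi'$ is a priori allowed a pole at that trivial divisor point, so constancy there is a conclusion of the interpolation relation plus Theorem \ref{theo:consist}, not a hypothesis -- but the mechanism you invoke carries through unchanged.
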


The proof of the assertions is straightforward and is omitted. We remark that, in the special case of the Le--network the divisor $\DKP$ coincides with the one constructed in \cite{AG3}.
In the next Section we complete the proof of Theorem \ref{theo:exist} by showing that
the KP divisor $\DKP$ satisfies the regularity and reality conditions (Items (\ref{item:defodd_KP}) and (\ref{item:defeven_KP}) of Definition \ref{def:real_KP_div}).

\subsection{Main construction -- Part V. Combinatorial characterization of the regularity of $\DKP$}\label{sec:comb}

In this Section we complete the proof of Theorem \ref{theo:exist}: using Theorem \ref{theo:sign_face} 
we prove that there is exactly one divisor point in each finite oval. 
In the following $\Omega$ denotes both the face in the network and the corresponding oval of the curve $\Gamma$.

\smallskip

Let $\nu_{\Omega}$ be the number of divisor points in $\Omega$ associated to the white vertices bounding $\Omega$. By definition $\nu_{\Omega}$ is equal to the number of pairs of half--edges at white vertices bounding $\Omega$ where the half--edge wave function has the same sign.
When the face $\Omega$ intersects the boundary of the disk, the total number of divisor points in $\Omega$ is the
sum of $\nu_{\Omega}$ and of the number the Sato divisor points in $\Omega\cap \Gamma_0$. 

By construction, along $\partial \Omega$ the half--edge wave function keeps the same sign at each pair of edges at a given black vertex and at each pair of edges at a trivalent white vertex associated to a divisor point in $\partial\Omega$, whereas it changes of sign at all other pair of edges at white vertices in $\partial\Omega$. Let $c_{\Omega}$ denote the total number of pair of half edges bounding $\Omega$ where the half-edge wave function changes sign. 

\smallskip

Let us start with the case in which $\Omega$ is an internal oval (face). 
Let $\epsilon_{U,V}$ be the geometric signature of the network $(\mathcal N, \mathcal O, \mathfrak l)$ and $\epsilon(\Omega)$ be the total contribution of the geometric signature at the edges $e=(U,V)$ bounding $\Omega$:
\begin{equation}
\epsilon(\Omega) = \sum_{e\, \in\, \partial\Omega} \epsilon_{e}.
\end{equation}
$n_{w,\Omega}$ denotes the total number of white vertices in $\partial \Omega$. 

In this case $\nu_{\Omega}$, the number of divisor points in $\Omega$, equals $n_{\mbox{\scriptsize{white}}}(\Omega) - c_{\Omega}$. By definition $c_{\Omega}$ has the same parity as $\epsilon(\Omega)$, the sum of the geometric signature over all edges bounding $\Omega$. Then using (\ref{eq:sign_face}) in Theorem \ref{theo:sign_face}, we immediately conclude that in each internal oval there is an odd number of divisor points since $c_{\Omega}$ and $n_{\mbox{\scriptsize{white}}}(\Omega)$ have opposite parities: \begin{equation}\label{eq:inv_edge_sign}
c_{\Omega} \; \equiv \; \epsilon(\Omega) \; = \;n_{\mbox{\scriptsize{white}}}(\Omega) -1   \quad \mod 2.
\end{equation}
We have thus proven the following Lemma:

\begin{lemma}\textbf{The number of divisor points at internal ovals}\label{lem:num_div_int_ovals}
With the above notations, at each internal oval $\Omega$ the number of KP divisor points is odd: $\nu_{\Omega} = 1 \mod 2.$ 
\end{lemma}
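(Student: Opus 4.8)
The plan is to reduce the parity of $\nu_{\Omega}$ to the already--established topological identity (\ref{eq:sign_int_face}) of Theorem \ref{theo:sign_face}, by re--expressing both the count of divisor points and the edge signature $\epsilon_{\Omega}$ in terms of sign changes of the half--edge wave function along $\partial\Omega$. First I would invoke Lemma \ref{lem:pos_div}: at a trivalent white vertex $V$ on $\partial\Omega$, the two half--edges $e_i,e_j$ that bound $\Omega$ carry a divisor point of $\Omega$ if and only if $\Psi_{V,i}\Psi_{V,j}>0$ at the reference time $\vec t_0$, that is, if and only if the half--edge wave function keeps its sign across $V$ inside $\Omega$. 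Since at each black vertex the relations (\ref{eq:lin_Phi3}) force the three half--edge wave functions to coincide (so the sign is always preserved there), the white vertices of $\partial\Omega$ split into the $\nu_{\Omega}$ sign--preserving ones (each carrying a divisor point in $\Omega$) and the remaining $c_{\Omega}$ sign--changing ones, yielding $\nu_{\Omega}=n_{w,\Omega}-c_{\Omega}$; here one first eliminates bivalent white vertices, which leaves the divisor invariant, so that all white vertices on $\partial\Omega$ are trivalent.

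The heart of the argument is the congruence $c_{\Omega}\equiv\epsilon_{\Omega}\pmod 2$, which I would obtain by a monodromy computation around the closed boundary cycle $\partial\Omega$ at $\vec t_0$. Propagating the half--edge wave function once around $\partial\Omega$, each internal edge $e=(U,V)$ contributes, via the relation $z_{U,e}=(-1)^{\epsilon_{U,V}}w_e\,z_{V,e}$ of Theorem \ref{theo:lam_rela}(3), the transition sign $(-1)^{\epsilon_e}$, so the edges together contribute $(-1)^{\sum_{e\in\partial\Omega}\epsilon_e}=(-1)^{\epsilon_{\Omega}}$; across each black vertex the transition sign is $+1$, and across each white vertex it is $+1$ or $-1$ according as the sign is preserved or reversed, so the vertices together contribute $(-1)^{c_{\Omega}}$. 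Since every weight $w_e$ is positive and the wave function returns to its starting value after one full loop, the total accumulated sign equals $+1$, whence $(-1)^{\epsilon_{\Omega}+c_{\Omega}}=1$ and therefore $c_{\Omega}\equiv\epsilon_{\Omega}\pmod 2$.

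Finally I would combine the two steps with (\ref{eq:sign_int_face}). For an internal face one has $\epsilon_{\Omega}\equiv n_{w,\Omega}-1\pmod 2$, so
\[
\nu_{\Omega}=n_{w,\Omega}-c_{\Omega}\equiv n_{w,\Omega}-\epsilon_{\Omega}\equiv n_{w,\Omega}-(n_{w,\Omega}-1)=1\pmod 2,
\]
which proves the Lemma. The step I expect to be the main obstacle is the parity identification $c_{\Omega}\equiv\epsilon_{\Omega}$: one must check that each edge and each vertex of $\partial\Omega$ enters the monodromy product exactly once with the correct half--edge identifications, confirm that the positivity of the weights is precisely what makes the positive factors irrelevant to the overall sign, and dispose cleanly of bivalent white vertices so that the sign--preserving white vertices are exactly the trivalent ones carrying a divisor point. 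Once this bookkeeping is secured, the parity conclusion is immediate from Theorem \ref{theo:sign_face}.
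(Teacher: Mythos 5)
Your proof is correct and follows essentially the same route as the paper: the decomposition $\nu_{\Omega}=n_{w,\Omega}-c_{\Omega}$, the parity identification $c_{\Omega}\equiv\epsilon_{\Omega}\pmod 2$, and the topological identity (\ref{eq:sign_int_face}) of Theorem \ref{theo:sign_face}. The only difference is that you spell out, via the monodromy of the half--edge wave function around $\partial\Omega$, the step the paper dispatches with ``by definition $c_{\Omega}$ has the same parity as $\epsilon_{\Omega}$'' --- a welcome clarification, not a deviation.
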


Next we count the number of divisor points when the face $\Omega$ intersects the boundary of the disk. 

As a first step we compute the total number of changes of sign along $\partial \Omega$. Let $\rho_{\Omega}$ be the number of pairs of consecutive boundary vertices in $\partial \Omega\cap\Gamma_0$ where the half-edge wave function changes sign.

\begin{lemma}\label{lem:div_bou}\textbf{Counting changes of sign of the half edge--wave function at the ovals intersecting the boundary}
Let $\Omega$ be an oval intersecting the boundary of the disk. Then
\begin{equation}\label{eq:ch_sign_bound}
c_{\Omega} +\rho_{\Omega} +\epsilon(\Omega) \; = \;  0 \quad \mod 2.
\end{equation}
\end{lemma}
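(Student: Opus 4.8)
The plan is to prove (\ref{eq:ch_sign_bound}) by a parity argument: I would track the sign of the (half-)edge wave function as one traverses the closed curve $\partial\Omega$ exactly once. Since $\vec t_0$ is chosen (Remark \ref{rem:t_0}) so that every $\Psi_{V,e}(\vec t_0)$ is a nonzero real number, the tracked sign is a well--defined element of $\{\pm 1\}$ all along the loop, and because $\partial\Omega$ is closed it must return to its initial value. Consequently the total number of sign changes accumulated along $\partial\Omega$ is even, i.e. congruent to $0 \bmod 2$. The whole content of the lemma is then to partition this total into the four contributions $c_\Omega$, $\rho_\Omega$, $\epsilon_\Omega$ and $s_\Omega$, exactly as the internal case reduced to (\ref{eq:inv_edge_sign}).

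First I would classify the elementary transitions. The boundary $\partial\Omega$ decomposes into graph edges meeting at internal vertices, boundary edges $e(b_j)$ ending at the boundary vertices, and arcs of $\Gamma_0$ joining consecutive boundary vertices. Across any internal edge $e=(U,V)$ the relation $z_{U,e}=(-1)^{\epsilon_{U,V}}w_e z_{V,e}$ of Theorem \ref{theo:lam_rela}, with $w_e>0$, gives $\mbox{sign}(\Psi_{U,e}(\vec t_0))=(-1)^{\epsilon_{U,V}}\mbox{sign}(\Psi_{V,e}(\vec t_0))$, so the sign jumps across $e$ precisely when $\epsilon_e=1$. At an internal black or bivalent vertex the two bounding half--edges carry equal wave function values, hence no jump occurs; at a trivalent white vertex $V$, the relation $\Psi_{V,1}+\Psi_{V,2}+\Psi_{V,3}=0$ together with Lemma \ref{lem:pos_div} shows that the two bounding half--edges have equal sign exactly when $P_V$ lies in $\Omega$. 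Therefore the number of white--vertex sign jumps along $\partial\Omega$ equals $c_\Omega$.

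Next I would collect the boundary contributions. Summing the edge jumps over all graph edges of $\partial\Omega$, internal and boundary alike, gives $\sum_{e\in\partial\Omega}\epsilon_e=\epsilon_\Omega \bmod 2$. The only discrepancy between this signature count and the genuine sign jumps occurs at boundary sources: by (\ref{eq:dr_fun_boundary}) the edge wave function at a source carries an extra factor $-1$ relative to the sink convention $(-1)^{\mbox{int}(e)}\mathfrak D^{(k)}e^{\theta_j}$, a sign not recorded by the geometric signature; accumulating this extra $-1$ over the $s_\Omega$ sources bounding $\Omega$ produces the term $s_\Omega$. Finally, along the arcs of $\Gamma_0$ the tracked sign is that of the Sato dressed numerator $\mathfrak D^{(k)}\phi^{(0)}(\cdot,\vec t_0)$, which vanishes exactly at the $k$ Sato divisor points; by definition the number of consecutive boundary--vertex pairs on $\partial\Omega\cap\Gamma_0$ across which it flips is $\rho_\Omega$. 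Grouping the four families of jumps and using that their total is even yields $c_\Omega+\rho_\Omega+\epsilon_\Omega+s_\Omega\equiv 0 \bmod 2$.

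The hard part will be the bookkeeping at the interface between $\Gamma_0$ and the interior at each boundary vertex $b_j$. One must check that, under the normalisation that boundary edges carry unit weight and under the convention (\ref{eq:dr_fun_boundary}), the sign transition from the arc value $\mathfrak D^{(k)}e^{\theta_j(\vec t_0)}$ to the adjacent internal half--edge is fully accounted for by $\epsilon_{e(b_j)}$ at sinks and by $\epsilon_{e(b_j)}+1$ at sources, with no further hidden sign coming from the intervening bivalent vertex $V_j$ (which merely propagates the value). Once this interface accounting is pinned down, the identity is immediate from the closedness of the loop, completing the reduction needed for the subsequent counting of divisor points in the boundary ovals.
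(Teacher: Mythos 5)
Your overall strategy---track the sign of the nowhere-vanishing half--edge wave function once around the closed loop $\partial\Omega$, use that the total number of sign flips along a closed cycle of nonzero reals is even, and apportion the flips among vertex pairs ($c_{\Omega}$), edge crossings ($\epsilon_{\Omega}$), arcs of $\Gamma_0$ ($\rho_{\Omega}$) and source interfaces ($s_{\Omega}$)---is exactly the argument the paper intends: the lemma is stated immediately after the paragraph that sets up this decomposition and carries no separate proof. Your interface bookkeeping at the boundary vertices, isolating the extra $-1$ per source from (\ref{eq:dr_fun_boundary}) and items (4)--(5) of Theorem \ref{theo:lam_rela}, is also the right way to produce the $s_{\Omega}$ term.

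There is one incorrect intermediate claim. At an internal \emph{bivalent white} vertex the two half--edge values are not equal but opposite: the white-vertex relation of Theorem \ref{theo:lam_rela} forces the half--edge vectors at $V$ to sum to zero, so for a bivalent white vertex $z_{V,e_1}+z_{V,e_2}=0$ and the half--edge wave function always flips sign there. Only black vertices (of either valency) propagate the half--edge value unchanged; the statement in Remark \ref{rem:indep} that you are implicitly leaning on concerns the \emph{normalized edge} wave function $\hat\Psi_e$, not the half--edge one. This is not a vacuous corner case, since Definition \ref{def:graph} attaches a bivalent white vertex $V_j$ to every boundary vertex, so every boundary oval has such vertices on $\partial\Omega$. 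The final identity (\ref{eq:ch_sign_bound}) survives, because $c_{\Omega}$ is \emph{defined} as the total number of pairs of half--edges bounding $\Omega$ at which the sign changes, whatever the valency of the vertex; but your characterization ``black or bivalent vertex: no jump; trivalent white: jump iff no divisor point'' undercounts the vertex jumps by the number of bivalent white vertices on $\partial\Omega$. You should amend it to ``black vertex: no jump; bivalent white vertex: always a jump; trivalent white vertex: jump iff $P_V\notin\Omega$'' --- consistently, bivalent white vertices are counted in $n_{w,\Omega}$ and carry no divisor point, which is precisely what keeps $\nu_{\Omega}=n_{w,\Omega}-c_{\Omega}$ balanced in the subsequent divisor count.
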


The proof is standard.

At each finite oval $\Omega$ intersecting $\Gamma_0$ the number of Sato divisor points in $\Omega\cap \Gamma_0$ has the same parity as $\rho_{\Omega}$, the number of pairs of consecutive boundary vertices in $\partial \Omega$ where the half-edge wave function changes sign. Indeed each portion of $\Gamma_0$ bounding $\Omega_s$ is marked by two consecutive boundary vertices $b_j,b_{j+1}$.
Let $e(b_j)$ be the edge at $b_j$. Then $\Psi_{e(b_j)}(\vec t_0)\Psi_{e(b_{j+1})}(\vec t_0)<0 \,\, (>0)$ implies that there is an odd (even) number of Sato 
divisor points in $\Gamma_0\cap \Omega$ belonging to the interval $]\kappa_j, \kappa_{j+1}[$. 
The total number of divisor points is then the sum of $\rho_{\Omega}$ and of $\nu_{\Omega}$, that is the sum of Sato and non Sato divisor points in $\partial \Omega$. Then using $\nu_{\Omega} =n_{\mbox{\scriptsize{white}}}(\Omega) -c_{\Omega}$, (\ref{eq:sign_face}) and (\ref{eq:ch_sign_bound}), we easily conclude that there is an odd number of divisor points also in each finite oval intersecting $\Gamma_0$,
\begin{equation}\label{eq:div_bound_1}
\nu_{\Omega} +\rho_{\Omega} \; = \;n_{\mbox{\scriptsize{white}}}(\Omega)  +\epsilon(\Omega)   \; =\; 1 \quad \mod 2.
\end{equation}

We have thus proven the following Lemma:

\begin{lemma}\textbf{The number of divisor points at finite ovals intersecting $\Gamma_0$}\label{lem:num_div_bound_ovals}
With the above notations, at each finite oval $\Omega$ having non--empty intersection with $\Gamma_0$ the number of KP divisor points is odd: $\nu_{\Omega} +\rho_{\Omega} = 1 \mod 2.$ 
\end{lemma}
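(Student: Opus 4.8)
The plan is to reduce the statement to a short parity identity obtained by assembling three facts already available in the excerpt: the expression of $\nu_{\Omega}$ through sign changes of the half--edge wave function, the sign--change count of Lemma \ref{lem:div_bou}, and the boundary--face case of Theorem \ref{theo:sign_face}. First I would fix the decomposition of the KP divisor points lying in a finite boundary oval $\Omega$. By Definition \ref{def:DKP} these are of two types: the Sato points on $\Omega\cap\Gamma_0$ and the points $P_{\textup{\scriptsize dr}}^{(l)}$ attached to the trivalent white vertices bounding $\Omega$. The total count is therefore (number of Sato points in $\Omega\cap\Gamma_0$) plus $\nu_{\Omega}$, and I would establish the parity of the two contributions separately.

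For the non--Sato part I would use Lemma \ref{lem:pos_div}: a white vertex $V$ on $\partial\Omega$ carries a divisor point in $\Omega$ precisely when the half--edge wave function takes equal signs on the two edges of $V$ bounding $\Omega$. Since the half--edge wave function never changes sign across a black vertex, among the $n_{w,\Omega}$ white vertices on $\partial\Omega$ exactly $\nu_{\Omega}$ give equal--sign pairs and the remaining $n_{w,\Omega}-\nu_{\Omega}$ give sign changes; hence $c_{\Omega}=n_{w,\Omega}-\nu_{\Omega}$, that is $\nu_{\Omega}=n_{w,\Omega}-c_{\Omega}$.

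For the Sato part I would argue that its parity equals $\rho_{\Omega}$. Each arc of $\Omega\cap\Gamma_0$ is bounded by two consecutive marked points $\kappa_j,\kappa_{j+1}$; by Proposition \ref{prop:malanyuk} the Sato divisor is real, simple, and contained in $[\kappa_1,\kappa_n]$, so an intermediate--value argument shows that $]\kappa_j,\kappa_{j+1}[$ contains an odd number of Sato points exactly when the edge wave function changes sign between $b_j$ and $b_{j+1}$. Summing over the arcs bounding $\Omega$ gives that the number of Sato points in $\Omega\cap\Gamma_0$ is $\rho_{\Omega}$ modulo $2$.

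It then remains to combine the parities. Using $\nu_{\Omega}=n_{w,\Omega}-c_{\Omega}$ together with Lemma \ref{lem:div_bou}, namely $c_{\Omega}+\rho_{\Omega}+\epsilon_{\Omega}+s_{\Omega}\equiv 0$, and the finite boundary--face identity (\ref{eq:sign_bound_face}), $\epsilon_{\Omega}+n_{w,\Omega}+s_{\Omega}\equiv 1$, I would compute
\[
\nu_{\Omega}+\rho_{\Omega}\;\equiv\; n_{w,\Omega}+c_{\Omega}+\rho_{\Omega}\;\equiv\; n_{w,\Omega}+\epsilon_{\Omega}+s_{\Omega}\;\equiv\; 1 \pmod 2,
\]
which is the assertion. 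The genuinely delicate input is the sign--change bookkeeping behind Lemma \ref{lem:div_bou}: one must track how the half--edge wave function flips relative to the Sato value at each boundary source (cf. (\ref{eq:dr_fun_boundary})), which is exactly why the term $s_{\Omega}$ appears. Granting that lemma and the boundary--face case of Theorem \ref{theo:sign_face}, the present statement is only the clean parity identity displayed above, so the main obstacle is effectively bundled into the already--established Lemma \ref{lem:div_bou} rather than into the final combination.
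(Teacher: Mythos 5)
Your proposal is correct and follows essentially the same route as the paper: the identity $\nu_{\Omega}=n_{w,\Omega}-c_{\Omega}$, the parity identification of the Sato points in $\Omega\cap\Gamma_0$ with $\rho_{\Omega}$ via sign changes of the edge wave function between consecutive boundary vertices, and the combination of Lemma \ref{lem:div_bou} with the finite boundary--face case of Theorem \ref{theo:sign_face} are exactly the ingredients the paper assembles into $\nu_{\Omega}+\rho_{\Omega}\equiv n_{w,\Omega}+\epsilon_{\Omega}+s_{\Omega}\equiv 1 \pmod 2$. Your only addition is to spell out the intermediate--value justification for the Sato--point parity, which the paper asserts more briefly.
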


Finally, in the infinite oval $\Omega_0$, we also have $\Psi_{e(b_1)}(\vec t_0)\Psi_{e(b_{n})}(\vec t_0)<0 \,\, (>0)$ respectively when $k$ is odd (even). Therefore, the number of Sato divisor points in $\Omega_0$ has the same parity as $\rho_{\Omega_0}+k$. Then, proceeding as before, we conclude that the number of divisor points in the infinite oval is even
\begin{equation}\label{eq:div_bound_2}
\nu_{\Omega_0} +\rho_{\Omega_0} +k \; = \;n_{\mbox{\scriptsize{white}}}(\Omega_0)   +\epsilon(\Omega_0)  +k  \; =\; 0 \quad \mod 2.
\end{equation}

Since the total number of divisor points is $g$ and the ovals are $g+1$, we have thus proven the following Theorem.

\begin{theorem}\label{theo:comb}\textbf{Number of divisor points in the ovals}
There is exactly one divisor point in each finite oval $\Omega_s$, $s\in [g]$, and no divisor point in the infinite oval $\Omega_0$.
In particular, a finite oval contains a Sato divisor point if and only if $\rho_{\Omega} = 1,\, \mod 2$.
\end{theorem}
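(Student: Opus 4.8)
The plan is to combine the parity statements already established with a single global degree count, after which the theorem follows by an elementary pigeonhole argument. First I would recall from Definition \ref{def:DKP} that $\DKP$ has degree exactly $g$: it consists of the $k$ Sato points on $\Gamma_0$ together with the $g-k$ dressed points $P_{\textup{\scriptsize dr}}^{(l)}$, one on each component $\Gamma_l$ corresponding to a trivalent white vertex. Under the genericity hypothesis at $\vec t_0$ (non--zero edge vectors and simplicity of the Sato divisor, see Proposition \ref{prop:malanyuk}), each of these $g$ points is simple and lies in the interior of a boundary arc or of a component, so none coincides with a double point or escapes to $P_0$. Moreover each point is assigned to a \emph{unique} oval: the dressed points by Lemma \ref{lem:pos_div}, and each Sato point by the open interval $]\kappa_j,\kappa_{j+1}[$ of $\Gamma_0$ in which it sits. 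Writing $d_\Omega$ for the number of divisor points lying in the oval $\Omega$, this gives $\sum_\Omega d_\Omega = g$, the sum ranging over all $g+1$ ovals.

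The counting step is then immediate. By Lemmas \ref{lem:num_div_int_ovals} and \ref{lem:num_div_bound_ovals}, for every finite oval $\Omega$ the total count $d_\Omega$ is odd, hence $d_\Omega\ge 1$; summing over the $g$ finite ovals already accounts for at least $g$ points. Since the grand total is exactly $g$ and the count $d_{\Omega_0}$ in the infinite oval is a non--negative integer, I conclude that $d_\Omega=1$ in each finite oval and $d_{\Omega_0}=0$. The evenness of $d_{\Omega_0}$ recorded in (\ref{eq:div_bound_2}) is consistent with this value and serves as an independent check.

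For the final assertion I would argue as follows. Fix a finite oval $\Omega$ intersecting $\Gamma_0$; it now contains a single divisor point, which is a Sato point precisely when that point lies on $\Gamma_0$. As observed just before Lemma \ref{lem:num_div_bound_ovals}, the number of Sato points in $\Omega\cap\Gamma_0$ has the same parity as $\rho_\Omega$; since the oval contains exactly one divisor point in total, this number is either $0$ or $1$, and it equals $1$ exactly when $\rho_\Omega$ is odd. Hence $\Omega$ contains a Sato divisor point if and only if $\rho_\Omega=1 \mod 2$, as claimed.

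I expect no serious obstacle in this final argument: the substantive work has already been discharged upstream, in the topological identity of Theorem \ref{theo:sign_face} and the sign--counting Lemmas that convert the face signatures into the parities of $\nu_\Omega$. The only point requiring care here is making the degree count genuinely \emph{tight} — that is, verifying under the genericity hypothesis that each of the $g$ divisor points is simple and belongs to a well--defined oval, with no collisions at double points — so that ``$d_\Omega$ odd, hence $\ge 1$'' together with the total $g$ truly forces the value $1$ in every finite oval and $0$ in the infinite one.
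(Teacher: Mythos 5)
Your proposal is correct and follows essentially the same route as the paper: the parity Lemmas \ref{lem:num_div_int_ovals} and \ref{lem:num_div_bound_ovals} force at least one divisor point per finite oval, and the global count of $g$ divisor points in $g+1$ ovals pins down exactly one per finite oval and none in $\Omega_0$, with the Sato criterion read off from the parity relation between Sato points on an arc $]\kappa_j,\kappa_{j+1}[$ and the sign change of the edge wave function. The paper's proof is precisely this one-line pigeonhole conclusion after the same preparatory lemmas, so no further comparison is needed.
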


\part{Transformation properties of the divisor}

In this Part we discuss the transformations of divisors when we apply elementary transformations to the network. We consider two basic types of transformations. Moves and reductions change the networks, but they do not affect the corresponding KP solutions. In contrast, amalgamations change both the soliton solutions and the divisor.

\section{Effect of moves and reductions on curves and divisors}\label{sec:moves_reduc}

In \cite{Pos} the local transformations of planar bicolored networks in the disk which leave invariant the boundary measurement map are classified. 
There are three moves: 
\begin{enumerate}
\item[(M1)] The square move (see Figure \ref{fig:squaremove});
\item[(M2)] The unicolored edge contraction/uncontraction (see Figure \ref{fig:flipmove});
\item[(M3)] The middle vertex insertion/removal (see Figure \ref{fig:middle});
\end{enumerate}
and three reductions:
\begin{enumerate}
\item[(R1)] The parallel edge reduction (see Figure \ref{fig:parall_red_poles});
\item[(R2)] The dipole reduction;
\item[(R3)] The leaf reduction;
\end{enumerate}
such that two networks in the disk connected by a sequence of such moves and reductions represent the same point in $\GTNN$. Since we assume that both the initial and final graphs satisfy the condition that each edge belongs to some path from boundary to boundary, the dipole and leaf reductions cannot occur in this class of graphs.

In this Section we prove that moves (M1)-(M3) and reduction (R1) induce a well defined change in both the curve, the system of edge vectors and the KP divisor.
In Sections~\ref{sec:example}~and~\ref{sec:ex_Gr24top} we illustrate these transformations on some examples.

In the following, we fix both the orientation and the gauge ray direction of the plabic network. Indeed changes of orientation or of gauge direction produce effects on both the system of vectors, the edge wave function and the KP divisor which are completely under control in view of the results of the previous Sections.

We label vertices, edges, faces in the network corresponding to components, double points, ovals in the curve with the same indices. Let $({\mathcal N}, \mathcal O, \mathfrak l)$ and $({\tilde {\mathcal N}}, {\tilde{\mathcal O}}, \mathfrak l)$ respectively be the initial oriented network and  the oriented network after the move or reduction (R1), where we assume that the orientation ${\tilde {\mathcal O}}$ coincides with $\mathcal O$ at all edges except at those involved in the move or reduction where we use Postnikov rules to assign the orientation. We denote with the same symbol and a tilde any quantity referring to the transformed network of the transformed curve. For instance, $g$ and ${\tilde g}$ respectively denote the genus in the initial and transformed curves. 
To simplify notations, we use the same symbol $\gamma_l$, respectively ${\tilde \gamma}_l$ for the divisor number and the divisor point before and after the transformation.

\textbf{(M1) The square move:} If a network has a square formed by four trivalent vertices
whose colors alternate as one goes around the square, then one can switch the colors of these
four vertices and transform the weights of adjacent faces as shown in Figure \ref{fig:squaremove}[left].
The relation between the edge weights with the orientation in Figure \ref{fig:squaremove} is \cite{Pos}
\begin{equation}\label{eq:alpha}
{\tilde \alpha}_1 = \frac{\alpha_3\alpha_4}{{\tilde\alpha}_2}, 
\quad\quad {\tilde \alpha}_2 = \alpha_2 + \alpha_1\alpha_3\alpha_4, \quad\quad  {\tilde \alpha}_3 = \frac{\alpha_2\alpha_3}{{\tilde \alpha}_2}, \quad\quad \displaystyle {\tilde \alpha}_4 = \frac{\alpha_1\alpha_3}{{\tilde \alpha}_2}.
\end{equation}
The system of equations on the edges outside the square is the same before and after the move and also the boundary
conditions remain unchanged. Indeed, the number of white vertices at each face is the same before and after the move, therefore we may use the same geometric signature on all the edges of the graph. The uniqueness of the solution of the system of relations implies that the half--edge wave function is unaffected by the move outside the square. Then using the definition of divisor numbers, one immediately gets the following statement, where notations are consistent with Figure \ref{fig:squaremove}. The relative positions of the divisor points are easy to
check using $\alpha_2 < {\tilde \alpha}_2$, $\alpha_4 < ({\tilde \alpha}_4)^{-1}$, and they are shown in Table~\ref{table:2}.

\begin{figure}
  \centering{\includegraphics[width=0.4\textwidth]{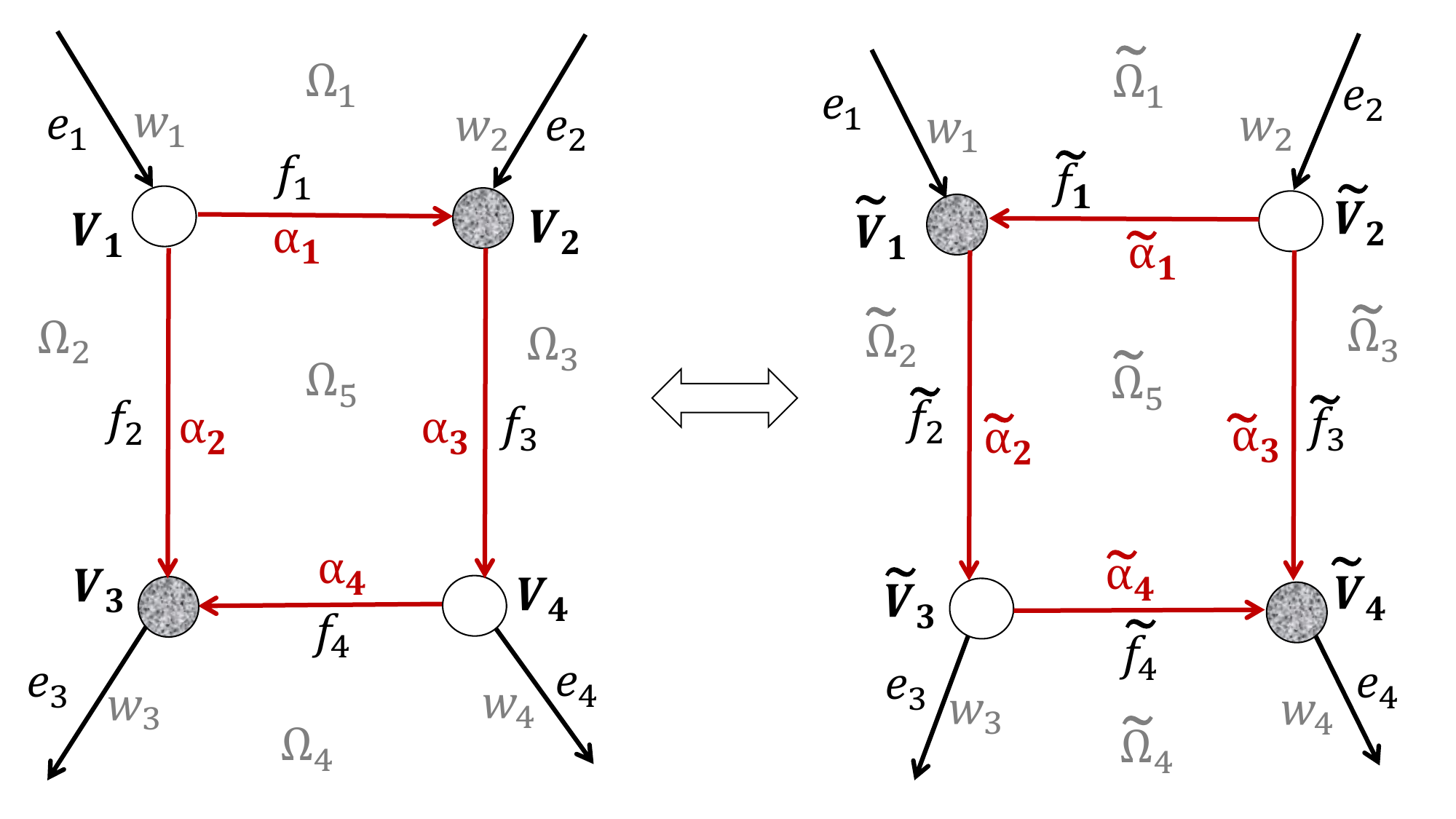}
	\hfill
	\includegraphics[width=0.4\textwidth]{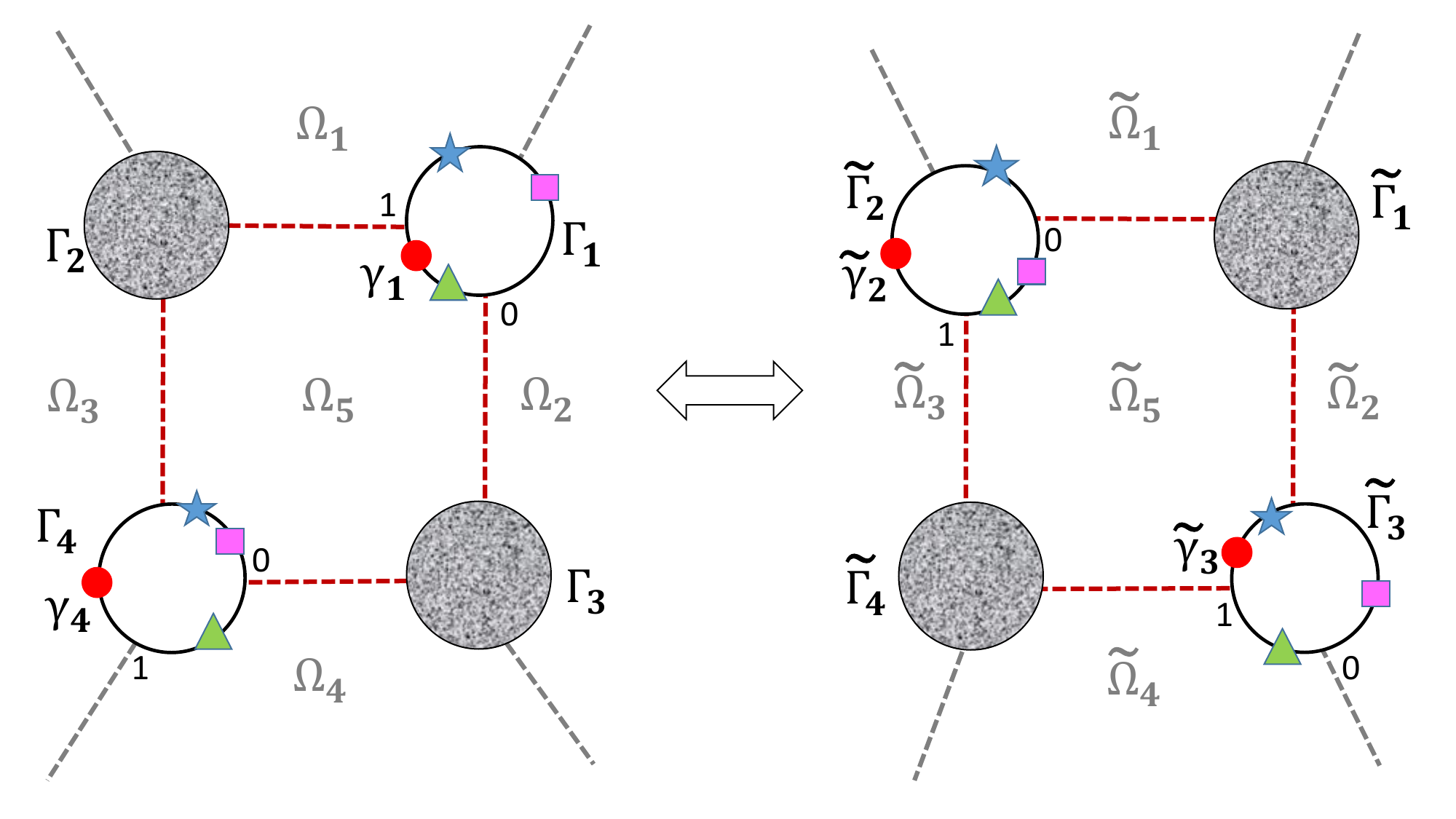}}
  \caption{\small{\sl The effect of the square move [left] on the possible configurations of dressed divisor points [right].}\label{fig:squaremove}}
\end{figure}
\begin{table}
\caption{\label{table:2}The effect of the square move on the dressed divisor} 
\centering
\begin{tabular}{|c|c|c|c|}
\hline\hline
Position of poles in $\Gamma$ & Position of poles in $\tilde\Gamma$ & Symbol for divisor point & Range of parameter\\[0.5ex]
\hline
$\gamma_1 \in \Omega_5, \;  \gamma_4 \in \Omega_4$    
& ${\tilde \gamma}_2 \in {\tilde \Omega}_5, \;  {\tilde \gamma}_3 \in {\tilde \Omega}_4$ & $\triangle$ & $\psi_0 > 0$\\
$\gamma_1 \in \Omega_5, \;  \gamma_4 \in \Omega_3$    
& ${\tilde \gamma}_2 \in {\tilde \Omega}_3, \;  {\tilde \gamma}_3 \in {\tilde \Omega}_5$  
& $\bigcircle$   & $-\alpha_4 < \psi_0 < 0$\\
$\gamma_1 \in \Omega_1, \;  \gamma_4 \in \Omega_5$    
& ${\tilde \gamma}_2 \in {\tilde \Omega}_1, \;  {\tilde \gamma}_3 \in {\tilde \Omega}_5$ 
& $\largestar$  & $- ({\tilde \alpha}_4)^{-1} < \psi_0 < -\alpha_4$\\
$\gamma_1 \in \Omega_2, \;  \gamma_4 \in \Omega_5$    
& ${\tilde \gamma}_2 \in {\tilde \Omega}_5, \;  {\tilde \gamma}_3 \in {\tilde \Omega}_2$   
& $\square$  & $\psi_0 < - ({\tilde \alpha}_4)^{-1}$\\[1ex]
\hline
\end{tabular}
\label{table:SM}
\end{table}

\begin{lemma}\textbf{The effect of the square move on the position of the divisor}
Let the local coordinates on $\Gamma_i$, ${\tilde \Gamma_i}$, $i=1,2$, be as in Figure \ref{fig:squaremove} and let
$\psi_0 = (-1)^{\epsilon_{V_4,V_3}} \frac{\Psi_{V_4,e_4} (\vec t_0)}{\Psi_{V_3,e_3} (\vec t_0)}$, where $\Psi_{V_j,e_j} (\vec t_0)$ is the value of the dressed half--edge wave function at the half edges $(V_j,e_j)$, $j=3,4$. Then 
\[
\gamma_1 = \frac{\alpha_2\tilde\alpha_2^{-1}}{1+{\tilde \alpha}_4 \psi_0}, \quad\quad \gamma_4 = \frac{\alpha_4}{\alpha_4 + \psi_0},\quad\quad {\tilde \gamma}_2 = \frac{\alpha_4 (1 + {\tilde \alpha}_4 \psi_0)}{\alpha_4 + \psi_0},\quad\quad {\tilde \gamma}_3 = \frac{1}{1+{\tilde \alpha}_4 \psi_0},
\]
and the position of the divisor points in the ovals depends on $\psi_0$ as shown in Table \ref{table:SM}. In particular,
there is exactly one dressed divisor point in $(\Gamma_1 \cup \Gamma_2)\cap\Omega_5$, $({\tilde \Gamma}_1 \cup {\tilde \Gamma}_2)\cap{\tilde \Omega}_5$.
\end{lemma}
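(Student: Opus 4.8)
The plan is to reduce the statement to an explicit computation of the four dressed half--edge wave functions attached to the square and then to read off the oval occupancy from the local coordinates of Definition \ref{def:loccoor} via Lemma \ref{lem:pos_div}. I first recall from the discussion preceding the Lemma that the square move leaves the four edge vectors $E_1,E_2,E_3,E_4$ on the external edges of the square unchanged, by the uniqueness argument of \cite{AG4} applied to the untouched system outside the square. Consequently the external dressed half--edge wave functions, and in particular the ratio $\psi_0 = (-1)^{\epsilon_{V_4,V_3}}\,\Psi_{V_4,e_4}(\vec t_0)/\Psi_{V_3,e_3}(\vec t_0)$, take the same value before and after the move; this is the single real coordinate in terms of which all four divisor numbers must be expressed.

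First I would propagate the half--edge wave functions across the two black vertices of each configuration using the black--vertex relations (\ref{eq:lineq_black}): at a black vertex the dressed half--edge wave function is transported to the remaining half--edges up to a weight factor and a sign coming from the winding and intersection numbers. This expresses the outgoing half--edge wave functions at the two white vertices ($V_1,V_4$ before the move, $\tilde V_2,\tilde V_3$ after) as explicit monomials in the edge weights times the external quantities entering $\psi_0$. Substituting the white--vertex definition (\ref{eq:dress_pole_def}) of the divisor number and inserting the weight transformation rule (\ref{eq:alpha}), one obtains after simplification the four claimed formulas for $\gamma_1,\gamma_4,\tilde\gamma_2,\tilde\gamma_3$. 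The delicate point here, and the one I expect to be the main obstacle, is the bookkeeping of the signs $(-1)^{\mbox{int}(\cdot)+\mbox{wind}(\cdot)}$ along the internal edges together with the reorientation of those edges forced by the colour swap: one must check that all these signs conspire into the single prefactor $(-1)^{\epsilon_{V_4,V_3}}$ absorbed in $\psi_0$, which is exactly what makes $\psi_0$ an orientation-- and gauge--invariant parameter in the sense of Proposition \ref{prop:indep_orient}.

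Once the formulas are in hand the oval assignment is elementary. Positivity of the weights and the relations (\ref{eq:alpha}) give $\alpha_2<\tilde\alpha_2$ and $\alpha_4 < (\tilde\alpha_4)^{-1}$, hence the ordering $-(\tilde\alpha_4)^{-1} < -\alpha_4 < 0$ of the four threshold values of $\psi_0$. For each formula I would then determine, as $\psi_0$ runs through the four intervals cut out by these thresholds, whether the divisor number lies in $]0,1[$, in $]1,\infty[$, or in $]-\infty,0[$, the three arcs of $\mathbb{RP}^1$ delimited by the marked points $\zeta=0,1,\infty$. By the normalisation $\zeta(P^{(1)})=0$, $\zeta(P^{(2)})=1$, $\zeta(P^{(3)})=\infty$ and Lemma \ref{lem:pos_div} (equivalently the sign rule (\ref{eq:pos_div})), each arc corresponds to a definite oval among $\Omega_1,\dots,\Omega_5$, which reproduces Table \ref{table:SM} row by row.

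Finally, the counting assertion follows by inspection of the resulting table: in every one of the four ranges of $\psi_0$ exactly one of the two divisor points attached to the square lands in the central oval $\Omega_5$ (respectively $\tilde\Omega_5$), while the other sits in one of the outer faces. This is consistent with Theorem \ref{theo:comb} and with the orientation--invariance already established in Theorem \ref{theo:inv}: the square move redistributes the two divisor points among the relabelled ovals but preserves the rule of one divisor point per finite oval, so that $\Omega_5$ carries exactly one divisor point before and after the transformation.
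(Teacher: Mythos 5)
Your proposal is correct and follows essentially the same route as the paper: compute the four divisor numbers from the half--edge wave functions propagated through the square (using the invariance of the external edge vectors $E_1,\dots,E_4$ and the weight relations (\ref{eq:alpha})), verify that the winding/intersection signs collapse into the single factor $(-1)^{\epsilon_{V_4,V_3}}$ — which is exactly what the paper's two displayed mod--2 signature identities accomplish — and then read off the oval occupancy from the inequalities $\alpha_2<\tilde\alpha_2$, $\alpha_4<(\tilde\alpha_4)^{-1}$ together with Lemma \ref{lem:pos_div}. The only thing left implicit in your write-up is the explicit verification of those two signature identities, which is the content of the paper's proof.
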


The square move leaves the number of ovals invariant, eliminates the divisor points $\gamma_1,\gamma_2$ and creates the divisor points ${\tilde \gamma}_1$, ${\tilde \gamma}_2$. We summarize such properties in the following Lemma.

\begin{lemma}\label{lemma:poles_move1}\textbf{The effect of the square move (M1) on the curve and the divisor} 
Let ${\tilde {\mathcal N}}$ be obtained from ${\mathcal N}$ via move (M1). Let ${\mathcal D} ={\mathcal D}({\mathcal N})$,  ${\tilde {\mathcal D}} ={\mathcal D}({\tilde {\mathcal N}})$ respectively  be the dressed network divisor before and after the square move.
Then
\begin{enumerate}
\item ${\tilde g}=g$, and the number of ovals is invariant;
\item The number of dressed divisor points is invariant in every oval: ${\tilde \nu}_{l} =\nu_{l}$, $l\in [0, g]$;
\item ${\tilde {\mathcal D}} = \left( {\mathcal D} \backslash \{ \gamma_1, \gamma_2 \} \right) \cup \{ {\tilde \gamma}_1, {\tilde \gamma}_2 \}$, where
$\gamma_l$ (respectively $\tilde\gamma_l$), $l=1,2$, is the divisor point on $\Gamma_l$ (respectively  $\tilde\Gamma_l$), the component of $\mathbb{CP}^1$ associated to the white vertex $V_l$ (respectively $\tilde V_l$) involved in the square move transforming ${\mathcal N}$ into $\tilde{\mathcal N}$;
\item Either $\gamma_l$, $\tilde\gamma_l$, $l=1,2$, are all untrivial divisor points or all trivial divisor points;
\item If the divisor is generic (no divisor points coincide with the double points) in the initial configuration, then it remains generic after the move. If in the initial configuration a divisor point is located at a double point of the square, at least another divisor point outside the square is also located at the corresponding double point; in this case after the move there is at least one divisor point at a double point of the square and the number of collapsed divisor points changes by $\pm 1$.
\end{enumerate}
\end{lemma}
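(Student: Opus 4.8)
The plan is to split the argument into the part of the network away from the square, which the move leaves untouched, and the square itself, where the explicit formulas of the preceding Lemma apply. Since (M1) is a purely local operation that neither creates nor destroys any vertex, edge or face, it induces the identity on all faces outside the square and a bijection $\Omega_i\leftrightarrow\tilde\Omega_i$, $i\in[5]$, on the five faces meeting it. Parts (1) and (2) are then statements about global invariants read off this bijection, whereas parts (3) and (4) concern only the two white-vertex components that the move modifies.

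For part (1) I would note that by Construction \ref{def:gamma} (and the genus count in the preceding Proposition) $g+1$ equals the number of faces of $\mathcal G$, a number which (M1) preserves; hence $\tilde g=g$ and the count $g+1$ of ovals is invariant, the oval bijection above matching finite ovals to finite ovals and the infinite oval to the infinite oval. For part (2) the cleanest route is Theorem \ref{theo:comb}: both $\mathcal N$ and $\tilde{\mathcal N}$ are reduced PBDTP networks representing the same soliton data, so each associated KP divisor has exactly one point in every finite oval and none in the infinite oval. Moreover the move fixes $\Gamma_0$, hence the Sato divisor and its distribution among the boundary intervals $]\kappa_j,\kappa_{j+1}[$ is unchanged; subtracting the (invariant) Sato contribution from the (invariant) total count of Theorem \ref{theo:comb} yields $\tilde\nu_l=\nu_l$ for every $l\in[0,g]$. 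This is corroborated by Table \ref{table:SM}, which exhibits in every admissible case exactly one dressed point in the central oval $\Omega_5$ and one in a single corner oval, matched before and after the move.

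For part (3) I would invoke \cite{AG4}: the uniqueness of the solution of the linear system, together with the unchanged boundary data, forces the edge vectors outside the square, in particular $E_1,\dots,E_4$, to be fixed by the move. Consequently every dressed divisor point attached to a white vertex outside the square is literally unchanged, and the only white vertices affected are the two inside the square, whose colours are switched. These contribute $\gamma_1,\gamma_2$ before and $\tilde\gamma_1,\tilde\gamma_2$ after, giving $\tilde{\mathcal D}=\bigl(\mathcal D\setminus\{\gamma_1,\gamma_2\}\bigr)\cup\{\tilde\gamma_1,\tilde\gamma_2\}$.

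Finally, for part (4) I would use the explicit formulas of the preceding Lemma, which express the four affected divisor numbers $\gamma_1,\gamma_2,\tilde\gamma_1,\tilde\gamma_2$ (written $\gamma_1,\gamma_4,\tilde\gamma_2,\tilde\gamma_3$ there) as M\"obius functions of the single quantity $\psi_0=(-1)^{\epsilon_{V_4,V_3}}\Psi_{V_4,e_4}(\vec t_0)/\Psi_{V_3,e_3}(\vec t_0)$. Each of these four maps is non-degenerate: the one giving $\tilde\gamma_2=\alpha_4(1+\tilde\alpha_4\psi_0)/(\alpha_4+\psi_0)$ has determinant $\alpha_4(\alpha_4\tilde\alpha_4-1)\neq 0$, since $\alpha_4\tilde\alpha_4=\alpha_1\alpha_3\alpha_4/\tilde\alpha_2<1$ by (\ref{eq:alpha}) together with $\alpha_2>0$, and the remaining three are visibly of degree one. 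By Lemma \ref{lemma:trivial_div} and the definition of a trivial divisor number, such a number is trivial precisely when it is independent of $\vec t_0$; a non-degenerate M\"obius function of $\psi_0$ is constant in $\vec t_0$ if and only if $\psi_0$ is, so the four divisor numbers are simultaneously independent of $\vec t_0$, i.e. all trivial or all untrivial. I expect part (4) to be the main obstacle: although the M\"obius relations are supplied by the preceding Lemma, one must still check the non-degeneracy of each map (most delicately for $\tilde\gamma_2$) and identify ``trivial'' with ``$\vec t_0$-independent'' to force the simultaneity, whereas the bookkeeping of parts (1)--(3) is routine once the locality of the move and the edge-vector invariance of \cite{AG4} are in hand.
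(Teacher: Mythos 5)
Your proposal is correct and follows essentially the route the paper takes: the paper states this lemma as a summary of the immediately preceding discussion (invariance of the edge vectors outside the square via the uniqueness of the solution of the linear system from \cite{AG4}, the explicit M\"obius formulas for the four affected divisor numbers, and the oval bookkeeping in Table \ref{table:SM}), which is exactly what you assemble. The only genuine additions are your top-down derivation of part (2) from Theorem \ref{theo:comb} combined with the invariance of the Sato divisor (the paper instead reads the per-oval counts directly off Table \ref{table:SM}) and your explicit non-degeneracy check of the four M\"obius maps in part (4) — most delicately $\alpha_4(\alpha_4\tilde\alpha_4-1)\neq 0$ from $\tilde\alpha_2>\alpha_1\alpha_3\alpha_4$ — a detail the paper leaves implicit but which is needed to identify ``all trivial'' with ``$\psi_0$ independent of $\vec t_0$''.
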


\begin{figure}
  \centering{\includegraphics[width=0.5\textwidth]{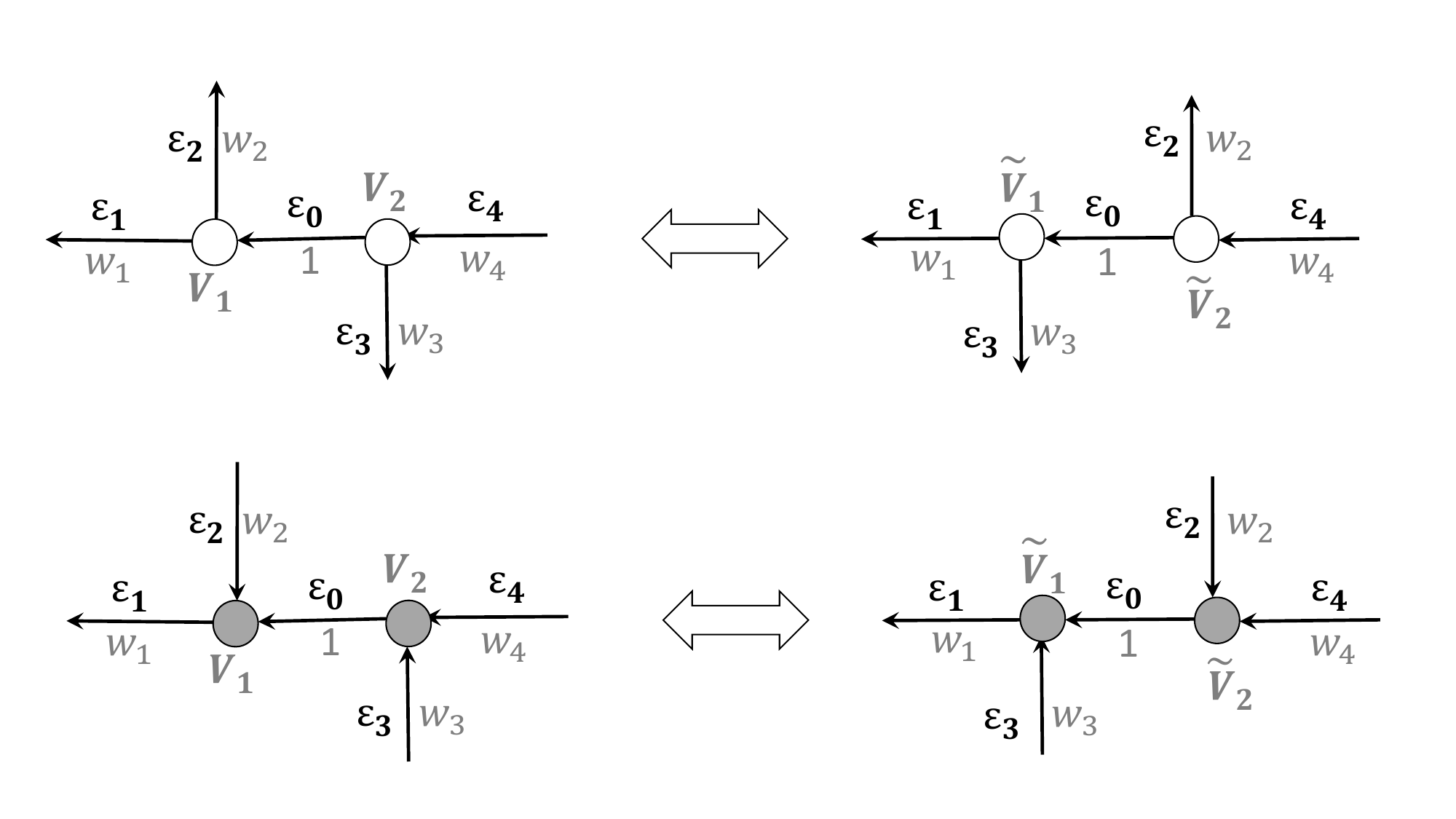}}
  \caption{\small{\sl The insertion/removal of an unicolored internal vertex is equivalent to a flip move of the unicolored vertices.}\label{fig:flipmove}}
\end{figure}

\textbf{(M2) The unicolored edge contraction/uncontraction}
The unicolored edge contraction/un\-con\-traction consists in the elimination/addition of an internal vertex of equal color and of an unit edge, and it leaves invariant the face weights and the boundary measurement map. Such move consists in a flip of unicolored vertices in the case of 
trivalent graphs (see Figure \ref{fig:flipmove}).
A generic contraction/uncontraction of unicolored internal edges can be expressed as a combination of elementary flip moves each involving a pair of consecutive unicolored vertices. 
A flip move at black vertices leaves the divisor invariant. Indeed, in such case we have two equivalent systems of relations using the same signature before and after the move in a gauge for which $\epsilon_0$, the signature at the edge $e_0=(V_1,V_2)$, is zero.
For the flip move at white vertices  it is convenient to use the same signature on both graphs in a gauge for which $\epsilon_0$, the signature at the edge $e_0=(V_1,V_2)$, is one. Indeed, in this case, the total signature of each oval involved in the flip move, changes its parity.
In the following lemma we label $\Omega_l$, $l\in [4]$, the ovals involved in the flip move as in Figure \ref{fig:flip_move_poles}. 

\begin{lemma}\label{lemma:poles_move2}\textbf{The effect of the flip move (M2) at a pair of white vertices on the divisor}
Let ${\tilde {\mathcal N}}$ be obtained from $\mathcal N$ via a flip move move (M2) at a pair of trivalent white vertices. 
Let ${\mathcal D}= {\mathcal D}({\mathcal N})$, ${\tilde {\mathcal D}}= {\mathcal D}({\tilde {\mathcal N}})$, respectively be the dressed network divisor before and after such move. Then
\begin{enumerate}
\item ${\tilde g}=g$ and the number of ovals is invariant;
\item The number of divisor points is invariant in every oval except possibly at the ovals involved in the move.
In the ovals $\Omega_l$, $l\in [4]$ the parity of the number of divisor points before and after the move is invariant:
 ${\tilde \nu}_{l} -\nu_{l} = 0 \,\,(\!\!\!\!\mod 2)$, $l\in [4]$;
\item ${\tilde {\mathcal D}} = \left( {\mathcal D}\backslash \{ \gamma_1, \gamma_2 \} \right) \cup \{ {\tilde \gamma}_1, {\tilde \gamma}_2 \}$, where we use the same notations as in Figure \ref{fig:flip_move_poles};
\item The flip move may transform configurations with divisor points in generic position into configurations with divisor points at double point and vice versa.
\item The flip move may transform configurations without trivial divisor points into configurations with trivial divisor points and vice versa.
\end{enumerate} 
\end{lemma}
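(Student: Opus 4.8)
The plan is to establish the three assertions in the order (1), then (3) together with the "unchanged away from the move" half of (2), and finally the parity statement of (2), closely mirroring the treatment of the square move in Lemma \ref{lemma:poles_move1}.

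For (1) I would repeat the Euler count used in the proof that $\Gamma(\mathcal G)$ is a degeneration of a genus $g$ $\mathtt M$--curve. An elementary flip at a pair of trivalent white vertices neither creates nor destroys vertices: it keeps exactly two trivalent white vertices joined by one internal edge, leaves the colours and valencies of all other vertices untouched, and preserves the number $n_I$ of internal edges and the number $n$ of boundary edges. Hence $t_W,t_B,d_W,d_B$ and $n_I$ are all unchanged, and the identity $g=n_I+n-(t_W+t_B+d_W+d_B)$ gives $\tilde g=g$; the number of ovals $g+1$ is therefore invariant.

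For (3) and the assertion that $\tilde\nu_\Omega=\nu_\Omega$ away from the four ovals touched by the move, I would invoke the locality of the construction together with the uniqueness Theorem \ref{theo:consist}. Since the flip leaves the network unaltered outside a disk neighbourhood of the two flipped vertices and represents the same point $[A]\in\S$, the linear system (\ref{eq:lineq_biv})--(\ref{eq:lineq_white}) restricted to the unchanged part of the graph, with the unaltered boundary conditions, has the same unique solution. Thus every edge vector, and hence every half--edge wave function and every dressed network divisor number attached to a white vertex other than the two flipped ones, is literally unchanged; the corresponding components $\Gamma_l$ and divisor points $\gamma_l$ coincide before and after the move. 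This yields $\tilde{\mathcal D}=\left({\mathcal D}\setminus\{\gamma_1,\gamma_2\}\right)\cup\{\tilde\gamma_1,\tilde\gamma_2\}$ and shows $\tilde\nu_\Omega=\nu_\Omega$ for every oval not among the $\Omega_l$, $l\in[4]$.

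For the parity statement in the four ovals I would use the machinery of Section \ref{sec:comb}. Recall $\nu_\Omega=n_{w,\Omega}-c_\Omega$ with $c_\Omega\equiv\epsilon_\Omega \bmod 2$, so $\nu_\Omega\equiv n_{w,\Omega}+\epsilon_\Omega \bmod 2$, while Theorem \ref{theo:sign_face} fixes $\epsilon_\Omega+n_{w,\Omega}+s_\Omega \bmod 2$ in terms only of whether $\Omega$ is finite or the infinite oval; for boundary ovals the count also involves $\rho_\Omega$ through (\ref{eq:div_bound_1})--(\ref{eq:div_bound_2}). The flip move is purely internal: it involves no boundary vertex, so $s_\Omega$ is unchanged, it leaves the Sato data on $\Gamma_0$ and hence $\rho_\Omega$ untouched, and it alters neither which oval is infinite nor the finite/infinite type of any $\Omega_l$. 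Consequently the right--hand sides of (\ref{eq:sign_int_face})/(\ref{eq:sign_bound_face}) agree for $\mathcal N$ and $\tilde{\mathcal N}$, forcing $\nu_\Omega\equiv\tilde\nu_\Omega \bmod 2$ for each $l\in[4]$. (Combined with Theorem \ref{theo:comb} this in fact forces exactly one divisor point in each finite oval and none in the infinite oval both before and after the move; the weaker parity statement suffices here and avoids re-deriving regularity locally.) The hard part, exactly as for (M1), is the explicit local bookkeeping that produces $\tilde\gamma_1,\tilde\gamma_2$ and locates them in the correct ovals: using the labelling of Figure \ref{fig:flip_move_poles} and the relations (\ref{eq:lin_Phi1}) at the two white vertices in both configurations, I would express all four divisor numbers as ratios of half--edge wave functions $\Psi_{V_j,e}(\vec t_0)$ and verify, through congruences between winding and intersection numbers analogous to those listed for the square move, that the sign patterns match so that the parity balance holds case by case. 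This sign analysis is the only genuinely computational obstacle; the remaining steps are formal.
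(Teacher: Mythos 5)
Your argument is correct, and for items (1) and (3) it is essentially the route the paper leaves implicit: the Euler count for $\tilde g=g$ and the locality-plus-uniqueness argument (as in the discussion of the square move, where the paper cites \cite{AG4}) for the fact that only $\gamma_1,\gamma_2$ are replaced. One caveat on that step: the system restricted to the unchanged part of the graph is not closed, so uniqueness alone does not immediately give invariance of the outer edge vectors $E_{e_1},\dots$; you still must check that the old outer-edge vectors, together with a suitable vector on the rotated internal edge, solve the new relations (\ref{eq:lineq_white}) at the two flipped vertices. This is exactly the ``local bookkeeping'' you defer to the end, so it cannot be entirely avoided, but it is the same verification that underlies the explicit formulas for $\gamma_1,\gamma_2,\tilde\gamma_1,\tilde\gamma_2$ in the corollary following the lemma. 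Where you genuinely diverge from the paper is item (2): the paper (implicitly, via Figure \ref{fig:flip_move_poles} and the corollary) obtains the parity balance in the four ovals by the case-by-case sign analysis of the half-edge wave functions, whereas you deduce it from the global identities of Section \ref{sec:comb} --- $\nu_\Omega\equiv n_{w,\Omega}+\epsilon_\Omega$, Theorem \ref{theo:sign_face}, and the invariance of $s_\Omega$, $\rho_\Omega$ and the finite/infinite type under a purely internal move. Since both $\mathcal N$ and $\tilde{\mathcal N}$ are PBDTP networks representing the same point, this is legitimate and in fact gives, via Theorem \ref{theo:comb}, the exact count rather than just the parity; what it does not give is the explicit location of $\tilde\gamma_1,\tilde\gamma_2$, which the paper's local computation supplies and which is needed for the corollary and for Figure \ref{fig:flip_move_poles}.
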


\begin{figure}
 \centering
	\includegraphics[width=0.45\textwidth]{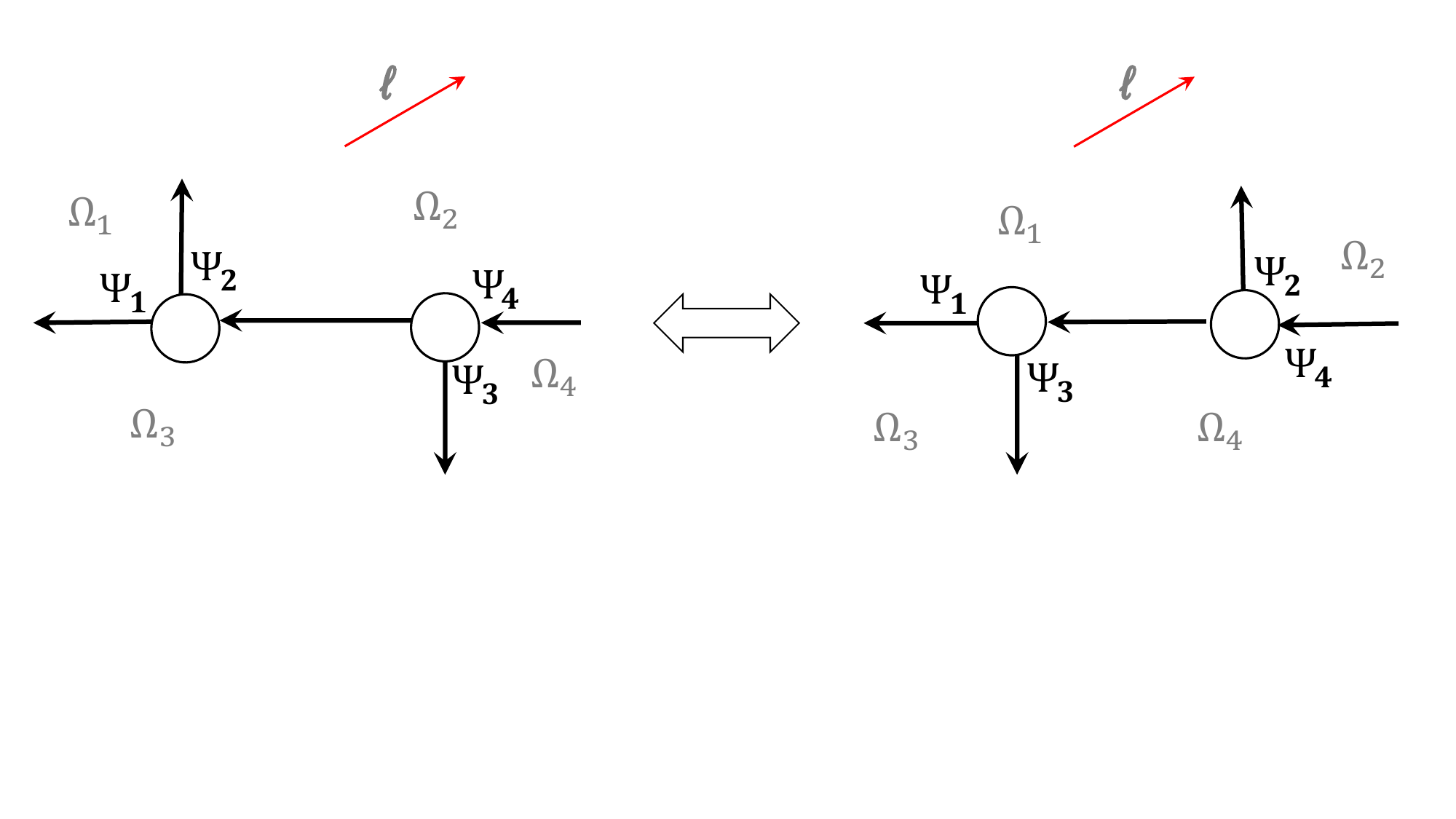}
	\hfill
	\includegraphics[width=0.47\textwidth]{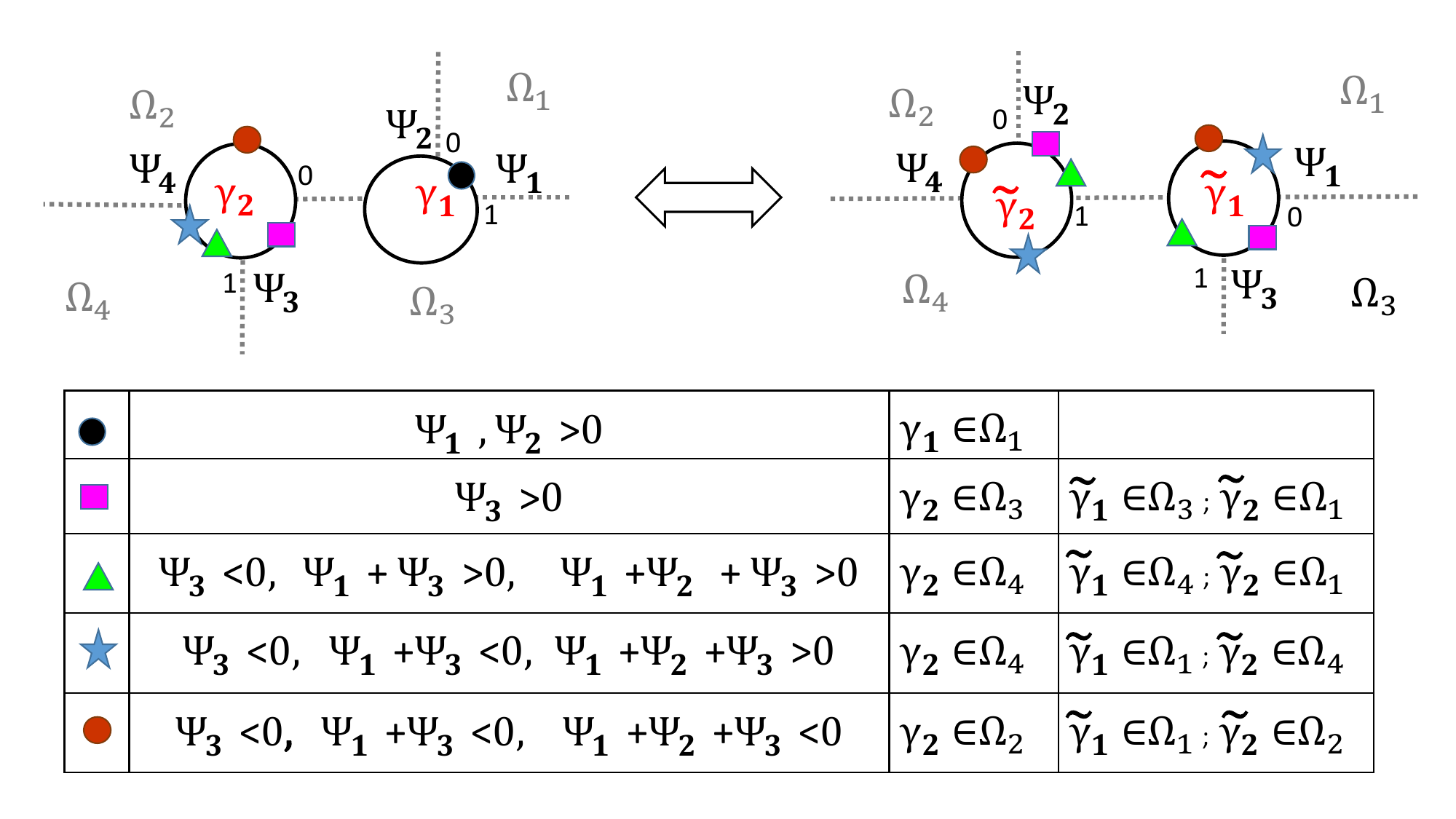}
	\includegraphics[width=0.47\textwidth]{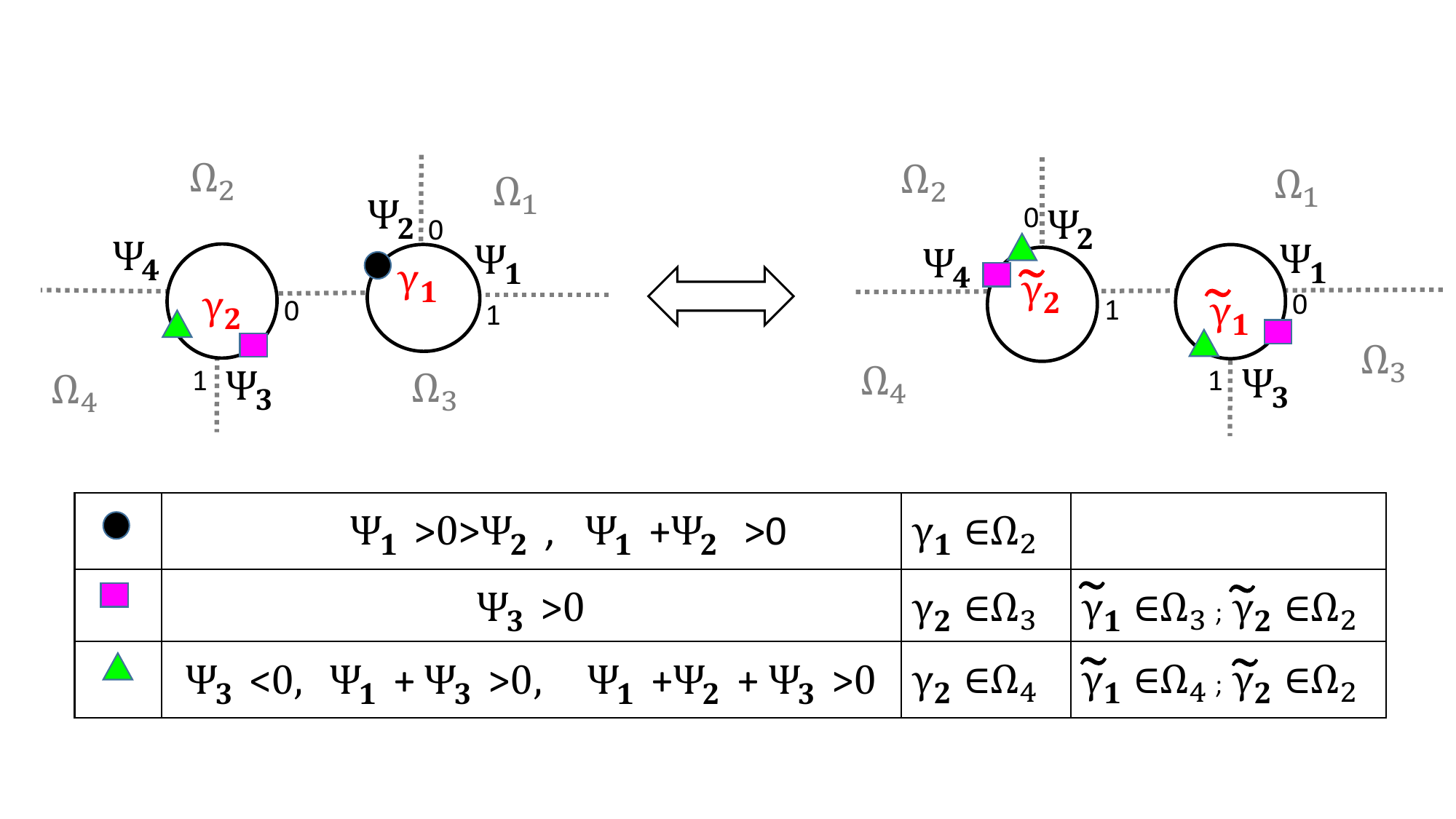}
	\hfill
	\includegraphics[width=0.47\textwidth]{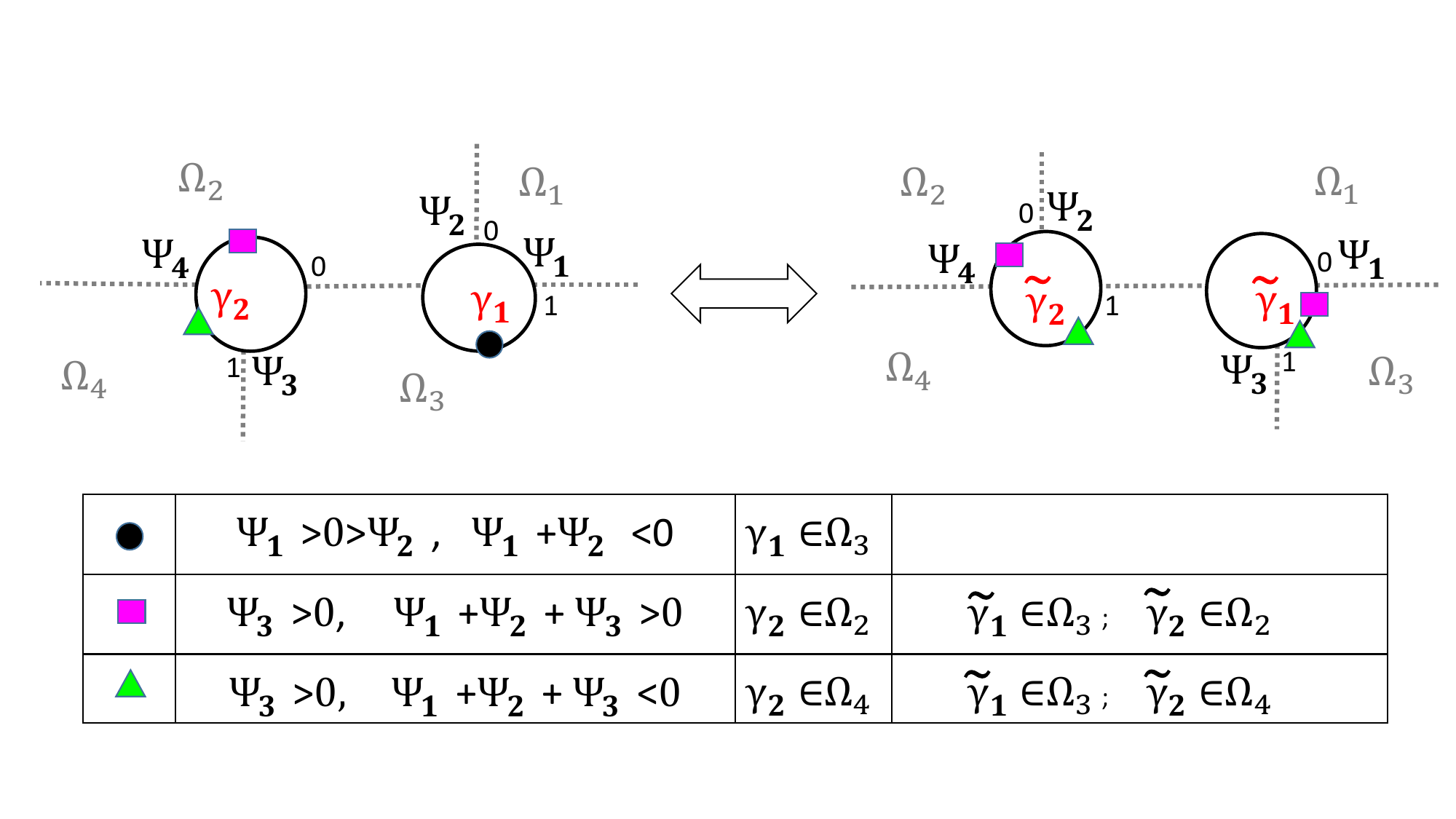}
  \caption{\small{\sl The effect of the flip move [top-left] on the possible configurations of dressed divisor points [top-right],[bottom-left],[bottom-right].}\label{fig:flip_move_poles}}
\end{figure}

The proof is omitted. 

\begin{corollary}\textbf{The effect of the flip move on the divisor}
Let the geometric signature be the same on each edge of the graph before and after the flip move, and with a choice of gauge such that $\epsilon_{e_0}=1$ at $e_0$.
Let the local coordinates on $\Gamma_i$, ${\tilde \Gamma_i}$, $i=1,2$, be as in Figure \ref{fig:flip_move_poles} and let
$\Psi_j = \Psi_{V_1,e_j}(\vec t_0)$, $j=1,2$, $\Psi_j = \Psi_{V_2,e_j}(\vec t_0)$, $j=3,4$, be the value of the half--edge dressed wave function in the initial configuration and ${\tilde \Psi}_j = \Psi_{{\tilde V}_1,e_j}(\vec t_0)$, $j=1,3$, ${\tilde \Psi}_j = \Psi_{{\tilde V}_2,e_j}(\vec t_0)$, $j=2,4$, be the value of the half--edge dressed wave function after the flip move. Then the value of the half--edge dressed wave function at $({\tilde V}_i,{\tilde e}_j)$ is unchanged,  {\sl i.e}
\[
{\tilde \Psi}_{i} = \Psi_i, \quad i\in [4],
\]
and the local coordinates of the divisor points involved in flip move are
\[
\gamma_1 = \frac{\Psi_2}{\Psi_1+\Psi_2},\quad\quad \gamma_2 = \frac{\Psi_1+\Psi_2}{\Psi_1+\Psi_2+\Psi_3},\quad\quad{\tilde \gamma}_1 = \frac{\Psi_1}{\Psi_1+\Psi_3},\quad\quad {\tilde\gamma}_2 = \frac{\Psi_2}{\Psi_1+\Psi_2+\Psi_3}.
\]
\end{corollary}

The proof is straightforward using the definition of divisor coordinates.
In Figure \ref{fig:flip_move_poles} we show the position of the divisor points before and after the flip move [top-left], in function of the relative signs of the value of the dressed wave function at the double points of $\Gamma$. Faces $f_l$ correspond to ovals $\Omega_l$, the orientation of the edges in the graph at each vertex induces the local coordinates at each copy of $\mathbb{CP}^1$ in $\Gamma$.
We remark that not all combinations of signs are realizable at the finite ovals for real regular divisors.
For instance the following choice of signs of the half--edge wave function 
$\Psi_2(\vec t_0), \Psi_3(\vec t_0)<0< \Psi_1 (\vec t_0)$, $\Psi_1(\vec t_0)+ \Psi_3(\vec t_0)<0<\Psi_1(\vec t_0)+ \Psi_2(\vec t_0)$,
would imply the divisor configurations $\gamma_1,\gamma_2 \in \Omega_2$ and ${\tilde \gamma}_1, {\tilde\gamma}_2 \in \Omega_1$, which are not allowed for real regular divisors, since every finite oval may contain only one divisor point.

Finally, if the half edge vectors $F_1,F_3$ are linearly dependent, {\sl i.e.} $F_3= c_{13}F_1$ for some $c_{13}\not =0$, then also $\Psi_3(\vec t)= c_{13} \Psi_1(\vec t)$, for all $\vec t$. In such case $\gamma_l$, $l=1,2$, and ${\tilde \gamma}_2$ are untrivial divisor points, whereas ${\tilde \gamma}_1$ is a trivial divisor point, {\sl i.e.} the normalized wave function is constant on the corresponding copy of $\mathbb{CP}^1$.

\begin{figure}
  \centering{\includegraphics[width=0.55\textwidth]{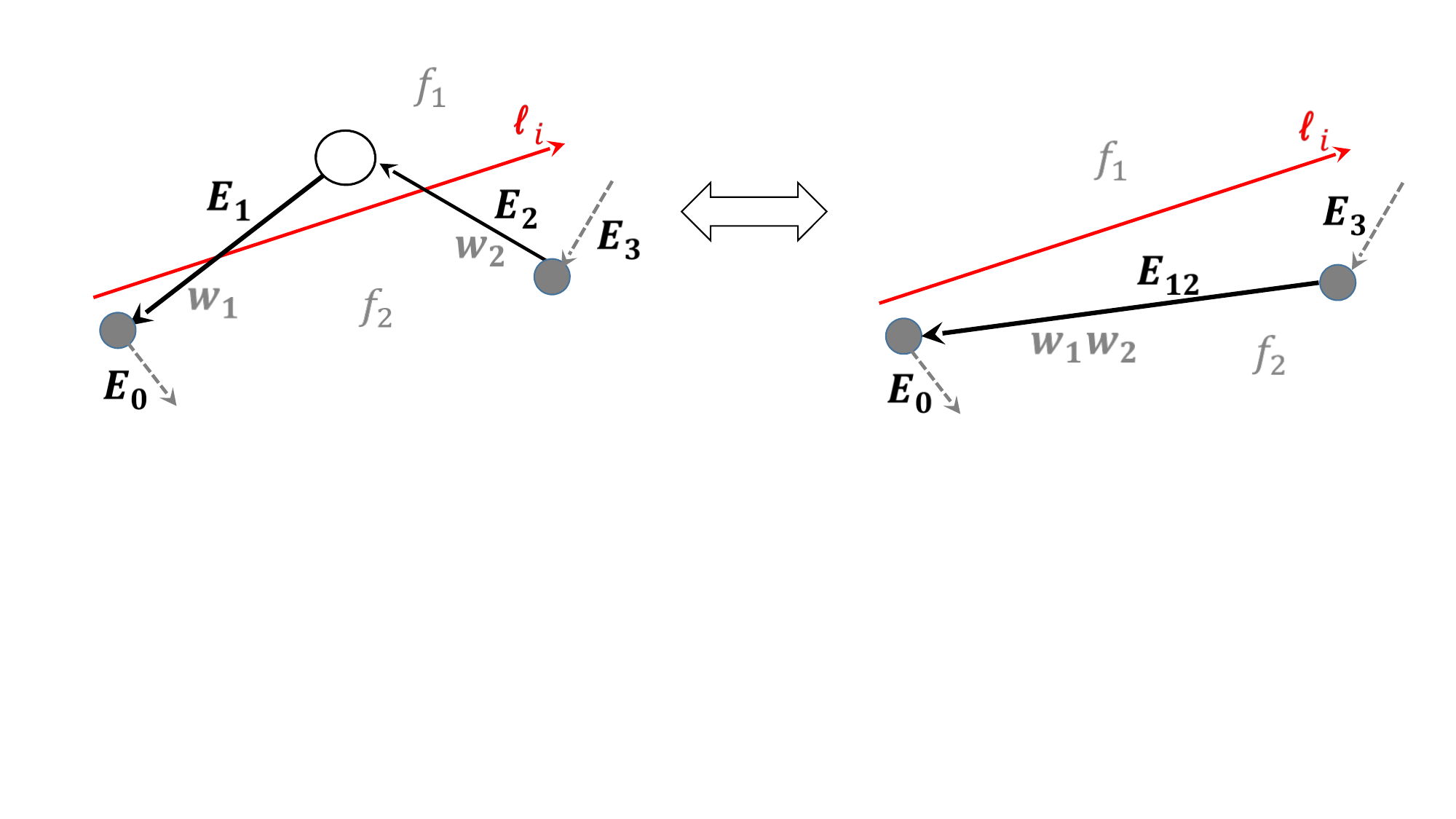}}
	\vspace{-2.5 truecm}
  \caption{\small{\sl The middle edge insertion/removal.}\label{fig:middle}}
\end{figure}

\textbf{(M3) The middle edge insertion/removal}
The middle edge insertion/removal concerns bivalent vertices (see Figure \ref{fig:middle}) without changing the face configuration. If we remove a white vertex, then we keep the signature invariant on all edges, and assign $\epsilon_{12}=\epsilon_{1}+\epsilon_{2}+1 \mod 2$. If we remove a black vertex, then we keep the signature invariant on all edges, and assign $\epsilon_{12}=\epsilon_{1}+\epsilon_{2}\mod 2$. This move does not affect neither the number of ovals nor the divisor configuration.

\begin{figure}
  \centering{\includegraphics[width=0.55\textwidth]{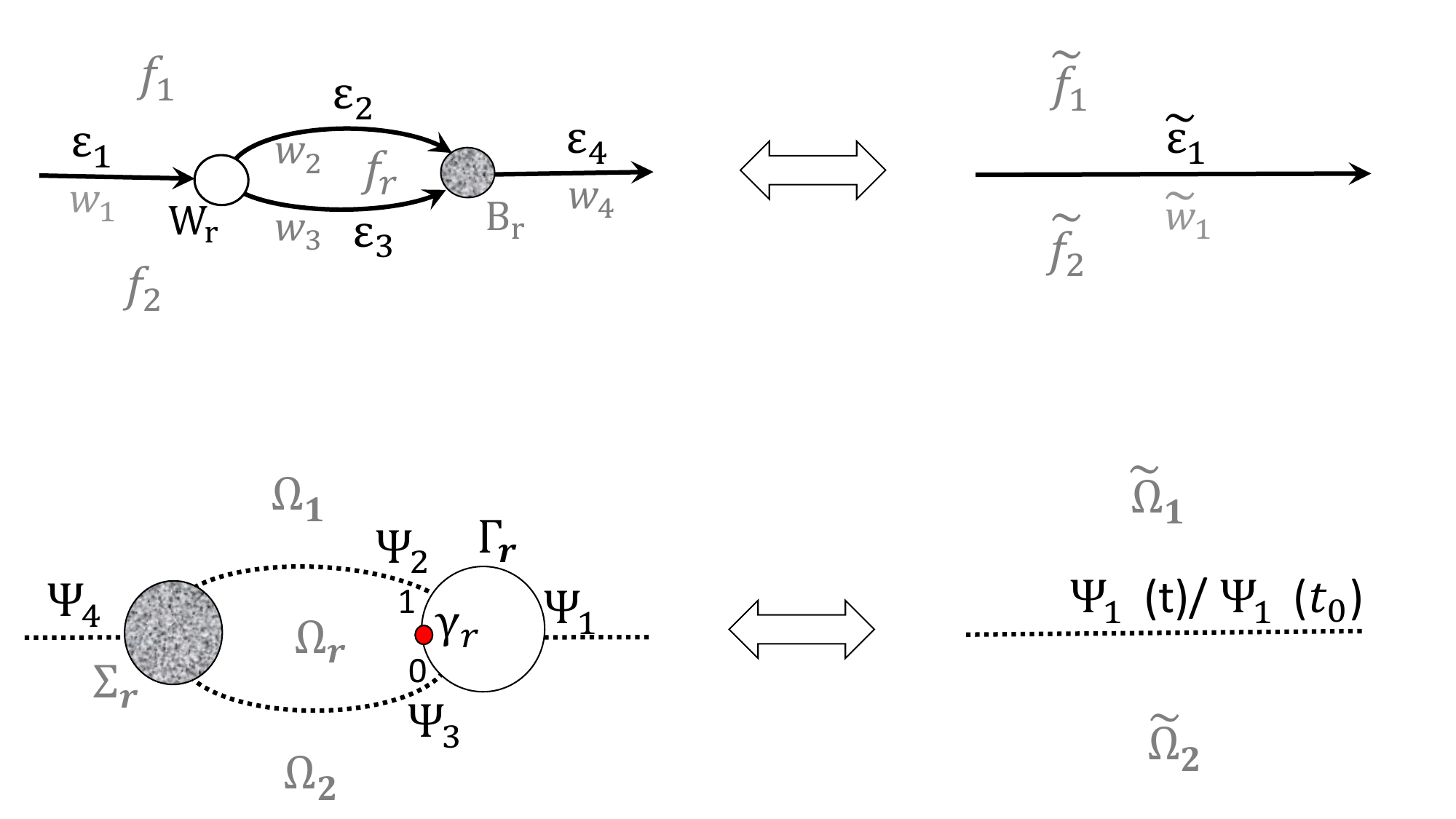}}
  \caption{\small{\sl The parallel edge reduction [top] eliminates an oval, diminishes by one the genus and eliminates a divisor point $\gamma_{\mbox{r}}$ [bottom].}\label{fig:parall_red_poles}}
\end{figure}
\textbf{(R1) The parallel edge reduction}
The parallel edge reduction consists in the removal of two trivalent vertices of different color connected by a pair of parallel edges (Figure \ref{fig:parall_red_poles}[top]), therefore it eliminates an oval and a trivial divisor point on the corresponding curve:

\begin{lemma}\label{lemma:poles_red1}\textbf{The effect of  (R1) on the divisor}
Let ${\tilde {\mathcal N}}$ be obtained from ${\mathcal N}$ via the parallel edge reduction (R1), and denote
${\tilde \Gamma}$ and $\Gamma$ the curve after and before such reduction. Then
\begin{enumerate}
\item The genus ${\tilde g}$ of ${\tilde \Gamma}$ is one less than that in $\Gamma$: ${\tilde g}=g-1$ ;
\item The oval $\Omega_{\mbox{r}}$ corresponding to the face $f_{\mbox{r}}$ and the components $\Gamma_{\mbox{r}},\Sigma_{\mbox{r}}$ corresponding to the white and black vertices $W_{\mbox{r}},B_{\mbox{r}}$, are removed by effect of the parallel edge reduction;
\item The trivial divisor point $\gamma_{\mbox{r}}\in \Gamma_{\mbox{r}}\cap \Omega_{\mbox{r}}$ is removed;
\item All other divisor points are not effected by the reduction.
\end{enumerate} 
\end{lemma}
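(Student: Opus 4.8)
The plan is to read off items (1) and (2) from the genus count already used in Section~\ref{sec:gamma} and to deduce items (3) and (4) from the structure of the edge vectors together with the divisor counting of Theorem~\ref{theo:comb}. First I would fix the orientation of Figure~\ref{fig:parall_red_poles} and denote by $a$ the external edge at the white vertex $W_{\mbox{r}}$, by $b_1,b_2$ the two parallel edges joining $W_{\mbox{r}}$ to the black vertex $B_{\mbox{r}}$, and by $c$ the external edge at $B_{\mbox{r}}$; the reduction deletes $W_{\mbox{r}}, B_{\mbox{r}}, b_1, b_2$ and merges $a,c$ into a single edge. For (1)--(2) I would invoke the identity $g = N_d - d + 1$ established in Section~\ref{sec:gamma}, where $d$ is the number of $\mathbb{CP}^1$ components and $N_d$ the total number of edges. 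Since the reduction removes the two components $\Gamma_{\mbox{r}}, \Sigma_{\mbox{r}}$ and three edges (the pair $b_1,b_2$ together with the edge lost in merging $a$ and $c$), one gets $\tilde d = d-2$ and $\tilde N_d = N_d-3$, hence $\tilde g = g-1$; correspondingly the number of ovals drops by one, the removed oval being exactly the bigon $f_{\mbox{r}}=\Omega_{\mbox{r}}$ bounded by $b_1,b_2$, together with $\Gamma_{\mbox{r}}$ and $\Sigma_{\mbox{r}}$.

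For the triviality claim in (3), the key observation is that $b_1$ and $b_2$ are both edges at the black vertex $B_{\mbox{r}}$, so the black--vertex relations \eqref{eq:lineq_black} force $E_{b_1}$ and $E_{b_2}$ to be nonzero scalar multiples of the edge vector on $c$, hence proportional to each other. At $W_{\mbox{r}}$ these are precisely the vectors on the two outgoing edges, so Lemma~\ref{lemma:trivial_div} applies and $\gamma_{\mbox{r}} = \gamma_{\textup{\scriptsize dr}, W_{\mbox{r}}}$ is a trivial divisor number, i.e.\ the normalized wave function is constant on $\Gamma_{\mbox{r}}$. To place $\gamma_{\mbox{r}}$ in $\Omega_{\mbox{r}}$, I would use that $\Omega_{\mbox{r}}$ is an internal, hence finite, oval whose only trivalent white vertex on the boundary is $W_{\mbox{r}}$: by Theorem~\ref{theo:comb} it contains exactly one divisor point, and since the divisor point on $\Gamma_{\mbox{r}}$ must lie in one of the three ovals adjacent to $W_{\mbox{r}}$, the count forces it into $\Omega_{\mbox{r}}$. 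Alternatively one checks directly via Lemma~\ref{lem:pos_div} that $\Psi_{W_{\mbox{r}},b_1}\Psi_{W_{\mbox{r}},b_2} > 0$. Thus the reduction removes precisely the trivial point $\gamma_{\mbox{r}}\in\Gamma_{\mbox{r}}\cap\Omega_{\mbox{r}}$.

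Item (4) would then follow from the invariance of the edge vectors away from the reduction: by \cite{Pos} the move (R1) preserves Postnikov boundary measurement map and hence the point $[A]\in\S$, and by \cite{AG4} the edge vectors on all surviving edges are unchanged, so the half--edge wave functions and the dressed divisor numbers \eqref{eq:dress_pole_def} at every other trivalent white vertex are unaffected and each remaining divisor point stays in its oval. I expect the main obstacle to be precisely this last step: one must fix the weight of the merged edge so that, after deleting the bigon, the surviving part of $\mathcal N$ satisfies the same linear system \eqref{eq:lineq_biv}--\eqref{eq:lineq_white} with the same boundary conditions, and then invoke the uniqueness of Theorem~\ref{theo:consist} to conclude that all other edge vectors, and therefore all other divisor points, coincide before and after the reduction. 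This bookkeeping of weights and signatures across the reduced edge is the only genuinely technical point, and it is the one I would treat most carefully, citing the corresponding computation in \cite{AG4}.
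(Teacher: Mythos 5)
Your proposal is correct and follows essentially the same route as the paper: the genus drop is read off from the count of components and edges (equivalently from the loss of the bigon face), the triviality of $\gamma_{\mbox{r}}$ comes from the black--vertex relations forcing the vectors on the two parallel edges to be proportional (the paper computes $\gamma_{\mbox{r}}=w_3/(w_2+w_3)\in(0,1)$ explicitly, which simultaneously places the point in the bigon oval, exactly as your Lemma~\ref{lem:pos_div} check does), and item (4) follows because the merged edge carries the same edge wave function, so the surviving linear system and hence all other divisor numbers are unchanged. The only cosmetic difference is that you route the location of $\gamma_{\mbox{r}}$ through Theorem~\ref{theo:comb} as an alternative, whereas the paper just reads it off the explicit coordinate; both are valid.
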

In Figure \ref{fig:parall_red_poles}[bottom] we show the effect of the parallel edge reduction on the curve. We use a gauge such that the signatures $\epsilon_2=\epsilon_3=0$, so that the half-edge wave functions satisfy: $\Psi_{2} (\vec t)= w_2 \Psi_4(\vec t)$, $\Psi_{3} (\vec t) = w_3 \Psi_4(\vec t)$, $\Psi_{1} (\vec t)= -(\Psi_2 (\vec t)+\Psi_3(\vec t))$, for all $\vec t$.

The divisor point in $\Gamma^{(1)}_{\mbox{red}}$ is trivial since it is independent on time, and in the coordinates induced by the orientation on the Figure~\ref{fig:parall_red_poles}, it takes the value:
\[
\gamma_{\mbox{r}} = \frac{\Psi_{3} (\vec t_0)}{\Psi_{2} (\vec t_0)+\Psi_{3} (\vec t_0)} = \frac{w_{3} }{w_{2}+w_{3} }.
\]

On both $\Gamma^{(1)}_{\mbox{red}}$ and $\Sigma^{(1)}_{\mbox{red}}$ the normalized wave function is independent of the spectral parameter and takes the value $\frac{\Psi_{1} (\vec t)}{\Psi_{1} (\vec t_0)}\equiv \frac{\Psi_{4} (\vec t)}{\Psi_{4} (\vec t_0)}$. Therefore, such value is also the one of the normalized wave function at the double point in ${\tilde \Gamma}$ corresponding to the edge $\tilde e_1$ created by the reduction. 

\section{Example: plane curves and divisors for soliton data in ${\mathcal S}_{34}^{\mbox{\tiny TNN}}\subset Gr^{\mbox{\tiny TNN}}(2,4)$}\label{sec:example}

\begin{figure}
  \centering{
\includegraphics[width=0.4\textwidth]{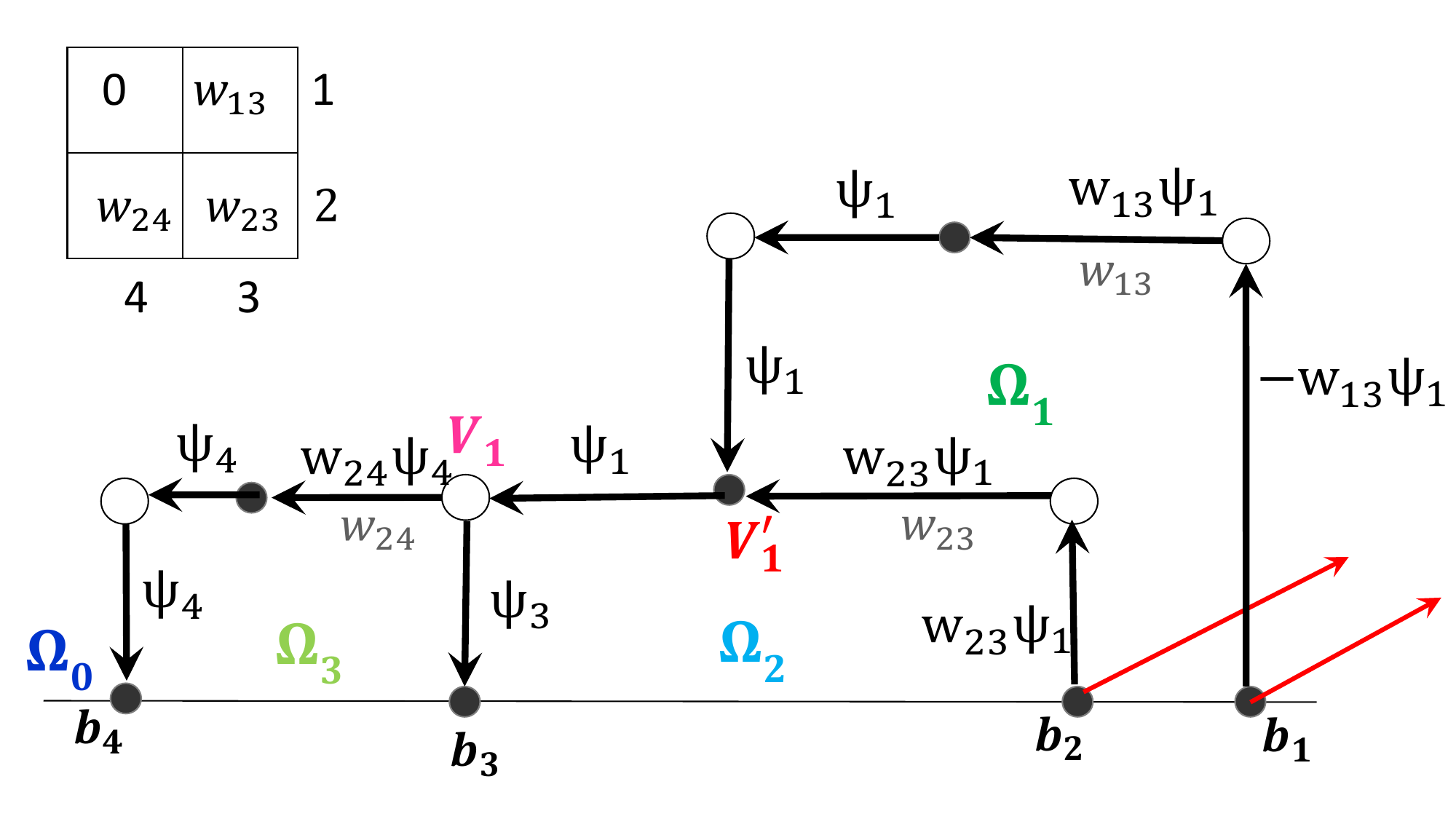}
   \hfill 
  \includegraphics[width=0.4\textwidth]{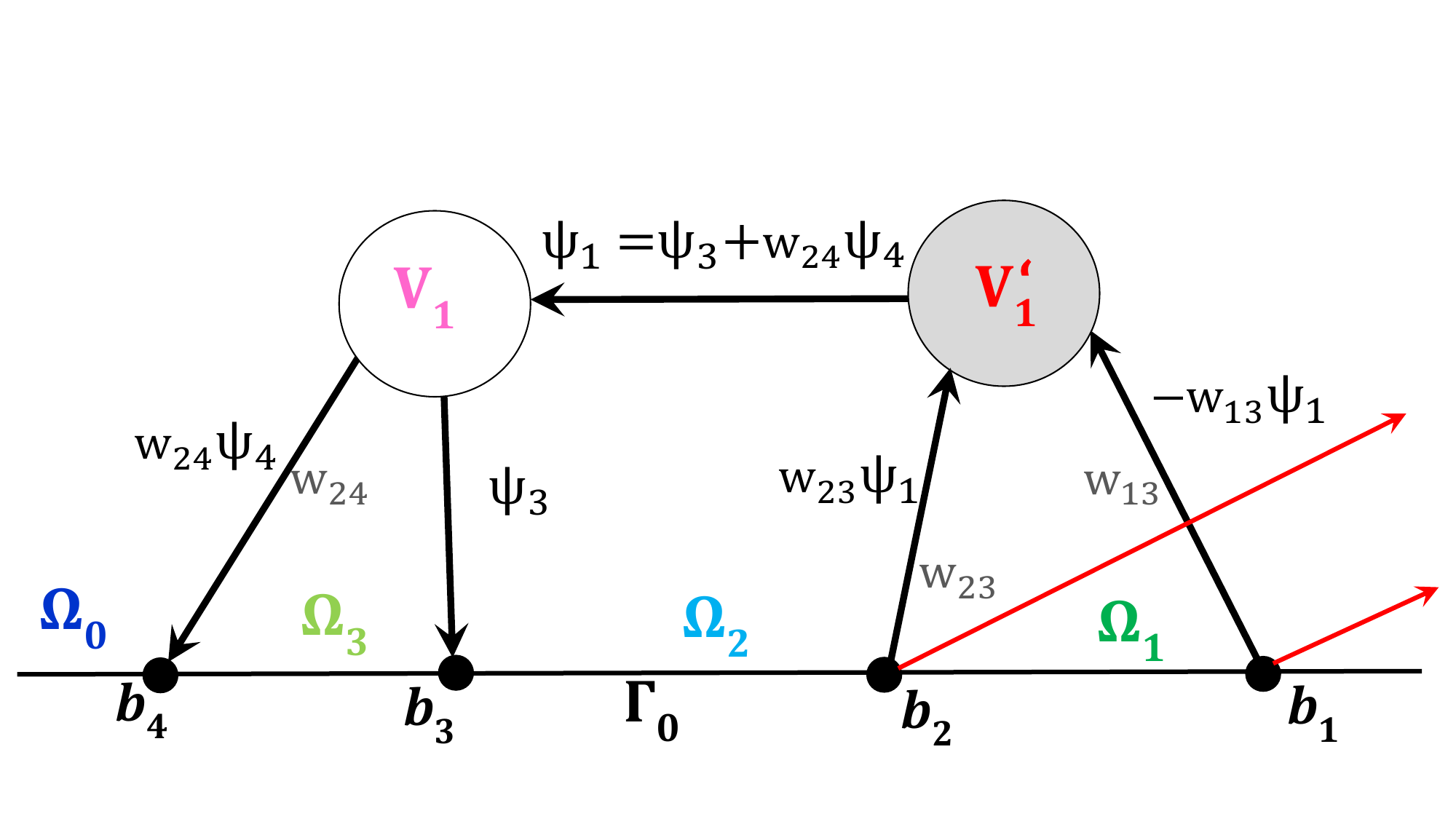}
\\
\includegraphics[width=0.4\textwidth]{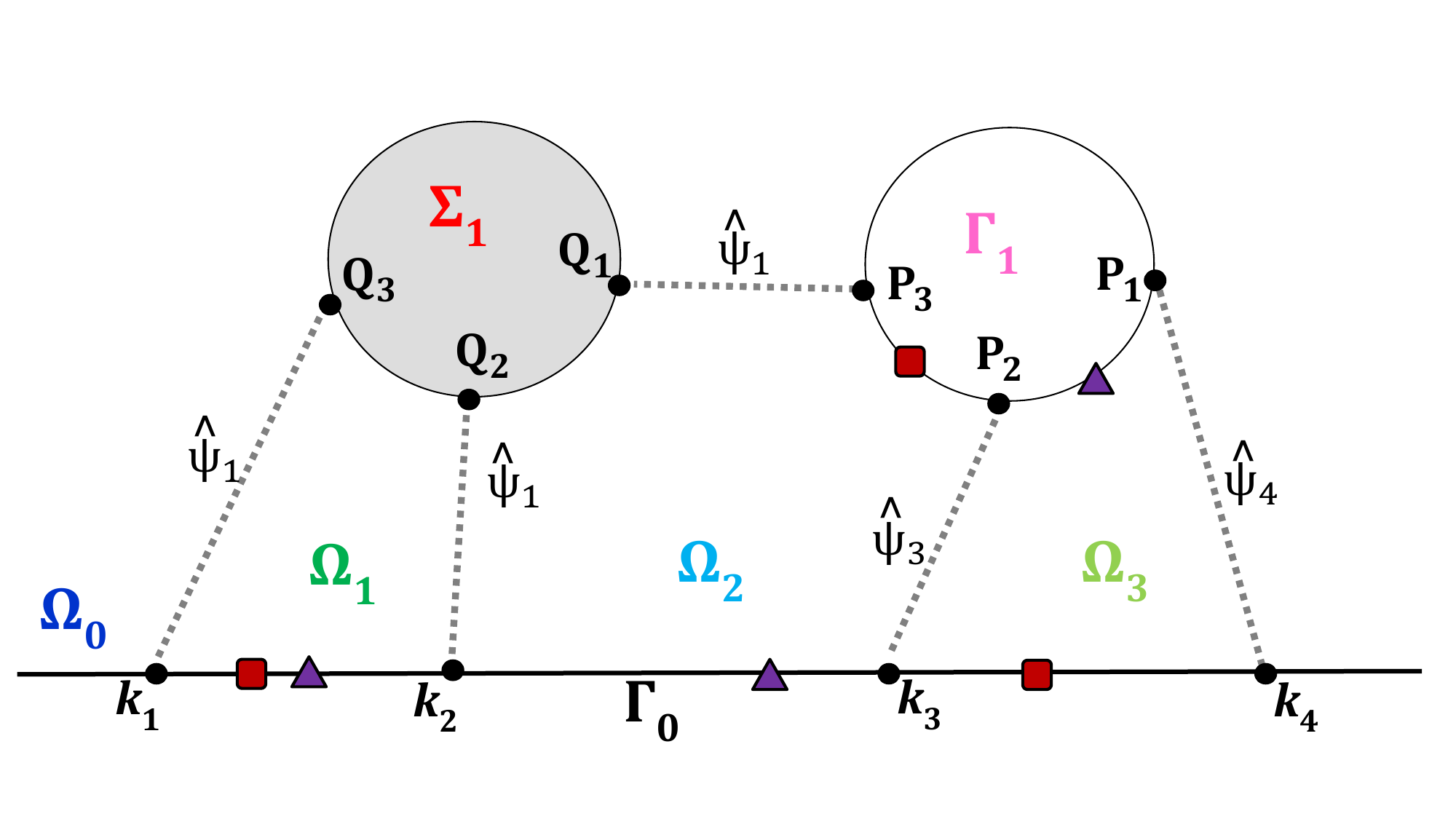}
\hfill
\includegraphics[,width=0.4\textwidth]{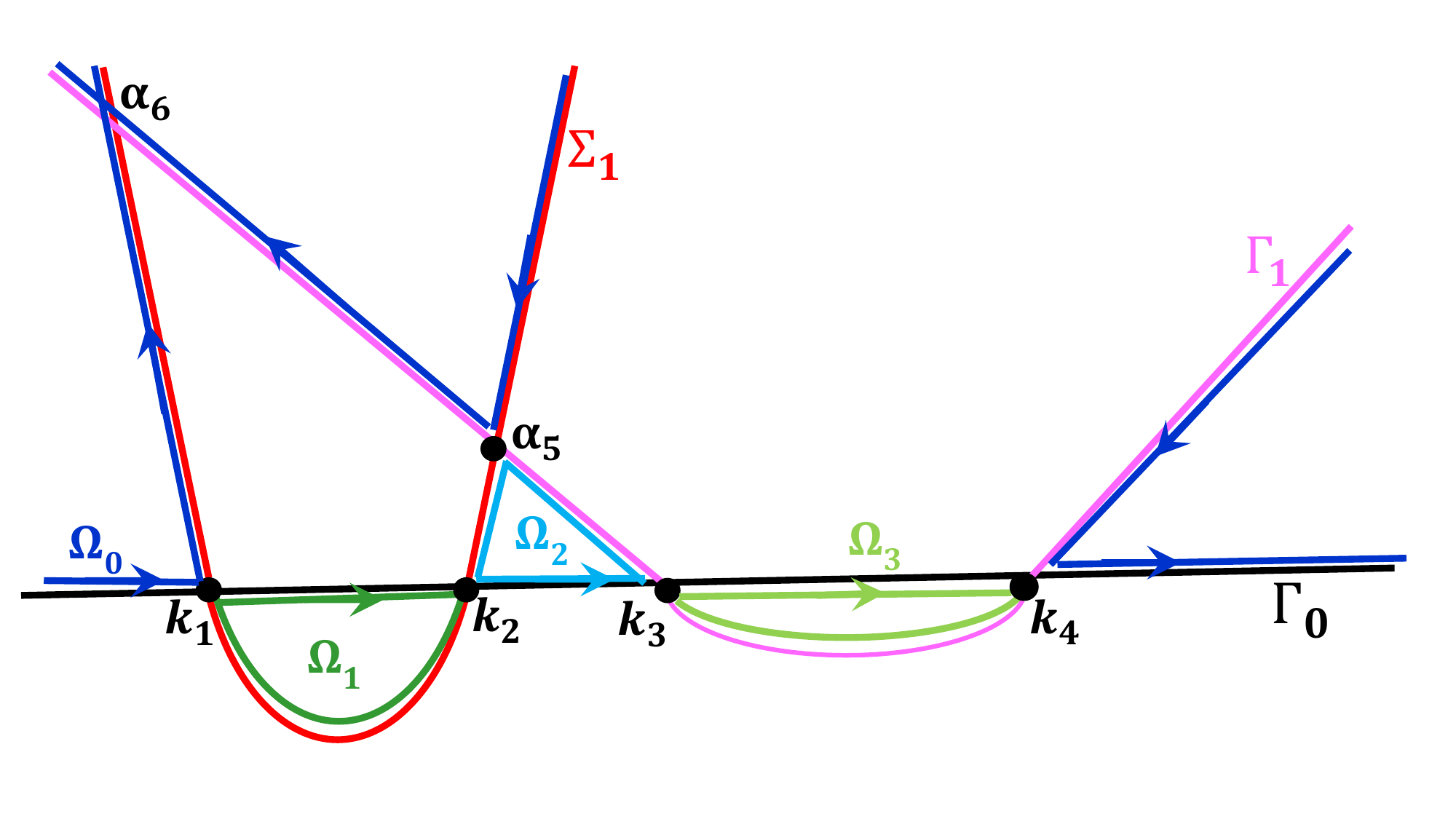}
}
\caption{\small{\sl The network ${\mathcal N}$ [top left], the reduced network ${\mathcal N}_{\mbox{\scriptsize red}}$ [top right], the topological model of the corresponding spectral curve $\Gamma_{T,\mbox{\scriptsize red}}$ [bottom left] and its plane curve representation [bottom right] for soliton data in ${\mathcal S}_{34}^{\mbox{\tiny TNN}}\subset Gr^{\mbox{\tiny TNN}}(2,4)$. $\Psi_{1} (\vec t)=\Psi_{3} (\vec t)+w_{24}\Psi_{4} (\vec t)$. The configurations of the KP divisor (triangles/squares) depend only on the sign of $\Psi_3 (\vec t_0)$.
On the curve double points are represented as dotted segments and $\hat \psi_l \equiv \hat\psi_l (\vec t)$ is as in (\ref{eq:hat_psi}).}} 
\label{fig:Gr24_net}
\end{figure}

${\mathcal S}_{34}^{\mbox{\tiny TNN}}$ is the 3--dimensional positroid cell in $Gr^{\mbox{\tiny TNN}} (2,4)$ corresponding to the matroid
\[
{\mathcal M} = \{ \ 12 \ , \ 13 \ , \ 14 \ ,\ 23 \ , \ 24\ \},
\]
and its elements $[A]$ are equivalence classes of real $2\times 4$ matrices with all maximal minors positive, except $\Delta_{34}=0$. The three positive weights $w_{13},w_{23},w_{24}$ of the Le-tableau (see Figure \ref{fig:Gr24_net}[top,left]) parametrize ${\mathcal S}_{34}^{\mbox{\tiny TNN}}$ and correspond to the matrix in the reduced row echelon form (RREF),
\begin{equation}
\label{eq:RREF}
A = \left( \begin{array}{cccc}
1 & 0 & - w_{13} & -w_{13} w_{24}\\
0 & 1 & w_{23}   & w_{23}w_{24}
\end{array}
\right).
\end{equation}
The generators of the Darboux transformation ${\mathfrak D}=\partial_x^2 -{\mathfrak w}_1 (\vec t)\partial_x -{\mathfrak w}_2(\vec t)$ are
$f^{(1)} (\vec t) = e^{\theta_1(\vec t)} -w_{13}e^{\theta_3(\vec t)}-w_{13} w_{24}e^{\theta_4(\vec t)}$, $
f^{(2)} (\vec t) = e^{\theta_2(\vec t)} +w_{23}e^{\theta_3(\vec t)}+w_{23}w_{24}e^{\theta_4(\vec t)}$.

In the following sections we construct a reducible rational curve $\Gamma_{T,\mbox{\scriptsize red}}$ and the divisor for soliton data $({\mathcal K}, [A])$ with $\mathcal K = \{ \kappa_1 < \kappa_2<\kappa_3<\kappa_4\}$ and $[A] \in {\mathcal S}_{34}^{\mbox{\tiny TNN}}$. We represent $\Gamma_{T,\mbox{\scriptsize red}}$ as a plane curve given by the intersection of a line and two quadrics (see (\ref{eq:lines}) and (\ref{eq:curveGr24})) and we verify that it is a rational degeneration of the genus 3 $\mathtt M$--curve $\Gamma_{\varepsilon}$ ($0<\varepsilon \ll 1$) in (\ref{eq:curveGr24_pert}). We then apply a parallel edge unreduction and a flip move to the reduced network and compute the transformed KP divisor on the transformed curves.

\subsection{Spectral curves for the reduced Le--network and their desingularizations}\label{sec:gamma_24}

We briefly illustrate the construction of a rational spectral curve $\Gamma_{T,\mbox{\scriptsize red}}$ for soliton data in ${\mathcal S}_{34}^{\mbox{\tiny TNN}}$. We choose the reduced Le--graph $\mathcal N_{T,\mbox{\scriptsize red}}$ as dual to the reducible rational curve and use Postnikov rules to assign the weights to construct the corresponding Le--network representing $[A]\in {\mathcal S}_{34}^{\mbox{\tiny TNN}}$  (Figure \ref{fig:Gr24_net}[top right]). In Figure \ref{fig:Gr24_net}[bottom left], we show the topological model of the curve $\Gamma_{T,\mbox{\scriptsize red}}$. 

The reducible rational curve $\Gamma_{T,\mbox{\scriptsize red}}$ is obtained gluing three copies of $\mathbb{CP}^1$, $\Gamma_{T,\mbox{\scriptsize red}}=\Gamma_0\sqcup \Gamma_{1}\sqcup\Sigma_{1}$, and it may be represented as a plane curve given by the intersection of a line ($\Gamma_0$) and two quadrics ($\Gamma_1,\Sigma_1$). We plot both the topological model and the plane curve for this example in Figure \ref{fig:Gr24_net}[bottom]. 
To simplify its representation, we impose that $\Gamma_0$ is one of the coordinate axis in the $(\lambda,\mu)$--plane, say $\mu=0$, that $P_0\in \Gamma_0$ is the infinite point, that the quadrics $\Sigma_{1}$ and  $\Gamma_{1}$ are parabolas with two real finite intersection points $\alpha_5=(\lambda_5,\mu_5)$, $\alpha_6=(\lambda_6,\mu_6)$:
\begin{equation}\label{eq:lines}
\Gamma_0:  \mu=0, \ \ \Gamma_{1}: \mu- (\lambda - \kappa_3)(\lambda-\kappa_4) =0, \ \   \Sigma_{1}: \mu-c_1(\lambda -\kappa_1)(\lambda-\kappa_2)=0.
\end{equation}
In the following we also take $c_1>1$ and choose $\lambda(P_3)=\lambda(Q_1)=\lambda_5$. Then, by construction, $\lambda_5\in ]\kappa_2,\kappa_3[$ and $\lambda_6<\kappa_1$.
As usual we denote $\Omega_0$ the infinite oval, that is $P_0\in \Omega_0$, and $\Omega_j$, $j\in[3]$, the finite ovals. 
Since the singularity at infinity is completely resolved, the quadrics $\Sigma_{1}$ and $\Gamma_{1}$ do not intersect at infinity. The intersection point $\alpha_6$ does not correspond to any of the marked points of the topological model of $\Gamma$. Such singularity is resolved in the partial normalization and therefore there are no extra conditions to be satisfied by the dressed wave functions at $\alpha_6$.

The relation between the coordinate $\lambda$ in the plane curve representation and the coordinate $\zeta$ introduced in Definition \ref{def:loccoor} may be easily worked out at each component of
$\Gamma_{T,\mbox{\scriptsize red}}$.
On $\Gamma_{1}$, we have 3 real ordered marked points $P_m$, $m\in [3]$, with $\zeta$--coordinates: $\zeta(P_1)=0<\zeta(P_2)=1<\zeta(P_3)=\infty$.
Comparing with (\ref{eq:lines}) we then easily conclude that
\[
\lambda = \frac{\lambda_5 (\kappa_4-\kappa_3) \zeta +(\kappa_3-\lambda_5)\kappa_4}{(\kappa_4-\kappa_3) \zeta +\kappa_3-\lambda_5}.
\]
Similarly, on $\Sigma_{1}$, we have 3 real ordered marked points $Q_m$, $m\in [3]$, with $\zeta$--coordinates: $\zeta(Q_1)=0<\zeta(Q_2)=1<\zeta(Q_3)=\infty$.
Comparing with (\ref{eq:lines}) we then easily conclude that
\[
\lambda = \frac{\kappa_1 (\lambda_5-\kappa_2) \zeta +(\kappa_2-\kappa_1)\lambda_5}{(\lambda_5-\kappa_2) \zeta +\kappa_2-\kappa_1}.
\]
$\Gamma_{T,\mbox{\scriptsize red}}$ is represented by the reducible plane curve $\Pi_0(\lambda,\mu)=0$, with
\begin{equation}
\label{eq:curveGr24}
\Pi_0(\lambda,\mu)=\mu\cdot\big(\mu- (\lambda - \kappa_3)(\lambda-\kappa_4)\big)\cdot\big(\mu-c_1 (\lambda - \kappa_1)(\lambda-\kappa_2)\big).
\end{equation}
and is a rational degeneration of the genus 3 $\mathtt M$--curve $\Gamma_{\varepsilon}$ ($0<\varepsilon \ll 1$):
\begin{equation}
\label{eq:curveGr24_pert}
\Gamma_{\varepsilon} \; : \quad\quad \Pi(\lambda, \mu;\varepsilon)=  \Pi_0(\lambda,\mu)-\varepsilon^2\left(\lambda -\lambda_6\right)^2=0.
\end{equation}

\begin{remark}
The plane curve representation for a given cell is not unique. For example, a rational spectral curve $\Gamma_{T,\mbox{\scriptsize red}}$ for soliton data in ${\mathcal S}_{34}^{\mbox{\tiny TNN}}$ can be also represented as the union of one quadric and two lines. 
\end{remark}

\subsection{The KP divisor on $\Gamma_{T,\mbox{\scriptsize red}}$}\label{sec:div_24}
We now construct the wave function and the KP divisor on $\Gamma_{T,\mbox{\scriptsize red}}$. 
For this example, the KP wave function  may take only three possible values at the marked points:
\begin{equation}
\label{eq:hat_psi}
\hat \psi_l (\vec t) = \frac{{\mathfrak D} e^{\theta_l(\vec t)}}{{\mathfrak D} e^{\theta_l(\vec t_0)}}, \ \ l=1,3,4,
\end{equation}
where $\theta_l(\vec t) = \kappa_l x+  \kappa_l^2 y+ \kappa_l^3 t$. In Figure~\ref{fig:Gr24_net} [bottom left] we show which double point carries which of the above values of $\hat \psi$. At each marked point the value $\hat \psi_l (\vec t)$ is independent on the choice of local coordinates on the components, {\sl i.e.} of the orientation in the network. 

The  local coordinate of each divisor point may be computed using (\ref{eq:formula_div}).
On $\Gamma=\Gamma_{T,\mbox{\scriptsize red}}$, the KP divisor $\DKP$ consists of the degree $k=2$ Sato divisor $(\gamma_{S,1} ,\gamma_{S,2} )=(\gamma_{S,1} (\vec t_0),\gamma_{S,2} (\vec t_0))$ defined in (\ref{eq:Satodiv}) and of $1$ simple pole $\gamma_{1}=\gamma_{1} (\vec t_0)$ belonging to the intersection of 
$\Gamma_{1}$ with the union of the finite ovals. In the local coordinates induced by the orientation of the network (see Definition \ref{def:loccoor}), we have
\begin{equation}\label{eq:ex_div1}
\zeta(\gamma_{S,1}) +\zeta(\gamma_{S,2}) = {\mathfrak w}_1 (\vec t_0), \quad  \zeta(\gamma_{S,1})\zeta(\gamma_{S,2}) = -{\mathfrak w}_2 (\vec t_0), \quad
\zeta(\gamma_{1}) = \frac{w_{24} {\mathfrak D} e^{\theta_4(\vec t_0)}}{{\mathfrak D} e^{\theta_3(\vec t_0)}+ w_{24} 
{\mathfrak D} e^{\theta_4(\vec t_0)}}.
\end{equation}
It is easy to verify that ${\mathfrak D} e^{\theta_1(\vec t)}$, ${\mathfrak D} e^{\theta_4(\vec t)}>0$ and ${\mathfrak D} e^{\theta_2(\vec t)} = - \frac{w_{23}}{w_{13}}{\mathfrak D} e^{\theta_1(\vec t)}$. Therefore, for generic soliton data $[A]\in {\mathcal S}_{34}^{\mbox{\tiny TNN}}$, the KP--II pole divisor configuration is one of the two shown in Figure \ref{fig:Gr24_net} [bottom left]:
\begin{enumerate}
\item If ${\mathfrak D} e^{\theta_3(\vec t_0)}>0$, then $\gamma_{S,1} \in \Omega_1$, $\gamma_{S,2} \in \Omega_2$ and $\gamma_{1} \in \Omega_3$. One such configuration is illustrated by triangles in the Figure;
\item If ${\mathfrak D} e^{\theta_3(\vec t_0)}<0$, then $\gamma_{S,1} \in \Omega_1$, $\gamma_{S,2} \in \Omega_3$ and $\gamma_{1} \in \Omega_2$. One such configuration is illustrated by squares in the Figure.
\end{enumerate}
As expected, there is exactly one KP divisor point in each finite oval of $\Gamma$, where we use the counting rule established in \cite{AG1} for non--generic soliton data satisfying ${\mathfrak D} e^{\theta_3(\vec t_0)}=0$.

\subsection{The effect of Postnikov moves and reductions on the KP divisor}

\begin{figure}
\centering{
\includegraphics[width=0.4\textwidth]{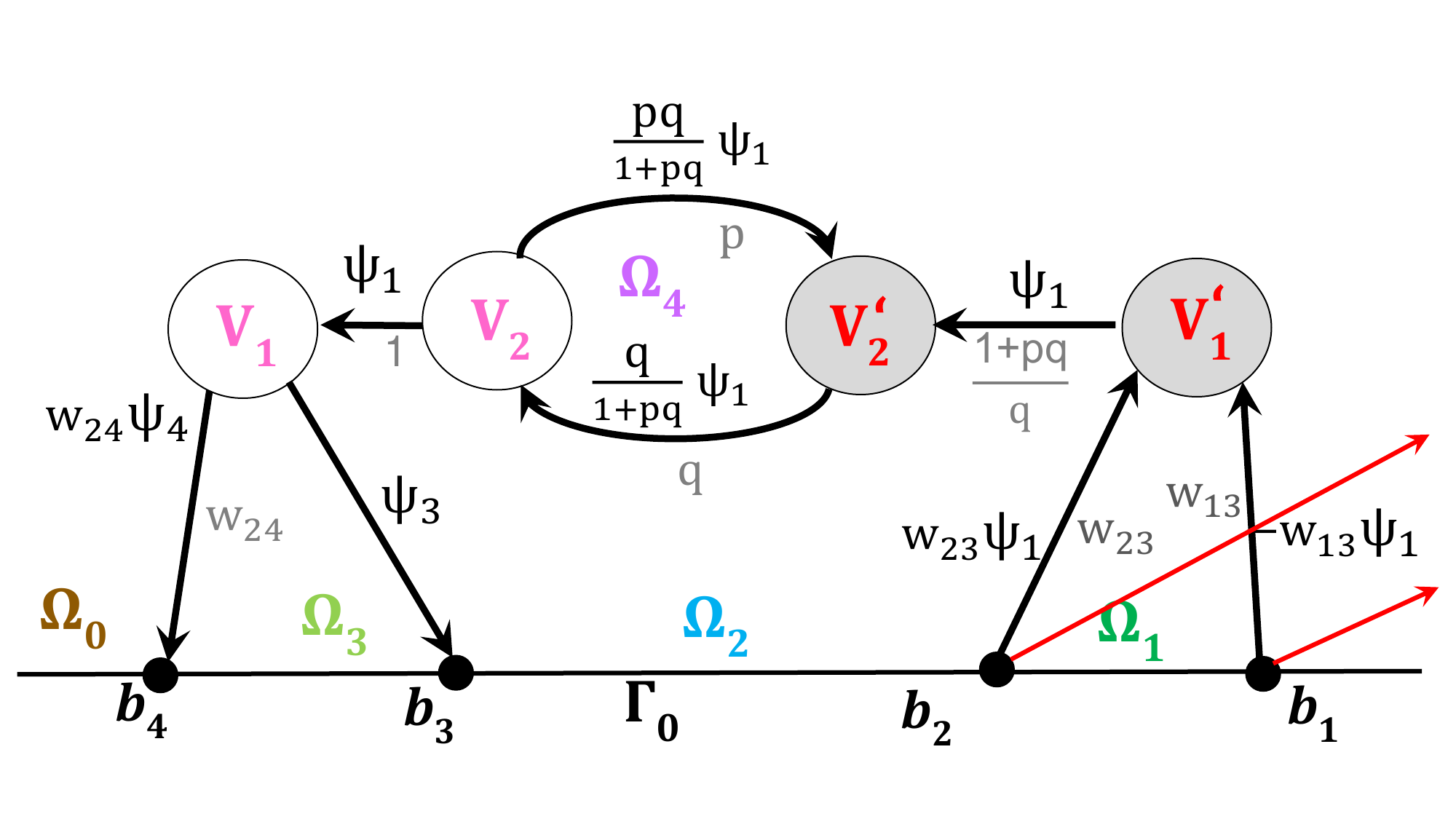}
\hfill
\includegraphics[width=0.4\textwidth]{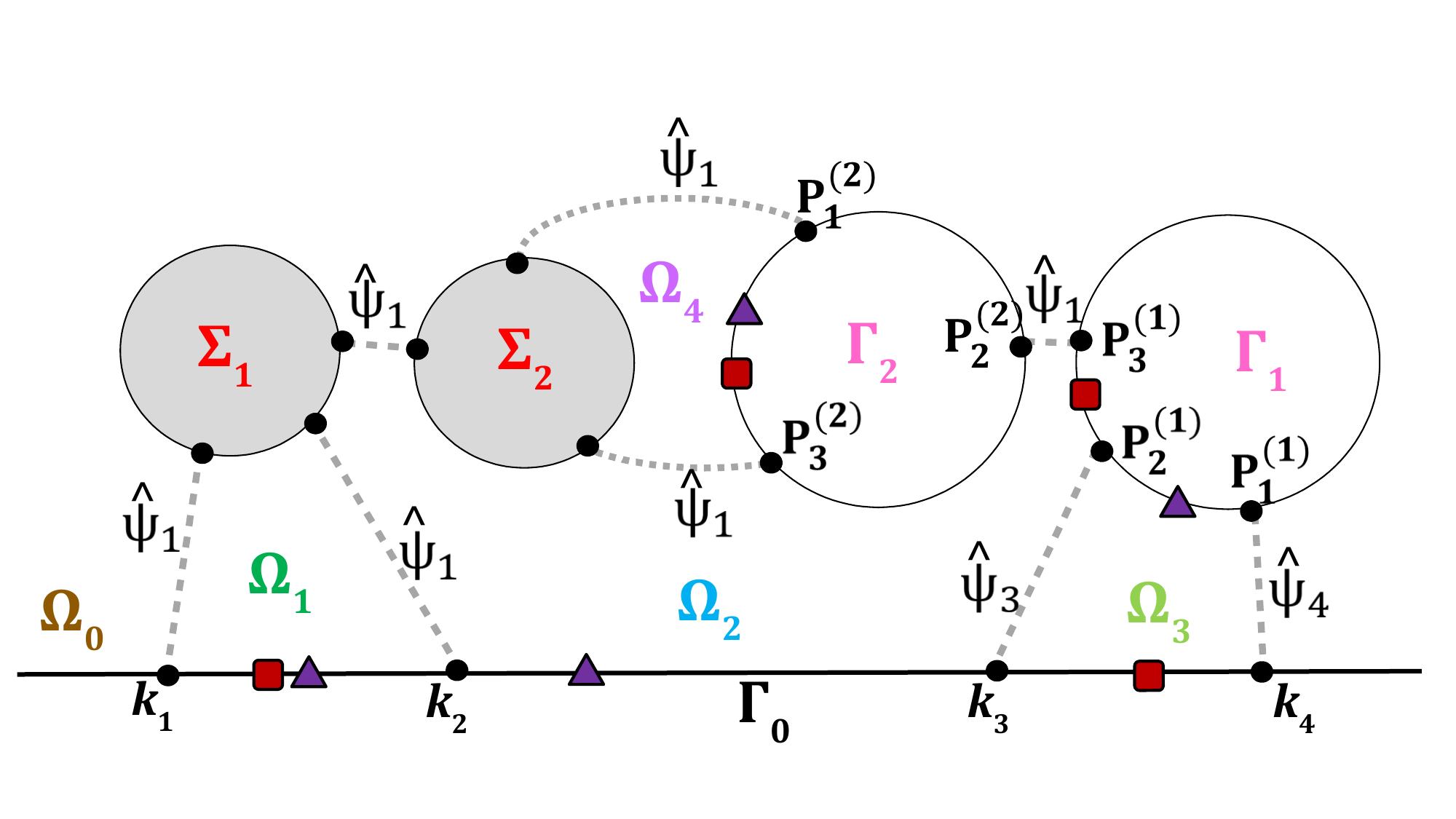}
\includegraphics[width=0.4\textwidth]{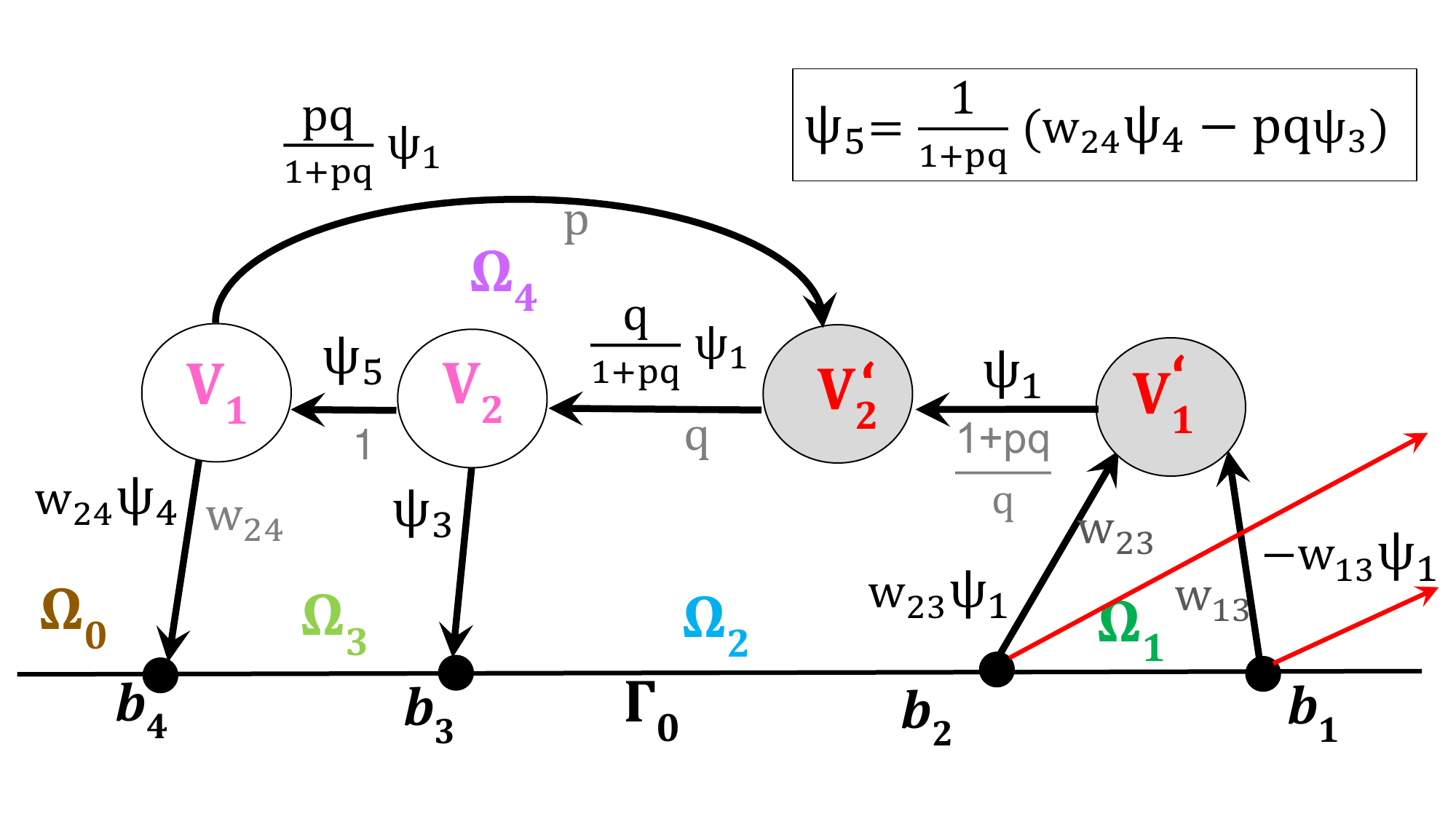}
\hfill
\includegraphics[width=0.4\textwidth]{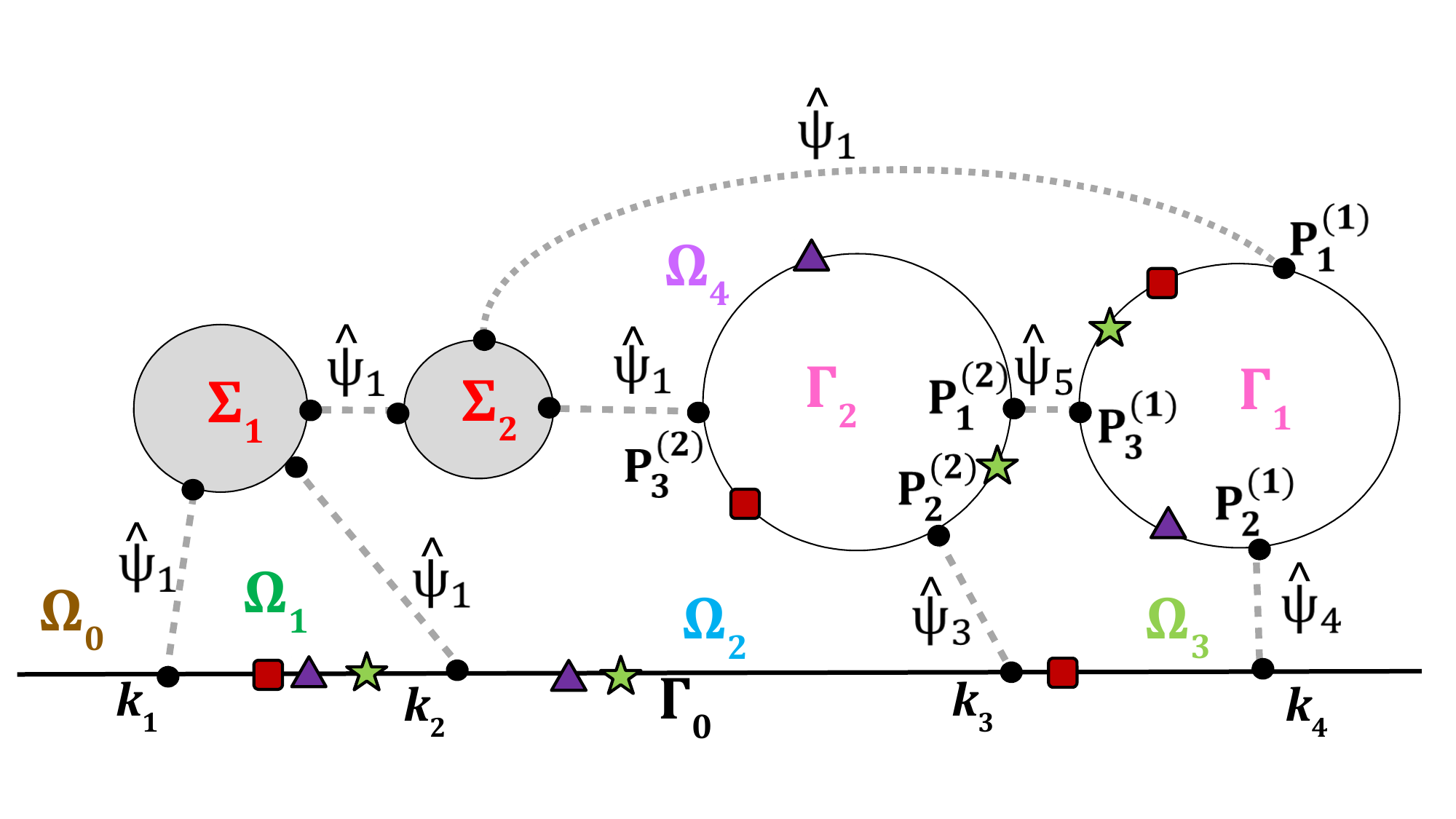}
}
\caption{\small{\sl Left: ${\mathcal N}_{\mbox{\scriptsize par}}$ [top] is obtained from the reduced Le-network ${\mathcal N}_{T,{\mbox{\scriptsize red}}}$ of Figure \ref{fig:Gr24_net} applying the parallel edge unreduction, whereas ${\mathcal N}_{\mbox{\scriptsize flip}}$ [bottom] is obtained from ${\mathcal N}_{\mbox{\scriptsize par}}$ applying a flip move.
Right: The partial normalization of the corresponding curves ${\Gamma}_{\mbox{\scriptsize par}}$ [top], ${\Gamma}_{\mbox{\scriptsize flip}}$ [bottom] and the possible divisor configurations.
}}
\label{fig:Gr24_moves}
\end{figure}

Next we show the effect of moves and reductions on the divisor position. 

We first apply a parallel edge unreduction to ${\mathcal N}_{T,{\mbox{\scriptsize red}}}$ (Figure \ref{fig:Gr24_net}) and obtain the network  ${\mathcal N}_{{\mbox{\scriptsize par}}}$ in Figure \ref{fig:Gr24_moves}[top,left]. We remark that we have a gauge freedom in assigning the weights to the edges involved in this transformation provided that $p,q>0$. The values of the unnormalized dressed wave function are shown in the Figure. The corresponding curve,  ${\Gamma}_{\mbox{\scriptsize par}}$, is presented in Figure \ref{fig:Gr24_moves}[top,right]. By construction the KP divisor $\DKP$ consists of the degree $k=2$ Sato divisor $(\gamma_{S,1} ,\gamma_{S,2} )=(\gamma_{S,1} (\vec t_0),\gamma_{S,2} (\vec t_0))$ computed in (\ref{eq:ex_div1}) and of the simple poles $\gamma_{i}=\gamma_{i} (\vec t_0)$ belonging to the intersection of 
$\Gamma_{i}$, $i=1,2$, with the union of the finite ovals. In the local coordinates induced by the orientation of the network (Definition \ref{def:loccoor}), we have
\begin{equation}\label{eq:ex_div2}
\zeta(\gamma_{1}) = \frac{w_{24} {\mathfrak D} e^{\theta_4(\vec t_0)}}{{\mathfrak D} e^{\theta_3(\vec t_0)}+ w_{24} 
{\mathfrak D} e^{\theta_4(\vec t_0)}}, \quad\quad   \zeta(\gamma_{2}) = -pq.
\end{equation}
For generic soliton data $[A]\in {\mathcal S}_{34}^{\mbox{\tiny TNN}}$, the KP--II pole divisor configurations are shown in Figure \ref{fig:Gr24_moves} [top,right]:
\begin{enumerate}
\item $\gamma_{2} \in \Omega_4$ independently of the sign of ${\mathfrak D} e^{\theta_3(\vec t_0)}>0$;
\item If ${\mathfrak D} e^{\theta_3(\vec t_0)}>0$, then $\gamma_{S,1} \in \Omega_1$, $\gamma_{S,2} \in \Omega_2$ and $\gamma_{1} \in \Omega_3$. One such configuration is illustrated by triangles in the Figure;
\item If ${\mathfrak D} e^{\theta_3(\vec t_0)}<0$, then $\gamma_{S,1} \in \Omega_1$, $\gamma_{S,2} \in \Omega_3$ and $\gamma_{1} \in \Omega_2$. One such configuration is illustrated by squares in the Figure.
\end{enumerate}
Again, for any given $[A]\in {\mathcal S}_{34}^{\mbox{\tiny TNN}}$, there is exactly one KP divisor point in each finite oval.

Next we apply a flip move to ${\mathcal N}_{\mbox{\scriptsize par}}$ and obtain the network  ${\mathcal N}_{{\mbox{\scriptsize flip}}}$ in Figure \ref{fig:Gr24_moves}[bottom,left] and show the values on the un--normalized dressed wave function in the Figure. The corresponding curve,  ${\Gamma}_{\mbox{\scriptsize flip}}$, is presented in Figure \ref{fig:Gr24_moves}[bottom,right]. By construction the KP divisor $\DKP$ consists of the degree $k=2$ Sato divisor $(\gamma_{S,1} ,\gamma_{S,2} )=(\gamma_{S,1} (\vec t_0),\gamma_{S,2} (\vec t_0))$ computed in (\ref{eq:ex_div1}) and of the simple poles $\tilde \gamma_{i}=\tilde \gamma_{i} (\vec t_0)$ belonging to the intersection of 
$\Gamma_{i}$, $i=1,2$, with the union of the finite ovals. In the local coordinates induced by the orientation of the network, we have
\begin{equation}\label{eq:ex_div3}
\zeta(\gamma_{1}) = \frac{pq \left({\mathfrak D} e^{\theta_3(\vec t_0)} + w_{24} {\mathfrak D} e^{\theta_4(\vec t_0)} \right)}{{pq\mathfrak D} e^{\theta_3(\vec t_0)}- w_{24} 
{\mathfrak D} e^{\theta_4(\vec t_0)}}, \quad\quad   \zeta(\gamma_{2}) = -\frac{pq{\mathfrak D} e^{\theta_3(\vec t_0)}- w_{24} 
{\mathfrak D} e^{\theta_4(\vec t_0)}}{{\mathfrak D} e^{\theta_3(\vec t_0)} + w_{24} {\mathfrak D} e^{\theta_4(\vec t_0)}}.
\end{equation}
For generic soliton data $[A]\in {\mathcal S}_{34}^{\mbox{\tiny TNN}}$, the KP--II pole divisor configuration is one of the three shown in Figure \ref{fig:Gr24_net} [right]:
\begin{enumerate}
\item If ${\mathfrak D} e^{\theta_3(\vec t_0)}>0$  and $pq{\mathfrak D} e^{\theta_3(\vec t_0)}> w_{24} 
{\mathfrak D} e^{\theta_4(\vec t_0)}$, then $\gamma_{S,1} \in \Omega_1$, $\gamma_{S,2} \in \Omega_2$,
 and $\tilde \gamma_{1} \in \Omega_3$ and $\tilde \gamma_{2} \in \Omega_4$. One such configuration is illustrated by triangles in the Figure;
\item If ${\mathfrak D} e^{\theta_3(\vec t_0)}>0$  and $pq{\mathfrak D} e^{\theta_3(\vec t_0)}< w_{24} 
{\mathfrak D} e^{\theta_4(\vec t_0)}$, then $\gamma_{S,1} \in \Omega_1$, $\gamma_{S,2} \in \Omega_2$,
 and $\tilde \gamma_{1} \in \Omega_4$ and $\tilde \gamma_{2} \in \Omega_3$. One such configuration is illustrated by stars in the Figure;
\item If ${\mathfrak D} e^{\theta_3(\vec t_0)}<0$, then $\gamma_{S,1} \in \Omega_1$, $\gamma_{S,2} \in \Omega_3$, $\tilde \gamma_{1} \in \Omega_4$ and $\tilde \gamma_{2} \in \Omega_2$. One such configuration is illustrated by squares in the Figure.
\end{enumerate}
Again there is exactly one KP divisor point in each finite oval.

\section{Example: effect of the square move on the KP divisor for soliton data in $Gr^{\mbox{\tiny TP}}(2,4)$}\label{sec:ex_Gr24top}
\begin{figure}
\centering{
\includegraphics[width=0.4\textwidth]{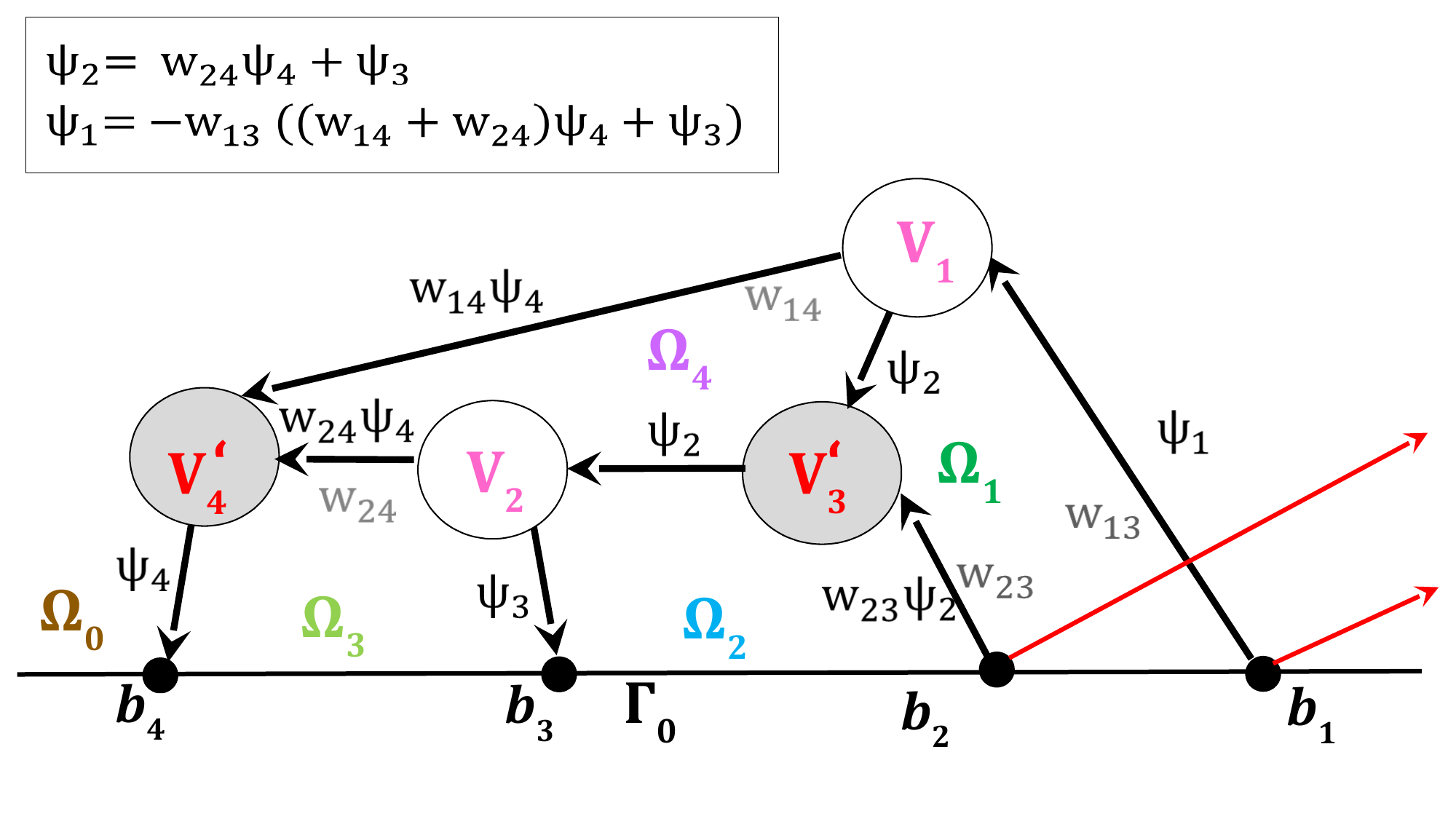}
\hfill
\includegraphics[width=0.4\textwidth]{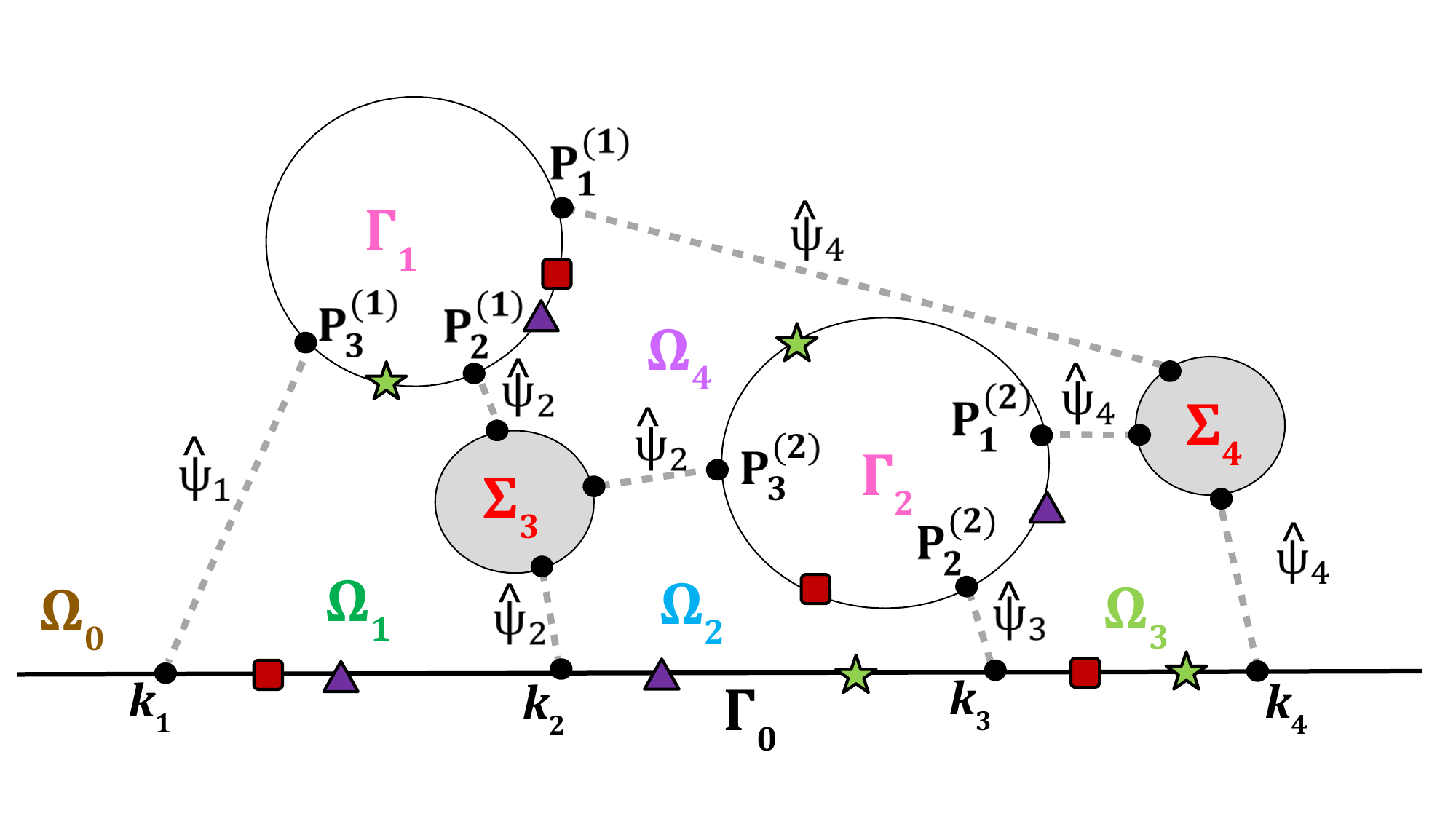}
\includegraphics[width=0.4\textwidth]{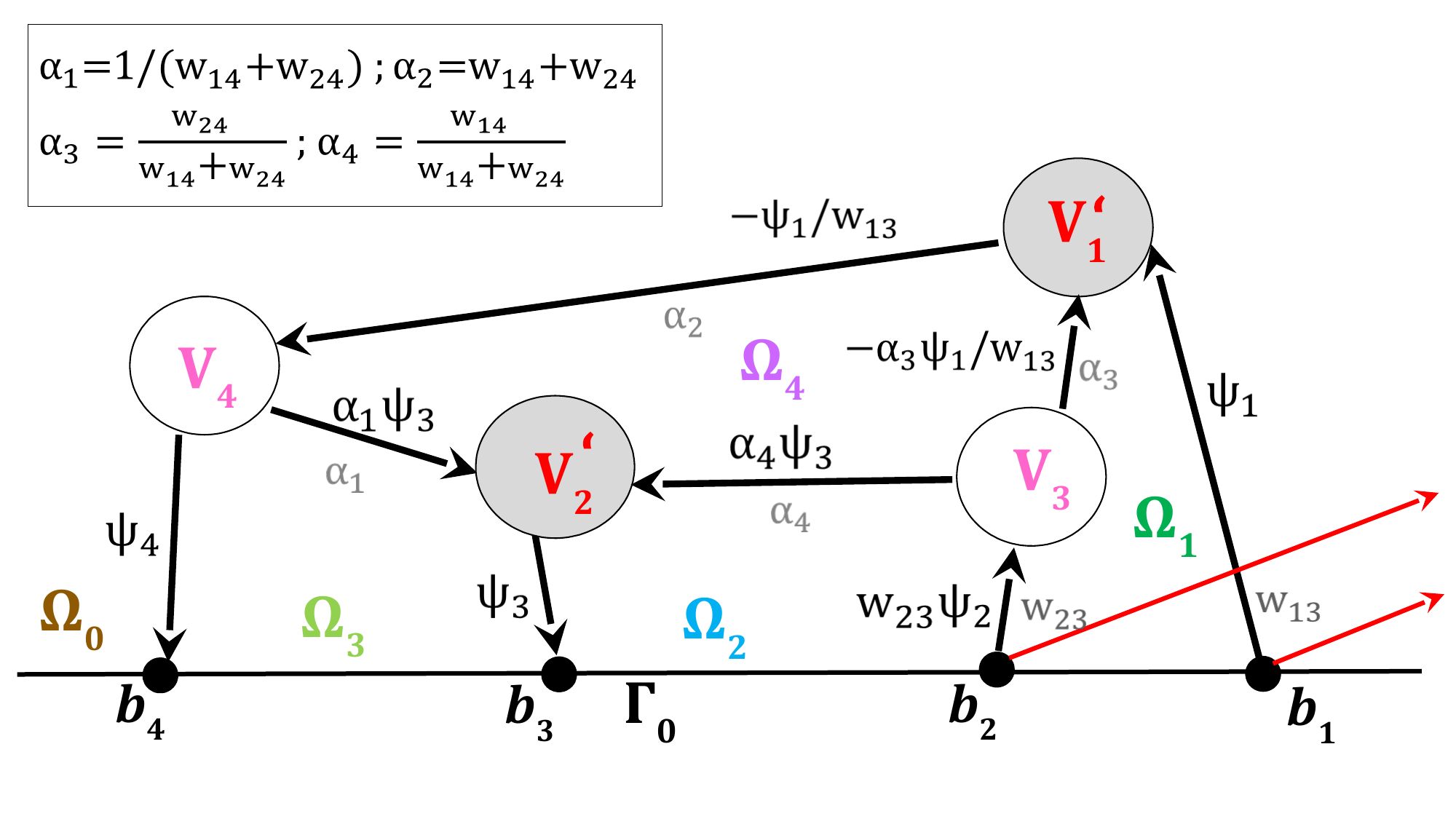}
\hfill
\includegraphics[width=0.4\textwidth]{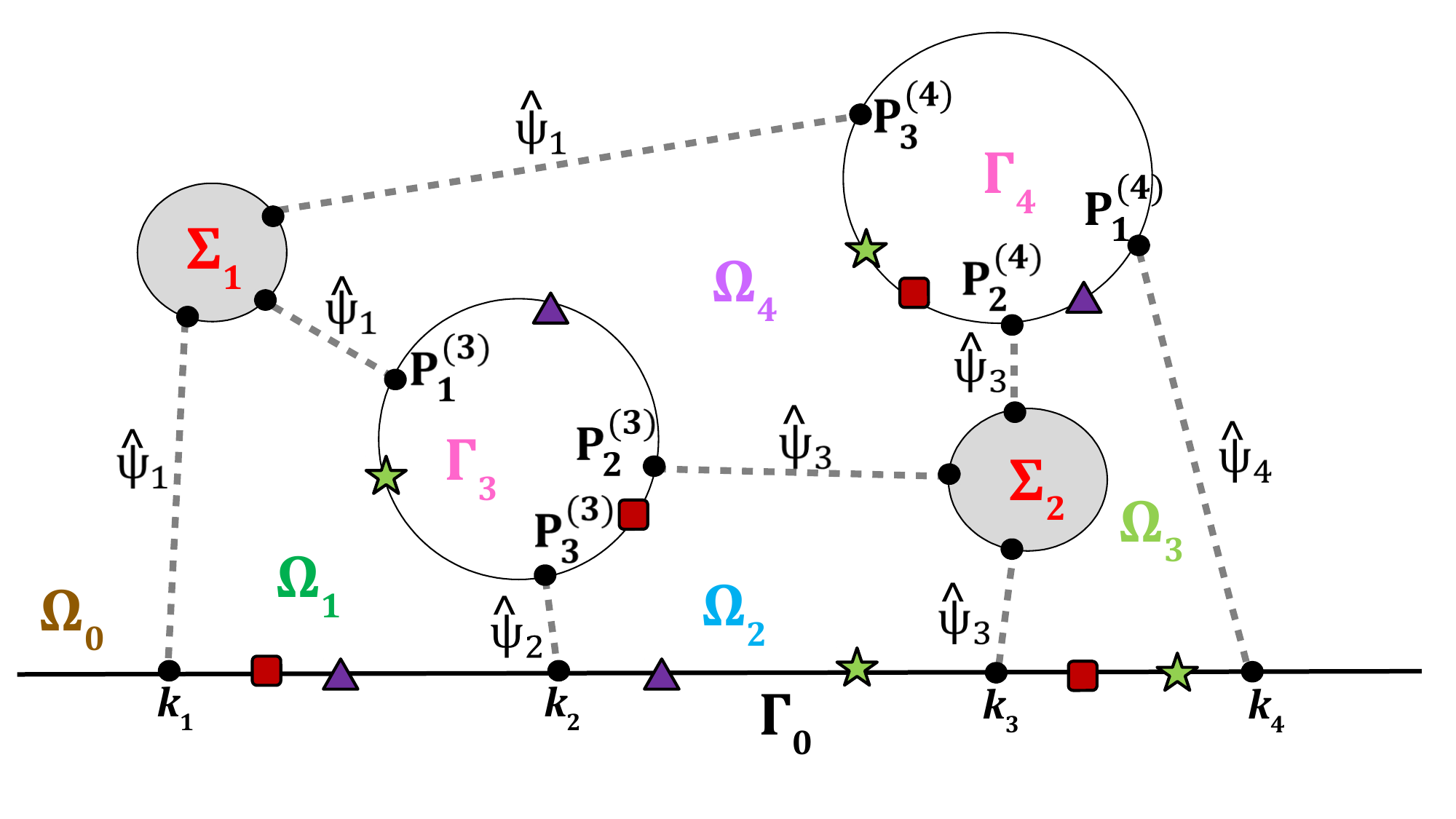}
}
\caption{\small{\sl Left: ${\mathcal N}_{\mbox{\scriptsize sq-mv}}$ [bottom] is obtained from the reduced Le-network ${\mathcal N}_{T,{\mbox{\scriptsize top}}}$ [top] applying a square move. Right: The partial normalization of the corresponding curves ${\Gamma}_{\mbox{\scriptsize top}}$ [top], ${\Gamma}_{\mbox{\scriptsize sq-mv}}$ [bottom] and the possible divisor configurations.}}
\label{fig:Gr24_square}
\end{figure}

The simplest network to which the square move is applicable is the reduced Le--network ${\mathcal N}_{T,{\mbox{\scriptsize top}}}$ associated to soliton data in $Gr^{\mbox{\tiny TP}} (2,4)$ and corresponds to the change of colour of all internal vertices. The latter transformation may be also interpreted as a self--dual transformation in $Gr^{\mbox{\tiny TP}} (2,4)$.

The reduced networks and the topological models of the curves before and after the square move for soliton data in $Gr^{\mbox{\tiny TP}} (2,4)$ are shown in Figure \ref{fig:Gr24_square}.
In \cite{A2,AG3} we have already computed a plane curve representation and its desingularization, and discussed the divisor configurations on ${\mathcal N}_{T,{\mbox{\scriptsize top}}}$. The duality transformation implies that the same plane curve representation is associated both to ${\Gamma}_{\mbox{\scriptsize top}}$ and ${\Gamma}_{\mbox{\scriptsize sq-mv}}$, by conveniently relabeling $\mathbb{CP}^1$ components from $\Sigma_i,\Gamma_j$ to $\Gamma_i,\Sigma_j$ (compare the topological models of curves in Figure \ref{fig:Gr24_square}).

Here we just compute the KP divisor after the square move and refer to \cite{AG2} for more details on this example.
The values on the dressed edge wave functions are shown in Figure \ref{fig:Gr24_square}. 

By definition, the Sato divisor is not affected by the square move since the Darboux transformation is the same.
Therefore the degree $k=2$ Sato divisor $(\gamma_{S,1} ,\gamma_{S,2} )=(\gamma_{S,1} (\vec t_0),\gamma_{S,2} (\vec t_0))$ is obtained solving
$\zeta(\gamma_{S,1}) +\zeta(\gamma_{S,2}) = {\mathfrak w}_1 (\vec t_0)$, $\zeta(\gamma_{S,1})\zeta(\gamma_{S,2}) = -{\mathfrak w}_2 (\vec t_0)$, 
where the Darboux transformation $\mathfrak D = \partial_x^2 -{\mathfrak w}_1 (\vec t) \partial_x -{\mathfrak w}_2 (\vec t)$ is generated by the heat hierarchy solutions
$f^{(1)} (\vec t) = e^{\theta_1(\vec t)}-w_{13} e^{\theta_3(\vec t)}-w_{13}(w_{14}+w_{24}) e^{\theta_4(\vec t)}$, $
f^{(2)} (\vec t) = e^{\theta_2(\vec t)}+w_{23} e^{\theta_3(\vec t)}+w_{23}w_{24} e^{\theta_4(\vec t)}$.

On $\Gamma_{\mbox{\scriptsize top}}$, $\DKP = (\gamma_{S,1} ,\gamma_{S,2},\gamma_1,\gamma_2 )$
where the simple poles $\gamma_{i}=\gamma_{i} (\vec t_0)$ belong to the intersection of 
$\Gamma_{i}$, $i=1,2$, with the union of the finite ovals. In the local coordinates induced by the orientation of ${\mathcal N}_{T,{\mbox{\scriptsize top}}}$, we have
\begin{equation}\label{eq:ex_div_red}
\zeta(\gamma_{1}) = \frac{w_{14} {\mathfrak D} e^{\theta_4(\vec t_0)}}{{\mathfrak D} e^{\theta_3(\vec t_0)}+(w_{14}+ w_{24})
{\mathfrak D} e^{\theta_4(\vec t_0)}}, \quad\quad   \zeta(\gamma_{2}) = \frac{w_{24}{\mathfrak D} e^{\theta_4(\vec t_0)}}{{\mathfrak D} e^{\theta_3(\vec t_0)}+ w_{24}
{\mathfrak D} e^{\theta_4(\vec t_0)}}.
\end{equation}
It is straightforward to verify that ${\mathfrak D} e^{\theta_1(\vec t)}, {\mathfrak D} e^{\theta_4(\vec t)}>0$ for all $\vec t$.
As observed in \cite{A2,AG2}, there are three possible generic configurations of the KP--II pole divisor depending on the signs of 
${\mathfrak D} e^{\theta_2(\vec t_0)}$ and ${\mathfrak D} e^{\theta_3(\vec t_0)}$ (see also Figure \ref{fig:Gr24_square} [top,right]):
\begin{enumerate}
\item If ${\mathfrak D} e^{\theta_2(\vec t_0)}<0<{\mathfrak D} e^{\theta_3(\vec t_0)}$, then $\gamma_{S,1} \in \Omega_1$, $\gamma_{S,2} \in \Omega_2$, $\gamma_{1} \in \Omega_4$ and $\gamma_{2} \in \Omega_3$. One such configuration is illustrated by triangles in the Figure;
\item If ${\mathfrak D} e^{\theta_2(\vec t_0)},{\mathfrak D} e^{\theta_3(\vec t_0)}<0$, then $\gamma_{S,1} \in \Omega_1$, $\gamma_{S,2} \in \Omega_3$, $\gamma_{1} \in \Omega_4$ and $\gamma_{2} \in \Omega_2$. One such configuration is illustrated by squares in the Figure;
\item If ${\mathfrak D} e^{\theta_3(\vec t_0)}<0<{\mathfrak D} e^{\theta_2(\vec t_0)}$, then $\gamma_{S,1} \in \Omega_2$, $\gamma_{S,2} \in \Omega_3$, $\gamma_{1} \in \Omega_1$ and $\gamma_{2} \in \Omega_4$. One such configuration is illustrated by stars in the Figure.
\end{enumerate}
On $\Gamma_{\mbox{\scriptsize sq-mv}}$, $\DKP = (\gamma_{S,1} ,\gamma_{S,2},\gamma_3,\gamma_4 )$
where the simple poles $\gamma_{i}=\gamma_{i} (\vec t_0)$ belong to the intersection of 
$\Gamma_{i}$, $i=3,4$, with the union of the finite ovals. In the local coordinates induced by the orientation of ${\mathcal N}_{\mbox{\scriptsize sq-mv}}$, we have
\begin{equation}\label{eq:ex_div_square}
\zeta(\gamma_{3}) = \frac{w_{24} \left( {\mathfrak D} e^{\theta_3(\vec t_0)} + (w_{14}+ w_{24})
{\mathfrak D} e^{\theta_4(\vec t_0)}\right)}{(w_{14}+ w_{24})\left({\mathfrak D} e^{\theta_3(\vec t_0)}+ w_{24}
{\mathfrak D} e^{\theta_4(\vec t_0)}\right)}, \quad\quad   \zeta(\gamma_{4}) = \frac{(w_{14}+w_{24}){\mathfrak D} e^{\theta_4(\vec t_0)}}{{\mathfrak D} e^{\theta_3(\vec t_0)}+ (w_{14}+w_{24})
{\mathfrak D} e^{\theta_4(\vec t_0)}}.
\end{equation}
The three possible generic configurations of the KP--II pole divisor after the square move are  then (see also Figure \ref{fig:Gr24_square} [bottom,right]):
\begin{enumerate}
\item If ${\mathfrak D} e^{\theta_2(\vec t_0)}<0<{\mathfrak D} e^{\theta_3(\vec t_0)}$, then $\gamma_{S,1} \in \Omega_1$, $\gamma_{S,2} \in \Omega_2$, $\gamma_{3} \in \Omega_4$ and $\gamma_{4} \in \Omega_3$. One such configuration is illustrated by triangles in the Figure;
\item If ${\mathfrak D} e^{\theta_2(\vec t_0)},{\mathfrak D} e^{\theta_3(\vec t_0)}<0$, then $\gamma_{S,1} \in \Omega_1$, $\gamma_{S,2} \in \Omega_3$, $\gamma_{3} \in \Omega_2$ and $\gamma_{4} \in \Omega_4$. One such configuration is illustrated by squares in the Figure;
\item If ${\mathfrak D} e^{\theta_3(\vec t_0)}<0<{\mathfrak D} e^{\theta_2(\vec t_0)}$, then $\gamma_{S,1} \in \Omega_2$, $\gamma_{S,2} \in \Omega_3$, $\gamma_{3} \in \Omega_1$ and $\gamma_{4} \in \Omega_4$. One such configuration is illustrated by stars in the Figure.
\end{enumerate}
The transformation rule of the divisor points is in agreement with the effect of the square move discussed in Section \ref{sec:moves_reduc}.
As expected, for any given $[A]\in Gr^{\mbox{\tiny TP}}(2,4)$, there is exactly one KP divisor point in each finite oval. Non generic divisor configurations (squares) correspond either to
${\mathfrak D} e^{\theta_2(\vec t_0)}=0$ or to ${\mathfrak D} e^{\theta_3(\vec t_0)}=0$ since ${\mathfrak D} e^{\theta_2(\vec t)}
+w_{23}{\mathfrak D} e^{\theta_3(\vec t)}<0$, for all $\vec t$. We plan to discuss the exact definition of global parametrization in the case of non generic divisor configurations using resolution of singularities in a future publication, see also next Section~\ref{sec:global} for the case $Gr^{TP}(1,3)$.

\section{Amalgamation of positroid cells and divisor structure}\label{sec:amalg}

In \cite{FG1} Fock and Goncharov introduced amalgamation of cluster varieties, which has turned out to be relevant both for constructing integrable systems on Poisson cluster varieties \cite{KG} and for computation of scattering amplitudes on on-shell diagrams in $N=4$ SYM theory \cite{AGP1}. Amalgamation of positroid varieties admits a very simple representation in terms of simple operations on the corresponding plabic graphs. In our setting the vertices are those of the graph and the frozen ones are those at the boundary. Since the planarity property is essential, amalgamation is represented by compositions of the following elementary operations:
\begin{enumerate}
\item Disjoint union of a pair of planar graphs (see Figure~\ref{fig:amalg1}) corresponding to the direct sum of the corresponding Grassmannians via a map $Gr^{\mbox{\tiny{TNN}}}(k_1,n_1)\times Gr^{\mbox{\tiny{TNN}}}(k_2,n_2)\rightarrow Gr^{\mbox{\tiny{TNN}}}(k_1+k_2,n_1+n_2)$;
\item Defrosting of a pair of consecutive boundary vertices (see Figure~\ref{fig:amalg2})  corresponding to a projection map $Gr^{\mbox{\tiny{TNN}}}(k,n)\rightarrow Gr^{\mbox{\tiny{TNN}}}(k-1,n-2)$. 
\end{enumerate}

\begin{figure}
  \centering{\includegraphics[width=0.45\textwidth]{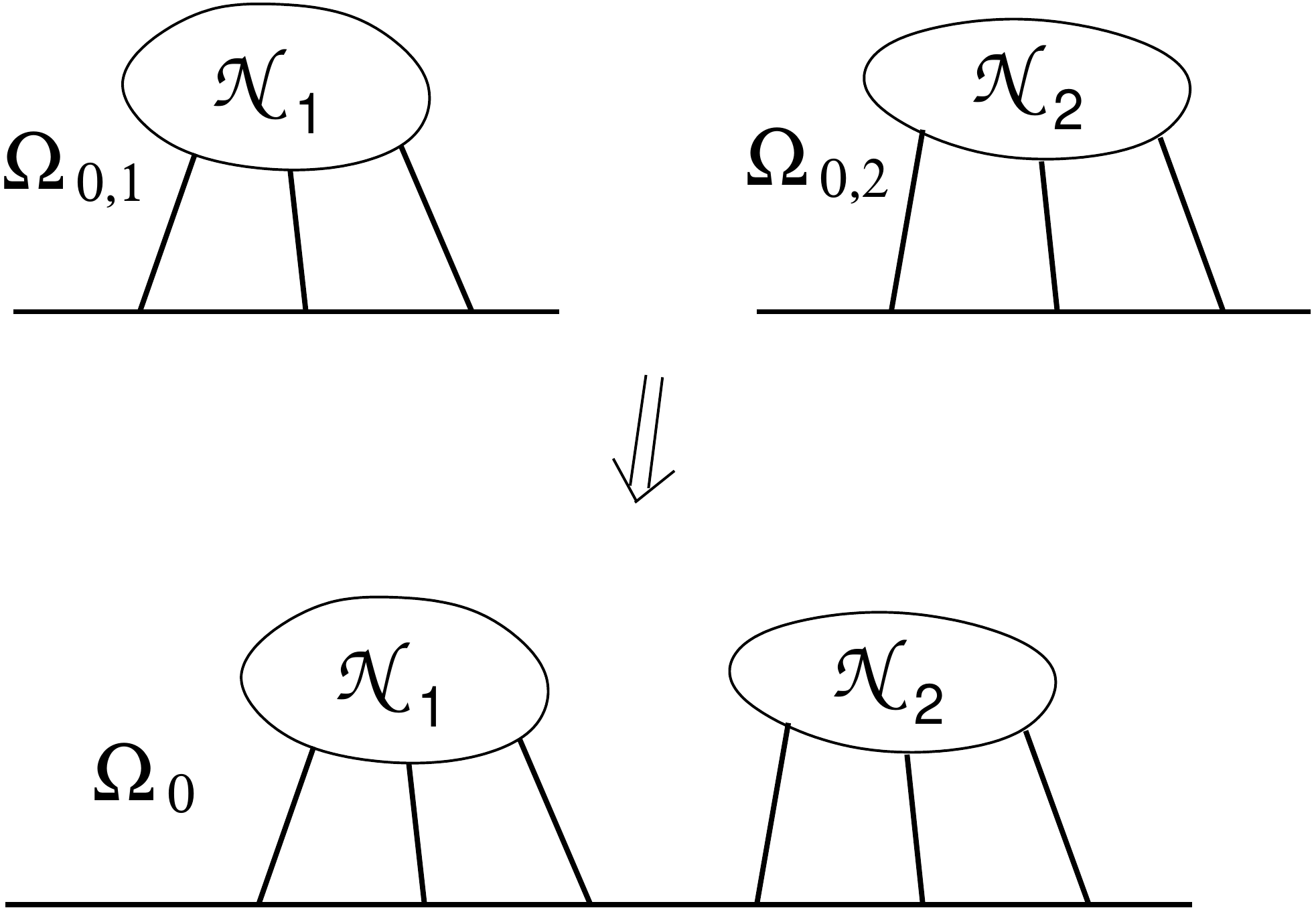}
	\hfill
	\includegraphics[width=0.45\textwidth]{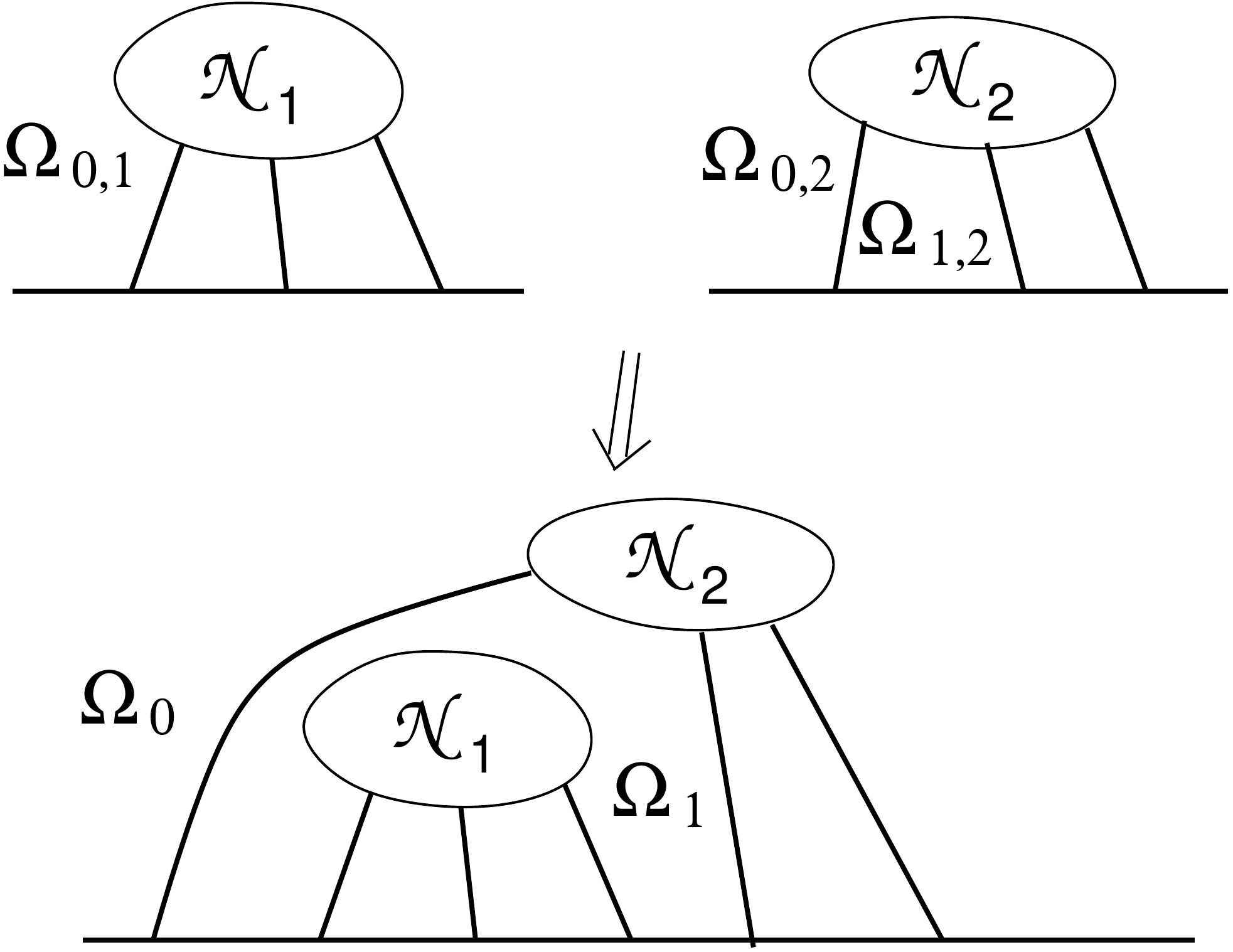}}
  \caption{\small{\sl The two possible ways to construct disjoint union of two given networks in the disk preserving planarity.}\label{fig:amalg1}}
\end{figure}
 
We recall that the soliton data consist of a point $[A]\in\GTNN$ and a set ordered real phases $\mathcal K=\{\kappa_1<\kappa_2<\ldots<\kappa_n\}$. Moreover, in our construction one face at the boundary of the disk plays a distinguished role since we place the essential singularity of the KP wave function in the corresponding oval in the curve. Let the initial soliton data be $[A_i]\in Gr^{\mbox{\tiny{TNN}}}(k_i,n_i)$, ${\mathcal K}_i=\{\kappa^{(i)}_1<\kappa^{(i)}_2<\ldots<\kappa^{(i)}_{n_i}\}$, $i=1,2$. Then we have exactly two ways to perform the disjoint union preserving the total non-negativity property:
\begin{enumerate}
\item All boundary vertices of one network precede all boundary vertices of the second one (see Figure~\ref{fig:amalg1} [left]). This situation occures when $\kappa^{(2)}_{n_2}< \kappa^{(1)}_{1}$. In this case the resulting infinite oval $\Omega_0$ is the union of the infinite ovals  $\Omega_{0,1}$, $\Omega_{0,2}$ of the initial networks and all finite ovals are not modified;
\item All boundary vertices of one network are located between two consecutive boundary vertices of the other one (see Figure~\ref{fig:amalg1} [right]). This situation occures when $\kappa^{(2)}_{j}< \kappa^{(1)}_{1}<\ldots <\kappa^{(1)}_{n_1}<\kappa^{(2)}_{j+1}$. Let us denote $\Omega_{1,2}$,  $\Omega_{1}$  respectively the finite oval  containing this pair of boundary vertices in the initial and final networks. In this case the infinite oval $\Omega_0$ coincides with $\Omega_{0,2}$, the infinite oval of the ``external'' network (${\mathcal N}_2$ in Figure~\ref{fig:amalg1} [right]), whereas $\Omega_1$ is built out of  $\Omega_{1,2}$ and of $\Omega_{0,1}$, the infinite oval of the  ``internal'' network  (${\mathcal N}_1$ in Figure~\ref{fig:amalg1} [right]). All other ovals are not modified.
\end{enumerate}

Let us remark that from the point of view of soliton solutions this procedure is non-trivial and generates some KP-II families of solutions discussed in the literature such as O-solitons and P-solitons \cite{CK}.

\begin{figure}
  \centering{\includegraphics[width=0.47\textwidth]{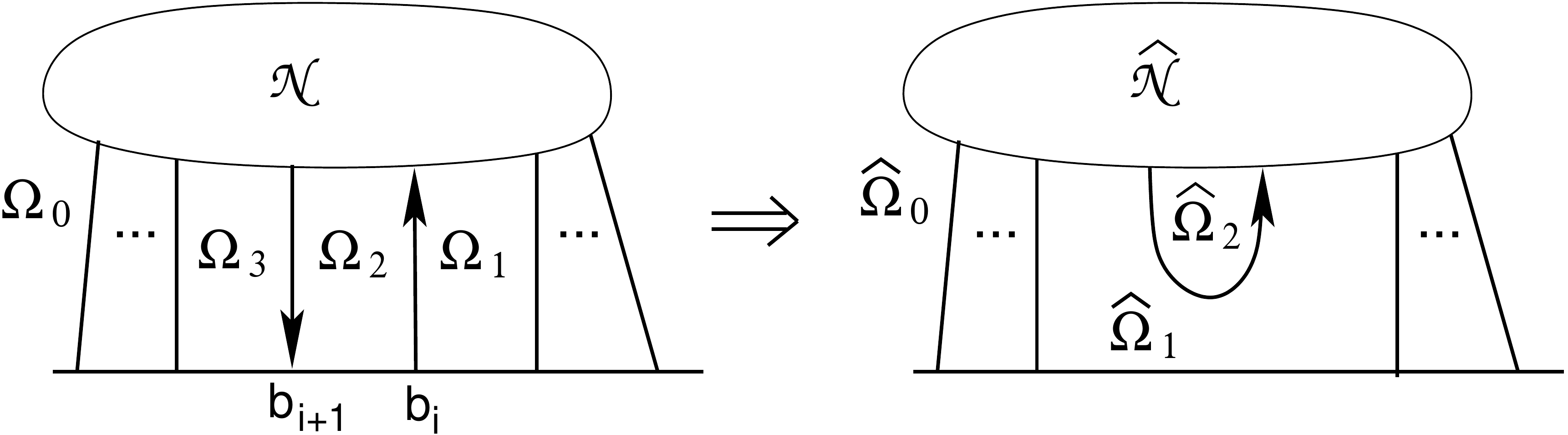} \hfill \includegraphics[width=0.47\textwidth]{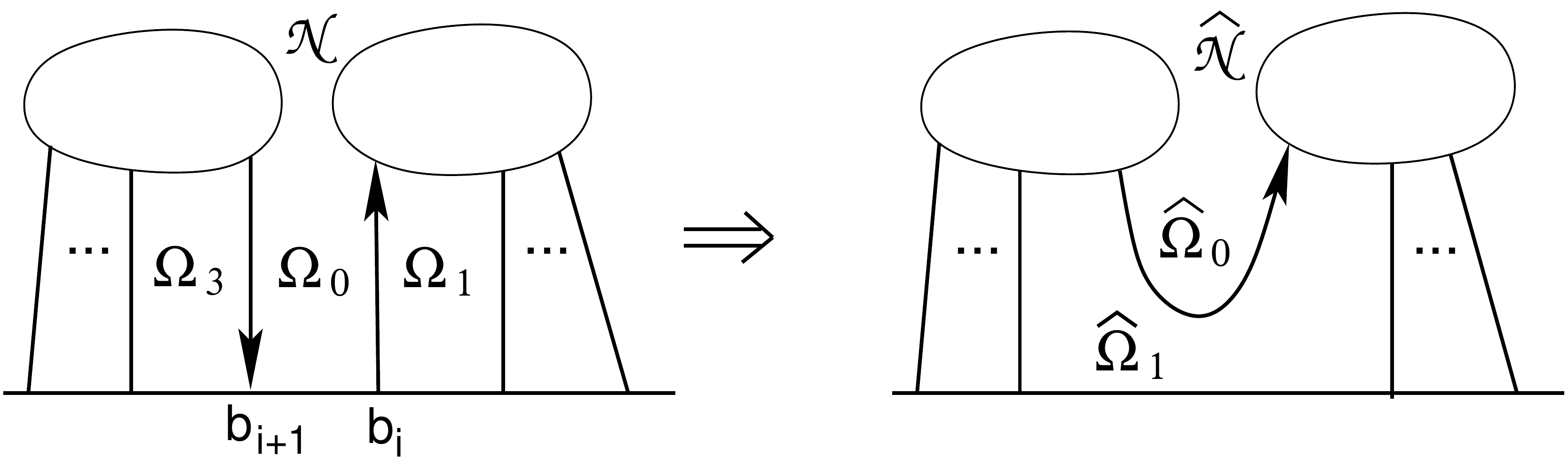}}
  
  \vspace{5mm}
   \includegraphics[width=0.47\textwidth]{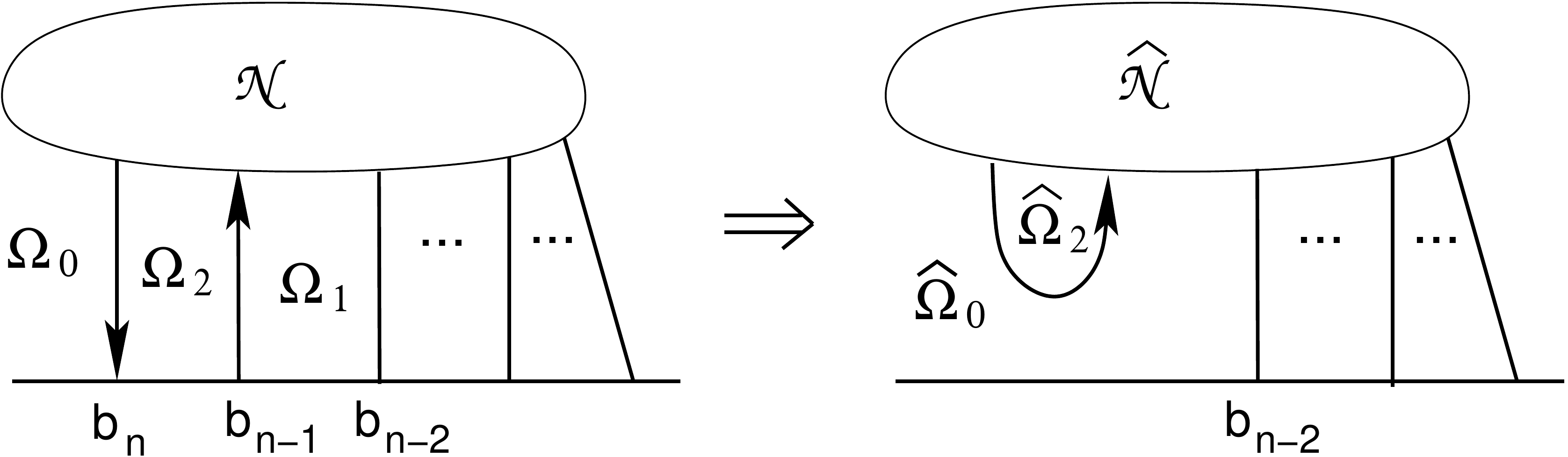}
  \caption{\small{\sl The projection procedure for defrostures of the graph preserving planarity.}\label{fig:amalg2}}
\end{figure}

Since we work only with planar directed graphs and we assume that any internal edge belongs to at least one path starting and ending at the boundary of the disk, in our setting defrosture corresponds to the elimination of two consecutive boundary vertices, one of which is a source  and the other one is a sink, and to gluing the resulting directed half-edges (see Figure~\ref{fig:amalg2}). Defrosture transforms the face $\Omega_2$ into the face $\hat\Omega_2$, and the faces $\Omega_1$, $\Omega_3$ are merged into the face $\hat\Omega_1$.

It is easy to check that any planar graph in the disk considered in our text can be obtained starting from several copies of Le-graphs associated with the small positive Grassmannians $Gr^{TP}(1,3)$, $Gr^{TP}(2,3)$ and $Gr^{TP}(1,2)$ in such a way that at any step planarity is preserved and any edge of the resulting graph belongs at least to one directed path starting and ending at the boundary of the disk. 

Let us now explain the effect of amalgamation on the total edge signature and on the divisor structure. At this aim we use Thereom~\ref{theo:sign_face} to compute the edge signatures of the faces of the amalgamated networks in tmers of that of the initial networks. As in the previous section let $\mathcal N$ be a plabic network representing a point in $\GTNN$, and for any given face $\Omega$, let the indices $\epsilon(\Omega)$ and $n_{\mbox{\scriptsize{white}}}(\Omega)$ respectively denote the edge signature and the number of white vertices $\partial \Omega$. Then the proof of the next Lemmas follows from:
\begin{equation}\label{eq:sign_face2}
\epsilon(\Omega) \; = \;  
\left\{ \begin{array}{ll}
n_{\mbox{\scriptsize{white}}}(\Omega) \; + \; 1 \quad \mod 2, & \quad \mbox{if } \Omega \mbox{ is a finite face}; \\
\\
n_{\mbox{\scriptsize{white}}}(\Omega) \; + \; k \quad \mod 2, & \quad \mbox{if } \Omega \mbox{ is the infinite face}.
\end{array}
\right.
\end{equation}

\begin{lemma}\textbf{Edge signature of the direct sum}\label{lem:dir_sum}
Let ${\mathcal N}_i$ be plabic networks representing points in $Gr^{\mbox{\tiny{TNN}}}(k_i,n_i)$, $i=1,2$ and $\mathcal N$ be their disjoint union representing a point in $Gr^{\mbox{\tiny{TNN}}}(k_1+k_2,n_1+n_2)$ with notations as in Figure \ref{fig:amalg1}. Then, the edge signature behaves as follows: 
\begin{enumerate}
\item If all boundary vertices of $\mathcal N_2$ precede all boundary vertices of $\mathcal N_1$ (Figure~\ref{fig:amalg1} [left]),
\begin{equation}\label{eq:inf_case1}
\epsilon(\Omega_0) \; = \;  \epsilon(\Omega_{0,1}) + \epsilon(\Omega_{0,2})  \quad \mod 2,
\end{equation}
and is unchanged in all other faces;
\item If all boundary vertices of ${\mathcal N}_1$ are located between two consecutive boundary vertices of ${\mathcal N}_2$ (Figure~\ref{fig:amalg1} [right]),
\begin{equation}\label{eq:inf_case2}
\epsilon(\Omega_0) \; = \;  \epsilon(\Omega_{0,2}) + k_1  \quad \mod 2,\qquad\qquad
\epsilon(\Omega_1) \; = \;  \epsilon(\Omega_{0,1}) + \epsilon(\Omega_{1,2}) + k_1, \quad \mod 2,
\end{equation}
and is unchanged in all other faces.
\end{enumerate}
\end{lemma}

We now describe the action of defrosting on the edge signatures. We remark that if both ovals $\Omega_1$, $\Omega_3$ are finite, then $\hat\Omega_1$ is also a finite oval, otherwise it is the infinite oval. Similarly, $\hat\Omega_2$ is the infinite oval if and only if $\Omega_2$ is the infinite oval .

\begin{lemma}\textbf{Effect of defrosting on edge signatures}\label{lem:defrost}
  Let $\mathcal N$ be plabic network representing a point in $Gr^{\mbox{\tiny{TNN}}}(k,n)$,  and $\hat{\mathcal N}$ be the defrosted network representing a point in $Gr^{\mbox{\tiny{TNN}}}(k-1,n-2)$ with notations as in Figure \ref{fig:amalg2}. Then, the edge signature behaves as follows:

\begin{enumerate}
\item If $b_i\ne b_1,b_{n-1}$ and $\Omega_2$ is not the infinite oval, then 
  \begin{equation}\label{eq:inf_case3}
 \begin{split}   
   \epsilon(\hat\Omega_1) \; &= \;  \epsilon(\Omega_{1}) + \epsilon(\Omega_{3}) +1  \quad \mod 2,\\
   \epsilon(\hat\Omega_2) \; &= \;  \epsilon(\Omega_{2}),   \quad \mod 2,\\
   \epsilon(\hat\Omega_0) \; &= \;  \epsilon(\Omega_{0})  +1   \quad \mod 2,\\
  \end{split} 
\end{equation}
\item If $b_i\ne b_1,b_{n-1}$ and $\Omega_2$ is the infinite oval $\Omega_2=\Omega_0$ , then 
  \begin{equation}\label{eq:inf_case3_bis}
 \begin{split}   
   \epsilon(\hat\Omega_1) \; &= \;  \epsilon(\Omega_{1}) + \epsilon(\Omega_{3}) +1  \quad \mod 2,\\
   \epsilon(\hat\Omega_2) \; &= \;  \epsilon(\Omega_{2}) +1 ,   \quad \mod 2,\\
  \end{split} 
\end{equation}

\item If $b_i=b_{n-1}$ then 
  \begin{equation}\label{eq:inf_case4}
 \begin{split}   
   \epsilon(\hat\Omega_0) \; &= \;  \epsilon(\Omega_{0}) +  \epsilon(\Omega_{1})  \quad \mod 2,\\
   \epsilon(\hat\Omega_2) \; &= \;  \epsilon(\Omega_{2}),   \quad \mod 2.
 \end{split}
\end{equation}                                                             
\item The case  $b_i=b_{1}$ is similar to the previous one.
\end{enumerate}
In all other faces the signature is unchanged.
\end{lemma}

Next we discuss the effect of amalgamation on real regular divisor configurations. The direct sum is trivial in terms of matrices, but non-trivial superposition in terms of the corresponding soliton solutions. The possible real regular divisor configurations depend on the type of the disjoint union. If we place one graph next to the other one, the allowed real regular divisor configurations coincide with the union of all possible configurations on both initial networks. If one graph is placed inside another one, we have more admissible divisor configurations provided that the infinite face $\Omega_{0,1}$ contains at least one trivalent white vertex. 
\begin{lemma}\textbf{Real regular divisor configurations for direct sum of networks}\label{lem:direct_sum}
Let ${\mathcal N}_i$ be plabic networks representing points in $Gr^{\mbox{\tiny{TNN}}}(k_i,n_i)$, $i=1,2$ and $\mathcal N$ be their disjoint union representing a point in $Gr^{\mbox{\tiny{TNN}}}(k_1+k_2,n_1+n_2)$. Then the degree of the divisor (Sato divisor) associated to  ${\mathcal N}$ is the sum of the degrees of the divisors (Sato divisors) associated to ${\mathcal N}_1$ and ${\mathcal N}_2$.

Moreover, 
\begin{enumerate}
\item If all boundary vertices of $\mathcal N_2$ precede all boundary vertices of $\mathcal N_1$ (Figure~\ref{fig:amalg4}), then all admissible divisors ${\mathcal D}$ on  $\mathcal N$ are sums $ {\mathcal D}= {\mathcal D}_1+{\mathcal D}_2$, where  ${\mathcal D}_i$, are admissible divisors on  $\mathcal N_i$, $i=1,2$;
\item If all boundary vertices of ${\mathcal N}_1$ are located between two consecutive boundary vertices of ${\mathcal N}_2$ (Figure~\ref{fig:amalg5}) and the infinite oval of ${\mathcal N}_1$ contains no trivalent white vertices, then the admissible divisors are exclusively sums $ {\mathcal D}= {\mathcal D}_1+{\mathcal D}_2$, where  ${\mathcal D}_i$, are admissible divisors on  $\mathcal N_i$, $i=1,2$ such that either there is no Sato divisor point in the oval $\Omega_{1,2}$, or the Sato divisor point in the oval $\Omega_{1,2}$ lies outside the interval $[\kappa^{(1)}_{1},\kappa^{(1)}_{n_1}]$;
\item If all boundary vertices of ${\mathcal N}_1$ are located between two consecutive boundary vertices of ${\mathcal N}_2$ (Figure~\ref{fig:amalg5}) and the infinite oval of ${\mathcal N}_1$ contains al least one trivalent white vertex, then either $\mathcal D$ is as in Item~2, or is a configuration obtained starting from an admissible configuration ${\mathcal D}_2$ on  $\mathcal N_2$, by eliminating the Sato divisor point from the oval $\Omega_{1,2}$, adding a divisor on a trivalent white vertex in $\Omega_{0,1}$, placing $k_1+1$ Sato divisor points in ${\mathcal N}_1$ and completing the configuration respecting the regularity divisor rules for ${\mathcal N}_1$. 
\end{enumerate}
\end{lemma}
\begin{figure}
  \centering{\includegraphics[width=0.7\textwidth]{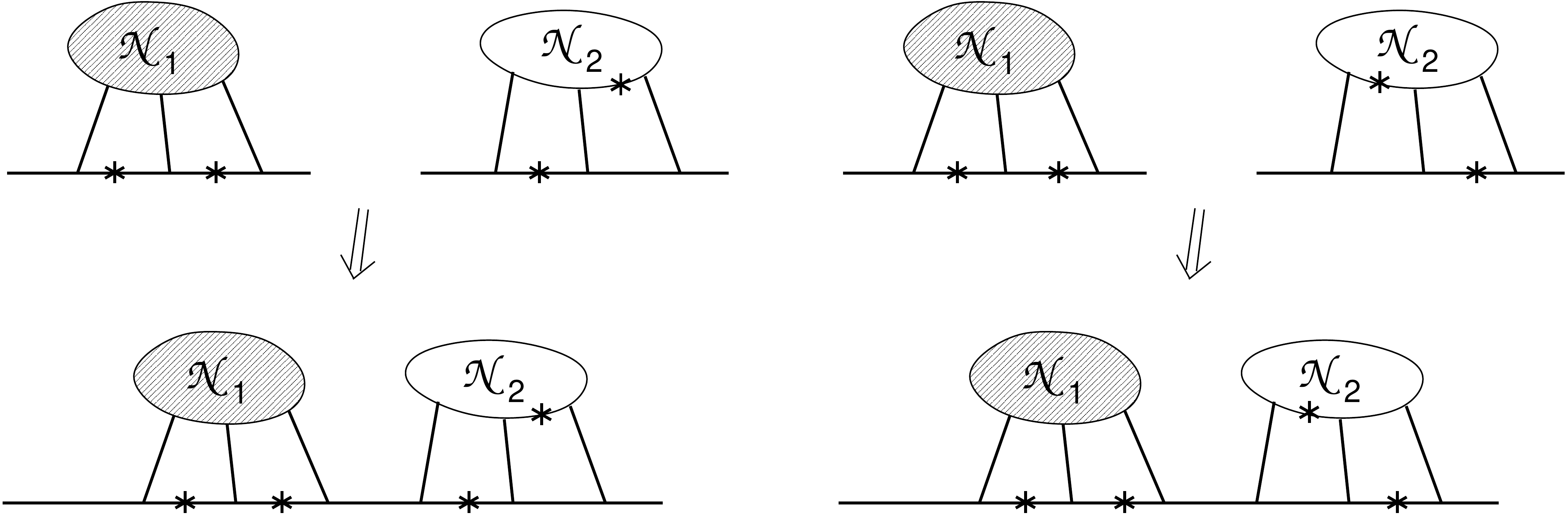}}
  \caption{\small{\sl Real regular divisor configurations for disjoint union of positroid cells if we place one graph next to another one.}\label{fig:amalg4}}
\end{figure}

\begin{figure}
  \centering{\includegraphics[width=0.8\textwidth]{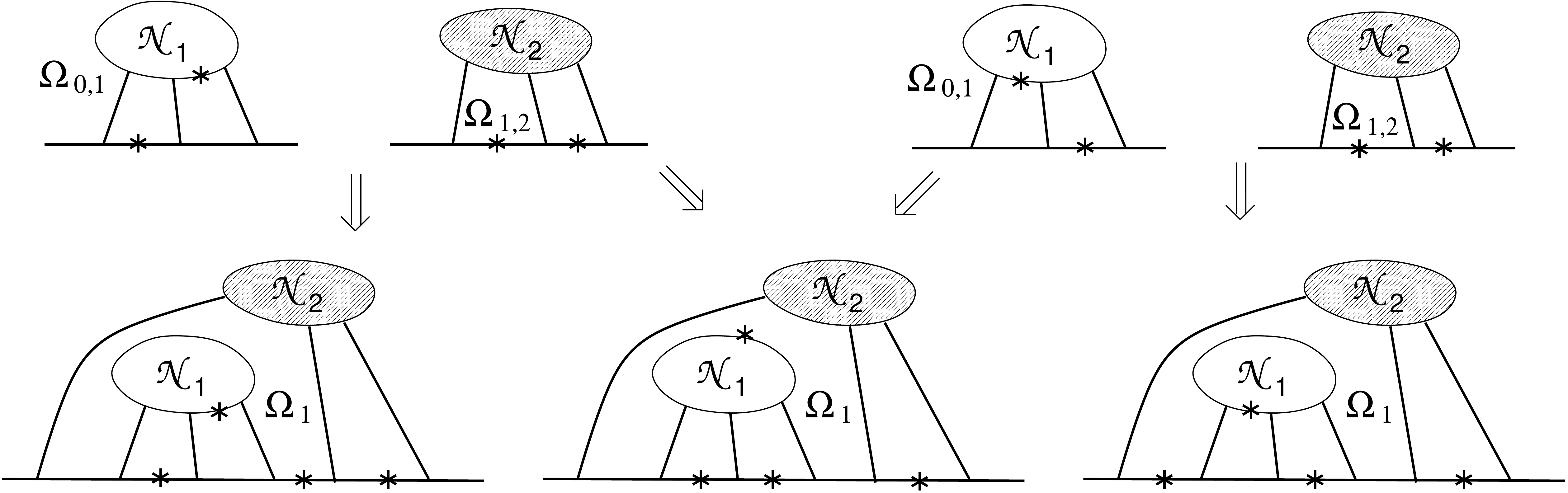}}
  \caption{\small{\sl Real regular divisor configurations for disjoint union of positroid cells if we place one graph inside the other one. }\label{fig:amalg5}}
\end{figure}

In Figures~\ref{fig:amalg4},~\ref{fig:amalg5} we illustrate the direct sum of  $Gr^{\mbox{\tiny{TP}}}(1,3)$ and  $Gr^{\mbox{\tiny{TP}}}(2,3)$. The new configuration on Figure~\ref{fig:amalg5} is the middle one.

Finally we discuss the effect of defrosting consecutive boundary vertices on the divisor structure. 

\begin{lemma}\textbf{Effect of defrosting on the regular divisors}\label{lem:defrost_divisor}
Let $\mathcal N$ be plabic network representing a point in $Gr^{\mbox{\tiny{TNN}}}(k,n)$,  and $\hat{\mathcal N}$ be the defrosted network representing a point in $Gr^{\mbox{\tiny{TNN}}}(k-1,n-2)$. Then the degrees of both the KP divisor and the Sato divisor associated to  $\hat{\mathcal N}$ are one less than degrees of the corresponding initial divisors.

Moreover, the admissible real regular divisor structures of $\hat{\mathcal N}$ correspond to the divisor structures of  $\mathcal N$ such that:
\begin{enumerate}
\item The oval $\Omega_2$ does not contain a Sato divisor point;
\item The union of ovals $\Omega_1\cup\Omega_3$ contains at least one Sato divisor point,
\end{enumerate}
and are obtained by elimination of one Sato divisor point from  $\Omega_1\cup\Omega_3$.
\end{lemma}

We schematically illustrate the above Lemma in Figures~\ref{fig:amalg6}--\ref{fig:amalg8}.
\begin{figure}
  \centering{\includegraphics[width=0.7\textwidth]{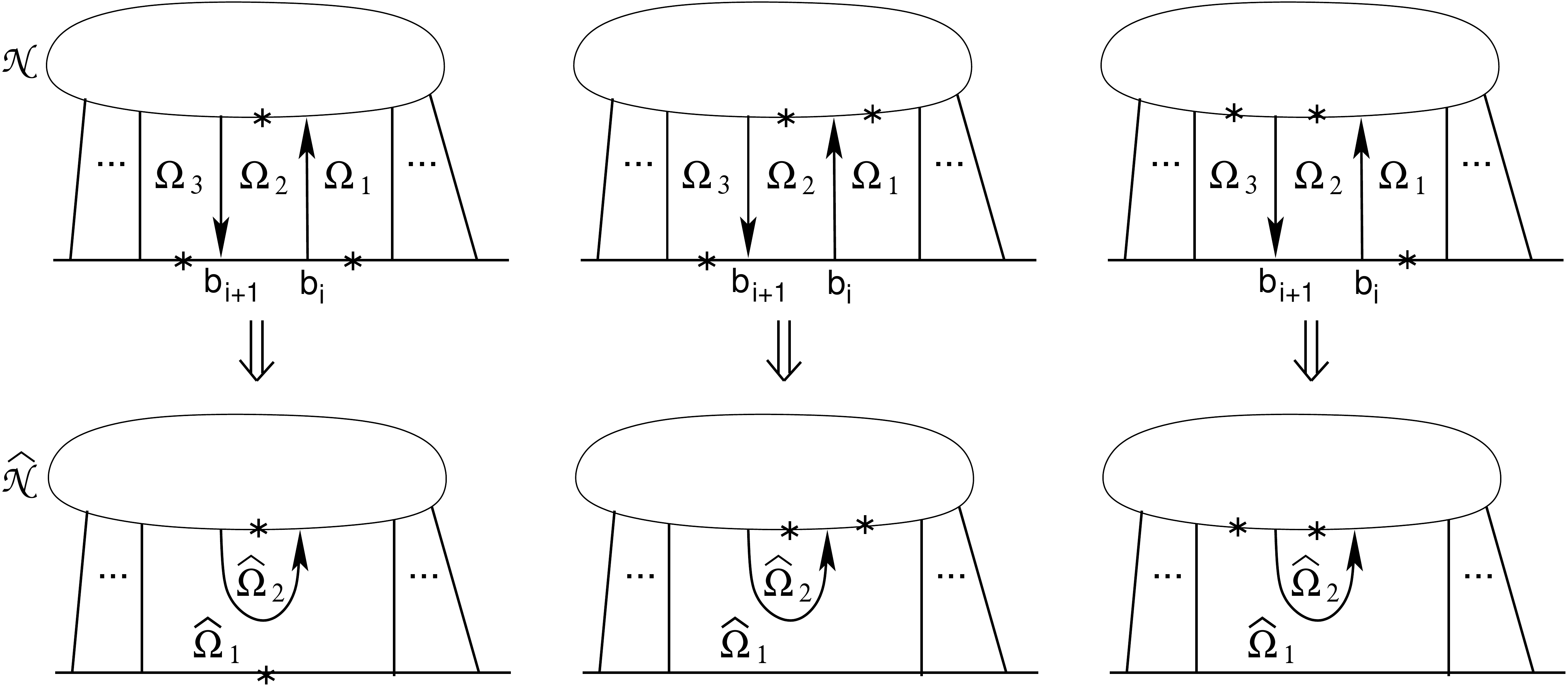}}
  \caption{\small{\sl Real regular divisor configurations for projections of positroid cells involving 3 finite ovals.}\label{fig:amalg6}}
\end{figure}

\begin{figure}
  \centering{\includegraphics[width=0.8\textwidth]{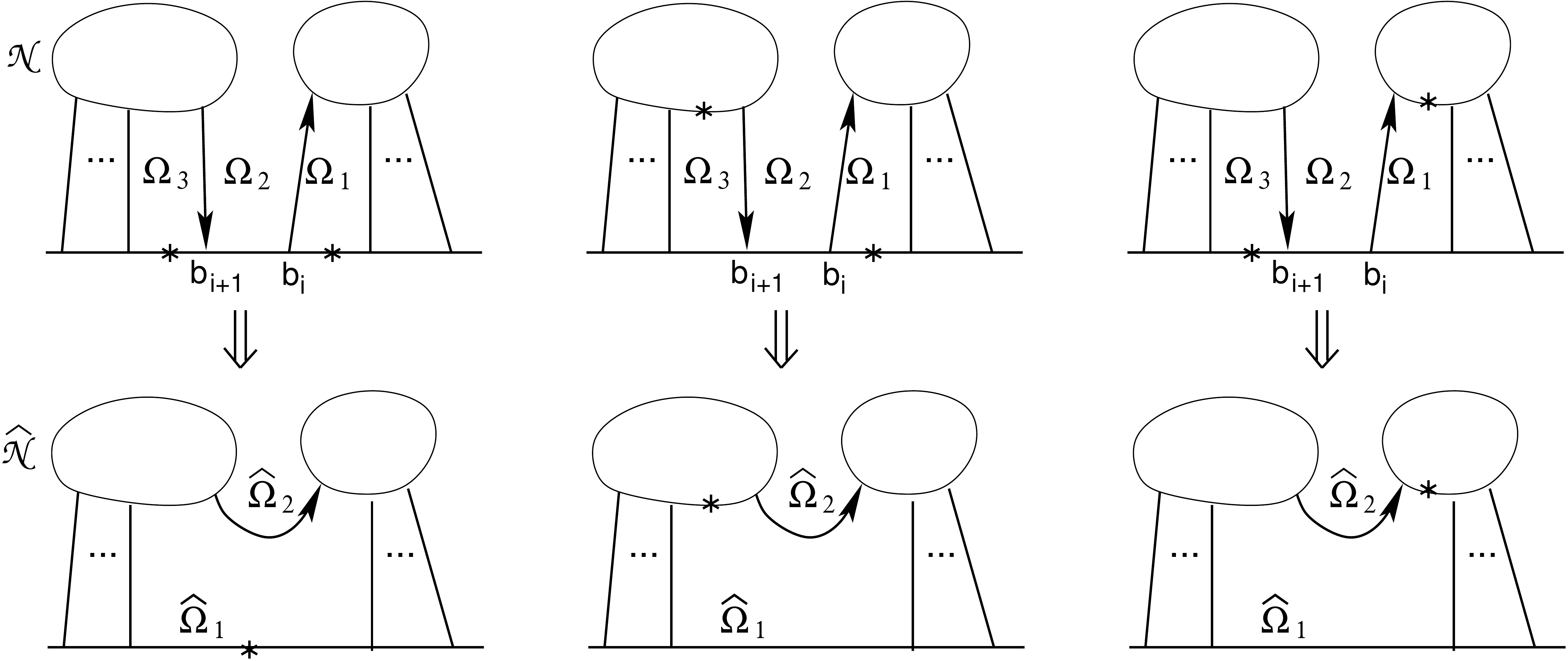}}
  \caption{\small{\sl Real regular divisor configurations for projections of positroid cells when $\Omega_2$ is the infinite oval. }\label{fig:amalg7}}
\end{figure}
\begin{figure}
  \centering{\includegraphics[width=0.6\textwidth]{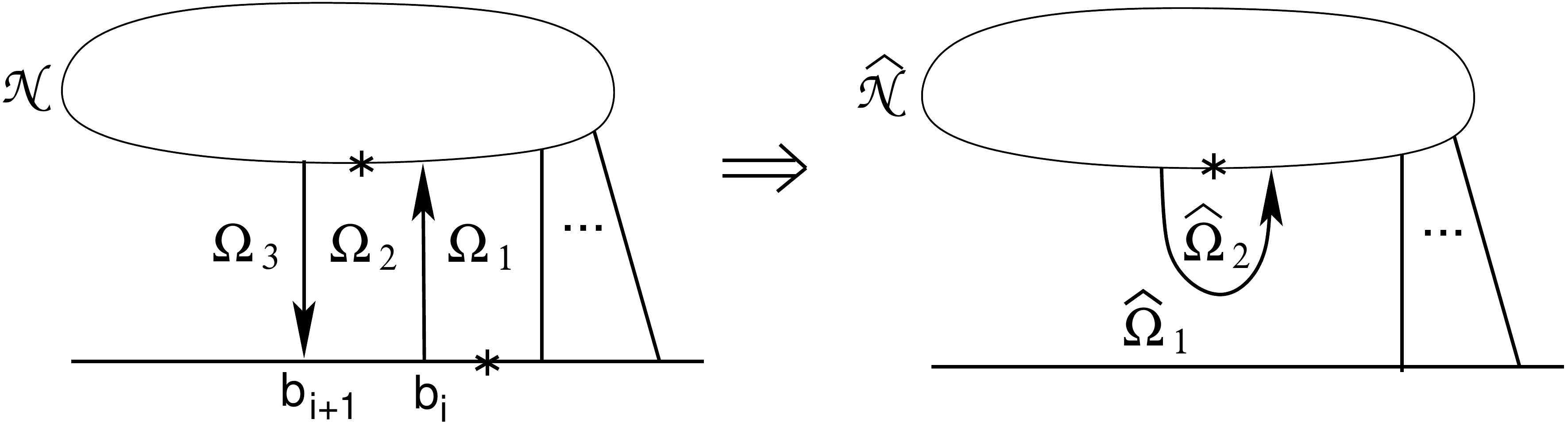}}
  \caption{\small{\sl Real regular divisor configurations for dprojections of positroid cells when either $\Omega_1$ or $\Omega_3$ is the infinite oval. }\label{fig:amalg8}}
\end{figure}

\part{Singularities of divisors}\label{sec:global_null}

The parametrization of a given positroid cell $\S$ via KP-II divisors constructed in this paper, see also \cite{AG3}, is local in the following sense: for each point in 
$\S$ and a collection of phases $\mathcal K$, we choose a fixed time $\vec t_0$ such that near this point the parametrization is locally regular. But globally we cannot exclude the situation in which al least two divisor points simultaneously approach the same node of the curve. We have 3 possible situations:
\begin{enumerate}
\item There exists a time $\vec t_0$ such that a pair of divisor point are on the same node, but for generic $\vec t$ the divisor is generic. In this case it is necessary to apply an appropriate blow-up procedure to resolve the singularity. We remark in the case of reduced graphs only such degenerations may occur. We plan to study this problem in a future paper.  Here in Section~\ref{sec:global} we solve this problem in the simplest non-trivial case  $Gr^{TP}(1,3)$. 
\item There exists a collection of positive weights such that for any time $\vec t$ a pair of divisor point are on the same node, but for generic collection of weights and genetic $\vec t$ the divisor is generic. This situation may occur for the reducible graphs studied in this paper. We briefly discuss this case in Section~\ref{sec:constr_null}. 
\item For a given graph, any collection of positive weights and any time $\vec t$ a pair of divisor point are on the same node. This situation may occur only if we release the condition that for any edge there exists a path from boundary to boundary containing it. We present an example in Section~\ref{sec:counterexample}. 
\end{enumerate}

\section{Global parametrization of positroid cells via KP divisors: the case $Gr^{TP}(1,3)$}\label{sec:global}

If the graph is reduced, the non-normalized wave function is never identically zero at a given node. However, for some non-generic time $\vec t_0$ a zero of the non-normalized wave function may coincide with the node, therefore we have a pair of divisor points at such node. The simplest example discussed in this Section shows that it is necessary to resolve the singularity in the variety of divisors. 
\begin{figure}
  \centering{\includegraphics[width=0.5\textwidth]{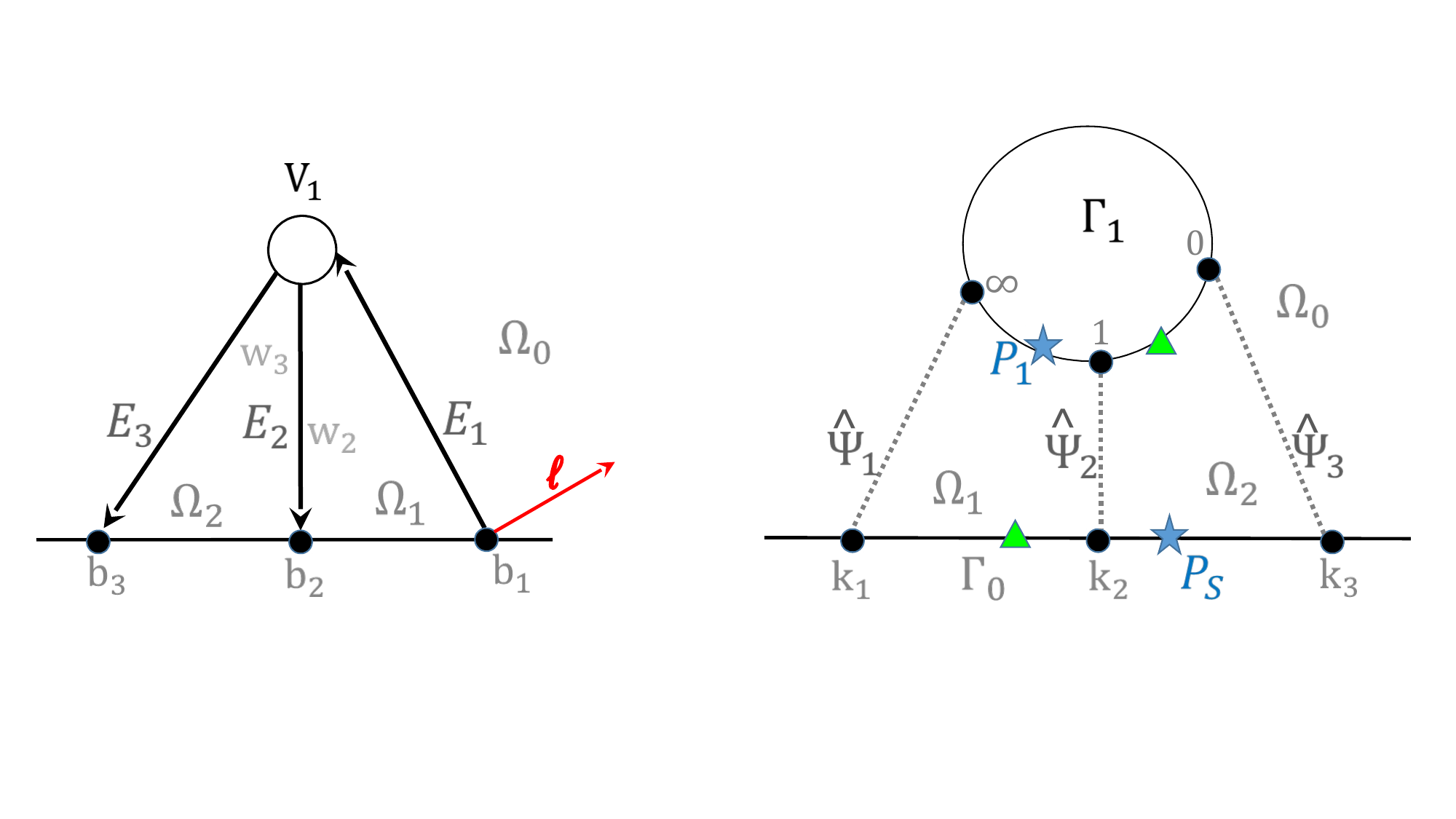}}
	\vspace{-1.2 truecm}
  \caption{\small{\sl We consider the issue of the global parametrization of positroid cells via divisors in the simplest example $Gr^{TP}(1,3)$.}\label{fig:global}}
\end{figure}

Consider the totally positive Grassmannian $Gr^{TP} (1,3)$.
With usual affine coordinates, we have the cell parametrization $[1,w_2,w_3]$, the heat hierarchy solution $f(\vec t)= e^{\theta_1 (\vec t)}+w_2 e^{\theta_2 (\vec t)}+w_3 e^{\theta_3 (\vec t)}$ and the Darboux transformation ${\mathfrak D}=\partial_x -\frac{\partial_x f(\vec t)}{f(\vec t)}$. 
Then on the oriented network (Figure \ref{fig:global}[left]) the vectors are 
$E_1 =E_2+E_3$, $E_2 =(0,w_2,0)$, $E_3 =(0,0,w_3)$,
and the edge wave function takes the values
$\Psi_1 (\vec t) =\Psi_2(\vec t)+\Psi_3(\vec t)$, $\Psi_2(\vec t) =w_2 (\kappa_2-\gamma_S) e^{\theta_2(\vec t)}$, $\Psi_3(\vec t) =w_3 (\kappa_3-\gamma_S) e^{\theta_3(\vec t)}$,
where $\gamma_S$ is the coordinate of the Sato divisor point $P_S$. If we fix the reference time $\vec t_0=\vec 0=(0,0,0,\dots)$, then
$\zeta(P_s)= \gamma_S=  \frac{\kappa_1 +w_2 \kappa_2 + w_3 \kappa_3}{1 +w_2  + w_3 }$.
On the curve $\Gamma=\Gamma_0\sqcup\Gamma_1$, at the double points the normalized KP wave function is $\hat \Psi_j(\vec t) =\frac{\Psi_j(\vec t)}{\Psi_j(\vec 0)}$ and the divisor point $P_1 \in \Gamma_1$ has local coordinate 
$\zeta(P_1) =\gamma_1 = \frac{w_3(\kappa_3-\gamma_S)}{w_3(\kappa_3-\gamma_S)+w_2(\kappa_2-\gamma_S)}$.

It is easy to check that the positivity of the weights is equivalent to $\gamma_S \in]\kappa_1,\kappa_3[$, $\gamma_1>0$ and the fact that there is exactly one divisor point in each one of the finite ovals, $\Omega_1$ and $\Omega_2$, that is
\begin{enumerate}
\item Either $\kappa_1 <\gamma_S<\kappa_2$ and $\gamma_1<1$, {\sl i.e.} $P_S\in\Omega_1$ and $P_1\in\Omega_2$. In Figure \ref{fig:global} [right] we illustrate this case representing divisor points by triangles;
\item Or $\kappa_2 <\gamma_S<\kappa_3$ and $\gamma_1>1$, {\sl i.e.} $P_S\in \Omega_2$ and $P_1\in \Omega_1$. In Figure \ref{fig:global} [right] we illustrate this case representing divisor points by stars.
\end{enumerate}
The transformation from $(w_2,w_3)$ to $(\gamma_S,\gamma_1)$ looses injectivity and full rank Jacobian along the line $w_3=\frac{\kappa_2-\kappa_1}{\kappa_3-\kappa_2}$ so that, for any $w_2>0$,
\[
\gamma_S ( w_2, \frac{\kappa_2-\kappa_1}{\kappa_3-\kappa_2} )=\kappa_2, \quad\quad \gamma_1 ( w_2, \frac{\kappa_2-\kappa_1}{\kappa_3-\kappa_2} )=1.
\]
If we invert the relation between divisor numbers and weights, we get
\[
w_2 (\gamma_S,\gamma_1)= \frac{(\gamma_1 -1)(\gamma_S-\kappa_1)}{\gamma_S-\kappa_2},\quad\quad w_3 (\gamma_S,\gamma_1)= \frac{\gamma_1(\gamma_S-\kappa_1)}{\kappa_3-\gamma_S}.
\]

Therefore in the non--generic case when $\gamma_S\to\kappa_2$ and $\gamma_1\to 1$, we need to apply the blow-up procedure at the point $(\gamma_S,\gamma_1)=(\kappa_2,1)$, by setting $\gamma_S=\kappa_2+\epsilon$, $\gamma_1=1+z\epsilon$, where
\[
\epsilon = \frac{ \omega_3 (\kappa_3- \kappa_2) - (\kappa_2-\kappa_1) }{1+\omega_2 + \omega_3}, \qquad
z = \frac{\omega_2(1+\omega_2+\omega_3)}{\omega_3(\kappa_3-\kappa_1)+\omega_2(\kappa_2-\kappa_1)},
\]
and take the limit $\epsilon\to 0$, so that
\[
w_2 (\kappa_2,1) = z(\kappa_2-\kappa_1),\quad\quad w_3 (\kappa_2,1) = \frac{\kappa_2-\kappa_1}{\kappa_3-\kappa_2}.
\]

\section{Construction of divisor if the wave function is identically zero at a node}\label{sec:constr_null}
In the class of graphs considered in this paper, the condition that every edge belongs to at least one path from boundary to boundary implies that the wave function may be identically zero at a node only for positive edge weights lying in subvarieties of codimension at least one, and just for reducible graphs.  

We may extend the construction of the real regular KP divisor to this case by taking a proper limit of both the wave function and the divisor in the space of weights, and again we require a resolution of singularities analogous to that discussed in the previous Section. Since for reducible graphs there is extra gauge weights freedom, we conjecture that such degeneracy may be always avoided by a proper choice of positive weights. In this Section we discuss a simple example leaving the detailed study to a future publication. 
 
\begin{figure}
  \centering{\includegraphics[width=0.4\textwidth]{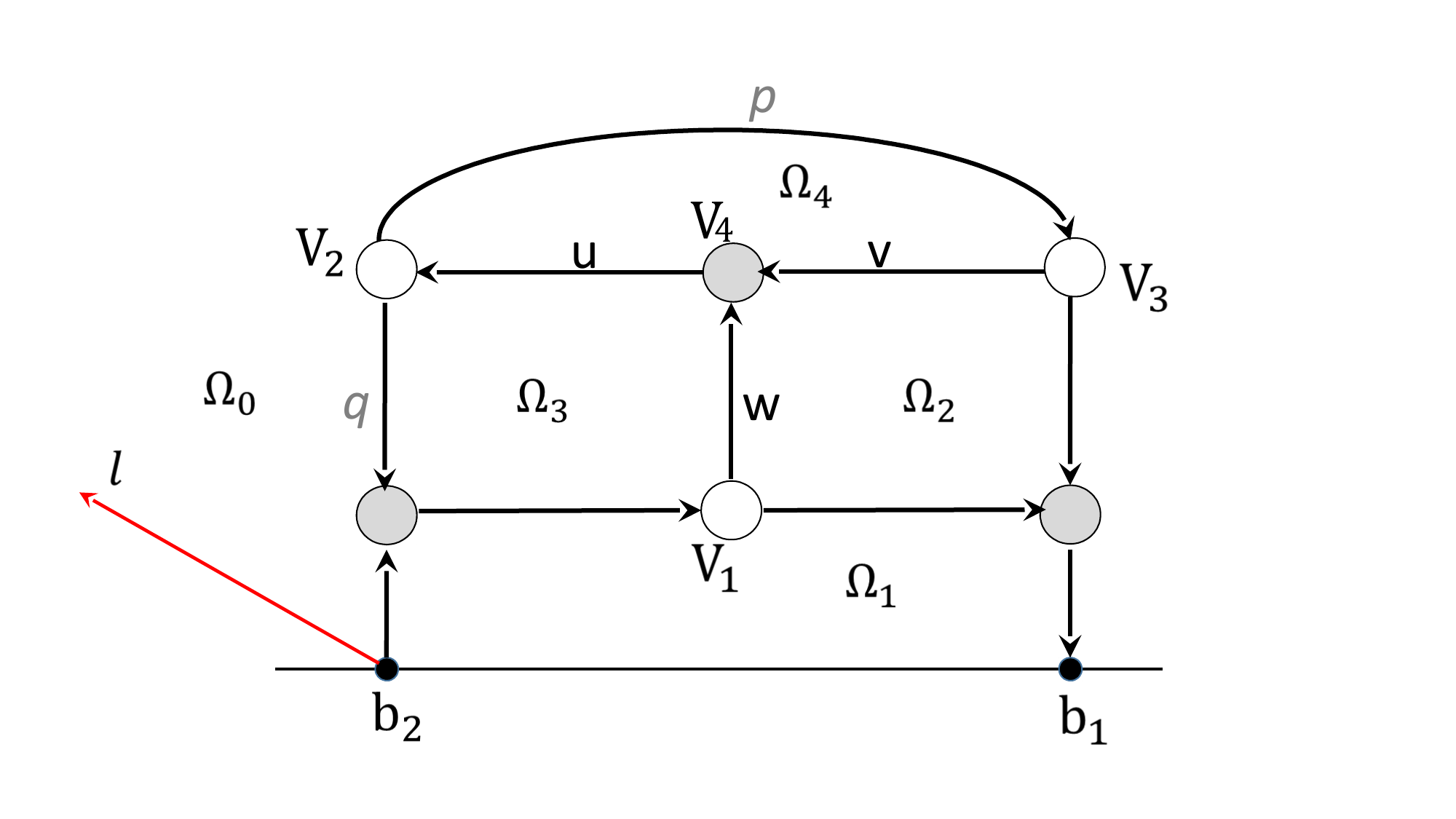}
	\hspace{.5 truecm}
	\includegraphics[width=0.4\textwidth]{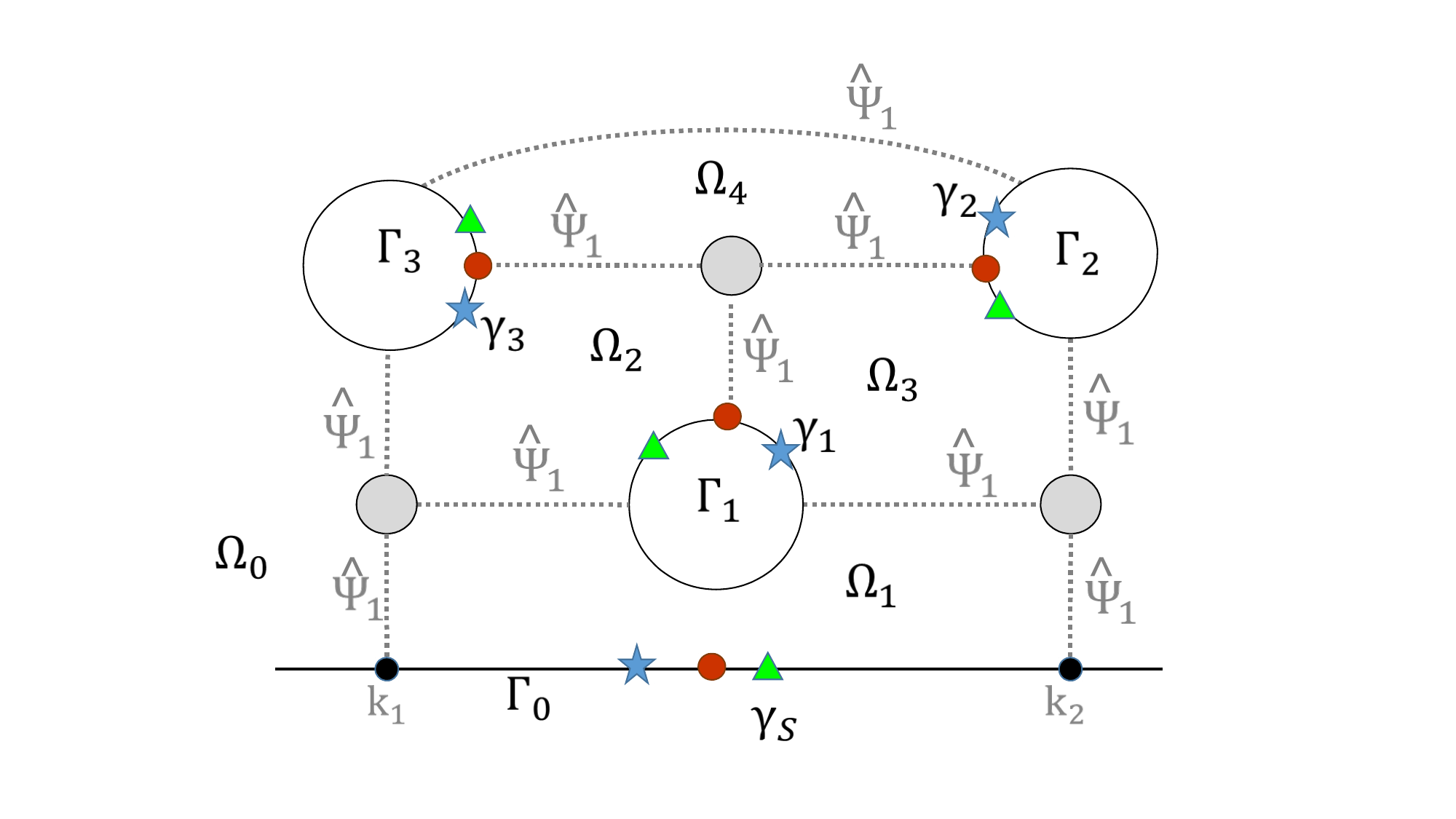}}
	\vspace{-.7 truecm}
  \caption{\small{\sl Left: the network possesses identically zero unnormalized wave function at the edges $u$, $v$, $w$  when $p=q$ [left]. Right: the KP divisor on the curve in the case $q<p$ (stars), $q=p$ (balls) and $q>p$ (triangles), where $\gamma_S$ is the Sato divisor point.}\label{fig:div_null_new}}
\end{figure}

The network in Figure \ref{fig:div_null_new} [left] represents the point $[a,1]  \in Gr^{\mbox{\tiny TP}}(1,2)$, with $a=(2p+1)/(1+p+q)$. Here all edges have weight 1 except the two edges going out from $V_2$. It is not difficult to check that the unnormalized wave function is identically 0 at the edges $u$, $v$, $w$ if $a=1$ ($p=q$).

The Sato divisor point  $\gamma_S$ belongs to the intersection of the oval $\Omega_1$ and the Sato component $\Gamma_0$ and has local coordinate $\gamma_S= \frac{a \kappa_1 e^{\theta_1(\vec t)}+ \kappa_2 e^{\theta_2(\vec t)}}{a e^{\theta_1(\vec t)} + e^{\theta_2(\vec t)}}$ in $]\kappa_1,\kappa_2[$. Since the wave function is proportional to $(\kappa_1-\gamma_s)e^{\theta_1(\vec t)}$ at all double points, the divisor points at the components $\Gamma_i$, $i\in[3]$ are independent on KP times, In the local coordinates associated to the orientation of the picture, they are $\gamma_1 = \frac{1}{a}$, $\gamma_2 = 1+\frac{aq}{a-1}$ and $\gamma_3 = \frac{a-1}{a-2}$. Let us remark that the Sato divisor point depends only on the soliton data, but other divisor points may depend on the choice of the weights representing the given soliton data because the graph is reducible, and we have extra gauge freedom.  In Figure \ref{fig:div_null_new}[right], we show the possible divisor configurations using the convention that no divisor point is attributed to black vertices in case of zero wave function:
\begin{enumerate}
\item If $0<a<1$ ($0<p<q$), then $\gamma_1 \in \Gamma_1 \cap \Omega_3$, $\gamma_2 \in \Gamma_2 \cap \Omega_4$ and $\gamma_3 \in \Gamma_3 \cap \Omega_2$. We represent one such configuration with stars in Figure \ref{fig:div_null_new} [right];
\item If $a>1$ ($0<q<p$), then $\gamma_1 \in \Gamma_1 \cap \Omega_2$, $\gamma_2 \in \Gamma_2 \cap \Omega_3$ and $\gamma_3 \in \Gamma_3 \cap \Omega_4$. We represent one such configuration with triangles in Figure \ref{fig:div_null_new} [right];
\item If $a=1$, then 3 divisor points coincide with the double points. The divisor is represented by balls in Figure \ref{fig:div_null_new} [right]). In this case a resolution of singularity similar to the one used in the previous Section is required.
\end{enumerate}

\begin{figure}
  \centering{
	\includegraphics[width=0.4\textwidth]{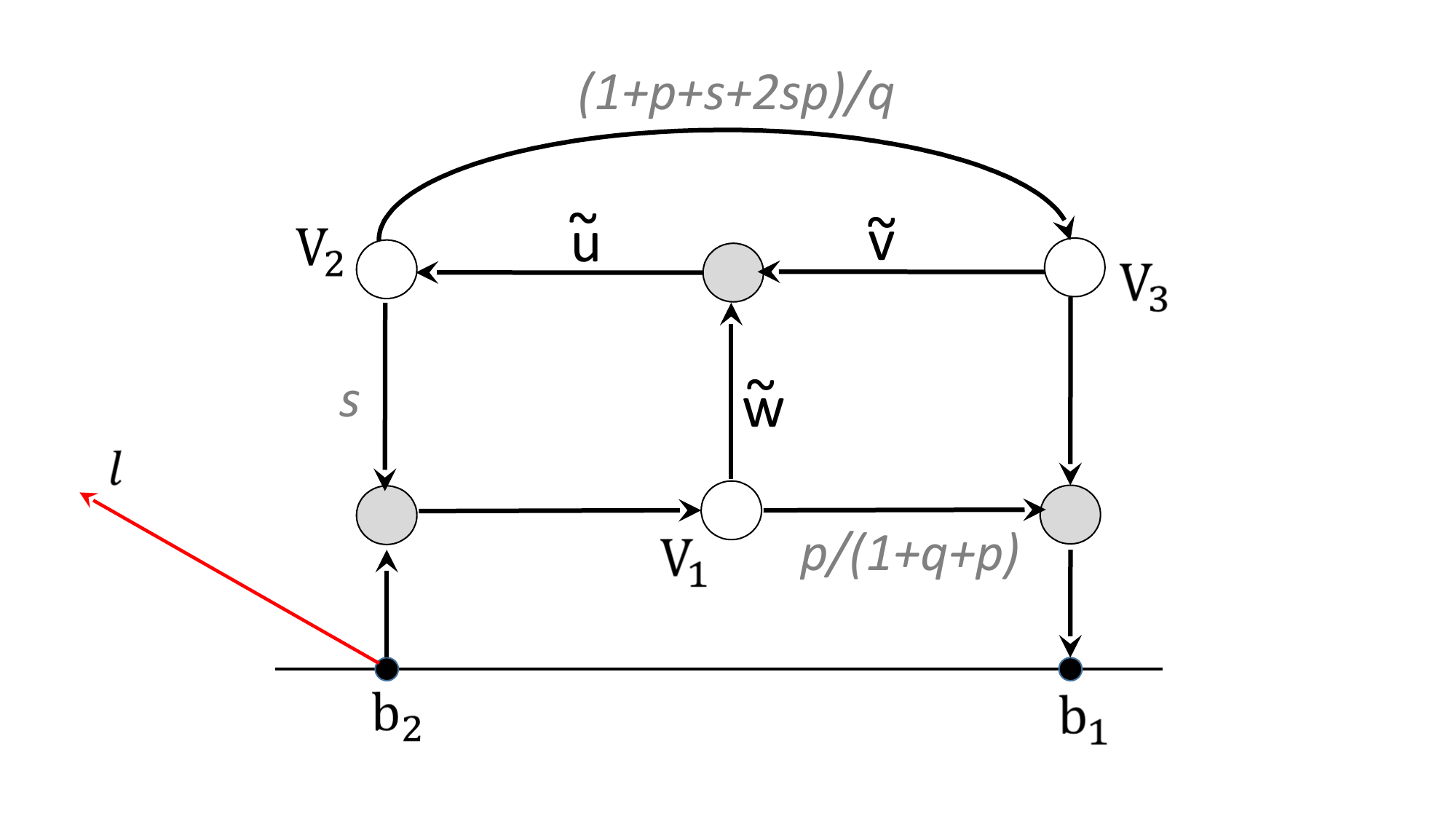}
	\hspace{.5 truecm}
	\includegraphics[width=0.4\textwidth]{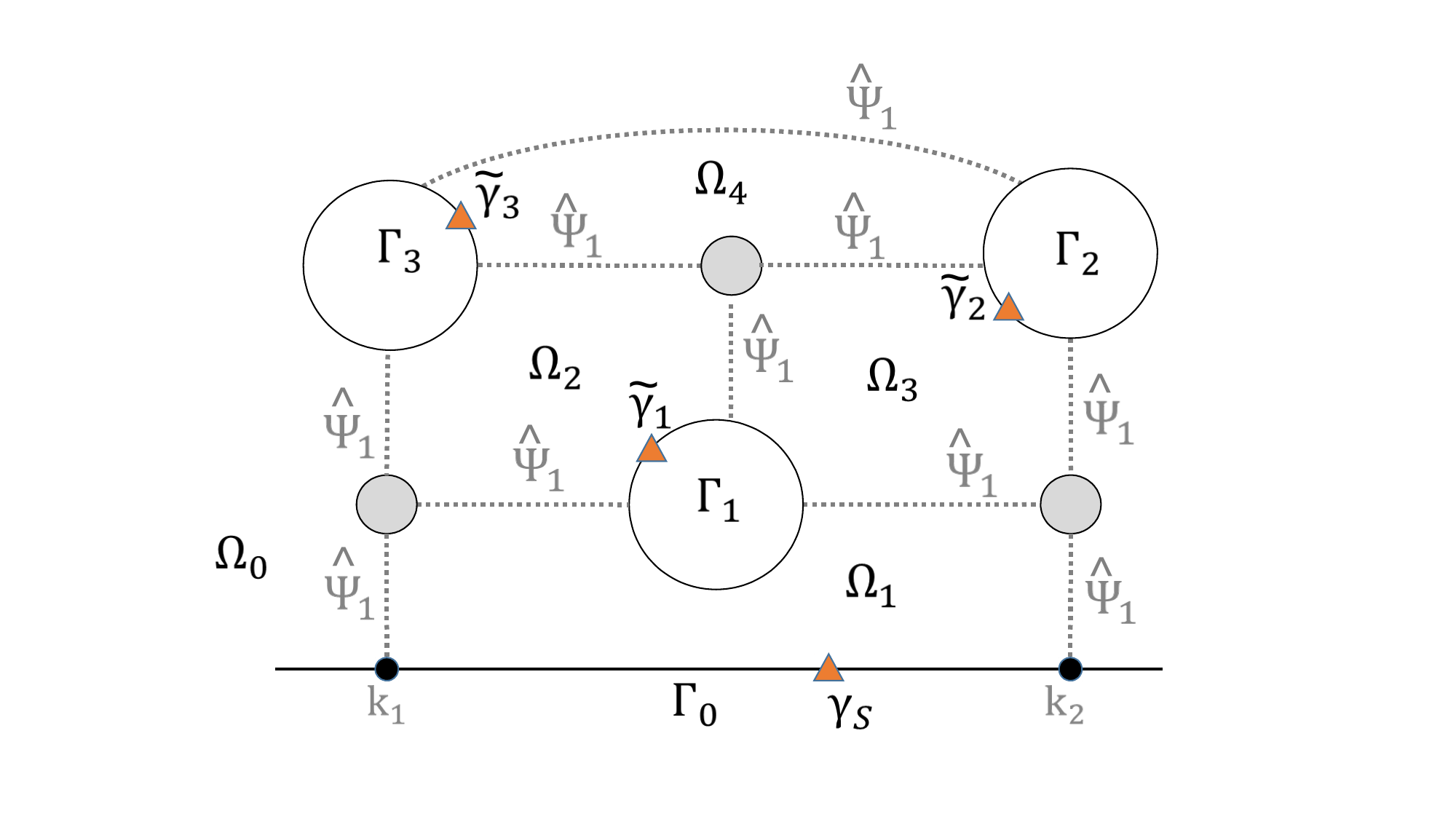}
  \caption{\small{\sl Left: the elimination of zero edge vectors for the example of Figure \ref{fig:div_null_new} using the gauge freedom for unreduced graphs. Right: the KP divisor on the curve for the gauged network.}}\label{fig:zero-vector}}
\end{figure}

\textbf{Elimination of zero edge wave function using the gauge freedom for unreduced graphs}\label{rem:div_unred}. On  unreduced graphs there is an extra gauge freedom in assigning edge weights in addition to the standard weight gauge freedom (see Remark~\ref{rem:gauge_weight}). In contrast with the standard  weight gauge freedom,  this extra gauge freedom  acts untrivially on network divisor numbers and we conjecture that such gauge may be used to eliminate identically zero wave functions. Indeed, for any fixed $s>0$, the network in Figure~\ref{fig:zero-vector}[left] represents the same point $[ a,1] \in Gr^{\mbox{\tiny TP}}(1,2)$, $a=(2p+1)/(1+p+q)$ as in Figure~\ref{fig:div_null_new}[left], but it never possesses identically zero edge wave functions (see also \cite{AG4}). The divisor is $(\gamma_S,\tilde \gamma_1=\frac{p}{1+2p},\tilde \gamma_2= 1+s\frac{1+2p}{1+p},\tilde \gamma_3= \frac{(1+p)a}{(1+p)a-2p-1})$ and again it depends on the choice of the parameters $p,q,s$ for given $a$.

\section{Example of network with  wave function vanishing at a node for arbitrary choice of weights}\label{sec:counterexample}

If we release the assumption that for any edge there exits a path from boundary to boundary containing it, then the wave function may be identically zero at some nodes for any choice of weights. In this case our counting rule for divisors in the ovals needs some modification, and we plan to study it in the future. In Figure~\ref{fig:conterexample} we show a simple example:

\begin{figure}
  \centering{
	\includegraphics[width=0.4\textwidth]{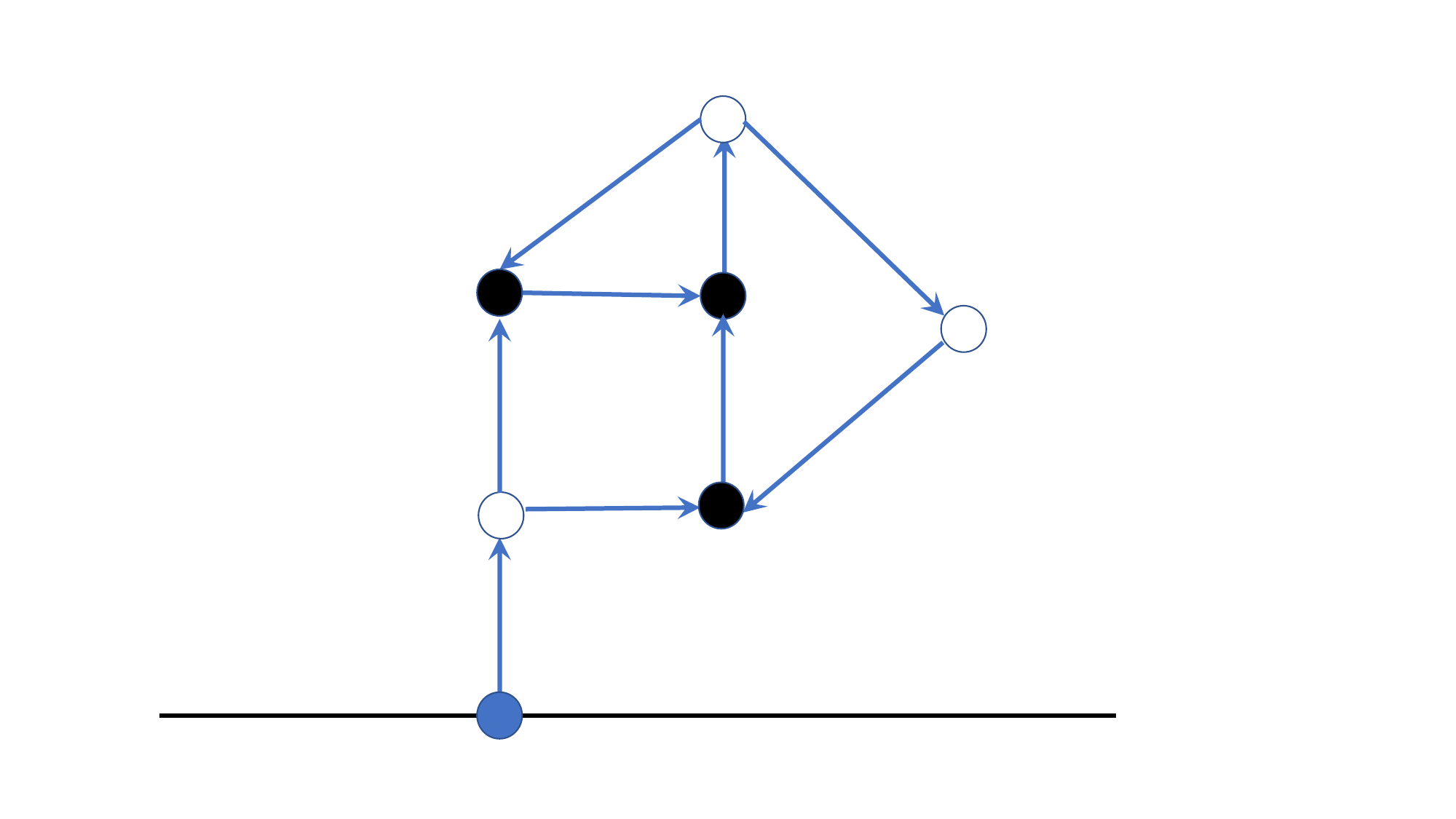}
  \caption{\small{\sl The network containing zero edge wave functions for all positive weights.}}\label{fig:conterexample}}
\end{figure}

The graph has 3 internal ovals and only 2 trivalent white vertices, therefore at least one oval has no divisor point. 

\bibliographystyle{alpha}

\end{document}